\newtheorem{lemma}{Lemma}
\newtheorem{theorem}{Theorem}
\newtheorem{corollary}{Corollary}
\newtheorem{definition}{Definition}
\newtheorem{proposition}{Proposition}
\newtheorem{remark}{Remark}
\newtheorem{example}{Example}
\newenvironment{proof}{\noindent{\bf Proof.}}{\qed}
\newcommand{\LL}{\mathbb{L}}
\newcommand{\GG}{\mathbb{G}}
\newcommand{\UU}{\mathbb{U}}
\newcommand{\EE}{\mathbb{E}}
\newcommand{\HH}{\mathbb{H}}
\newcommand{\KK}{\mathbb{K}}
\newcommand{\NN}{\mathbb{Z}_{\geq0}}
\newcommand{\ZZ}{\mathbb{Z}}
\newcommand{\FF}{\mathbb{F}}
\newcommand{\QQ}{\mathbb{Q}}
\newcommand{\DD}{\mathbb{D}}
\newcommand{\VV}{\mathbb{V}}
\renewcommand{\AA}{\mathbb{A}}
\newcommand{\rpE}{$\mathrm{RP}$}
\newcommand{\apiE}{$\mathrm{A\Pi}$}
\newcommand{\pisiSE}{$\mathrm{\Pi\Sigma}$}
\newcommand{\pisiE}{$\mathrm{\Pi\Sigma}$}
\newcommand{\rpisiSE}{$\mathrm{R}\mathrm{\Pi\Sigma}$}
\newcommand{\aE}{$\mathrm{A}$}
\newcommand{\rE}{$\mathrm{R}$}
\newcommand{\apE}{$\mathrm{AP}$}
\newcommand{\rpiE}{$\mathrm{R\Pi}$}
\newcommand{\pE}{$\mathrm{P}$}
\newcommand{\piE}{$\mathrm{\Pi}$}
\newcommand{\sigmaE}{$\mathrm{\Sigma}$}
\newcommand{\rpisiE}{$\mathrm{R}\mathrm{\Pi\Sigma}$}
\newcommand{\geno}[1]{\left\langle {#1} \right\rangle}
\newcommand{\mySum}[4]{\smashoperator{\sum_{\mathclap{\substack{{#1}={#2}}}}^{{#3}}}{#4}} 
\newcommand{\ProdLst}[4]{ {{#1}^{{#4}_{#2}}_{#2}}\cdots{{#1}^{{#4}_{#3}}_{#3}} } 
\newcommand{\prodLst}[5]{ \prod_{{#1}={#2}}^{{#3}} {{#4}_{#1}}^{{#5}_{#1}} } 
\newcommand{\fourargsexpsubscript}[4]{ {#1}^{{#2}_{{#3},{#4}}}_{{#3},{#4}} }
\newcommand{\threeargssubscript}[3]{ {#1}_{{#2},{#3}} }
\newcommand{\ord}{\mathrm{ord}}
\newcommand{\const}{\mathrm{const}} 
\newcommand{\zv}[1]{  {\bf 0}_{#1} }
\newcommand{\zs}{\{ 0 \}}
\newcommand{\myd}{\mathfrak{d}}
\newcommand{\genn}[1]{\langle{#1}\rangle}
\newcommand{\ii}{\mathbbm{i}} 
\newcommand{\ee}{\mathbbm{e}} 
\newcommand{\sm}{\setminus}
\newcommand{\s}{\sigma}
\newcommand{\bs}{\boldsymbol}
\newcommand{\zvs}[1]{\{ {\bf 0}_{#1} \}} 
\newcommand{\dField}[2][\sigma]{( {#2}, {#1})}
\newcommand{\ringOfEquivSeqs}[1][\mathbb{K}]{\mathcal{S}({#1})}
\newcommand{\seqA}[4][0]{\big\langle {#2}_{#3}{(#4)} \big\rangle_{{#4} \geq {#1}} }
\newcommand{\funcSeqA}[3][0]{\big\langle {#2} \big\rangle_{{#3} \geq {#1}}}
\newcommand{\algSeq}[2]{\big(\sqrt{#1}\big)^{#2}}
\newcommand{\intSeq}[2]{{#1}^{#2}} 
\newcommand{\intSeqExp}[3]{\big({#1}^{#2}\big)^{#3}} 
\newcommand{\myProduct}[4]{\smashoperator{\prod_{\mathclap{\substack{{#1}={#2}}}}^{{#3}}} {#4}} 
\newcommand{\myProd}[3]{\smashoperator{\prod_{\mathclap{\substack{{#1}={#2}}}}^{{#3}}}} 
\newcommand{\Lst}[3]{{#1}_{#2},\dots,{#1}_{#3}}
\newcommand{\id}{\mathrm{id}}
\DeclareMathOperator{\lcm}{lcm}
\DeclareMathOperator{\ev}{ev} 
\DeclareMathOperator{\ProdExpr}{ProdE_n}
\DeclareMathOperator{\ProdMon}{ProdM_n}
\DeclareMathOperator{\Prod}{Prod_n}
\DeclareMathOperator{\per}{per}
\newcommand{\nn}{\nonumber} 
\newcommand{\cali}[1]{\mathcal{#1}}
\newcommand{\dr}{difference ring} 
\newcommand{\df}{difference field}
\newcommand{\drE}{difference ring extension}
\newcommand{\rc}{\textcolor{red}} 
\definecolor{blaugrau}{rgb}{0.796887, 0.789075, 0.871107}
\newcounter{mmacnt}
\def\restartmma{\setcounter{mmacnt}{0}}
\newenvironment{mma}{
	\par
	\catcode`|=\active
	\parskip=6pt\parindent=0pt 
	\small
	\def\In##1\\{%
		\def\linebreak{\hfill\break\null\qquad}%
		\refstepcounter{mmacnt}
		\hangindent=2.5em\hangafter=0
		\leavevmode
		\llap{\tiny\sffamily In[\arabic{mmacnt}]:=\kern.5em}%
		\mathversion{bold}\scriptsize$\tt\bf\displaystyle##1$\normalsize
		\mathversion{normal}\par
	}%
	\def\Print##1\\{%
		\def\linebreak{\hfill\break}%
		\hangindent=2.5em\hangafter=0
		\leavevmode\scriptsize ##1\par}%
	\def\Out##1\\{%
		\vspace*{-0.5cm}\def\linebreak{$\hfill\break\null\hfill$}%
		\kern\abovedisplayskip\par
		\hangindent=2.5em\hangafter=0
		\leavevmode
		\llap{\tiny\sffamily Out[\arabic{mmacnt}]=\kern.5em}
		\scriptsize$\displaystyle\tt##1$\normalsize\hfill\null\par
		\kern\belowdisplayskip\vspace*{-0.3cm}
	}%
	\def\Warning##1##2\\{%
		\def\linebreak{\hfill\break}%
		\hangindent=2.5em\hangafter=0
		\leavevmode
		{\scriptsize##1 : ##2}\par}%
}{%
	\par\smallskip
}
\newcommand{\LoadP}[1]{\fcolorbox{black}{blaugrau}{
\begin{minipage}[t]{12.5cm}
			\scriptsize\normalfont #1
\end{minipage}}}
\newcommand{\myIn}[1]{{\sffamily In[#1]}}
\def\MLabel#1{{\refstepcounter{mmacnt}\label{#1}}\addtocounter{mmacnt}{-1}}
\newcommand{\MText}[1]{\textbf{\ttfamily#1}}
\def\ProblemSpecBox{
    \@ifnextchar[\ProblemSpecBox@opt{\ProblemSpecBox@noopt}}
\def\ProblemSpecBox@opt[#1]#2{
    \protected@edef\@currentlabelname{#1}
    \protected@edef\@currentlabel{#1}
    \begin{mdframed}[
        innerlinewidth=0.5pt,
        innerleftmargin=5.5pt,
        innerrightmargin=5.5pt,
        innertopmargin = 10pt,
        innerbottommargin=10pt,
        skipabove=\dimexpr\topsep+\ht\strutbox\relax,
        roundcorner=5pt,
        frametitle={#2},
        frametitlerule=true,
        frametitlerulewidth=1pt]
    }
    \def\ProblemSpecBox@noopt#1{
        \ProblemSpecBox@opt[#1]{#1}
    }
    \def\endProblemSpecBox{
    \end{mdframed}
}
\newcommand{\manuallabel}[2]{\def\@currentlabel{#2}\label{#1}}
\newcommand{\op}[1]{{#1}roblem {O}} 
\newcommand{\gop}[1]{{#1}roblem {GO}} 
\newcommand{\rpe}[1]{{#1}roblem {RPE}}
\newcounter{mmawithcounter}
\renewcommand*\themmawithcounter{\bf{Mathematica Session}~\arabic{mmawithcounter}}
\def\mdf@@mmawithcounterpoints{}
    \def\mdf@@mmawithcounterpoints{#1}
\tikzset{titregris/.style =
    {draw=gray, thick, fill=white, shading = mmawithcountertitle, %
        text=gray, rectangle, rounded corners, right,minimum height=.7cm}}
\definecolor{blaugrau}{rgb}{0.796887, 0.789075, 0.871107}
\begin{document}

\begin{frontmatter}

\title{{\footnotesize\vspace*{-1.3cm}
	RISC-Linz Report Series No. 20-19}\\[0.2cm]
	Representation of hypergeometric products of higher nesting depths in difference rings}
\tnotetext[t1]{This work was supported by the Austrian Science Fund (FWF) grant P32301 and by the Austrian Science Fund (FWF) grant F5009-N15 in the framework of the SFB ``Algorithmic and Enumerative Combinatorics''.}
	
\author[1,2]{Evans Doe Ocansey}
\ead{evans.ocansey@jku.at}
\author[2]{Carsten Schneider}
\ead{cschneider@risc.jku.at}

\address[1]{Johannes Kepler University Linz, Institute for Algebra, Linz, Austria}
\address[2]{Johannes Kepler University Linz, Research Institute for Symbolic Computation (RISC), Linz, Austria}


\begin{abstract}
A non-trivial symbolic machinery is presented that can rephrase algorithmically a finite set of nested hypergeometric products in appropriately designed difference rings. As a consequence, one obtains an alternative representation in terms of one single product defined over a root of unity and nested hypergeometric products which are algebraically independent among each other. In particular, one can solve the zero-recognition problem: the input expression of nested hypergeometric products evaluates to zero if and only if the output expression is the zero expression.
Combined with available symbolic summation algorithms in the setting of difference rings, one obtains a general machinery that can represent (and simplify) nested sums defined over nested products.
\end{abstract}

\begin{keyword}
difference rings \sep nested hypergeometric products \sep constant field \sep ring of sequences \sep zero recognition \sep algebraic independence, roots of unity products

\end{keyword}

\end{frontmatter}

\section{Introduction}\label{Sec:Introduction}

An important problem in symbolic summation is the simplification of sums defined over products to expressions in terms of simpler sums and products; in the best case, one might find an expression without sums. 
A first milestone was Gosper's algorithm~\cite{Gosper:78} and Zeilberger's groundbreaking application for creative telescoping~\cite{Zeilberger:91} where the summand is given by one hypergeometric product. Further extensions have been accomplished for $q$-hypergeometric and  multibasic products~\cite{PauleRiese:97}, their mixed versions~\cite{bauer1999multibasic} and ($q$--)multi-summation~\cite{Wilf:92,Wegschaider,Riese:03}.
In addition, structural properties and further insight in this setting have been elaborated, e.g., in~\cite{paule1995greatest,chen2011structure,CJKS:13}. More generally, the holonomic system approach~\cite{Zeilberger:90a} and their refinements~\cite{CHYZAK:00,Koutschan:13,BRS:18} represent, e.g., multi-sums over \hbox{($q$--)}hypergeometric products by systems of linear recurrence relations. 

In particular, Karr's difference field approach~\cite{karr1981summation,Karr:85} paved the way for a general framework to represent rather complicated product expressions in a formal way. Here the generators of his $\Pi\Sigma$-field construction enables one to model (up to a certain level)
indefinite nested products of the form
\begin{equation}\label{eqn:nestedHypergeometricProduct}
P(n) =\myProduct{k_{1}}{\ell_{1}}{n}{f_{1}(k_{1})}\cdots\myProduct{k_{m}}{\ell_{m}}{k_{m-1}}{f_{m}(k_{m})}
\end{equation} 
where the multiplicands $f_i(n)=\frac{p_i(n)}{q_i(n)}$ for all $i$ with $1\le i \le m$ are built by polynomial expressions $p_i(n)$ and $q_i(n)$ in terms of indefinite nested products that are again of the form~\eqref{eqn:nestedHypergeometricProduct}. In Karr's seminal works~\cite{karr1981summation,Karr:85}, which can be considered as the discrete version of Risch's integration algorithm~\cite{Risch:69}, a sophisticated algorithm is provided that enables one to test if a given product representation (and sums defined over such products) are expressed properly in his $\Pi\Sigma$-field setting. As a bonus, his toolbox and refinements in~\cite{schneider2015fast,schneider2007simplifying,schneider2008refined} enable one to decide constructively if a given summand represented in a $\Pi\Sigma$-field has a solution in the same field or in an appropriate extension of it. In addition, first contributions have been provided in~\cite{schneider2005product,abramov2010polynomial} to simplify products further such that the degrees in the numerators and denominators are minimal. 

For such complicated classes of products the following task is non-trivial: given an arbitrary expression in terms of nested products~\eqref{eqn:nestedHypergeometricProduct}, design algorithmically an appropriate difference field or ring in which the expression can be represented and in which one can solve, e.g., the (creative) telescoping problem. Already for a hypergeometric product $P_1(n)=\prod_{k=\ell}^n f(k)$ with a rational function $f(x)\in\KK(x)^*$ and $\ell\in\NN$ chosen properly (i.e., $P_1(n)$ is well defined and nonzero for all $n\in\NN$), it has been shown in~\cite{schneider2005product} that Karr's $\Pi\Sigma$-fields are not sufficient: namely, such a product $P_1(n)$ can be represented in a $\Pi\Sigma$-field if and only if it cannot be rewritten in the form $P_1(n)=\zeta^n\,r(n)$ where $r(x)\in\KK(x)$ and $\zeta\in\KK$ is a primitive $\lambda$-th root of unity with $\lambda>1$. More precisely, such objects can be only represented in ring extensions taking care of the relation $(\zeta^n)^{\lambda}=1$. As a consequence, zero-divisors are introduced coming from
$(1+\zeta^n+\dots+(\zeta^{n})^{\lambda-1})(1-\zeta^n)=0.$

Motivated by this observation (among others) a refined difference ring theory has been elaborated in~\cite{schneider2016difference,schneider2017summation} that combines big parts of Karr's general framework together with generators of the form $\zeta^n$. In this way, not only the hypergeometric product $P_1(n)$, but more generally any polynomial expression in terms of hypergeometric products can be represented in the class of so-called \rpisiE-extensions. The first algorithms derived in~\cite{schneider2005product,Schneider:14} require that the  input products are defined over $\KK(x)$ where $\KK=\QQ(\kappa_1,\dots,\kappa_u)$ with $u\geq0$ is a rational function field defined over the rational numbers $\QQ$. More generally, a complete algorithm has been elaborated in~\cite{ocansey2018representing} (utilizing ideas from~\cite{ge1993testing,Schneider:14}) that can represent a finite set of hypergeometric products  over $\KK(x)$ where $\KK=K(y_1,\dots,y_r)$ is a rational function field defined over an algebraic number field $K$. In addition, using algorithms from~\cite{bauer1999multibasic} it can deal also with $q$-hypergeometric, multibasic and mixed hypergeometric products. Finally, a general framework has been elaborated in~\cite{schneider2020minimal} that considers single nested products defined over a general class of difference fields; an extra bonus is that the latter approach constructs a difference ring in which the given products are rephrased optimally with following property: the number of generators of the ring and the order $\lambda$ of the used $\zeta^n$ are minimal.\\
A remarkable feature of the above algorithms~\cite{schneider2005product,Schneider:14,ocansey2018representing,schneider2020minimal} is that the given input expression of hypergeometric products (and their generalized versions) are rephrased in terms of a finite set of alternative products $Q_1(n),\dots,Q_s(n)$ together with a distinguished root of unity product $\zeta^n$ such that the sequences produced by $Q_1(n),\dots,Q_s(n)$ are algebraically independent among each other. For this result we rely on ideas of~\cite{schneider2017summation} that are inspired by~\cite{schneider2010parameterized,hardouin2008differential}; compare also~\cite{chen2011structure}. We remark further that these results are also connected to~\cite{kauers2008computing} that can compute all algebraic relations of $C$-finite solutions (i.e., solutions of homogeneous recurrences with constant coefficients). \\
We emphasize that the algorithms from~\cite{schneider2005product,abramov2010polynomial,Schneider:14,ocansey2018representing,schneider2020minimal} can be utilized to simplify hypergeometric solutions~\cite{petkovvsek1992hypergeometric,van1999finite,ABPS:20} of linear difference equations and can be combined with symbolic summation algorithms~\cite{schneider2007simplifying,schneider2015fast,schneider2008refined} to simplify more general solutions, such as d'Alembertian solutions~\cite{abramov1994d,abramov1996d} and Liouvillian solutions~\cite{Singer:99,petkovvsek2013solving}. 

In this article we aim at extending this toolbox significantly for the general class of nested hypergeometric products that can be defined as follows.

\begin{definition}\label{defn:hyperGeometricPrdts}
    \normalfont  Let $\KK(x)$ be a rational function field\footnote{Throughout this article all fields and rings have characteristic $0$.} and let $f_{1}(x),\dots,f_{m}(x)\in\KK(x)^{*}$. Furthermore, let $\Lst{\ell}{1}{m}\in\NN$ such that for all $i$ with $1\le i \le m$, $f_{i}(j)$ is non-zero and has no pole for all $j\in\NN$ with $j\ge \ell_{i}$. Then the indefinite product expression~\eqref{eqn:nestedHypergeometricProduct} 
    is called a \emph{hypergeometric product in $n$ of nesting depth $m$}. The vector $(f_{1}(x),\dots,f_{m}(x))\in(\KK(x)^*)^m$ is also called the \emph{multiplicand representation} of $P(n)$. If $f_{i}(x)\in\KK^{*}$ for $1\le i \le m$, then we call~\eqref{eqn:nestedHypergeometricProduct}
    a \emph{constant} or \emph{geometric product in $n$ of nesting depth $m$}. 
    Further, we define the set of ground expressions with\footnote{Their elements are considered as expressions that can be evaluated for sufficiently large $n\in\NN$.} $\KK(n)=\{f(n)\mid f(x)\in\KK(x)\}$. Moreover, we define $\Prod(\GG)$ with $\GG\subseteq\KK(x)$ as the set of all such products where the multiplicand representations are taken from $\GG$. Furthermore, we introduce the set of
    product monomials 
    $\ProdMon(\GG)$ as the set of all elements
    $$a(n)P_{1}(n)^{\nu_{1}}\cdots P_{e}(n)^{\nu_{e}}$$
    with $a(x)\in\GG$, $e\in\NN$, $\nu_1,\dots,\nu_e\in\ZZ$ and $P_{1}(n),\dots,P_{e}(n)\in\Prod(\GG)$. Finally, we introduce
    the set of product expressions $\ProdExpr(\GG)$ as the set of all elements	
    \begin{equation}\label{Equ:ProdEDef}
    A(n)=\smashoperator{\sum_{\mathclap{\substack{{\bs{v}=(\Lst{\nu}{1}{e})\in S}{}}}}^{{}}}{a_{\bs{v}}(n)\,P_{1}(n)^{\nu_{1}}\cdots P_{e}(n)^{\nu_{e}}}
    \end{equation}
    with $e\in\NN$, $S \subseteq \ZZ^{e}$ finite, $a_{\bs{v}}(x)\in\GG$ for $\bs{v}\in S$ and $P_{1}(n),\dots,P_{e}(n)\in\Prod(\GG)$. Note that $\Prod(\GG)\subseteq\ProdMon(\GG)\subseteq\ProdExpr(\GG)$.
\end{definition}

\noindent Utilizing the available algorithms from~\cite{ocansey2018representing} we will obtain enhanced algorithms that can rephrase expressions from $\ProdExpr(\KK(x))$ in the setting of $R\Pi\Sigma$-extensions. As a consequence we will solve the following problem; for further details see Theorem~\ref{thm:main} and Corollary~\ref{Cor:SolutionToMainProblem} below.

\begin{ProblemSpecBox}[\rpe{P}]{
        {\bf \rpe{P}: Representation of Product Expressions.}
    }\label{prob:ProblemRPE}
    {
        Let $\KK=K(\Lst{\kappa}{1}{u})$ be a rational function field with $e\ge0$ over an algebraic number field $K$. \emph{Given} $A(n)\in\ProdExpr(\KK(x))$. \emph{Find} $B(n)\in\ProdExpr(\KK(x))$ with $\tilde{\KK}=\tilde{K}(\Lst{\kappa}{1}{u})$  where $\tilde{K}$ is an algebraic field extension of $K$, and a non-negative integer $\delta\in\NN$ with the following properties: 
        \begin{enumerate}[(1)]
            \item\manuallabel{item1:ProblemRPE}{(1)} $A(n) = B(n)$ for all $n \in \NN$ with $n \geq \delta$; 
            \item\manuallabel{item2:ProblemRPE}{(2)} All the products $P_1(n),\dots,P_s(n)\in\Prod(\KK(x))$ arising in $B(n)$ (apart from the distinguished product $\zeta^n$ with $\zeta$ a root of unity) are algebraically independent among each other.
            \item\manuallabel{item3:ProblemRPE}{(3)} The zero-recognition property holds, i.e., $A(n)=0$ holds for all $n$ from a certain point on if and only if $B(n)$ is the zero-expression. 
        \end{enumerate}
    }
\end{ProblemSpecBox} 

\noindent The full machinery have been implemented within Ocansey's Mathematica package \texttt{NestedProducts} whose functionality will be illustrated in Section~\ref{subsec:mathematicaDemo} below; for additional aspects we refer also to~\cite{ocansey2019difference}.
We expect that this implementation will open up new applications, e.g., in combinatorics, such as non-trivial evaluations of determinants~\cite{mills1983alternating,zeilberger1996proof,krattenthaler2001advanced}. 
In particular, in interaction with the symbolic summation algorithms available in the package~\texttt{Sigma}~\cite{schneider2007symbolic} one obtains a fully automatic toolbox to tackle nested sums defined over nested hypergeometric products.

\medskip

The outline of the article is as follows. In Section~\ref{sec:preProcessingStep} we will introduce rewrite rules that enable one to transform expressions from $\ProdExpr(\KK(x))$ to a more suitable form (see Proposition~\ref{pro:preprocessingNestedHypergeometricProductsExtended} below) to solve \ref{prob:ProblemRPE}. Given this tailored form, we show in Section~\ref{sec:apExts} how such expressions can be rephrased straightforwardly in terms of multiple-chain \apE-extensions. In order to solve \ref{prob:ProblemRPE}, we have to refine this difference ring construction. Namely, in Section~\ref{sec:refinedApproachRPiExtns} we introduce \rpiE-extensions: these are \apE-extensions where during the construction the set of constants remain unchanged. In particular, we will elaborate that such rings can be straightforwardly embedded into the ring of sequences and will provide structural theorems that will prepare the ground to solve \ref{prob:ProblemRPE}. With these results we will present in Section~\ref{Sec:ProductsInRPiExt} the main steps how nested products can be represented in \rpiE-extensions. In Section~\ref{Sec:CompleteAlg} we will combine all these ideas yielding a complete algorithm for \ref{prob:ProblemRPE} that is summarized in Theorem~\ref{thm:main} and Corollary~\ref{Cor:SolutionToMainProblem}. In addition, we will illustrate with non-trivial examples how one can solve \ref{prob:ProblemRPE} with the new Mathematica package \texttt{NestedProducts}. The conclusions are given in Section~\ref{Sec:Conclusion}.

\section{Preprocessing hypergeometric products of finite nesting depth}\label{sec:preProcessingStep}
In order to support our machinery to solve~\ref{prob:ProblemRPE}, the arising products $P(n)$ in $A(n)\in\ProdExpr(\KK(x))$ (e.g., given in~\eqref{Equ:ProdEDef}) will be transformed to a particularly nice form. We will illustrate each preprocessing step with an example and then summarize the derived result in Proposition~\ref{pro:preprocessingNestedHypergeometricProductsExtended} below. 

Let $\KK(x)$ be a rational function field together with the \emph{zero-function} (in short \emph{$Z$-function}) defined by
\begin{equation}\label{eqn:hyperGeoShiftBoundedFxns}
    Z(p) = \max\big(\{k \in \NN\,|\,p(k)=0\}\big)+1 \ \text{ for any } p \in \KK[x]
\end{equation}
with $\max(\varnothing)=-1$. We call $\KK$ \emph{computable} if all basic field operations are computable. Note that if $\KK$ is a rational function field over an algebraic number field, then $\KK$ and also its $Z$-function are computable. 

We start with the hypergeometric product in $n$ of nesting depth $m\in\NN$ given by 
\begin{equation}\label{eqn:arbitraryNestedHyperGeoPrdt}
    P(n) = \myProduct{k_{1}}{\ell_{1}}{n}{f_{1}(k_{1})\myProduct{k_{2}}{\ell_{2}}{k_{1}}{f_{2}(k_{2})}\cdots\myProduct{k_{m}}{\ell_{m}}{k_{m-1}}{f_{m}(k_{m})}}\in\Prod(\KK(x))
\end{equation}
where $f_{i}(x)\in\KK(x)^{*}$ and $\ell_{i}\in\NN$ for all $1\le i \le m$. Note that by definition $P(n)\neq0$ for all $n\in\NN$. 
In particular, no poles arise for any evaluation at $n\in\NN$. We remark that the $Z$-function can be used to specify the lower bounds $\ell_i\in\NN$ such that this property holds. Then $P(n)$ in $\Prod(\KK(x))$ is preprocessed as follows. 

\subsection{Transformation of indefinite products to product factored form}\label{subsec:transformationToProductFactoredForm}

The first transformation is based on the following simple observation.

\begin{proposition}\label{pro:productFactoredForm}
   For $P(n)$ given in~\eqref{eqn:arbitraryNestedHyperGeoPrdt} with multiplicands $\Lst{f}{1}{m}\in\KK(x)^{*}$ we have 
       \begin{equation}\label{eqn:productFactoredForm}
       P(n) = \Bigg(\,\myProduct{k_{1}}{\ell_{1}}{n}{f_{1}(k_{1})}\Bigg)\Bigg(\,\myProd{k_{1}}{\ell_{1}}{n}\;\;\;\myProduct{k_{2}}{\ell_{2}}{k_{1}}{f_{2}(k_{2})}\Bigg)\cdots\Bigg(\,\myProd{k_{1}}{\ell_{1}}{n}\;\;\,\myProd{k_{2}}{\ell_{2}}{k_{1}}\cdots\myProduct{k_{m}}{\ell_{m}}{k_{m-1}}{f_{m}(k_{m})}\Bigg)\in\ProdMon(\KK(x)).
       \end{equation}
\end{proposition}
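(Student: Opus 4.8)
The plan is to prove~\eqref{eqn:productFactoredForm} by induction on the nesting depth $m$, using nothing more than the distributivity of a \emph{finite} product over multiplication, $\prod_{k}(a_{k}b_{k})=\bigl(\prod_{k}a_{k}\bigr)\bigl(\prod_{k}b_{k}\bigr)$, together with commutativity and associativity of multiplication, which allow one to interchange the order of two nested finite products. Since $P(n)$ is well defined and nonzero for every $n\in\NN$ by Definition~\ref{defn:hyperGeometricPrdts} (and likewise for all products appearing on the right-hand side), these manipulations are legitimate term by term and the claimed equality holds for all $n\in\NN$. Morally, the induction just peels the multiplicand $f_{1}(k_{1})$ off the outermost product and then recurses into the remaining inner product.

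For $m=1$ there is nothing to do: both sides of~\eqref{eqn:productFactoredForm} equal $\myProduct{k_{1}}{\ell_{1}}{n}{f_{1}(k_{1})}$. For the induction step I would abbreviate the inner part of the outermost product by
\[
Q(k_{1}) = \myProduct{k_{2}}{\ell_{2}}{k_{1}}{f_{2}(k_{2})}\cdots\myProduct{k_{m}}{\ell_{m}}{k_{m-1}}{f_{m}(k_{m})},
\]
which is a hypergeometric product in $k_{1}$ of nesting depth $m-1$ with multiplicand representation $(f_{2}(x),\dots,f_{m}(x))$. Distributing the outermost finite product gives
\[
P(n) = \myProduct{k_{1}}{\ell_{1}}{n}{\bigl(f_{1}(k_{1})\,Q(k_{1})\bigr)} = \Bigl(\myProduct{k_{1}}{\ell_{1}}{n}{f_{1}(k_{1})}\Bigr)\Bigl(\myProduct{k_{1}}{\ell_{1}}{n}{Q(k_{1})}\Bigr).
\]
By the induction hypothesis, $Q(k_{1})$ equals the product over $j=2,\dots,m$ of the factors $\myProd{k_{2}}{\ell_{2}}{k_{1}}\myProd{k_{3}}{\ell_{3}}{k_{2}}\cdots\myProduct{k_{j}}{\ell_{j}}{k_{j-1}}{f_{j}(k_{j})}$. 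Substituting this into the right-hand factor above and interchanging the (finite) outer product over $k_{1}$ with the (finite) product over $j$ yields precisely the factors indexed by $j=2,\dots,m$ in~\eqref{eqn:productFactoredForm}, while $\myProduct{k_{1}}{\ell_{1}}{n}{f_{1}(k_{1})}$ is the $j=1$ factor; this establishes~\eqref{eqn:productFactoredForm}.

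It remains to check that the right-hand side lies in $\ProdMon(\KK(x))$. Its $j$-th factor
\[
F_{j}(n) = \myProd{k_{1}}{\ell_{1}}{n}\myProd{k_{2}}{\ell_{2}}{k_{1}}\cdots\myProduct{k_{j}}{\ell_{j}}{k_{j-1}}{f_{j}(k_{j})}
\]
is a hypergeometric product in $n$ of nesting depth $j$ with multiplicand representation $(1,\dots,1,f_{j}(x))\in(\KK(x)^{*})^{j}$; since $f_{j}(k)$ is nonzero and pole-free for $k\ge\ell_{j}$ and the constant multiplicand $1$ causes no trouble, each $F_{j}(n)$ indeed belongs to $\Prod(\KK(x))$. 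Hence $P(n)=F_{1}(n)\cdots F_{m}(n)$ is a product monomial over $\KK(x)$ (with trivial rational prefactor $1$ and all exponents equal to $1$), as claimed. I do not expect a genuine obstacle in this argument: the only points demanding a little care are the index bookkeeping in the induction step and the observation that each $F_{j}$ meets the admissibility conditions of Definition~\ref{defn:hyperGeometricPrdts}.
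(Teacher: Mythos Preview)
Your proof is correct. The paper itself does not give a proof of this proposition at all---it simply calls it a ``simple observation'' and moves on---so your induction on the nesting depth, peeling off $f_{1}$ via distributivity and recursing into the inner product, is exactly the natural argument the reader is expected to supply, and your verification that each factor $F_{j}(n)$ lies in $\Prod(\KK(x))$ (hence the product lies in $\ProdMon(\KK(x))$) completes the statement as claimed.
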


\begin{definition}
\normalfont
The right hand side of~\eqref{eqn:productFactoredForm} is also called a \emph{product factored form} of $P(n)$.
Moreover, a product of the form
$$P'(n)=\myProd{k_{1}}{\ell_{1}}{n}\;\;\,\myProd{k_{2}}{\ell_{2}}{k_{1}}\cdots\myProduct{k_{m}}{\ell_{m}}{k_{m-1}}{p_{m}(k_{m})}$$
is also called a \emph{product in factored form}. In particular, we also call $p_m(x)\in\KK(x)^*$ (instead of $(1,\dots,1,p_m(x))$) the \emph{multiplicand representation} of $P'(n)$. 
\end{definition}

Further, for $1 \le i \le m$ write 
\begin{equation}\label{eqn:factorAandB}
    \begin{aligned}
        f_{i} &= u_{i}\fourargsexpsubscript{f}{e}{i}{1}\cdots\fourargsexpsubscript{f}{e}{i}{r_{i}}\in\KK(x)
    \end{aligned}
\end{equation}
in its complete factorization. This means that $f_i$ can be decomposed by $u_{i}\in\KK^{*}$ and irreducible monic polynomials 
$f_{i,j}\in\KK[x]\sm\KK$ with $e_{i,j}\in\ZZ$ for some $1 \le j \le r_{i}$ with $r_i\in\NN$.
Substituting~\eqref{eqn:factorAandB} into the right-hand side of~\eqref{eqn:productFactoredForm} and expanding the product quantifiers over each factor in~\eqref{eqn:factorAandB} we get 
\begin{equation*}
    P(n) = A_{1}(n)\,A_{2}(n)\cdots A_{m}(n)\in\ProdMon(\KK(x))
\end{equation*}
where 
\begin{align}\label{eqn:nestedProductsOverNumerator}
    A_{i}(n)&=\Bigg(\,\myProduct{k_{1}}{l_{1}}{n}{\cdots \myProduct{k_{i}}{\ell_{i}}{k_{i-1}}{u_{i}}}\Bigg)\,\Bigg(\,\myProduct{k_{1}}{\ell_{1}}{n}{\cdots \myProduct{k_{i}}{\ell_{i}}{k_{i-1}}{\threeargssubscript{f}{i}{1}(k_{i})}}\Bigg)^{\threeargssubscript{e}{i}{1}}\cdots\Bigg(\,\myProduct{k_{1}}{\ell_{1}}{n}{\cdots \myProduct{k_{i}}{\ell_{i}}{k_{i-1}}{\threeargssubscript{f}{i}{r_{i}}(k_{i})}}\Bigg)^{\threeargssubscript{e}{i}{r_{i}}}
\end{align}
for all $1\le i\le m$. In particular, the first product on the right hand side in~\eqref{eqn:nestedProductsOverNumerator} with innermost multiplicand $u_{i}\in\KK^{*}$ is a geometric product of nesting depth $i$ in $\Prod(\KK)$, while the rest are nesting depth $i$ hypergeometric products in $\ProdMon(\KK(x))$ which are not geometric products.
\begin{example}\label{exa:nestedHyperGeoPrdtPreprocessingStep1}
    \normalfont Let $\KK=\QQ(\sqrt{3})$ and $\KK(x)$ be the rational function field over $\KK$ with the $Z$-function~\eqref{eqn:hyperGeoShiftBoundedFxns}. Suppose we are given the nesting depth $2$ hypergeometric product
    \begin{equation}\label{eqn:depth2HyperGeoPrdt}
    P(n) = \myProduct{k}{1}{n}{\frac{24\,k+1}{-\sqrt{3}}\,\myProduct{j}{3}{k}{\dfrac{-2\,(j^{3}-3\,j+2)}{5\,(j^{2}-j-2)}}}\in\Prod(\KK(x)).
    \end{equation}
    Then with 
        \begin{align}
        A_{1}(n) &= \left(\myProduct{k}{1}{n}{-1}\right) \left(\myProduct{k}{1}{n}{\sqrt{3}}\right)^{-1}\left(\myProduct{k}{1}{n}{24}\right)\left(\myProduct{k}{1}{n}{\left(k+\tfrac{1}{24}\right)}\right)\label{eqn:depth1ProductFactors}\\[5pt]
        \begin{split}\label{eqn:depth2ProductFactors}
            A_{2}(n) &= \left(\myProduct{k}{1}{n}{\myProduct{j}{3}{k}{-1}}\right)\left(\myProduct{k}{1}{n}{\myProduct{j}{3}{k}{5}}\right)^{-1} \left(\myProduct{k}{1}{n}{\myProduct{j}{3}{k}{2}}\right)\left(\myProduct{k}{1}{n}{\myProduct{j}{3}{k}{\left(j-2\right)}}\right)^{-1}\left(\myProduct{k}{1}{n}{\myProduct{j}{3}{k}{\left(j-1\right)}}\right)^{2}\\[1pt]
            & \quad \left(\myProduct{k}{1}{n}{\myProduct{j}{3}{k}{\left(j+1\right)}}\right)^{-1}\left(\myProduct{k}{1}{n}{\myProduct{j}{3}{k}{\left(j+2\right)}}\right)
        \end{split}
        \end{align}%
    equation~\eqref{eqn:depth2HyperGeoPrdt} can be written in the form 
    \begin{equation*}
    P(n) = A_{1}(n)\,A_{2}(n)\in\ProdMon(\KK(x))
    \end{equation*}
    where the multiplicand representations of the products in $A_{1}(n)$ and $A_{2}(n)$ are either from $\KK$ or are irreducible polynomials from $\KK[x]$.
\end{example}

\subsection{Synchronization of lower bounds}\label{subsec:SyncLowerBounds}

Another transformation will guarantee that all arising products have the same lower bound, i.e., that the expression is $\delta$-refined for some $\delta\in\NN$. 

\begin{definition}\label{defn:deltaRefined}
	\normalfont Let $\KK(x)$ be a rational function field over a field $\KK$ and $\delta\in\NN$. $H(n)\in\ProdExpr(\KK(x))$ is said to be \emph{$\delta$-refined} if the lower bounds in all the arising products of $H(n)$ are $\delta$.
\end{definition}

Such a transformation of a given product expression to a $\delta$-refined version can be accomplished by taking $\delta$ to be the maximum of all arising lower bounds within the given expression.

\begin{example}[Cont. 
Example~\ref{exa:nestedHyperGeoPrdtPreprocessingStep1}]\label{exa:nestedHyperGeoPrdtPreprocessingStep2}
\normalfont
In $P(n)$ (resp.\ $A_1(n)$ and $A_2(n)$ of Example~\ref{exa:nestedHyperGeoPrdtPreprocessingStep1} we choose $\delta=3$. 
\end{example}

\noindent Namely, for all $1\le i \le m$, rewrite each product in~\eqref{eqn:nestedProductsOverNumerator} such that the lower bounds are synchronized to $\delta$. More precisely we apply the formula 
\begin{multline}\label{eqn:prdtRewriteRule}
\myProduct{k_{1}}{\ell_{1}}{n}{\,\;\;\myProduct{k_{2}}{\ell_{2}}{k_{1}}{\cdots\myProduct{k_{i}}{\ell_{i}}{k_{i-1}}{h(k_{i})}}} = \Bigg(\,\myProduct{k_{1}}{\ell_{1}}{\delta-1}{\,\;\;\;\myProduct{k_{2}}{\ell_{2}}{k_{1}}{\cdots\myProduct{k_{i}}{\ell_{i}}{k_{i-1}}{h(k_{i})}}}\Bigg)\Bigg(\,\myProduct{k_{1}}{\delta}{n}{\,\;\;\;\myProduct{k_{2}}{\ell_{2}}{\delta-1}{\cdots\myProduct{k_{i}}{\ell_{i}}{k_{i-1}}{h(k_{i})}}}\Bigg)\\ \Bigg(\,\myProduct{k_{1}}{\delta}{n}{\,\;\;\myProduct{k_{2}}{\delta}{k_{1}}{\;\;\;\myProduct{k_{3}}{\ell_{3}}{\delta-1}{\cdots\myProduct{k_{i}}{\ell_{i}}{k_{i-1}}{h(k_{i})}}}}\Bigg)\cdots\Bigg(\,\myProduct{k_{1}}{\delta}{n}{\,\;\;\myProduct{k_{2}}{\delta}{k_{1}}{\cdots\myProduct{k_{i-1}}{\delta}{k_{i-2}}{\;\;\;\;\myProduct{k_{i}}{\ell_{i}}{\delta-1}{h(k_{i})}}}}\Bigg)\Bigg(\,\myProduct{k_{1}}{\delta}{n}{\,\;\;\myProduct{k_{2}}{\delta}{k_{1}}{\cdots\myProduct{k_{i}}{\delta}{k_{i-1}}{h(k_{i})}}}\Bigg)
\end{multline}
to each of the products in~\eqref{eqn:nestedProductsOverNumerator}. Note that the first product on the right-hand side in~\eqref{eqn:prdtRewriteRule} evaluates to a constant in $\KK^*$, the last product is from $\Prod(\KK(x))$, and all the remaining products (after all finite multiplications are carried out) are from $\Prod(\KK)$. Summarizing we obtain 
\begin{equation*}
\tilde{P}(n) = \tilde{A}_{1}(n)\,\tilde{A}_{2}(n)\cdots \tilde{A}_{m}(n)\in\ProdMon(\KK(x))
\end{equation*}
with 
{\fontsize{10.9pt}{0}\selectfont
    \begin{align}
        \tilde{A}_{i}(n)&=a_{i}\Bigg(\myProduct{k_{1}}{\delta}{n}{\tilde{u}_{i,1}}\Bigg)\cdots\Bigg(\myProduct{k_{1}}{\delta}{n}{\cdots\myProduct{k_{i}}{\delta}{k_{i-1}}{\tilde{u}_{i,i}}}\Bigg)\Bigg(\myProduct{k_{1}}{\delta}{n}{\cdots \myProduct{k_{i}}{\delta}{k_{i-1}}{\threeargssubscript{f}{i}{1}(k_{i})}}\Bigg)^{\hspace*{-0.4em}\threeargssubscript{e}{i}{1}}\hspace*{-0.45em}\cdots\Bigg(\myProduct{k_{1}}{\delta}{n}{\cdots \myProduct{k_{i}}{\delta}{k_{i-1}}{\threeargssubscript{f}{i}{r_{i}}(k_{i})}}\Bigg)^{\hspace*{-0.4em}\threeargssubscript{e}{i}{r_{i}}}\label{eqn:refinedNestedProductsOverNumerator}
    \end{align}}%
where  $a_{i},\,\tilde{u}_{i,j}\in\KK^{*}$ for some $j\in\NN$. Since $\delta$ is chosen as the maximum among all lower bounds of the input expression, no poles or zero-evaluations will be introduced. As a consequence, the obtained result is again an element from $\ProdMon(\KK(x))$. In particular, we have that $A_{i}(n) = \tilde{A}_{i}(n)$ for all $n\ge\max(0,\delta-1)$ and consequently,
$P(n) = \tilde{P}(n)$
holds for all $n\ge\max(\delta-1,0)$. 

\begin{example}[Cont. Example~\ref{exa:nestedHyperGeoPrdtPreprocessingStep2}]\label{exa:nestedHyperGeoPrdtPreprocessingStep3}
    \normalfont Synchronizing the lower bounds of each product factor in~\eqref{eqn:depth1ProductFactors} and~\eqref{eqn:depth2ProductFactors} to $3$ computed in Example~\ref{exa:nestedHyperGeoPrdtPreprocessingStep2} and rewriting each product factor in~\eqref{eqn:depth1ProductFactors} and~\eqref{eqn:depth2ProductFactors} we get
    \begin{align}
        \tilde{A}_{1}(n) &= \dfrac{1225}{3}\,\left(\myProduct{k}{3}{n}{-1}\right)\left(\myProduct{k}{3}{n}{\sqrt{3}}\right)^{-1}\left(\myProduct{k}{3}{n}{24}\right)\left(\myProduct{k}{3}{n}{\left(k+\tfrac{1}{24}\right)}\right),\label{eqn:depth1RefinedProductFactors}\\[5pt]
        \begin{split}
            \tilde{A}_{2}(n) &=  \left(\myProduct{k}{3}{n}{\myProduct{j}{3}{k}{-1}}\right)\left(\myProduct{k}{3}{n}{\myProduct{j}{3}{k}{5}}\right)^{-1} \left(\myProduct{k}{3}{n}{\myProduct{j}{3}{k}{2}}\right)\left(\myProduct{k}{3}{n}{\myProduct{j}{3}{k}{\left(j-2\right)}}\right)^{-1}\left(\myProduct{k}{3}{n}{\myProduct{j}{3}{k}{\left(j-1\right)}}\right)^{2}\\[1pt]
            & \quad \left(\myProduct{k}{3}{n}{\myProduct{j}{3}{k}{\left(j+1\right)}}\right)^{-1}\left(\myProduct{k}{3}{n}{\myProduct{j}{3}{k}{\left(j+2\right)}}\right).\label{eqn:depth2RefinedProductFactors}
        \end{split} 
    \end{align}
    In particular, for $i=1,2$, and for all $n\ge\delta-1$ where $\delta=3$, $A_{i}(n)=\tilde{A}_{i}(n)$ holds. Consequently, with  
     $\tilde{P}(n) = \tilde{A}_{1}(n)\,\tilde{A}_{2}(n)$
    we have that 
    $P(n) = \tilde{P}(n)$
    holds for all $n\ge2$.
\end{example}

Since geometric products never introduce poles or zeroes, we can bring each geometric product in~\eqref{eqn:refinedNestedProductsOverNumerator} to a $1$-refined form by using a similar formula as given in~\eqref{eqn:prdtRewriteRule}. This yields  
\begin{equation*}
    P'(n) = A'_{1}(n)\,A'_{2}(n)\cdots A'_{m}(n)\in\ProdMon(\KK(x))
\end{equation*}
where 
{\fontsize{10.75pt}{0}\selectfont
    \begin{align}
        A'_{i}(n)&=\tilde{a}_{i}\Bigg(\myProduct{k_{1}}{1}{n}{\tilde{u}_{i,1}}\Bigg)\cdots\Bigg(\myProduct{k_{1}}{1}{n}{\cdots\myProduct{k_{i}}{1}{k_{i-1}}{\tilde{u}_{i,i}}}\Bigg)\Bigg(\myProduct{k_{1}}{\delta}{n}{\cdots \myProduct{k_{i}}{\delta}{k_{i-1}}{\threeargssubscript{f}{i}{1}(k_{i})}}\Bigg)^{\hspace*{-0.4em}\threeargssubscript{e}{i}{1}}\hspace*{-0.45em}\cdots\Bigg(\myProduct{k_{1}}{\delta}{n}{\cdots \myProduct{k_{i}}{\delta}{k_{i-1}}{\threeargssubscript{f}{i}{r_{i}}(k_{i})}}\Bigg)^{\hspace*{-0.4em}\threeargssubscript{e}{i}{r_{i}}}\label{eqn:refinedNestedProductsOverNumerator2}
    \end{align}}%
with $\tilde{a}_{i},\,\tilde{u}_{i,j}\in\KK^{*}$ for some $j\in\NN$. In particular we have that, $A_{i}(n) = A'_{i}(n)$ holds for all $n\ge\max(\delta-1,0)$ and consequently,
$P(n) = P'(n)$
holds for all $n\ge\max(0,\delta-1)$. By rearranging the arising products in $P'(n)$ we obtain the decomposition
\begin{equation}\label{Equ:P'(n)}
    P'(n) = c\,G(n)\,H(n)
\end{equation}
with $c\in\KK^{*}$, $G(n)\in\ProdMon(\KK)$ is composed multiplicatively by geometric products in factored form of nesting depth at most $m$ which are $1$-refined and $H(n)\in\ProdMon(\KK(x))$ is composed multiplicatively by hypergeometric products (which are not geometric) in factored form of nesting depth at most $m$ which are $\delta$-refined. In particular, the multiplicand representations are given by monic irreducible polynomials.
 
\begin{example}[Cont. Example~\ref{exa:nestedHyperGeoPrdtPreprocessingStep3}]\label{exa:nestedHyperGeoPrdtPreprocessingStep4}
    \normalfont Synchronizing the lower bounds of each geometric product in~\eqref{eqn:depth1RefinedProductFactors} and~\eqref{eqn:depth2RefinedProductFactors} to $1$ and rewriting these geometric products we get 
    \begin{align*}
        A'_{1}(n) &= \frac{1225}{576}\left(\myProduct{k}{1}{n}{-1}\right)\left(\myProduct{k}{1}{n}{\sqrt{3}}\right)^{-1}\left(\myProduct{k}{1}{n}{24}\right)\left(\myProduct{k}{3}{n}{\left(k+\tfrac{1}{24}\right)}\right),\\[5pt]
    \begin{split}
        A_{2}'(n) &= -\dfrac{2}{5}\left(\myProduct{k}{1}{n}{4}\right)^{-1}\left(\myProduct{k}{1}{n}{25}\right)\left(\myProduct{k}{1}{n}{\myProduct{j}{1}{k}{-1}}\right)\left(\myProduct{k}{1}{n}{\myProduct{j}{1}{k}{5}}\right)^{-1}\left(\myProduct{k}{1}{n}{\myProduct{j}{1}{k}{2}}\right)\left(\myProduct{k}{3}{n}{\myProduct{j}{3}{k}{\left(j-2\right)}}\right)^{-1}\\[1pt]
        & \quad \left(\myProduct{k}{3}{n}{\myProduct{j}{3}{k}{\left(j-1\right)}}\right)^{2}\left(\myProduct{k}{3}{n}{\myProduct{j}{3}{k}{\left(j+1\right)}}\right)^{-1}\left(\myProduct{k}{3}{n}{\myProduct{j}{3}{k}{\left(j+2\right)}}\right).
        \end{split}
    \end{align*}
    In particular, for $i=1,2$, and for all $n\ge2$, 
    $A_{i}(n)=A'_{i}(n)$
    holds. In total we obtain  
    \begin{align*}
    P'(n) &= A'_{1}(n)\,A'_{2}(n) = c\,G(n)\,H(n)
    \end{align*}
    with 
    \begin{align}
        c &=-\frac{245}{288} \label{eqn:constantTerm},\\[5pt]
        G(n) &= \left(\myProduct{k}{1}{n}{-1}\right)\hspace*{-0.25em}\left(\myProduct{k}{1}{n}{\sqrt{3}}\right)^{\hspace*{-0.25em}-1}\hspace*{-0.35em}\left(\myProduct{k}{1}{n}{4}\right)^{\hspace*{-0.25em}-1}\hspace*{-0.35em}\left(\myProduct{k}{1}{n}{24}\right)\hspace*{-0.30em}\left(\myProduct{k}{1}{n}{25}\right)\hspace*{-0.30em}\left(\myProduct{k}{1}{n}{\myProduct{j}{1}{k}{-1}}\right)\hspace*{-0.30em}\left(\myProduct{k}{1}{n}{\myProduct{j}{1}{k}{5}}\right)^{\hspace*{-0.25em}-1}\hspace*{-0.35em}\left(\myProduct{k}{1}{n}{\myProduct{j}{1}{k}{2}}\right), \label{eqn:geometricProductTerm}\\[5pt]
        H(n) &= \left(\myProduct{k}{3}{n}{\left(k+\tfrac{1}{24}\right)}\right)\hspace*{-0.30em}\left(\myProduct{k}{3}{n}{\myProduct{j}{3}{k}{\left(j-2\right)}}\right)^{\hspace*{-0.3em}-1}\hspace*{-0.35em}\left(\myProduct{k}{3}{n}{\myProduct{j}{3}{k}{\left(j-1\right)}}\right)^{\hspace*{-0.25em}2}\hspace*{-0.30em}\left(\myProduct{k}{3}{n}{\myProduct{j}{3}{k}{\left(j+1\right)}}\right)^{\hspace*{-0.25em}-1}\hspace*{-0.4em}\left(\myProduct{k}{3}{n}{\myProduct{j}{3}{k}{\left(j+2\right)}}\right)\label{eqn:hypergeometricProductTerm}
    \end{align}
    such that $P(n) = P'(n)$
    holds for all $n\ge2$. 
\end{example}

\subsection{Shift-coprime representation}\label{subsec:shiftCoPrimeRep}

Finally, we turn our focus to the class of hypergeometric products given in factored form, and whose innermost multiplicands are irreducible monic polynomials. In order to reduce this class of  products further, we will need the following definition.

\begin{definition}\label{defn:shiftCoPrimeAndShiftEquivalent}
    \normalfont Two nonzero polynomials $f(x)$ and $h(x)$ in the polynomial ring $\KK[x]$ are said to be \emph{shift-coprime} if for all $k\in\ZZ$ we have that $\gcd(f(x),h(x+k))=1$. Furthermore, $f(x)$ and $h(x)$ are called \emph{shift-equivalent} if there is a $k\in\ZZ$ such that $\frac{f(x+k)}{h(x)}\in\KK$.
\end{definition}

It is immediate that the shift-equivalence in Definition~\ref{defn:shiftCoPrimeAndShiftEquivalent} induces an equivalence relation on the set of all irreducible polynomials. Let $\cali{D} = \{\Lst{f}{1}{e}\}\subseteq\KK[x]$ where all elements are irreducible and shift equivalent among each other. Then we call $f_{i}\in\cali{D}$ with $i\in\{1,2,\dots,e\}$ the \emph{leftmost polynomial} in $\cal{D}$ if for all $h\in\cal{D}$ there is a $k\in\NN$ with $\frac{f_i(x+k)}{h(x)}\in\KK$. 
It is well known that $\frac{f_i(x+k)}{h(x)}\in\KK$ iff $k\in\ZZ$ is a root of $p(z)=\text{res}_x(f(x),f_i(x+z))\in\KK[z]$; compare~\cite[Sec.~5.3]{petkovvsek1996b}. In particular, if $\KK$ is computable and one can factorize univariate polynomials over $\KK$, one can determine all integer roots of $p(z)$ and thus can decide constructively if there is a $k\in\ZZ$ with $\frac{f_i(x+k)}{h(x)}\in\KK$. 
All the above properties (and slight generalizations) play a crucial role in symbolic summation; compare~\cite{abramov1971summation,paule1995greatest,schneider2005product,abramov2010polynomial,chen2011structure}. In particular, the following simple lemma is heavily used within symbolic summation; see also~\cite[Lemma 4.12]{schneider2005product}.

\begin{lemma}\label{lem:shiftEquivalentRelation}
    Let $\KK(x)$ be a rational function field and let $f(x),\,h(x)\in\KK[x]\sm\KK$ be monic irreducible polynomials that are shift equivalent. Then there is a $g\in\KK(x)^{*}$ with $h(x)=\tfrac{g(x+1)}{g(x)}\,f(x)$  were all the monic irreducible factors in $g$ are shift equivalent to $f(x)$ (resp.\ $h(x)$). If $\KK$ is computable and one can factorize polynomials over $\KK$, then such a $g$ can be computed.
\end{lemma}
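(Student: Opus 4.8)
The plan is to induct on the shift $k\in\ZZ$ for which $\frac{f(x+k)}{h(x)}\in\KK$, which exists by the hypothesis that $f$ and $h$ are shift equivalent. First I would treat the base case $k=0$: then $f(x+0)=h(x)$ up to a constant, and since both are monic we get $f(x)=h(x)$, so $g=1$ works (the empty product of shift-equivalent factors). For the inductive step, suppose without loss of generality $k\ge 1$ (the case $k\le -1$ is symmetric, exchanging the roles of $f$ and $h$, or equivalently replacing $g$ by $1/g$ after the analogous argument). The key single-step identity is
\begin{equation*}
\frac{f(x+1)}{f(x)}\,f(x)=f(x+1),
\end{equation*}
so setting $g_1(x)=f(x)$ gives $f(x+1)=\frac{g_1(x+1)}{g_1(x)}\,f(x)$ with $g_1$ having its only irreducible factor equal to $f$, hence trivially shift equivalent to $f$. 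Now $f(x+1)$ is monic irreducible and shift equivalent to $h(x)$ with the smaller shift $k-1$; by the induction hypothesis applied to the pair $(f(x+1),h(x))$ there is $g_2\in\KK(x)^*$ with $h(x)=\frac{g_2(x+1)}{g_2(x)}\,f(x+1)$ and all monic irreducible factors of $g_2$ shift equivalent to $f(x+1)$, hence to $f(x)$. Composing, $h(x)=\frac{g_2(x+1)}{g_2(x)}\cdot\frac{g_1(x+1)}{g_1(x)}\,f(x)=\frac{g(x+1)}{g(x)}\,f(x)$ with $g=g_1g_2$, whose irreducible factors are all shift equivalent to $f$ (resp.\ $h$, since shift equivalence is an equivalence relation). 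Concretely, unwinding the recursion, $g(x)=f(x)f(x+1)\cdots f(x+k-1)$ up to a nonzero constant.

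For the computational claim: if $\KK$ is computable and univariate polynomials over $\KK$ can be factored, then as recalled in the excerpt the shift $k$ can be found as an integer root of the resultant $p(z)=\mathrm{res}_x(h(x),f(x+z))\in\KK[z]$ (or one simply tests $f(x+k)/h(x)\in\KK$ for the finitely many candidate $k$ coming from the integer roots of $p$), after which the explicit product formula for $g$ above is written down directly; all operations involved — shifting, multiplying, normalizing to monic — are basic field and polynomial operations over $\KK$.

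I do not expect a genuine obstacle here: the statement is essentially the observation that a single forward shift is realized by the telescoping ratio $f(x+1)/f(x)$, iterated $|k|$ times. The only mild point requiring care is bookkeeping the direction of the shift (handling $k<0$ by symmetry and the consequent inversion of $g$) and verifying that the accumulated factors of $g$ stay within a single shift-equivalence class, which is immediate since each newly introduced factor is a shift of $f$ and shift equivalence is transitive.
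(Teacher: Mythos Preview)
Your proof is correct and follows essentially the same approach as the paper: both arrive at the explicit telescoping product $g(x)=\prod_{i=0}^{k-1}f(x+i)$ for $k\ge 0$ (and its reciprocal variant for $k<0$), with the only difference being that you build this product up by induction on $k$ whereas the paper writes it down at once and verifies the telescoping identity directly. The computational remark is likewise the same.
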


\begin{proof}
    Since $f(x)$ and $h(x)$ are shift equivalent and monic, there is a $k\in\ZZ$ with $f(x+k)=h(x)$. If $k\ge0$, set $g:=\prod_{i=0}^{k-1}f(x+i)$. Then 
    \[
        \frac{g(x+1)}{g(x)} = \frac{f(x+k)}{f(x)}=\frac{h(x)}{f(x)}.
    \]
    On the other hand, if $k<0$, set $g:=\prod_{i=1}^{-k}\frac{1}{f(x+i)}$. Then 
    \[
        \frac{g(x+1)}{g(x)} = \frac{1/f(x)}{1/f(x+k)}=\frac{f(x+k)}{f(x)}=\frac{h(x)}{f(x)}.
    \]
    By construction all irreducible monic factors in $g(x)$ are shift equivalent to $f(x)$. Furthermore, $k$ can be computed if $\KK$ is computable and one can factorize polynomials over $\KK$.
\end{proof}

\begin{example}\label{exa:shiftCoPrimeReduced}
    \normalfont Let $\KK(x)$ be a rational function field as defined in Example~\ref{exa:nestedHyperGeoPrdtPreprocessingStep1}. Let $\cali{D}$ be the set defined by the multiplicand representations of the products in factored form given in~\eqref{eqn:hypergeometricProductTerm}. That is,  $\cali{D}=\{f_{1}(x),\,f_{2}(x), \dots, f_{5}(x)\}\subseteq\KK[x]\sm\KK$ where
    \begin{equation}\label{eqn:polyRepOfInnermostMultiplicands}
        f_{1}(x)=x-2,\quad f_{2}(x)=x-1,\quad f_{3}(x)=x+1,\quad f_{4}(x)=x+2 \quad \text{ and } \quad f_{5}(x) = x+\tfrac{1}{24}.
    \end{equation}
    Since $f_{1}(x)$ is shift equivalent with $f_{2}(x),f_{3}(x),f_{4}(x)$, i.e., 
    \[
        f_{1}(x+1) = f_{2}(x), \qquad f_{1}(x+3) = f_{3}(x), \qquad f_{1}(x+4) = f_{4}(x),
    \]
    they fall into the same equivalence class $\cali{E}_{1}=\{f_{1}(x),\,f_{2}(x), f_{3}(x), f_{4}(x)\}$. The other equivalence class is $\cali{E}_{2}=\{f_{5}(x)\}$. For each of these equivalence classes $\cali{E}_{1}$ and $\cali{E}_{2}$, take their leftmost elements: $f_{1}(x)$ and $f_{5}(x)$ respectively. Then by Lemma~\ref{lem:shiftEquivalentRelation}, we can express the elements of each equivalence class in terms of the leftmost polynomial $f_{1}(x)$ or $f_{5}(x)$. More precisely, we have the following relations for the equivalence class $\cali{E}_{1}$: 
    \begin{align}
        f_{2}(x) &= \frac{g_{1}(x+1)}{g_{1}(x)} f_{1}(x), \text{ with } g_{1}(x) = (x-2),\label{eqn:reducingPolyXMinus1} \\[0.25em]
        f_{3}(x) &= \frac{g_{2}(x+1)}{g_{2}(x)} f_{1}(x), \text{ with } g_{2}(x) = (x-2)\,(x-1)\,x, \label{eqn:reducingPolyXPlus1} \\[0.25em]
        f_{4}(x) &= \frac{g_{3}(x+1)}{g_{3}(x)} f_{1}(x), \text{ with } g_{3}(x) = (x-2)\,(x-1)\,x\,(x+1).\label{eqn:reducingPolyXPlus2}
    \end{align}
    Finally, we reduce each component of the hypergeometric product expression $H(n)$ given by~\eqref{eqn:hypergeometricProductTerm}. We will begin with the nesting depth $2$ hypergeometric products in factored form whose innermost multiplicand corresponds to the polynomial $f_{4}(x)$. Using~\eqref{eqn:reducingPolyXPlus2} the product in factored form reduces as follows:
    \small
        \begin{align}
        \myProduct{k}{3}{n}{\myProduct{j}{3}{k}{f_{4}(j)}} &= \myProduct{k}{3}{n}{\myProduct{j}{3}{k}{\left(j+2\right)}} = \left(\myProduct{k}{3}{n}{\myProduct{j}{3}{k}{\frac{g_{3}(j+1)}{g_{3}(j)}}}\right)\myProduct{k}{3}{n}{\myProduct{j}{3}{k}{f_{1}(j)}}=\left(\myProduct{k}{3}{n}{\frac{g_{3}(k+1)}{g_{3}(3)}}\right)\left(\myProduct{k}{3}{n}{\myProduct{j}{3}{k}{f_{1}(j)}}\right) \nn \\
        &=\left(\myProduct{k}{3}{n}{\frac{1}{24}}\right)\left(\myProduct{k}{3}{n}{\left(k-1\right)}\right)\left(\myProduct{k}{3}{n}{k}\right)\left(\myProduct{k}{3}{n}{\left(k+1\right)}\right)\left(\myProduct{k}{3}{n}{\left(k+2\right)}\right)\left(\myProduct{k}{3}{n}{\myProduct{j}{3}{k}{(j-2)}}\right) \nn \\
        &=576\,\left(\myProduct{k}{1}{n}{\frac{1}{24}}\right)\left(\myProduct{k}{3}{n}{\left(k-1\right)}\right)\left(\myProduct{k}{3}{n}{k}\right)\left(\myProduct{k}{3}{n}{\left(k+1\right)}\right)\left(\myProduct{k}{3}{n}{\left(k+2\right)}\right)\left(\myProduct{k}{3}{n}{\myProduct{j}{3}{k}{(j-2)}}\right).\label{eqn:reducedNestingDepth2HyperGeoOverJPlus2}
        \end{align}
    \normalsize
    Using~\eqref{eqn:reducingPolyXPlus1} and~\eqref{eqn:reducingPolyXMinus1}, a similar reduction can be achieved for the nesting depth $2$ hypergeometric products in factored form arising in $H(n)$ whose innermost multiplicands correspond to the polynomials $f_{3}(x)$ and $f_{2}(x)$ respectively. In particular, we have the following: 
    {\fontsize{9.5pt}{0}\selectfont
    \begin{align}
        \myProduct{k}{3}{n}{\myProduct{j}{3}{k}{f_{3}(j)}} = \myProduct{k}{3}{n}{\myProduct{j}{3}{k}{\left(j+1\right)}} &= 36\,\left(\myProduct{k}{1}{n}{\frac{1}{6}}\right)\left(\myProduct{k}{3}{n}{\left(k-1\right)}\right)\left(\myProduct{k}{3}{n}{k}\right)\left(\myProduct{k}{3}{n}{\left(k+1\right)}\right)\left(\myProduct{k}{3}{n}{\myProduct{j}{3}{k}{(j-2)}}\right),\label{eqn:reducedNestingDepth2HyperGeoOverJPlus1}\\[0.5em]
        \myProduct{k}{3}{n}{\myProduct{j}{3}{k}{f_{2}(j)}} = \myProduct{k}{3}{n}{\myProduct{j}{3}{k}{\left(j-1\right)}} &= \left(\myProduct{k}{3}{n}{\left(k-1\right)}\right)\left(\myProduct{k}{3}{n}{\myProduct{j}{3}{k}{(j-2)}}\right).\label{eqn:reducedNestingDepth2HyperGeoOverJMinus1}
    \end{align}}%
\end{example}

\begin{remark}\label{Remark:GeneralReductionForShiftCoPrimeness}
\normalfont
Suppose we are given an expression $A(n)\in\ProdExpr(\KK(x))$ (e.g., given in~\eqref{Equ:ProdEDef}) in terms of $\delta$-refined hypergeometric products of finite nesting depth in factored form where all multiplicand representations are irreducible monic polynomials. Choose 
\begin{equation}\label{Equ:StartingProd}
P'(n)=\myProd{k_{1}}{\delta}{n}\quad\cdots\quad\myProd{k_{m-1}}{\delta}{k_{m-2}}\quad\quad\myProduct{k_{m}}{\delta}{k_{m-1}}{h(k_{m})}
\end{equation}
from $A(n)$ with the multiplicand representation $h(x)\in\KK[x]$. Furthermore, among all shift-equivalent multiplicand representations within the given product expression $A(n)$, let $f(x)$ be the leftmost polynomial which lies in the same equivalence class with $h(x)$. By assumption $f(x)$ and $h(x)$ are monic irreducible with $f(n)\neq0$ and $h(n)\neq0$ for all $n\geq \delta$.
Take $k\in\NN$ with $h(x+k)=f(x)$. Then by Lemma~\ref{lem:shiftEquivalentRelation} we can take $g:=\prod_{i=0}^{k-1}f(x+i)\in\KK[x]$ such that $h(x)=\frac{g(x+1)}{g(x)}f(x)$ holds. Thus
\begin{align*}
    P'(n) &= \myProd{k_{1}}{\delta}{n}\quad\cdots\quad\myProd{k_{m-1}}{\delta}{k_{m-2}}\;\;\quad\myProduct{k_{m}}{\delta}{k_{m-1}}{\frac{g(k_{m}+1)}{g(k_{m})}f(k_{m})}=\myProd{k_{1}}{\delta}{n}\;\;\,\myProd{k_{2}}{\delta}{k_{1}}\quad\cdots\quad\myProd{k_{m-1}}{\delta}{k_{m-2}}\frac{g(k_{m-1}+1)}{g(\delta)}\myProduct{k_{m}}{\delta}{k_{m-1}}{f(k_{m})}\\
    &= \underbrace{\left(\myProd{k_{1}}{\delta}{n}\quad\cdots\quad\myProd{k_{m-1}}{\delta}{k_{m-2}}g(\delta)\right)^{\hspace*{-0.35em}-1}}_{=G'(n)}\;\underbrace{\left(\myProd{k_{1}}{\delta}{n}\quad\cdots\quad\myProd{k_{m-1}}{\delta}{k_{m-2}}g(k_{m-1}+1)\right)}_{=H'(n)}\;
\left(\myProd{k_{1}}{\delta}{n}\quad\cdots\quad\myProduct{k_{m}}{\delta}{k_{m-1}}{f(k_{m})}\right).
\end{align*}
Note that this reduction of a product of nesting depth $m$ leads to a new hypergeometric product $H'(n)$ in factored form of nesting depth less than $m$ where for the multiplicand representation $h'(x):=g(x+1)$ we have that $h'(n)\neq0$ for all $n\geq\max(\delta-1,0)$. In particular, $h'(x)$ consists of monic irreducible factors which are again shift-equivalent to $f(x)$. In addition taking all these new factors together with $f(x)$, it follows that $f(x)$ remains the leftmost polynomial factor. Thus repeating the steps in Subsection~\ref{subsec:transformationToProductFactoredForm} to $H'(n)$ yields again products of the form~\eqref{Equ:StartingProd} with nesting depth $m-1$ with the following property: among all multiplicand representations of this new expression the $f(x)$ is still the leftmost polynomial.

\noindent Note further that also the new geometric product $G'(n)$ occurs with lower bound $\delta$. In order to turn it to a $1$-refined product, we may apply the transformations introduced in Section~\ref{subsec:SyncLowerBounds}.

\noindent Finally, observe that in the special case $m=1$, we get 
$$R(n)=\frac{H'(n)}{G'(n)}=\frac{g(n+1)}{g(\delta)}\in\KK[n].$$ 
Since the product $P'(n)$ itself might arise in the expression under consideration in the form $P'(n)^z$ with $z\in\ZZ$, we might introduce the factor $\frac{1}{R(n)^z}$ in the final expression. However, since $R(n)\neq0$ for all $n\geq\max(\delta-1,0)$, no extra poles will be introduced by this extra factor. Summarizing, also the final expression that has undergone the above transformation can be evaluated for all $n\geq\max(\delta-1,0)$. In particular, the input and output expression will have the same evaluation for each $n\geq\max(\delta-1,0)$.
\end{remark}

\begin{example}[Cont.\ of Ex.~\ref{exa:shiftCoPrimeReduced}]    
\normalfont
    After reducing all nesting depth $2$ hypergeometric products in factored form in the expression $H(n)$, the new polynomial $f_{6}(x)=x$ emerges. It falls into the equivalence class $\cali{E}_{1}$, and the leftmost polynomial of this equivalence class remains unchanged. We get  
    \begin{equation}\label{eqn:reducingPolyX}
    f_{6}(x) = \frac{g_{4}(x+1)}{g_{4}(x)}f_{1}(x) \text{ with } g_{4}(x) = (x-2)\,(x-1).
    \end{equation}    
    by Lemma~\ref{lem:shiftEquivalentRelation}. Using the relations~\eqref{eqn:reducingPolyXPlus2},~\eqref{eqn:reducingPolyXPlus1},~\eqref{eqn:reducingPolyX},~and~\eqref{eqn:reducingPolyXMinus1} we can reduce all nesting depth $1$ hypergeometric products whose multiplicand representations are $f_{4}(x)$, $f_{3}(x)$, $f_{6}(x)$, and $f_{2}(x)$ respectively. More precisely we have the following:
    \begin{align}
        \myProduct{k}{3}{n}{f_{4}(k)} &= \myProduct{k}{3}{n}{\left(k+2\right)} = \frac{(n-1)\,n\,(n+1)\,(n+2)}{24}\myProduct{k}{3}{n}{\left(k-2\right)} \label{eqn:reducedNestingDepth1HyperGeoOverKPlus2} \\
        \myProduct{k}{3}{n}{f_{3}(k)} &= \myProduct{k}{3}{n}{\left(k+1\right)} = \frac{(n-1)\,n\,(n+1)}{6}\myProduct{k}{3}{n}{\left(k-2\right)} \label{eqn:reducedNestingDepth1HyperGeoOverKPlus1} \\
        \myProduct{k}{3}{n}{f_{6}(k)} &= \myProduct{k}{3}{n}{k} \qquad \ \, = \frac{(n-1)\,n}{2}\myProduct{k}{3}{n}{\left(k-2\right)} \label{eqn:reducedNestingDepth1HyperGeoOverK} \\
        \myProduct{k}{3}{n}{f_{2}(k)} &= \myProduct{k}{3}{n}{\left(k-1\right)} = (n-1)\myProduct{k}{3}{n}{\left(k-2\right)}. \label{eqn:reducedNestingDepth1HyperGeoOverKMinus1}
    \end{align}
    Substituting~\eqref{eqn:reducedNestingDepth1HyperGeoOverKPlus2},~\eqref{eqn:reducedNestingDepth1HyperGeoOverKPlus1},~\eqref{eqn:reducedNestingDepth1HyperGeoOverK}, and~\eqref{eqn:reducedNestingDepth1HyperGeoOverKMinus1} into~\eqref{eqn:reducedNestingDepth2HyperGeoOverJPlus2},~\eqref{eqn:reducedNestingDepth2HyperGeoOverJPlus1}, and~\eqref{eqn:reducedNestingDepth2HyperGeoOverJMinus1} and afterwards into the expression~\eqref{eqn:hypergeometricProductTerm} gives 
    {\fontsize{9.5pt}{0}\selectfont
    \begin{equation}\label{eqn:hypergeometricProductTermRefined}
        \hat{H}(n) = \dfrac{2}{3}\,(n-1)^{3}\,n\,(n+1)\,(n+2)\left(\myProduct{k}{1}{n}{24}\right)^{\hspace*{-0.3em}-1}\hspace*{-0.4em}\left(\myProduct{k}{1}{n}{6}\right)\hspace*{-0.3em}\left(\myProduct{k}{3}{n}{\left(k-2\right)}\right)^{\hspace*{-0.3em}3}\hspace*{-0.3em}\left(\myProduct{k}{3}{n}{\left(k+\tfrac{1}{24}\right)}\right)\hspace*{-0.35em}\left(\myProduct{k}{3}{n}{\myProduct{j}{3}{k}{\left(j-2\right)}}\right).
    \end{equation}}%
    Note that $H(n)=\hat{H}(n)$ for all $n\ge2$. Furthermore, the distinct irreducible monic polynomials: $(x-2)$ and $(x+\tfrac{1}{24})$, that corresponds to the distinct innermost multiplicands of the products in factored form in $\hat{H}(n)$ are shift-coprime among each other. Putting~\eqref{eqn:constantTerm} and~\eqref{eqn:geometricProductTerm} in Example~\ref{exa:nestedHyperGeoPrdtPreprocessingStep4} and~\eqref{eqn:hypergeometricProductTermRefined} together, we have that 
    \begin{equation}\label{eqn:shiftCoPrimeRepresentationOfProducts}
    P(n) = \tilde{P}(n) = \tilde{c}\,\tilde{r}(n)\,\tilde{G}(n)\,\tilde{H}(n)
    \end{equation}
    holds for all $n\in\NN$ with $n\ge2$, where the components of $\tilde{P}(n)$ are as follows: 
    \begin{align}
    \tilde{c} &=-\frac{254}{432}, \label{eqn:refinedConstantTerm} \\[0.2em]
    \tilde{r}(n) &= (n-1)^{3}\,n\,(n+1)\,(n+2), \label{eqn:rationalFunctionTerm} \\[0.2em]
    \tilde{G}(n) &= \left(\myProduct{k}{1}{n}{-1}\right)\hspace*{-0.3em}\left(\myProduct{k}{1}{n}{\sqrt{3}}\right)^{\hspace*{-0.35em}-1}\hspace*{-0.3em}\left(\myProduct{k}{1}{n}{2}\right)^{\hspace*{-0.3em}-1}\hspace*{-0.5em}\left(\myProduct{k}{1}{n}{3}\right)\hspace*{-0.3em}\left(\myProduct{k}{1}{n}{25}\right)\hspace*{-0.3em}\left(\myProduct{k}{1}{n}{\myProduct{j}{1}{k}{-1}}\right)\hspace*{-0.3em}\left(\myProduct{k}{1}{n}{\myProduct{j}{1}{k}{5}}\right)^{\hspace*{-0.3em}-1}\hspace*{-0.4em}\left(\myProduct{k}{1}{n}{\myProduct{j}{1}{k}{2}}\right), \label{eqn:geometricProducts}\\[0.2em]
    \tilde{H}(n) &= \left(\myProduct{k}{3}{n}{\left(k-2\right)}\right)^{3}\left(\myProduct{k}{3}{n}{\left(k+\tfrac{1}{24}\right)}\right)\,\left(\myProduct{k}{3}{n}{\myProduct{j}{3}{k}{\left(j-2\right)}}\right).\label{eqn:hypergeometricProductInShifCoPrimeRepForm}
    \end{align}
\end{example}

\noindent For further considerations the following definition will be convenient.

\begin{definition}\label{defn:shiftCoPrimeProductRepresentationForm}
    \normalfont Let $\KK(x)$ be a rational function field and let $H_{1}(n),\dots,H_{e}(n)\in\ProdMon(\KK(x))$ where the arising hypergeometric products are in factored form. We say that $H_{1}(n),\dots,H_{e}(n)$ are in \emph{shift-coprime product representation form} if 
    \begin{enumerate}[\hspace*{1em}(1)]
        \item the multiplicand representation of each product in $H_{i}(n)$ for $1\le i \le e$ is an irreducible monic polynomial in $\KK[x]\sm\KK$;
        \item the distinct multiplicand representations in $H_{1}(n),\dots,H_{e}(n)$ are shift-coprime among each other. 
    \end{enumerate}  
\end{definition}

\noindent Then the above symbolic manipulations can be summarized by the following method.

\begin{remark}\label{remk:algorithmForComputingShiftCoPrimePrdtRepForm}
    \normalfont We are given $H_1(n),\dots,H_e(n)$ in $\ProdMon(\KK(x))$ where the products are in factored form and are all $\delta$-refined for some $\delta\in\NN$. In particular, suppose that all multiplicand representations are monic and that each $H_i(n)$ can be evaluated for $n\geq\nu$ for some $\nu\in\NN$ with $\nu\geq\delta$; note that such a $\nu$ can be derived by applying the available $Z$-function to each rational function factor of $H_i(n)$ and taking its maximum value. Then one can follow the steps below to rewrite $H_1(n),\dots,H_e(n)$ in a shift-coprime product representation form that yield the same evaluations for all $n\geq\nu$. 
    \begin{enumerate}[(1)]
        \item \manuallabel{item1:shiftCoPrimePrdtRepForm}{(1)}Factor the innermost multiplicand representations of all products in $H_1(n),\dots,H_e(n)$ into irreducible monic polynomials in $\KK[x]\sm\KK$ as in~\eqref{eqn:factorAandB}, and expand the product quantifier over the factorization as in~\eqref{eqn:nestedProductsOverNumerator}. Let ${\cal{D}}$ be the set of all the irreducible monic polynomials.
        \item\manuallabel{item2:shiftCoPrimePrdtRepForm}{(2)} Among all the irreducible monic polynomials in ${\cal{D}}\subseteq \KK[x]\sm\KK$, compute the shift equivalence classes say, $\Lst{\cal{E}}{1}{v}$ with respect to the automorphism $\s(x)=x+1$, and let $\cal{R}$ be the set of the leftmost polynomial of each equivalent class. Thus, the elements of the set $\cal{R}$ are shift-coprime among each other and each element represents exactly one equivalence class.
        \item\manuallabel{item3:shiftCoPrimePrdtRepForm}{(3)} Among all the products over the irreducible monic polynomials obtained in step~\ref{item1:shiftCoPrimePrdtRepForm}, take those with the highest nesting depth and reduce them with the elements in $\cal{R}$ following the construction of Remark~\ref{Remark:GeneralReductionForShiftCoPrimeness}. During this rewriting one also obtains extra constants, geometric products and rational expressions from $\KK(n)$ that are collected accordingly.       
        \item\manuallabel{item4:shiftCoPrimePrdtRepForm}{(4)} Go to\footnote{As observed in Remark~\ref{Remark:GeneralReductionForShiftCoPrimeness} the set $\cal R$ of the leftmost polynomials in step~\ref{item2:shiftCoPrimePrdtRepForm} does not change} step~\ref{item1:shiftCoPrimePrdtRepForm} and update the corresponding product expressions until the multiplicand representations of all products are in ${\cal R}$.
    \end{enumerate} 
In particular, if $\KK$ is computable and one can factorize polynomials over $\KK$, all the above steps can be carried out explicitly.
\end{remark}

Summarizing, given $\{P_{1}(n),\dots,P_{e}(n)\}\subseteq\Prod(\KK(x))$, we can bring each $P_i(n)$ to the form~\eqref{Equ:P'(n)} with $P'_i(n):=c_i\,G_i(n)\,H_i(n)$ with $c_i\in\KK^*$, $G_i(n)\in\ProdMon(\KK)$ and $H_i(n)\in\ProdMon(\KK(x))$ such that $P_i(n)=P'_i(n)$ holds for all $n\geq\max(0,\delta-1)$. 
Then applying the sketched algorithm in Remark~\ref{remk:algorithmForComputingShiftCoPrimePrdtRepForm} to $H_{1}(n),\dots,H_{e}(n)$ we get the output $\tilde{H}_1(n),\dots,\tilde{H}_e(n)$ where the $c_i$ and $H_i(n)$ are correspondingly updated to $\tilde{c}_i$ and $\tilde{H}_i(n)$ within Step~3 of Remark~\ref{remk:algorithmForComputingShiftCoPrimePrdtRepForm}. In particular, we obtain for each component an extra factor $\tilde{r}_i(x)\in\KK(x)$ where for $n\in\NN$ with $n\geq\max(0,\delta-1)$, no poles are introduced in the evaluation of $\tilde{r}_i(n)$.
Afterwards, another synchronization will be necessary to bring the new geometric products in $H_i(n)$ to $1$-refined form (by again updating the $\tilde{c}_i$ accordingly). The final result can be summarized in the following proposition. 

\begin{proposition}\label{pro:preprocessingNestedHypergeometricProductsExtended}
    Let $\KK(x)$ be a rational function field 
    and suppose that we are given the hypergeometric products $\{P_{1}(n),\dots,P_{e}(n)\}\subseteq\Prod(\KK(x))$ of nesting depth at most $d\in\NN$. Then there is  a $\delta\in\NN$ and there are 
    \begin{enumerate}[\hspace*{1em}(1)]
        \item $\Lst{\tilde{c}}{1}{e}\in\KK^{*}$;
        \item for all $1\le \ell\le e$ rational functions $r_{\ell}(x)\in\KK(x)^{*}$;
        \item geometric product expressions $\tilde{G}_{1}(n),\dots,\tilde{G}_{e}(n)\in\ProdMon(\KK)$ which are all $1$-refined;
        \item hypergeometric product expressions $\tilde{H}_{1}(n),\dots,\tilde{H}_{e}(n)\in\ProdMon(\KK(x))$ which are $\delta$-refined and are in shift-coprime product representation form
    \end{enumerate}
    such that for $1\le \ell \le e$ and for all $n\ge\max(0,\delta-1)$ we have
    \begin{align}\label{eqn:splittingNestedHyperGeoProductsModified}
    P_{\ell}(n) = \tilde{c}_{\ell}\,\tilde{r}_{\ell}(n)\,\tilde{G}_{\ell}(n)\,\tilde{H}_{\ell}(n)\neq0.
    \end{align}
    If $\KK$ is computable and one can factorize polynomials in $\KK$, then $\delta$ and the above representation can be computed.
\end{proposition}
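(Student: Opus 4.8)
The plan is essentially to chain together the transformations of Sections~\ref{subsec:transformationToProductFactoredForm}--\ref{subsec:shiftCoPrimeRep}, applied in turn to each $P_\ell(n)$, and to bookkeep carefully for which $n$ the successive identities remain valid. First I would fix $\delta\in\NN$ as the maximum of all lower bounds occurring in $P_1(n),\dots,P_e(n)$ together with the values produced by the $Z$-function on the rational-function factors that appear after factoring multiplicands; by construction no evaluation at $n\ge\max(0,\delta-1)$ then ever meets a pole or a zero.

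\textbf{Reduction to the decomposition~\eqref{Equ:P'(n)}.} For each $\ell$ I would apply Proposition~\ref{pro:productFactoredForm} to pass to product factored form, factor every innermost multiplicand $f_i$ as in~\eqref{eqn:factorAandB} into a unit $u_i\in\KK^*$ times monic irreducible polynomials, and expand the product quantifiers over this factorization as in~\eqref{eqn:nestedProductsOverNumerator}. Synchronizing all lower bounds to $\delta$ via the telescoped rewrite~\eqref{eqn:prdtRewriteRule} (the finite head products collapse to elements of $\KK^*$, the tails remain in $\Prod(\KK(x))$ or $\Prod(\KK)$), and then bringing every geometric product to $1$-refined form by the same device, which only produces further constant factors, yields after rearrangement $P_\ell(n)=c_\ell\,G_\ell(n)\,H_\ell(n)$ as in~\eqref{Equ:P'(n)}, valid for $n\ge\max(0,\delta-1)$, with $c_\ell\in\KK^*$, $G_\ell(n)\in\ProdMon(\KK)$ built from $1$-refined geometric products, and $H_\ell(n)\in\ProdMon(\KK(x))$ built from $\delta$-refined hypergeometric products in factored form with monic irreducible innermost multiplicands.

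\textbf{Passage to shift-coprime form.} Next I would run the procedure of Remark~\ref{remk:algorithmForComputingShiftCoPrimePrdtRepForm} on $H_1(n),\dots,H_e(n)$: collect the set $\cali{D}$ of all irreducible monic multiplicand polynomials, split it into shift-equivalence classes (Definition~\ref{defn:shiftCoPrimeAndShiftEquivalent}), let $\cali{R}$ consist of the leftmost polynomial of each class, so the elements of $\cali{R}$ are pairwise shift-coprime, and repeatedly reduce the products of maximal nesting depth via Lemma~\ref{lem:shiftEquivalentRelation} following the computation of Remark~\ref{Remark:GeneralReductionForShiftCoPrimeness}. Each such reduction trades a nesting-depth-$m$ product for products of nesting depth $\le m-1$ plus extra constants, extra geometric products, and — only in the depth-$1$ case — an extra factor $R(n)^{-z}\in\KK(n)$ with $R$ nonvanishing for $n\ge\max(0,\delta-1)$. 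One collects the new constants into $\tilde c_\ell$, the new geometric products into $\tilde G_\ell(n)$, re-synchronized to $1$-refined form as in Section~\ref{subsec:SyncLowerBounds} (again only updating $\tilde c_\ell$), and the rational factors into $\tilde r_\ell(x)\in\KK(x)^*$; the surviving hypergeometric parts $\tilde H_\ell(n)$ then have every multiplicand in $\cali{R}$, hence are in shift-coprime product representation form. This gives~\eqref{eqn:splittingNestedHyperGeoProductsModified}; the nonvanishing there is inherited from $P_\ell(n)\neq0$, since no intermediate step introduces a new pole or zero for $n\ge\max(0,\delta-1)$. Effectivity, when $\KK$ is computable and univariate polynomials over $\KK$ factor, follows because the $Z$-function, the integer-root test via $\text{res}_x(f(x),f_i(x+z))$, and Lemma~\ref{lem:shiftEquivalentRelation} are all effective.

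\textbf{Main obstacle.} The delicate point is termination of the loop in the last step. I would prove it by induction on the maximal nesting depth $d$, and within a fixed depth by a monovariant such as $\sum k$ taken over those maximal-depth products whose multiplicand $h(x)$ is not yet the leftmost polynomial $f(x)$ of its class, where $h(x+k)=f(x)$. The observations in Remark~\ref{Remark:GeneralReductionForShiftCoPrimeness} — that a reduction produces a new multiplicand $h'(x)=g(x+1)$ again composed of factors shift-equivalent to $f(x)$, that $f(x)$ stays the leftmost polynomial of its class, and that the newly created hypergeometric product $H'(n)$ has strictly smaller nesting depth — guarantee that each top-level reduction strictly decreases this monovariant and never recreates a higher nesting depth; hence after finitely many steps all maximal-depth products have multiplicands in $\cali{R}$, the maximal depth drops, and the induction closes.
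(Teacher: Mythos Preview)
Your proposal is correct and follows essentially the same approach as the paper: the proposition there is not given a separate proof but is presented as the summary of the preceding Subsections~\ref{subsec:transformationToProductFactoredForm}--\ref{subsec:shiftCoPrimeRep}, chaining product-factored form, lower-bound synchronization, and the shift-coprime reduction of Remark~\ref{remk:algorithmForComputingShiftCoPrimePrdtRepForm} exactly as you do. Your explicit termination argument (induction on nesting depth, using that each reduction in Remark~\ref{Remark:GeneralReductionForShiftCoPrimeness} leaves a depth-$m$ product whose multiplicand is already in $\cali{R}$ and pushes the rest to depth $\le m-1$) makes precise what the paper leaves implicit in step~(4) of Remark~\ref{remk:algorithmForComputingShiftCoPrimePrdtRepForm}; the monovariant $\sum k$ is more than needed, since simply counting depth-$m$ products with multiplicand not in $\cali{R}$ already decreases by one at each step.
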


From now on, we assume that the arising hypergeometric products $P_{1}(n),\dots,P_{e}(n)\in\Prod(\KK(x))$ have undergone the preprocessing steps discussed above  yielding the representation given in~\eqref{eqn:splittingNestedHyperGeoProductsModified}. In general, there are still algebraic relations among the products that occur in the derived expressions~\eqref{eqn:splittingNestedHyperGeoProductsModified} with $1\leq\ell\le e$, i.e., statements~\ref{item2:ProblemRPE} and~\ref{item3:ProblemRPE} of~\ref{prob:ProblemRPE} do not hold yet.  In order to accomplish this task, extra insight from {\dr} theory will be utilized. More precisely, we will show that the hypergeometric products coming from the $\tilde{H}_l$ are already algebraically independent, but the representation of the geometric products have to improved to establish a solution of~\ref{prob:ProblemRPE}.

\section{A naive {\dr} approach: \apE-extensions}\label{sec:apExts}
Inspired by~\cite{karr1981summation,schneider2016difference,schneider2017summation}, this Section focuses on an algebraic setting of {\dr s} (resp.\ fields) in which expressions of $\ProdExpr(\KK(x))$ can be naturally rephrased. 

\subsection{Difference Fields and Difference Rings}\label{subsec:diffFieldsAndDiffRings}

A \emph{\dr} (resp.\ \emph{field}) $\dField{\AA}$ is a ring (resp.\ field) $\AA$ together with a ring (resp.\ field) automorphism $\s:\AA\to\AA$. Subsequently, all rings (resp.\ fields) are commutative with unity; in addition they contain the set of rational numbers $\QQ$, as a subring (resp.\ subfield). The multiplicative group of units of a ring (resp.\ field) $\AA$ is denoted by $\AA^{*}$. A difference ring (resp.\ field) $\dField{\AA}$ is called \emph{computable} if $\AA$ and $\s$ are both computable. 
%
In the following we will introduce \apE-extensions that will be the foundation to represent hypergeometric products of finite nesting depth in difference rings.

\aE-extensions will be used to cover objects like $\zeta^{k}$ where $\zeta$ is a root of unity. In general, let $\dField{\AA}$ be a {\dr} and let $\zeta\in\AA^{*}$ be a $\lambda$-th root of unity with $\lambda>1$ (i.e., $\lambda\in\ZZ_{\geq2}$ with $\zeta^{\lambda}=1$). Take the uniquely determined {\drE} $\dField{\AA[y]}$ of $\dField{\AA}$ where $y$ is transcendental over $\AA$ and $\s(y) = \zeta\,y$. Now consider the ideal $I:=\geno{y^{\lambda}-1}$ and the quotient ring $\EE := \AA[y]/I$. Since $I$ is closed under $\s$ and $\s^{-1}$ i.e., $I$ is a reflexive difference ideal, we can define the map $\s : \EE \to \EE$ with $\s(h+I) = \s(h) + I$ which forms a ring automorphism. Note that by this construction the ring $\AA$ can naturally be embedded into the ring $\EE$ by identifying $a \in \AA$ with $a + I \in \EE$, i.e., $a \mapsto a + I$. Now set $\vartheta := y + I$. Then $\dField{\AA[\vartheta]}$ is a {\drE} of $\dField{\AA}$ subject to the relations $\vartheta^{\lambda} = 1$ and $\s(\vartheta) = \zeta\,\vartheta$. This extension is called an \emph{algebraic extension} (in short \emph{\aE-extension}) of order $\lambda$. The generator $\vartheta$ is called an \emph{\aE-monomial} with its order  $\lambda=\min\{n>0\,|\,\vartheta^{n}=1\}$. Note that the ring $\AA[\vartheta]$ is not an integral domain (i.e., it has zero-divisors) since $(\vartheta - 1)\,(\vartheta^{\lambda-1}+\cdots+\vartheta+1) = 0$ but $(\vartheta - 1) \neq 0 \neq (\vartheta^{\lambda-1}+\cdots+\vartheta+1)$. 
In this setting, the \aE-monomial $\vartheta$  with the relations $\vartheta^{\lambda}=1$ and $\s(\vartheta) = \zeta\,\vartheta$ with $\zeta:=\ee^{\frac{2\,\pi\,\ii}{\lambda}} = (-1)^{\frac{2}{\lambda}}$, models $\zeta^{k}$ subject to the relations $(\zeta^{k})^{\lambda} = 1$ and $\zeta^{k+1}=\zeta\,\zeta^{k}$. 

In addition, we define \pE-extensions in order to treat products of finite nesting depth whose multiplicands are not given by roots of unity. Let $\dField{\AA}$ be a {\dr}, $\alpha \in \AA^{*}$ be a unit, and consider the ring of Laurent polynomials $\AA[t, t^{-1}]$ (i.e., $t$ is transcendental over $\AA$). Then there is a unique {\drE} $\dField{\AA[t, t^{-1}]}$ of $\dField{\AA}$ with $\s(t)=\alpha\,t$ and $\s(t^{-1}) = \alpha^{-1}\,t^{-1}$. The extension here is called a \emph{product-extension} (in short \emph{\pE-extension}) and the generator $t$ is called a \emph{\pE-monomial}. 

We introduce the following notations for convenience. Let $\dField{\EE}$ be a {\drE} of $\dField{\AA}$ with $t\in\EE$.  $\AA\genn{t}$ denotes the ring of Laurent polynomials $\AA[t,\tfrac{1}{t}]$ (i.e., $t$ is transcendental over $\AA$) if $\dField{\AA[t,\tfrac{1}{t}]}$ is a \pE-extension of $\dField{\AA}$. Lastly, $\AA\genn{t}$ denotes the ring $\AA[t]$ with $t\notin\AA$ but subject to the relation $t^{\lambda}=1$ if $\dField{\AA[t]}$ is an \aE-extension of $\dField{\AA}$ of order $\lambda$. 

We say that the {\drE} $\dField{\AA\genn{t}}$ of $\dField{\AA}$ is an \apE-extension (and $t$ is an \apE-monomial) if it is an \aE- or a \pE-extension. Finally, we call $\dField{\AA\genn{t_{1}}\dots\genn{t_{e}}}$ a (nested) \aE-/\pE-/\apE-extension of $\dField{\AA}$ it is built by a tower of such extensions.

In the following we will restrict to the subclass of ordered simple \apE-extension. Here, the following definitions are useful.
\begin{definition}\label{defn:depthFunction}
    \normalfont Let $\dField{\EE}$ be a (nested) \apE-extension of $\dField{\AA}$ with $\EE=\AA\genn{t_{1}}\dots\genn{t_{e}}$ where $\s(t_{i})=\alpha_{i}\,t_{i}$ for $1\le i \le e$. We define the \emph{depth} function  of elements of $\EE$ over $\AA$, $\myd:\EE\to\NN$ as follows:
    \begin{enumerate}[(1)]
        \item For any $h\in\AA$, $\myd_{\AA}(h)=0$.
        \item If $\myd_{\AA}$ is defined for $\dField{\AA\genn{t_{1}}\dots\genn{t_{i-1}}}$ with $i>1$, then we define $\myd_{\AA}(t_{i}):=\myd_{\AA}(\alpha_{i})+1$ and for $f\in\AA\genn{t_{1}}\dots\genn{t_{i}}$, we define 
        $\myd_{\AA}(f) := \max\big(\{ \myd_{\AA}(t_{i}) \,|\, t_{i} \text{ occurs in } f \} \cup \zs\big)$.
    \end{enumerate}
    The \emph{extension depth} of $\dField{\EE}$ over $\AA$ is given by 
    $\myd_{\AA}(\EE) := \big(\myd_{\AA}(t_{1}), \dots, \myd_{\AA}(t_{e})\big)$.
    We call such an extension \emph{ordered}, if $\myd_{\AA}(t_{1})\leq \myd_{\AA}(t_{2})\leq\cdots\leq\myd_{\AA}(t_{e})$. In particular, we say that $\dField{\EE}$ is of \emph{monomial depth} $m$ if $m=\max(0,\myd_{\AA}(t_{1}),\dots,\myd_{\AA}(t_{e}))$. If $\AA$ is clear from the context, we write $\myd_{\AA}$ as $\myd$.
\end{definition}

Now, let $\dField{\EE}$ with $\EE = \AA\genn{t_{1}}\dots\genn{t_{e}}$ be a nested \aE-/\pE-/\apE-extension of a {\dr} $\dField{\AA}$ and let $G$ be a multiplicative subgroup of $\AA^{*}$. 
Following~\cite{schneider2016difference,schneider2017summation}
we call 
\begin{equation}\label{set:PrdtGrpWRTAPMonomials}
G_{\AA}^{\EE} := \{ g\,\ProdLst{t}{1}{e}{v}\,|\, g\in G, \text{ and } v_{i}\in\ZZ\}
\end{equation}
the \emph{product group} over $G$ with respect to \aE-/\pE-/\apE-monomials for the nested \aE-/\pE-/\apE-extension $\dField{\EE}$ of $\dField{\AA}$. In the following we will restrict ourselves to the following subclass of \apE-extensions.
\begin{definition}\label{defn:simpleNestedAExtension}
    \normalfont Let $\dField{\AA}$ be a difference ring and let $\GG$ be a subgroup of $\AA^{*}$. Let $\dField{\EE}$ be an \aE-/\pE-/\apE-extension of $\dField{\AA}$ with $\EE=\AA\langle t_{1}\rangle\dots\langle t_{e}\rangle$.
    Then this extension is called \emph{$\GG$-simple} if for all $1 \le i \le e$,
$$\frac{\s(t_{i})}{t_{i}}\in\GG_{\AA}^{\AA\langle t_{1}\rangle\dots\langle t_{i-1}\rangle}.$$
In addition such a $\GG$-simple extension is called $\GG$-basic\footnote{In other words, products whose multiplicands are roots of unity have nesting depth $1$ (and the roots of unity are from the constant field), whiles the remaining products do not depend on these products over roots of unity.}, if for any \aE-monomial $t_i$ we have $\frac{\s(t_{i})}{t_{i}}\in\const\dField{\AA}^*$ and for any \pE-monomial $t_i$ we have that $\frac{\s(t_{i})}{t_{i}}\in \GG_{\AA}^{\AA\langle t_{1}\rangle\dots\langle t_{i-1}\rangle}$ is free of \aE-monomials.
If $\GG=\AA^*$, such extensions are also called \emph{simple} (resp.\ \emph{basic}) instead of $\AA^*$-simple ($\AA^*$-basic).
\end{definition}

In particular, we will work with the following class of simple \aE-/\pE-/\apE-extensions that are closely related to the products in factored form given in~\eqref{eqn:productFactoredForm}; for concrete constructions see Example~\ref{exa:singleChainAPExtensions} below. 
\begin{definition}\label{defn:singleAndMultipleChainExtensions}
    \normalfont Let $\dField{\AA}$ be a difference ring and $\GG$ be a subgroup of $\AA^*$. We call $\dField{\AA\genn{t_{1}}\dots\genn{t_{e}}}$ a \emph{single chain \aE-/\pE-/\apE-extension of $\dField{\AA}$ over $\GG$} if for all $1\le k \le e$,
    \begin{equation*}
    \s(t_{k}) = c_{k}\,t_{1}\cdots\,t_{k-1}\,t_{k}, \quad \text{ with } \ c_{k}\in\GG.
    \end{equation*}
    We call $c_{1}$ also the \emph{base} of the single chain \aE-/\pE-/\apE-extension. If $\GG=\AA^*$, we also say that $\dField{\AA\genn{t_{1}}\dots\genn{t_{e}}}$ is a \emph{single chain \aE-/\pE-/\apE-extension} of $\dField{\AA}$.     
    Further, we call $\dField{\EE}$ a \emph{multiple chain \aE-/\pE-/\apE-extension} of $\dField{\AA}$ over $\GG$ with base $(\Lst{c}{1}{m})\in\GG^{m}$ if it is a tower of $m$ single chain \aE-/\pE-extensions over $\GG$ with the bases $\Lst{c}{1}{m}$, respectively. If $\GG=\AA^*$, we simply call it a multiple chain \aE-/\pE-/\apE-extension.
\end{definition}

\begin{remark}\label{Remark:SimpleProperties}
\normalfont
    Let $\dField{\AA\genn{t_{1}}\dots\genn{t_{e}}}$ be a single chain \aE-/\pE-/\apE-extension of $\dField{\AA}$ as given in Definition~\ref{defn:singleAndMultipleChainExtensions} and let $\myd:\AA\genn{t_{1}}\dots\genn{t_{e}}\to\NN$ be the depth function over $\AA$. Then we have $\myd(t_k)=k$ for all $1\le k \le e$. In particular, the extension is ordered, its extension-depth is $(1,2,\dots,e)$ and the monomial depth is $e$. Furthermore observe that for $2\leq i\leq e$ we have 
    $$\sigma(t_i)=\sigma(t_{i-1})\,t_i\quad\Leftrightarrow\quad
    \frac{t_i}{\sigma^{-1}(t_i)}=t_{i-1}.$$
\end{remark}

\subsection{Ring of Sequences}\label{subsec:ringOfSequences}
For a field $\KK$ we denote by $\KK^{\NN}$ the set of all sequences 
\begin{align}\label{eqn:seq}
\seqA{a}{}{n} = \geno{a(0),a(1),a(2),\dots\,}
\end{align}
whose terms are in $\KK$. Equipping $\KK^{\NN}$ with component-wise addition and multiplication, we get a commutative ring. In this ring, the field $\KK$ can be naturally embedded into $\KK^{\NN}$ as a subring, by identifying any $c\in\KK$ with the constant sequence $\geno{c,c,c,\dots\,}\in\KK^{\NN}$. Following the construction in \cite[Section 8.2]{petkovvsek1996b}, we turn the shift operator
$S:\KK^{\NN} \rightarrow \ \KK^{\NN}$ with
\begin{equation}\label{Equ:DefS}
           S: \geno{a(0),\,a(1),\,a(2),\,\dots\,}  \mapsto \ \geno{a(1),\,a(2),\,a(3),\,\dots\,}
\end{equation}
into a ring automorphism by introducing an equivalence relation $\sim$ on sequences in $\KK^{\NN}$. Two sequences $A=\seqA{a}{}{n}$ and $B=\seqA{b}{}{n}$ are said to be equivalent (in short $A\sim B$) if and only if there exists a non-negative integer $\delta$ such that 
\[
    \forall \, n \geq \delta : a(n) = b(n).
\]
The set of equivalence classes form a ring again with component-wise addition and multiplication which we will denote by  $\ringOfEquivSeqs := \faktor{\KK^{\NN}}{\sim}$. Now it is obvious that $S:\ringOfEquivSeqs \to \ringOfEquivSeqs$ with~\eqref{Equ:DefS} is bijective and thus a ring automorphism. We call $\dField[S]{\ringOfEquivSeqs}$ also the \emph{difference ring of sequences over $\KK$}. For simplicity, we denote the elements of $\ringOfEquivSeqs$ by the usual sequence notation as in \eqref{eqn:seq} above.

We will follow the convention introduced in~\cite{paule2019towards} to  illustrate how the indefinite products of finite nesting depth covered in this article are modelled by expressions in a difference ring. 

\begin{definition}\label{defn:modelSeqsInDiffRings}
    \normalfont Let $\dField{\AA}$ be a {\dr} with a constant field $\KK=\const\dField{\AA}$. An \emph{evaluation function $\ev:\AA\times\NN\to\KK$ for $\dField{\AA}$} is a function which satisfies the following three properties: 
    \begin{enumerate}[(i)]
        \item for all $c \in \KK$, there is a natural number $\delta \geq 0$ such that
            \begin{align}\label{sta:evalOfConstants}
                \forall\,n\geq\delta\,:\,\ev(c,n)=c;
            \end{align}
        \item for all $f,g\in\AA$ there is a natural number $\delta \geq 0$ such that 
            \begin{equation}\label{sta:evaltnHom}
                \begin{aligned}
                    \forall\,n\geq\delta\,&:\,\ev(f\,g,n)=\ev(f,n)\,\ev(g,n),\\
                    \forall\,n\geq\delta\,&:\,\ev(f+g,n)=\ev(f,n)+\ev(g,n);
                \end{aligned}
            \end{equation}
        \item for all $f\in\AA$ and $i\in\ZZ$, there is a natural number $\delta \geq 0$ such that
            \begin{align}\label{sta:evalOfShiftF}
                \forall\,n\geq\delta\,:\,\ev(\s^{i}(f),n)=\ev(f,n+i).
            \end{align}
    \end{enumerate}
    We say a sequence $\seqA{F}{}{n}\in\ringOfEquivSeqs$ is \emph{modelled}\index{model sequence in {\dr}} by $f\in\AA$ in the {\dr} $\dField{\AA}$, if there is an evaluation function $\ev$ such that
    \[
        F(k)=\ev(f,k)
    \]
    holds for all $k\in\NN$ from a certain point on.
\end{definition}

In this article, our base field is a rational function field $\KK(x)$ which is equipped with the evaluation function $\ev:\KK(x)\times\NN\to\KK$ defined as follows. For $f = \frac{g}{h} \in \KK(x)$ with $h \neq 0$ where $g$ and $h$ are coprime (if $g=0$ we take $h=1$) we have
\begin{align}\label{map:evaForRatFxns}
    \ev(f, k) := \begin{cases}
                    0 & \text{if } h(k) = 0, \\
                    \frac{g(k)}{h(k)} & \text{if } h(k) \neq 0.
                \end{cases}
\end{align}
Here, $g(k)$ and $h(k)$ are the usual polynomial evaluation at some natural number $k$.

Then given a tower of \apE-extension defined over $\dField{\KK(x)}$, one can define an appropriate evaluation function by iterative applications of the following lemma that is implied by ~\cite[Lemma~5.4]{schneider2017summation}.

\begin{lemma}\label{Lemma:ExtendEv}
	Let $\dField{\AA}$ be a {\dr} with constant field $\KK$ and let $\ev:\AA\times\NN\to\KK$ be an evaluation function for $\dField{\AA}$. Let $\dField{\AA\langle t \rangle}$ be an \apE-extension of $\dField{\AA}$ with $\s(t)=\alpha\,t$ ($\alpha\in\AA^*$) and suppose that there is a $\delta\in\NN$ such that $\ev(\alpha,n)\neq0$ for all $n\ge\delta$. Further, take $u\in\KK^{*}$; if $t^{\lambda}=1$ for some $\lambda>1$, we further assume that $u^{\lambda}=1$ holds. Consider the map $\ev^{\prime}:\AA\langle t \rangle\times\NN\to\KK$
	defined by 
	\[
	\ev^{\prime}(\sum_{i}h_{i}\,t^{i}, n)=\sum_{i}\ev(h_{i},n)\,\ev^{\prime}(t,n)^{i}
	\]
	with 
	$\ev^{\prime}(t,n)=u\prod_{k=\delta}^{n}\ev(\alpha,k-1)$.
	Then $\ev'$ is an evaluation function for $\dField{\AA\langle t \rangle}$.
\end{lemma}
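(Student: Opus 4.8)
The goal is to verify that the map $\ev'$ defined in the statement is indeed an evaluation function for the extension $\dField{\AA\langle t\rangle}$, i.e.\ that it satisfies the three defining properties (i)--(iii) of Definition~\ref{defn:modelSeqsInDiffRings}. I would reduce this to the corresponding properties of the given $\ev$ for $\dField{\AA}$ together with the explicit formula $\ev'(t,n)=u\prod_{k=\delta}^n\ev(\alpha,k-1)$. The whole argument is essentially a bookkeeping exercise, since the reference \cite[Lemma~5.4]{schneider2017summation} already contains the substance; the point here is to check that the hypotheses of that lemma are met and to record the consequences in our notation. So the first move is to note that $\ev'$ is well defined: a general element of $\AA\langle t\rangle$ is written as a (Laurent) polynomial $\sum_i h_i t^i$ with $h_i\in\AA$, and in the \aE-case the relation $t^\lambda=1$ plus the assumption $u^\lambda=1$ guarantee that the value $\ev'(\sum_i h_i t^i,n)$ is independent of the chosen representative, because $\ev'(t,n)^\lambda = u^\lambda\big(\prod_{k=\delta}^n\ev(\alpha,k-1)\big)^\lambda$ and $\alpha^\lambda=1$ forces $\prod_{k=\delta}^n\ev(\alpha,k-1)$ to be a $\lambda$-th root of unity for $n\ge\delta$ (using property (ii) for $\ev$ on the identity $\alpha^\lambda=1$). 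In the \pE-case no identity has to be respected, so there is nothing to check for well-definedness beyond the hypothesis $\ev(\alpha,n)\ne0$ for $n\ge\delta$, which makes $\ev'(t,n)$ a unit and hence $t^{-1}$ evaluable.

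\textbf{Checking the three properties.} For property (i), a constant $c\in\KK=\const\dField{\AA}$ lies in $\AA$, so $\ev'(c,n)=\ev(c,n)=c$ for $n$ large by property (i) of $\ev$; thus (i) holds for $\ev'$. For property (ii), the multiplicativity and additivity on $\AA\langle t\rangle$ follow from expanding products and sums of Laurent polynomials in $t$, applying property (ii) of $\ev$ coefficientwise, and using that $\ev'(t,n)^i\,\ev'(t,n)^j=\ev'(t,n)^{i+j}$ — valid eventually because $\ev'(t,n)$ is a genuine element of $\KK$ (in the \aE-case a root of unity, in the \pE-case a unit) for $n\ge\delta$; since only finitely many coefficients and finitely many exponents occur, one takes the maximum of the finitely many thresholds. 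The only genuinely new ingredient is property (iii): one must show $\ev'(\s^i(f),n)=\ev'(f,n+i)$ eventually, and by multiplicativity/additivity it suffices to verify this for $f=t$ and $i=\pm1$, i.e.\ to check
\[
\ev'(\s(t),n)=\ev'(\alpha t,n)=\ev(\alpha,n)\,\ev'(t,n)
\]
agrees eventually with
\[
\ev'(t,n+1)=u\prod_{k=\delta}^{n+1}\ev(\alpha,k-1)=\Big(u\prod_{k=\delta}^{n}\ev(\alpha,k-1)\Big)\ev(\alpha,n)=\ev'(t,n)\,\ev(\alpha,n),
\]
which is an identity of the telescoping product; the shift by $-1$ is symmetric using that $\ev(\alpha,n)\ne0$ for $n\ge\delta$. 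General $i\in\ZZ$ follows by iterating and again using property (iii) of the underlying $\ev$ on the coefficients.

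\textbf{Main obstacle.} There is no deep obstacle; the statement is a lemma whose content is carried by \cite[Lemma~5.4]{schneider2017summation}. The one point that deserves care is the \aE-case well-definedness: one must make sure that the chosen root of unity $u$ is compatible with the relation $t^\lambda=1$, which is exactly why the hypothesis $u^\lambda=1$ is imposed, and that the product $\prod_{k=\delta}^{n}\ev(\alpha,k-1)$ automatically becomes a $\lambda$-th root of unity once $n\ge\delta$ — this needs the interplay of property (ii) of $\ev$ with the ground-ring identity $\alpha^\lambda=1$ and the choice of threshold $\delta$. The rest is routine propagation of the ``for all $n$ from a certain point on'' quantifier through finitely many applications of (i)--(iii) for $\ev$. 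Accordingly, I would present the proof as: (1) well-definedness in the \aE-case; (2) properties (i) and (ii) by coefficientwise reduction to $\ev$; (3) property (iii) by reducing to the telescoping identity for $t$ above; and finally remark that this is exactly the instance of \cite[Lemma~5.4]{schneider2017summation} obtained by taking the ground difference ring to be $\dField{\AA}$.
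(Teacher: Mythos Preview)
Your proposal is correct and matches the paper's treatment: the paper does not supply its own proof but simply states that the lemma is implied by \cite[Lemma~5.4]{schneider2017summation}, and your write-up is precisely an unpacking of that reference in the present notation, verifying well-definedness in the \aE-case and properties (i)--(iii) by reduction to the underlying $\ev$ together with the telescoping identity for $\ev'(t,n)$.
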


We summarize the above constructions with the following example.

\begin{example}\label{exa:singleChainAPExtensions}
    \normalfont Let $\KK=\QQ(\sqrt{3})$ and take the difference field $\dField{\KK(x)}$ where the automorphism is defined by $\s(x)=x+1$ and $\s|_{\KK}=\id$. Furthermore, take the evaluation function $\ev:\KK(x)\times\NN\to\KK$ given by~\eqref{map:evaForRatFxns}. Then 
     we can construct the following single chain extensions of $\dField{\KK(x)}$ in order to 
     model the geometric product $\tilde{G}(n)$ and the hypergeometric product $\tilde{H}(n)$ given in~\eqref{eqn:geometricProducts} and~\eqref{eqn:hypergeometricProductInShifCoPrimeRepForm}.   
    \begin{enumerate}[\hspace*{1em}(1)]
        \item\manuallabel{item:monomialDepth2SingleChainRExtBasedAtMinus1}{(1)} We define the single chain \aE-extension $\dField{\KK(x)\genn{{\vartheta_{1,1}}}\genn{{\vartheta_{1,2}}} }$ of $\dField{\KK(x)}$ over $\KK$ of order $2$, based at $-1$ where the automorphism is given by
                \begin{equation}\label{diffAutoEval:monomialDepth2SingleChainRExtBasedAtMinus1A}
        \begin{aligned}
        \s({\vartheta_{1,1}}) &= -\,{\vartheta_{1,1}}, \qquad & \s({\vartheta_{1,2}}) &= -\,{\vartheta_{1,1}}\,{\vartheta_{1,2}}. \end{aligned}
        \end{equation}
     In addition by applying Lemma~\ref{Lemma:ExtendEv} twice we extend the evaluation function to
        $\ev:\KK(x)\genn{{\vartheta_{1,1}}}\genn{{\vartheta_{1,2}}}\times\NN\to\KK$ with 
        \begin{equation}\label{diffAutoEval:monomialDepth2SingleChainRExtBasedAtMinus1B}
            \begin{aligned}
                \ev({\vartheta_{1,1}},n) &= \myProduct{k}{1}{n}{-1}, \qquad & \ev({\vartheta_{1,2}},n) &= \myProduct{k}{1}{n}{\myProduct{j}{1}{k}{-1}}.
            \end{aligned}
        \end{equation}
        \item\manuallabel{item:monomialDepth1SingleChainPiExtBasedAtSqrt3}{(2)} Similarly, define the single chain \pE-extension $\dField{\KK(x)\genn{y_{1,1}}}$ of $\dField{\KK(x)}$ over $\KK$ based at $\sqrt{3}$ together with the evaluation function $\ev:\KK(x)\genn{{y_{1,1}}}\times\NN\to\KK$ (using Lemma~\ref{Lemma:ExtendEv}) by 
        \begin{align}\label{diffAutoEval:monomialDepth1SingleChainPiExtBasedAtSqrt3}
            \s({y_{1,1}}) = \sqrt{3}\,{y_{1,1}}, && \text{ and } &&  \ev({y_{1,1}},n) = \myProduct{k}{1}{n}{\sqrt{3}}.
        \end{align}       
        \item\manuallabel{item:monomialDepth2SingleChainPiExtBasedAt2}{(3)} Define the single chain \pE-extension $\dField{\KK(x)\genn{{y_{2,1}}}\genn{{y_{2,2}}}}$ of $\dField{\KK(x)}$ over $\KK$ based at $2$ equipped with the evaluation function $\ev:\KK(x)\genn{{y_{2,1}}}\genn{{y_{2,2}}}\times\NN\to\KK$ by
        \begin{equation}\label{diffAutoEval:monomialDepth2SingleChainPiExtBasedAt2}
            \begin{aligned}
                \s({y_{2,1}}) &= 2\,{y_{2,1}}, \qquad & \s({y_{2,2}}) &= 2\,{y_{2,1}}\,{y_{2,2}}, \qquad \text{ and } \\
                \ev(y_{2,1}, n) &= \myProduct{k}{1}{n}{2}, \qquad & \ev(y_{2,2}, n) &= \myProduct{k}{1}{n}{\myProduct{j}{1}{k}{2}}.
            \end{aligned}
        \end{equation}
        \item\manuallabel{item:monomialDepth1SingleChainPiExtBasedAt3}{(4)} Define the single chain \pE-extension $\dField{\KK(x)\genn{y_{3,1}}}$ of $\dField{\KK(x)}$ over $\KK$ based at $3$ together with the evaluation function $\ev:\KK(x)\genn{{y_{3,1}}}\times\NN\to\KK$ by 
        \begin{align}\label{diffAutoEval:monomialDepth1SingleChainPiExtBasedAt3}
            \s({y_{3,1}}) = 3\,{y_{3,1}}, && \text{ and } &&  \ev({y_{3,1}},n) = \myProduct{k}{1}{n}{3}.
        \end{align} 
        \item\manuallabel{item:monomialDepth2SingleChainPiExtBasedAt5}{(5)} Define the single chain \pE-extension $\dField{\KK(x)\genn{{y_{4,1}}}\genn{{y_{4,2}}}}$ of $\dField{\KK(x)}$ over $\KK$ with $5$ as its base accompanied with the evaluation function $\ev:\KK(x)\genn{{y_{4,1}}}\genn{{y_{4,2}}}\times\NN\to\KK$ by
        \begin{equation}\label{diffAutoEval:monomialDepth2SingleChainPiExtBasedAt5}
            \begin{aligned}
                \s({y_{4,1}}) &= 5\,{y_{4,1}},  \qquad &  \s({y_{4,2}}) &= 5\,{y_{4,1}}\,{y_{4,2}}, \qquad \text{ and } \\
                \ev(y_{4,1}, n) &= \myProduct{k}{1}{n}{5}, \qquad & \ev(y_{4,1}, n) &= \myProduct{k}{1}{n}{\myProduct{j}{1}{k}{5}}.
            \end{aligned}
        \end{equation}
        \item\manuallabel{item:monomialDepth1SingleChainPiExtBasedAt25}{(6)} Define the single chain \pE-extension $\dField{\KK(x)\genn{y_{5,1}}}$ of $\dField{\KK(x)}$ over $\KK$ with $25$ as its base together with the evaluation function $\ev:\KK(x)\genn{y_{5,1}}\times\NN\to\KK$ by
        \begin{align}\label{eqn:monomialDepth1SingleChainPiExtBasedAt25}
            \s(y_{5,1}) = 25\,y_{5,1}, && \text{ and } &&  \ev({y_{5,1}},n) = \myProduct{k}{1}{n}{25}.
        \end{align}
        \item\manuallabel{item:monomialDepth2SingleChainPiExtBasedAtXMinus2}{(7)} Define the single chain \pE-extension $\dField{\KK(x)\genn{{z_{1,1}}}\genn{{z_{1,2}}}}$ of $\dField{\KK(x)}$ over $\KK(x)$ based at $(x-2)$ and the evaluation function $\ev:\KK(x)\genn{{z_{1,1}}}\genn{{z_{1,2}}}\times\NN\to\KK$ by
        \begin{equation}\label{diffAutoEval:monomialDepth2SingleChainPiExtBasedAtXMinus2}
            \begin{aligned}
                \s({z_{1,1}}) &= (x-1)\,{z_{1,1}}, \qquad & \s({z_{1,2}}) &= (x-1)\,{z_{1,1}}\,{z_{1,2}}, \qquad \text{ and } \\
                \ev(z_{1,1}, n) &= \myProduct{k}{3}{n}{(k-2)}, \qquad & \ev(z_{1,2}, n) &= \myProduct{k}{3}{n}{\myProduct{j}{3}{k}{(j-2)}}.
            \end{aligned}
        \end{equation}
        \item\manuallabel{item:monomialDepth1SingleChainPiExtBasedAtXPlus1Over24}{(8)} Define the single chain \pE-extension $\dField{\KK(x)\genn{{z_{2,1}}}}$ of $\dField{\KK(x)}$ over $\KK(x)$ with $\left(x+\tfrac{1}{24}\right)$ as its base together with the evaluation function $\ev:\KK(x)\genn{{z_{2,1}}}\times\NN\to\KK$ by 
        \begin{align}\label{diffAutoEval:monomialDepth1SingleChainPiExtBasedAtXPlus1Over24}
            \s({z_{2,1}}) = \left(x+\tfrac{25}{24}\right)\,{z_{2,1}}, && \text{ and } &&  \ev(z_{1,1}, n) = \myProduct{k}{3}{n}{\left(k+\tfrac{1}{24}\right)}.
        \end{align}
    \end{enumerate}
    Putting everything together, we have constructed the multiple chain \apE-extension $\dField{\AA}$ of $\dField{\KK(x)}$ with 
    \begin{equation}\label{dom:multipleChainAPRing}
        \AA = \KK(x)\genn{{\vartheta_{1,1}}}\genn{{\vartheta_{1,2}}}\genn{{y_{1,1}}}\genn{{y_{2,1}}}\genn{{y_{2,2}}}\genn{{y_{3,1}}}\genn{{y_{4,1}}}\genn{{y_{4,2}}}\genn{y_{5,1}}\genn{{z_{1,1}}}\genn{{z_{1,2}}}\genn{{z_{2,1}}}
    \end{equation}
    based at $\left(-1,\,\sqrt{3},\,2,\,2,\,3,\,5,\,5,\,25,\,x-2,\,x-2,\,x+\tfrac{1}{24}\right)$
    where $(1,\,2,\,1,\,1,\,2,\,1,\,1,\,2,\,1,\,1,\,2,\,1)$ is the extension depth. In this ring, the geometric product $\tilde{G}(n)$ and the hypergeometric product $\tilde{H}(n)$ defined in~\eqref{eqn:geometricProducts} and~\eqref{eqn:hypergeometricProductInShifCoPrimeRepForm} are modelled by 
    \begin{align}\label{eqn:hyperGeometricProductRepresentationInRingA}
        g = \frac{\vartheta_{1,1}\,y_{3,1}\,y_{5,1}\,\vartheta_{1,2}\,y_{2,2}}{y_{1,1}\,y_{2,1}\,y_{4,2}}  && \text{ and } && h = {z_{1,1}}^{3}\,{z_{2,1}}\,{z_{1,2}}
    \end{align}
    respectively. That is, $\tilde{G}(n)=\ev(g,\,n)$ holds for all $n\ge1$ and $\tilde{H}(n)=\ev(h,\,n)$ holds for all $n\ge2$. As a consequence, the indefinite hypergeometric product expression $\tilde{P}(n)$ defined in~\eqref{eqn:shiftCoPrimeRepresentationOfProducts} is modelled by the expression
    \begin{equation}\label{eqn:productRepresentationInSimpleAExt}
        \tilde{p} = \frac{254\,(x-1)^{3}\,x\,(x+1)\,(x+2)}{432}\,g\,h\in\AA.
    \end{equation}
    This means that $\tilde{P}(n)=\ev(\tilde{p},\,n)$ holds for all $n\in\NN$ with $n\ge2$.
\end{example}

\begin{remark}\label{remk:arbitrarySimplePExt2MultipleChainPExt}
\normalfont
We can carry out the following construction to model the given hypergeometric products  $P_{1}(n),\dots,P_{e}(n)\in\Prod(\KK(x))$ of finite nesting depth and an expression in terms of these products in a \pE-extension\footnote{Similarly, one can carry out such a construction for \aE-extensions or \apE-extensions by checking in addition if the arising multiplicands are built by roots of unity.}. Here we start with the difference field $\dField{\KK(x)}$ given by $\sigma|_{\KK}=\id$ and $\sigma(x)=x+1$ which is equipped with the evaluation function $\ev$ given in~\eqref{map:evaForRatFxns} and the zero-function~\eqref{eqn:hyperGeoShiftBoundedFxns}. 
    \begin{enumerate}[\hspace*{1em}(1)]
        \item \manuallabel{item1:arbitrarySimplePExt2MultipleChainPExt}{(1)} For $1 \le i \le e$, rewrite $P_{i}(n)$ such that it is composed multiplicatively by products in factored form and such that all products are $\delta$-refined (in its strongest form, one may use the representation given in Proposition~\ref{pro:preprocessingNestedHypergeometricProductsExtended}). Let $\cali{P}$ be the set of all products that occur in the rewritten $P_{i}(n)$.
        \item\manuallabel{item2:arbitrarySimplePExt2MultipleChainPExt}{(2)} Among all nested products of $\cali{P}$  
        with the same multiplicand representation $c(x)\in\KK(x)$, take one of the products, say $F(n)$, with the highest nesting depth $m$. Construct the corresponding single chain \pE-extension $\dField{\KK(x)\langle p_1\rangle\dots\langle p_m\rangle}$ of $\dField{\KK(x)}$ over $\KK(x)$ and extend the evaluation function accordingly such that the outermost \pE-monomial $p_m$ models $F(n)$ with $\ev(p_{\mu},n)=F(n)$ for all $n\geq\delta$. In particular, any arising product with the same multiplicand representation and depth $\mu<m$ is modelled by $p_{\mu}$. Thus we can remove all the products from $\cali{P}$ which have the same multiplicand $c(x)$.
        \item \manuallabel{item3:arbitrarySimplePExt2MultipleChainPExt}{(3)} Repeat step~\ref{item2:arbitrarySimplePExt2MultipleChainPExt} for the remaining elements of $\cali{P}$.
        \item\manuallabel{item4:arbitrarySimplePExt2MultipleChainPExt}{(4)} Combine the constructed single chain \pE-extensions of $\dField{\KK(x)}$ over $\KK(x)$ to obtain a multiple chain \pE-extension $\dField{\AA}$ of $\dField{\KK(x)}$ over $\KK(x)$. In addition, combine the evaluation functions to one extended version.
    \end{enumerate}
In this way, all products arising in the rewritten $P_{1}(n),\dots,P_{e}(n)$ can be modelled by a \pE-monomial within the constructed difference ring $\dField{\AA}$. Furthermore, consider any expression $A(n)$ of the form~\eqref{Equ:ProdEDef} in terms of the original products $P_{1}(n),\dots,P_{e}(n)$ and let $\lambda=\max\{Z(a_{\bs{v}})\mid \bs{v}\in S\}$, i.e., the evaluation $a_{\bs{v}}(n)$ does not introduce poles for any $n\in\NN$ with $n\geq\lambda$. Then replacing the rewritten products in $A(n)$ and afterwards replacing the involved products  by the corresponding \pE-monomials yields a given $a\in\AA$ with $\ev(a,n)=A(n)$ for all $n\geq\max(\delta,\lambda)$. In particular, if $\KK$ is computable and the zero-function $Z$ is computable, this construction can be given explicitly.
\end{remark}

\section{A refined {\dr} approach: \rpiE-extensions}\label{sec:refinedApproachRPiExtns}
In general, the naive construction of an (ordered) multiple chain \pE-extension $\dField{\AA}$ of $\dField{\KK(x)}$ following Remark~\ref{remk:arbitrarySimplePExt2MultipleChainPExt} or a slightly refined construction of an \apE-extension like in Example~\ref{exa:singleChainAPExtensions} introduce algebraic relations among the monomials. 
In order to tackle~\ref{prob:ProblemRPE} above, we will refine \apE-extensions further to the class of \rpiE-extensions. In this regard, the set of constants
\[
\const{\dField{\AA}} = \{ c\in\AA \,|\,\s(c)=c \}
\]
of a difference ring (field) $\dField{\AA}$ plays a decisive role. In general it forms a subring (subfield) of $\AA$ which contains the rational numbers $\QQ$ as subfield. In this article, $\const{\dField{\AA}}$ will always be a field also called the \emph{constant field} of $\dField{\AA}$, which we will also denote by $\KK$. We note further that one can decide if $c\in\AA$ is a constant if $\dField{\AA}$ is computable.

We are now ready to refine \apE-extensions as follows.

\begin{definition}
    \normalfont Let $\dField{\AA\langle t \rangle}$ be an \aE-/\pE-/\apE-extension of $\dField{\AA}$. Then it is called an \emph{\rE-/\piE-/\rpiE-extension} if $\const\dField{\AA\langle t \rangle} = \const\dField{\AA}$ holds. Depending on the type of extension, we call the generator $t$ an \emph{\rE-/\piE-/\rpiE-monomial}, respectively. A (nested) \aE-/\pE-/\apE-extension $\dField{\AA\langle t_{1} \rangle\dots\langle t_{e} \rangle}$ of $\dField{\AA}$ with $\const\dField{\AA\langle t_{1} \rangle\dots\langle t_{e} \rangle}=\const\dField{\AA}$ is also called a \emph{(nested) \rE-/\piE-/\rpiE--extension}.
\end{definition}

Given an \aE-/\pE-/\apE-extension, there exist criteria that enable one able to check if it is an \rE-/\piE-/\rpiE-extension. We refer the reader to see~\cite[Theorem~2.12]{schneider2016difference} for further details and proofs. 
\begin{theorem}\label{thm:rpiSExtCriterion}
    Let $\dField{\AA}$ be a \dr. Then the following statements hold.
    \begin{enumerate} 
        \item\manuallabel{item:piExtCriterion}{(1)} Let $\dField{\AA[t,\tfrac{1}{t}]}$ be a \pE-extension of $\dField{\AA}$ with $\s(t)=\alpha\,t$ where $\alpha\in\AA^{*}$. Then this is a \piE-extension (i.e., $\const{\dField{\AA[t,\tfrac{1}{t}]}} = \const{\dField{\AA}}$) iff there are no $g\in\AA\sm\zs$ and $v\in\ZZ\sm\zs$ with $\s(g)=\alpha^{v}\,g$. 
        \item\manuallabel{item:rExtCriterion}{(2)} Let $\dField{\AA[\vartheta]}$ be an \aE-extension of $\dField{\AA}$ of order $\lambda>1$ with $\s(\vartheta)=\zeta\,\vartheta$ where $\zeta\in\AA^{*}$. Then this is an \rE-extension (i.e., $\const{\dField{\AA[\vartheta]}} = \const{\dField{\AA}}$) iff there are no $g\in\AA\sm\zs$ and $v\in\{1,\dots,\lambda-1\}$ with $\s(g)=\zeta^{v}\,g$. If it is an \rE-extension, $\zeta$ is a primitive $\lambda$-th root of unity.
    \end{enumerate} 
\end{theorem}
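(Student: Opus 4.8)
The plan is to prove both equivalences by a direct coefficient-comparison argument, exploiting that the extension ring is a \emph{free} $\AA$-module with an explicit basis: in case~(1) the Laurent-polynomial ring $\AA[t,\tfrac{1}{t}]$ has $\AA$-basis $\{t^{i}\mid i\in\ZZ\}$, and in case~(2) the ring $\AA[\vartheta]=\AA[y]/\geno{y^{\lambda}-1}$ has $\AA$-basis $\{1,\vartheta,\dots,\vartheta^{\lambda-1}\}$. Hence every element has a unique such representation, and an equation between two elements holds if and only if it holds coefficientwise; this is the only structural fact I will need.

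For part~(1) I would argue as follows. Assume first that no $g\in\AA\sm\zs$ and $v\in\ZZ\sm\zs$ satisfy $\s(g)=\alpha^{v}g$, and let $f=\sum_{i}h_{i}\,t^{i}\in\AA[t,\tfrac{1}{t}]$ (a finite sum, $h_{i}\in\AA$) be an arbitrary constant. Since $\s(t^{i})=\alpha^{i}t^{i}$ with $\alpha^{i}\in\AA$, comparing the coefficient of $t^{i}$ in $\s(f)=f$ yields $\s(h_{i})\,\alpha^{i}=h_{i}$ for every~$i$; for $i\neq0$ this forces $h_{i}=0$, since otherwise $h_{i}$ would be a witness with $v=-i$. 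Hence $f=h_{0}\in\const\dField{\AA}$. Conversely, given a witness $(g,v)$, I would exhibit the constant $f:=g\,t^{-v}$: the computation $\s(f)=\s(g)\,\alpha^{-v}t^{-v}=\alpha^{v}g\,\alpha^{-v}t^{-v}=f$ shows $f\in\const\dField{\AA[t,\tfrac{1}{t}]}$, while $v\neq0$, $g\neq0$ and the transcendence of $t$ over $\AA$ give $f\notin\AA$; thus the extension is not a \piE-extension.

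Part~(2) follows the same template, now using $\zeta^{\lambda}=1$ — which one gets by applying $\s$ to $\vartheta^{\lambda}=1$ — to fold exponents back into the range $\{0,\dots,\lambda-1\}$. If no witness exists and $f=\sum_{i=0}^{\lambda-1}h_{i}\,\vartheta^{i}$ is a constant, then $\s(h_{i})\,\zeta^{i}=h_{i}$ for $0\le i\le\lambda-1$, and for $1\le i\le\lambda-1$ this is impossible with $h_{i}\neq0$ because $\zeta^{-i}=\zeta^{\lambda-i}$ and $\lambda-i\in\{1,\dots,\lambda-1\}$; so again $f=h_{0}\in\const\dField{\AA}$. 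Conversely, a witness $(g,v)$ yields the constant $f:=g\,\vartheta^{\lambda-v}$, for which $\s(f)=\zeta^{v}g\,\zeta^{\lambda-v}\vartheta^{\lambda-v}=\zeta^{\lambda}g\,\vartheta^{\lambda-v}=f$ and $f\notin\AA$ because $1\le\lambda-v\le\lambda-1$ and $g\neq0$; hence the extension is not an \rE-extension. For the primitivity claim, I would let $d$ be the multiplicative order of $\zeta$ (finite, with $d\mid\lambda$) and note that $d<\lambda$ would make $(g,v)=(1,d)$ a witness, since $\s(1)=1=\zeta^{d}$, contradicting the \rE-property; therefore $d=\lambda$.

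I do not expect a genuine obstacle: the argument is essentially bookkeeping. The two points that warrant a little care are, first, the justification that $\s(f)=f$ may be read off coefficientwise — which is exactly the freeness/uniqueness statement isolated at the outset — and, second, the modular exponent arithmetic in part~(2), where one must check both that the coefficient relations for the indices $1,\dots,\lambda-1$ remain in the admissible range and that the explicitly constructed constants $g\,t^{-v}$ and $g\,\vartheta^{\lambda-v}$ genuinely lie outside $\AA$.
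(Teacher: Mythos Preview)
Your argument is correct. The coefficient-comparison approach works cleanly in both cases because $\AA[t,\tfrac{1}{t}]$ and $\AA[\vartheta]=\AA[y]/\geno{y^{\lambda}-1}$ are free $\AA$-modules with the bases you name, and $\s$ acts on each basis element by a unit of $\AA$; the two points you flag (coefficientwise reading of $\s(f)=f$, and the exponent bookkeeping modulo~$\lambda$) are handled correctly. One very minor remark: you derive $\zeta^{\lambda}=1$ from $\vartheta^{\lambda}=1$, but in the paper's definition of an \aE-extension this is already part of the hypothesis on $\zeta$, so that step is redundant (though not wrong).

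As for comparison with the paper: the paper does \emph{not} prove this theorem at all; it merely states it and refers to \cite[Theorem~2.12]{schneider2016difference} for the proof. Your write-up therefore supplies a short, self-contained argument where the paper only gives a citation. The proof in the cited reference proceeds along essentially the same lines as yours (decompose an arbitrary constant in the canonical $\AA$-basis and read off the coefficient conditions), so there is no substantive methodological difference.
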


We remark that the above definitions and also Theorem~\ref{thm:rpiSExtCriterion} are inspired by Karr's $\Pi\Sigma$-field extensions~\cite{karr1981summation,schneider2001symbolic}. Since we will use this notion later (see Theorem~\ref{thm:shiftCoPrimeAsPiExtension} and~\ref{thm:orderedMultipleChainPExt2OrderedMultipleChainPiExt4HyperAndqHyperProducts} below) we will introduce them already here.

\begin{definition}
	\normalfont Let $\dField{\FF(t)}$ be a difference field extension of a difference field $\dField{\FF}$ with $t$ transcendental over $\FF$ and $\sigma(t)=\alpha\,t+\beta$ with $\alpha\in\FF^*$ and $\beta\in\FF$. This extension is called a \emph{\sigmaE-field extension} if $\alpha=1$ and $\const\dField{\FF(t)}=\const\dField{\FF}$, and it is called a \emph{$\Pi$-field extension} if $\beta=0$ and $\const\dField{\FF(t)}=\const\dField{\FF}$. A difference field $\dField{\KK(t_1)\dots(t_e)}$ is called a \emph{\pisiE-field over $\KK$} if $\dField{\KK(t_1)\dots(t_i)}$ is a \pisiE-extension of $\dField{\KK(t_1)\dots(t_{i-1})}$ for $1\leq i\leq e$ with $\const\dField{\KK}=\KK$.
\end{definition}

Throughout this article, our base case {\df} is $\dField{\KK(x)}$ with the automorphism $\s(x)=x+1$ and $\sigma|_{\KK}=\id$ which in fact is a \sigmaE-extension of $\dField{\KK}$, i.e., $\const\dField{\KK(x)}=\KK$. In particular $\dField{\KK(x)}$ is a \pisiE-field over $\KK$. We conclude this subsection by observing that the check if an \aE-extension is an \rE-extension (see part~2 of Theorem~\ref{thm:rpiSExtCriterion}) is not necessary if the ground field is a \pisiE-field; compare~\cite[Lemma 2.1]{ocansey2018representing}.
\begin{lemma}\label{lem:aExtOverPiSigmaExtIsRExt}
    Let $\dField{\FF}$ be a \pisiSE-field over $\KK$. Then any \aE-extension $\dField{\FF[\vartheta]}$ of $\dField{\FF}$ with order $\lambda>1$ is an \rE-extension.
\end{lemma}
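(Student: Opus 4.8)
The plan is to reduce the assertion to the algebraic criterion of Theorem~\ref{thm:rpiSExtCriterion}(2) and then use that a \pisiSE-field is purely transcendental over its constant field. Write $\s(\vartheta)=\zeta\,\vartheta$ where $\zeta$ is a primitive $\lambda$-th root of unity (this is the setting of an \aE-extension of order $\lambda$). By part~(2) of Theorem~\ref{thm:rpiSExtCriterion}, it suffices to show that there is no $g\in\FF\sm\zs$ and no $v\in\{1,\dots,\lambda-1\}$ with $\s(g)=\zeta^{v}\,g$. So suppose, for contradiction, that such a $g$ and $v$ exist.

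The key step is to raise this relation to the $\lambda$-th power, which collapses the root of unity: since $\s$ is a ring automorphism and $\zeta^{\lambda}=1$,
\[
\s(g^{\lambda})=\s(g)^{\lambda}=(\zeta^{v}g)^{\lambda}=(\zeta^{\lambda})^{v}\,g^{\lambda}=g^{\lambda},
\]
so $g^{\lambda}\in\const\dField{\FF}=\KK$. Hence $g$ is a root of $T^{\lambda}-g^{\lambda}\in\KK[T]$, i.e., $g$ is algebraic over $\KK$. Since $\dField{\FF}$ is a \pisiSE-field over $\KK$, the field $\FF=\KK(t_{1})\dots(t_{e})$ is a rational function field over $\KK$ (each generator $t_{i}$ being transcendental over $\KK(t_{1})\dots(t_{i-1})$), and it is well known that $\KK$ is relatively algebraically closed in such a field; therefore $g\in\KK$. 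As $\s$ fixes $\KK$ pointwise, this yields $g=\s(g)=\zeta^{v}\,g$, and since $g\neq0$ we get $\zeta^{v}=1$, which is impossible for a primitive $\lambda$-th root of unity $\zeta$ and $1\le v\le\lambda-1$. This contradiction shows that no such $g,v$ exist, hence $\const\dField{\FF[\vartheta]}=\const\dField{\FF}$ and $\dField{\FF[\vartheta]}$ is an \rE-extension.

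The only ingredient that is not a mechanical computation with the criterion of Theorem~\ref{thm:rpiSExtCriterion} is the structural fact that the constant field $\KK$ is relatively algebraically closed inside the \pisiSE-field $\FF$; this is where the hypothesis on $\dField{\FF}$ really enters. One should also keep in mind that the argument genuinely uses that $\zeta$ is \emph{primitive}: for a non-primitive $\lambda$-th root of unity the last step breaks down (and the extension need not be an \rE-extension, cf.\ Theorem~\ref{thm:rpiSExtCriterion}(2)), but for an \aE-extension of order $\lambda$ the relevant $\zeta$ is a primitive $\lambda$-th root of unity, which is precisely what is needed.
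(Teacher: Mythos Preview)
Your proof is correct. The paper itself does not supply a proof of this lemma but defers to \cite[Lemma~2.1]{ocansey2018representing}; your argument---reducing to the criterion of Theorem~\ref{thm:rpiSExtCriterion}(2), raising $\s(g)=\zeta^{v}g$ to the $\lambda$-th power to force $g^{\lambda}\in\KK$, and then exploiting that $\KK$ is relatively algebraically closed in the purely transcendental extension $\FF=\KK(t_{1})\dots(t_{e})$---is the standard direct route. Your closing observation about the primitivity of $\zeta$ is also well placed: the paper's definition of an \aE-extension does not literally demand that $\zeta$ be primitive, but Theorem~\ref{thm:rpiSExtCriterion}(2) shows that primitivity is necessary for an \rE-extension, and every instance in the paper where this lemma is invoked takes $\zeta$ to be a primitive $\lambda$-th root of unity.
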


\subsection{Embedding into the ring of sequences}\label{subsec:embedRPiExtsIntoRingOfSeqns}
In this subsection, we will discuss the connection between \rpiE-extensions and the {\dr} of sequences. More precisely, we will show how \rpiE-extensions can be embedded into the difference ring of sequences~\cite{schneider2017summation}; compare also \cite{put1997galois}. This feature will enable us to handle condition \ref{item2:ProblemRPE} of~\ref{prob:ProblemRPE} in the sections below.
 
\begin{definition}\label{defn:DiffRingHomOrMon}
\normalfont
    Let $\dField{\AA}$ and $\dField[\sigma^{\prime}]{\AA^{\prime}}$ be two {\dr s}. The map $\tau: \AA \to \AA^{\prime}$ is called a \emph{{\dr} homomorphism} if $\tau$ is a ring homomorphism, and for all $f \in \AA$, $\tau(\s(f)) = \s^{\prime}(\tau(f))$. If $\tau$ is injective, then it is called a \emph{{\dr} monomorphism} or a \emph{{\dr} embedding}.\footnote{In this case, $\dField{\tau(\AA)}$ is a sub-difference ring of $\dField[\sigma^{\prime}]{\AA^{\prime}}$ where $\dField{\AA}$ and $\dField{\tau(\AA)}$ are the same up to renaming with respect to $\tau$.} If $\tau$ is a bijection, then it is a \emph{{\dr} isomorphism} and we say $\dField{\AA}$ and $\dField[\sigma^{\prime}]{\AA^{\prime}}$ are isomorphic; we write $\dField{\AA}\simeq\dField[\sigma^{\prime}]{\AA^{\prime}}$. Let $\dField{\EE}$ and $\dField[\tilde{\s}]{\tilde{\EE}}$ be {\drE s} of $\dField{\AA}$. Then a {\dr}-homomorphism/isomorphism/monomorphism $\tau:\EE\to\tilde{\EE}$ is called an \emph{$\AA$-homomorphism}/\emph{$\AA$-isomorphism}/\emph{$\AA$-monomorphism}, if $\tau|_{\AA}=\id$. \\
    Let $\dField{\AA}$ be a {\dr} with constant field $\KK$. A {\dr} homomorphism (resp.\ monomorphism) $\tau:\AA \to \ringOfEquivSeqs$ is called $\KK$-\emph{homomorphism} (resp.\ -\emph{monomorphism}) if for all $c \in \KK$ we have that $\tau(c) = {\bs c} := \geno{c,c,c,\dots\,}$.
\end{definition}

The following lemmas provide the key property that will enable us to embed \rpiE-extensions into the ring of sequences. First, we recall that the evaluation function of a difference ring establishes naturally a $\KK$-homomorphism. More precisely, by \cite[Lemma {2.5.1}]{schneider2001symbolic} we get

\begin{lemma}\label{lem:evaluationHomomorphism}
    Let $\dField{\AA}$ be a {\dr} with constant field $\KK$. Then the map  $\tau:\AA\to\ringOfEquivSeqs$ is a $\KK$-homomorphism if and only if there is an evaluation function $\ev:\AA\times\NN\to\KK$ for $\dField{\AA}$ (see Definition~\ref{defn:modelSeqsInDiffRings}) with 
    $\tau(f) = \genn{\ev(f,0),\,\ev(f,1),\dots}$.
\end{lemma}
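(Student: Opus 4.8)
The statement is a translation lemma: a $\KK$-homomorphism into $\ringOfEquivSeqs$ carries exactly the same information as an evaluation function, the dictionary being $\tau(f)=\genn{\ev(f,0),\ev(f,1),\dots}$. The plan is to unwind both definitions and to observe that every axiom of an evaluation function in Definition~\ref{defn:modelSeqsInDiffRings}, as well as the ring-homomorphism and $\s$-commutation requirements, are \emph{eventual} statements (``there is a $\delta$ with $\dots$ for all $n\ge\delta$''), which is precisely the relation $\sim$ that defines $\ringOfEquivSeqs=\faktor{\KK^{\NN}}{\sim}$. Hence each identity to be checked in $\ringOfEquivSeqs$ is equivalent to an eventual identity of $\KK$-valued sequences, and conversely. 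I would split the proof along the two implications.

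For ``$\Leftarrow$'' I would assume an evaluation function $\ev$ is given, set $\tau(f):=\genn{\ev(f,0),\ev(f,1),\dots}$ (the $\sim$-class of this concrete sequence, so well-definedness of $\tau$ as a map into $\ringOfEquivSeqs$ is immediate), and check: (a) $\tau$ is additive, multiplicative and unital -- immediate from~\eqref{sta:evaltnHom} together with~\eqref{sta:evalOfConstants} for $c=1$, since for fixed $f,g$ the sequences $\genn{\ev(f+g,n)}_{n\ge0}$ and $\genn{\ev(f,n)+\ev(g,n)}_{n\ge0}$ coincide from some $\delta$ on and so represent the same class, likewise for products and for $1$; (b) $\tau\circ\s=S\circ\tau$ -- apply~\eqref{sta:evalOfShiftF} with $i=1$, so that $\genn{\ev(\s(f),n)}_{n\ge0}$ and $S\genn{\ev(f,n)}_{n\ge0}=\genn{\ev(f,n+1)}_{n\ge0}$ agree eventually; (c) $\tau(c)=\bs{c}$ for $c\in\KK$ -- this is~\eqref{sta:evalOfConstants}. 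Together (a)--(c) say that $\tau$ is a $\KK$-homomorphism of difference rings, and $\tau(f)=\genn{\ev(f,0),\ev(f,1),\dots}$ holds by construction.

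For ``$\Rightarrow$'' I would assume a $\KK$-homomorphism $\tau:\AA\to\ringOfEquivSeqs$, choose for every $f\in\AA$ a representative sequence $\genn{b_f(0),b_f(1),\dots}\in\KK^{\NN}$ of the class $\tau(f)$, and define $\ev(f,n):=b_f(n)$. Then I would verify the three axioms: \eqref{sta:evalOfConstants} from $\tau(c)=\bs{c}$ (the chosen representative of $\tau(c)$ equals the constant sequence from some point on); \eqref{sta:evaltnHom} from $\tau(f+g)=\tau(f)+\tau(g)$ and $\tau(fg)=\tau(f)\tau(g)$, because the componentwise sum (resp.\ product) of the chosen representatives of $\tau(f)$ and $\tau(g)$ is a representative of $\tau(f+g)$ (resp.\ $\tau(fg)$) and hence agrees eventually with $\genn{b_{f+g}(n)}_{n\ge0}$ (resp.\ $\genn{b_{fg}(n)}_{n\ge0}$); \eqref{sta:evalOfShiftF} from $\tau(\s^{i}(f))=S^{i}(\tau(f))$ (iterating $\tau\circ\s=S\circ\tau$), using that $S^{i}$ applied to the chosen representative of $\tau(f)$ is a sequence which for all $n\ge\max(0,-i)$ equals $n\mapsto b_f(n+i)$ and which represents $\tau(\s^{i}(f))$, hence agrees eventually with $\genn{b_{\s^{i}(f)}(n)}_{n\ge0}$. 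Finally $\tau(f)=\genn{\ev(f,0),\ev(f,1),\dots}$ holds because $\genn{b_f(n)}_{n\ge0}$ was picked as a representative of $\tau(f)$.

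I expect no genuine difficulty; the one point that needs care -- the ``main obstacle'' such as it is -- is the bookkeeping around $\sim$: all identities hold only from some index onward, and in the verification of~\eqref{sta:evalOfShiftF} for $i<0$ one must remember that the inverse shift on $\KK^{\NN}$ is determined only up to finitely many initial terms, so the argument there must be restricted to $n\ge -i$. Everything else is a direct translation, which is why this is essentially~\cite[Lemma~2.5.1]{schneider2001symbolic}.
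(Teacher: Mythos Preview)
Your proposal is correct and is the natural proof; the paper itself does not give a proof but simply cites \cite[Lemma~2.5.1]{schneider2001symbolic}, so there is nothing to compare against beyond noting that your argument is exactly the intended unwinding of the definitions.
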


Starting with our \pisiE-field $\dField{\KK(x)}$ over $\KK$ and the evaluation function~\eqref{map:evaForRatFxns} we can construct for an \apE-extension an appropriate evaluation function by iterative application of Lemma~\ref{Lemma:ExtendEv}. In particular this yields a $\KK$-homomorphism from the given \apE-extension into the ring of sequences by Lemma~\ref{lem:evaluationHomomorphism}. Finally, we utilize the following result from~\cite{schneider2017summation}; compare~\cite[Lemma 2.2]{ocansey2018representing}. 
\begin{theorem}\label{Thm:injectiveHom}
Let $\dField{\AA}$ be a {\df} with constant field $\KK$ and let $\dField{\EE}$ be a basic \rpiE-extension of $\dField{\AA}$, Then any $\KK$-homomorphism $\tau:\EE\to\ringOfEquivSeqs$ is injective.
\end{theorem}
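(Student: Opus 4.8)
The plan is to show that $\ker\tau=\{0\}$. Since $\tau$ is a difference ring homomorphism and $S$ is an automorphism of $\ringOfEquivSeqs$, the kernel $\ker\tau$ is a reflexive difference ideal of $\dField{\EE}$, and it does not contain $1$ since $\tau(1)=\genn{1,1,1,\dots}\neq\genn{0,0,0,\dots}$. Hence it is enough to prove that $\dField{\EE}$ is \emph{$\s$-simple}, i.e.\ has no reflexive difference ideal other than $\{0\}$ and $\EE$: then $\ker\tau\neq\EE$ forces $\ker\tau=\{0\}$ and $\tau$ is injective. (For this reduction it suffices that $\tau$ be a ring homomorphism into $\ringOfEquivSeqs$; the $\KK$-homomorphism hypothesis is not used here.)

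To prove $\s$-simplicity I would induct on the number $e$ of monomials in $\EE=\AA\genn{t_{1}}\dots\genn{t_{e}}$. For $e=0$, $\EE=\AA$ is a field and the claim is trivial. For the step write $\AA':=\AA\genn{t_{1}}\dots\genn{t_{e-1}}$ and $t:=t_{e}$ with $\s(t)=\alpha\,t$, $\alpha\in(\AA')^{*}$. Because $\const\dField{\EE}=\KK$ and $\AA\subseteq\AA'\subseteq\EE$, also $\const\dField{\AA'}=\KK$, so $\dField{\AA'}$ is a basic \rpiE-extension of $\dField{\AA}$ with one fewer monomial, hence $\s$-simple by induction, and $\dField{\AA'\genn{t}}$ is an \rE-extension ($t$ an \aE-monomial, $\alpha=\zeta\in\KK^{*}$) or a \piE-extension ($t$ a \pE-monomial) of $\dField{\AA'}$. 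Let $J\subsetneq\EE$ be a reflexive difference ideal. Then $J\cap\AA'$ is a reflexive difference ideal of $\dField{\AA'}$, so it is $\{0\}$ or $\AA'$; the latter gives $1\in J$, a contradiction, hence $J\cap\AA'=\{0\}$. Assume $J\neq\{0\}$ and pick $0\neq f=\sum_{i\in I}f_{i}\,t^{i}\in J$ with all $f_{i}\in\AA'\sm\{0\}$ and $\abs{I}$ minimal. If $\abs{I}=1$ then $f=f_{i_{0}}\,t^{i_{0}}$ with $t^{i_{0}}\in\EE^{*}$, so $f_{i_{0}}\in J\cap\AA'=\{0\}$, a contradiction; thus $\abs{I}\geq2$. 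Pick distinct $i,j\in I$ and set
\[
g:=f_{j}\,\s(f)-\s(f_{j})\,\alpha^{j}\,f\in J .
\]
Since $\s(f)=\sum_{k\in I}\s(f_{k})\,\alpha^{k}\,t^{k}$, the $t^{j}$-coefficient of $g$ equals $f_{j}\,\s(f_{j})\,\alpha^{j}-\s(f_{j})\,\alpha^{j}\,f_{j}=0$, so $g$ is supported on $I\sm\{j\}$; by minimality of $\abs{I}$ we get $g=0$, and comparing the $t^{i}$-coefficients yields
\[
f_{j}\,\s(f_{i})=\s(f_{j})\,\alpha^{\,j-i}\,f_{i}
\]
in $\AA'$, with $f_{i},f_{j}\neq0$ and $v:=j-i\neq0$.

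It remains to convert this relation into a solution $h\in\AA'\sm\{0\}$ of an equation $\s(h)=\alpha^{w}\,h$ with a nonzero integer $w$ (and $w\in\{1,\dots,\lambda-1\}$ when $t$ is an \aE-monomial of order $\lambda$), which contradicts part~(1) (resp.\ part~(2)) of Theorem~\ref{thm:rpiSExtCriterion}, since $\dField{\AA'\genn{t}}$ is a \piE- (resp.\ \rE-) extension. When $\AA'$ is an integral domain — in particular for pure \piE-towers, where $\AA'$ is a Laurent polynomial ring over a field and hence a UFD — one passes to $\mathrm{Frac}(\AA')$, observes $\s(f_{i}/f_{j})=\alpha^{v}(f_{i}/f_{j})$, and then, by comparing factorizations into irreducibles and using that $\alpha^{v}$ is a unit, promotes this to a genuine $h\in\AA'\sm\{0\}$ with $\s(h)=\alpha^{w}h$, $w\neq0$. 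In the presence of \aE-monomials the \emph{basic} hypothesis is what saves the day: the \aE-monomials all have shift quotient in $\KK^{*}$, so by the Chinese Remainder Theorem applied to $\vartheta^{\lambda}-1=\prod_{l=0}^{\lambda-1}(\vartheta-\zeta^{l})$ the ring $\AA'$ (and $\EE$) splits as a finite direct product of \pisiE-type integral domains which $\s$ permutes, and one argues component-wise to reduce to the domain case. Making this last reduction precise — handling the zero divisors introduced by the \aE-monomials and keeping track of the permutation action of $\s$ on the direct factors — is the technical heart and the step I expect to be the main obstacle; it is carried out in~\cite{schneider2017summation} (cf.\ also the proof of~\cite[Lemma~2.2]{ocansey2018representing}). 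Granting it, we contradict Theorem~\ref{thm:rpiSExtCriterion}; hence $J=\{0\}$, $\dField{\EE}$ is $\s$-simple, and the theorem follows.
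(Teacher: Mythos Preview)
The paper does not give its own proof of this theorem; it simply records the statement and cites it from \cite{schneider2017summation} (cf.\ also \cite[Lemma~2.2]{ocansey2018representing}). Your sketch follows precisely the approach taken in those references: reduce injectivity to $\sigma$-simplicity of $\dField{\EE}$ (since $\ker\tau$ is a proper reflexive difference ideal), and establish $\sigma$-simplicity by induction on the tower via a minimal-support argument together with the characterization of \piE-/\rE-extensions in Theorem~\ref{thm:rpiSExtCriterion}. You also correctly isolate the idempotent decomposition of Proposition~\ref{pro:directDecompositionOfSingleRPSExt} as the device for handling the zero divisors contributed by the \aE-monomials, and defer that step to the same source the paper cites. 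In this sense your proposal is entirely in line with --- and in fact more explicit than --- what the paper itself provides.

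One small caveat worth flagging: even in the pure \piE\ (integral-domain) case, the ``promotion'' from $\s(f_i/f_j)=\alpha^{v}(f_i/f_j)$ in $\mathrm{Frac}(\AA')$ to an element $h\in\AA'\setminus\{0\}$ with $\s(h)=\alpha^{w}h$ is not quite as innocent as your write-up suggests, since Theorem~\ref{thm:rpiSExtCriterion} demands $g\in\AA'$ rather than $g\in\mathrm{Frac}(\AA')$. The clean way around this (as used in the cited proofs) is to invoke the equivalence $(2)\Leftrightarrow(3)$ of Lemma~\ref{lem:transcendentalCriterionForPrdts}: the Laurent-polynomial \piE-tower over the field $\AA$ and the corresponding rational-function \piE-field tower have the same constants, so one may carry out the minimal-support argument over the field of fractions from the start, where the normalization $f_j=1$ is available and the contradiction is immediate.
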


In other words, if we succeed in modeling our nested products  within a basic \rpiE-extension (in particular, a multiple chain \rpiE-extension) over $\dField{\KK(x)}$ with an appropriate evaluation function, then we automatically obtain a $\KK$-embedding.

\subsection{A structural theorem for multiple chain \piE-extensions}\label{subsec:structuralTheoremForMultipleChainPiExtns}

In part~1 of Theorem~\ref{thm:rpiSExtCriterion} a criterion is given that enables one to check with, e.g., the algorithms from~\cite{karr1981summation,schneider2016difference} whether a \pE-extension is a \piE-extension. In~\cite[Lemma 5.1]{ocansey2018representing} (based on~\cite{schneider2010parameterized,schneider2017summation}) this criterion has been generalized for ``single nested'' \pE-extensions as follows.

\begin{lemma}\label{lem:transcendentalCriterionForPrdts}
	Let $\dField{\FF}$ be a {\df} and let $\Lst{f}{1}{s}\in\FF^{*}$. Then the following statements are equivalent. 
	\begin{enumerate}[(1)]
		\item There do not exist $(\Lst{v}{1}{s}) \in \ZZ^{s}\sm\zvs{s}$ and $g\in\FF^{*}$ such that $\frac{\s(g)}{g}=\ProdLst{f}{1}{s}{v}$ holds.
		\item The \pE-extension $\dField{\FF[z_{1},z^{-1}_{1}]\dots[z_{s},z^{-1}_{s}]}$ of $\dField{\FF}$ with $\s(z_{i})=f_{i}\,z_{i}$ for $1\le i\le s$ is a \piE-extension.
		\item The difference field extension $\dField{\FF(z_{1})\dots(z_{s})}$ of $\dField{\FF}$ with $z_i$ transcendental over $\FF(z_1)\dots(z_{i-1})$ and $\s(z_{i})=f_{i}\,z_{i}$  for $1\le i\le s$ is a \piE-field extension.
	\end{enumerate}
\end{lemma}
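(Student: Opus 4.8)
The plan is to prove the three-way equivalence in the cyclic order $(1)\Rightarrow(3)\Rightarrow(2)\Rightarrow(1)$, reducing each implication to known facts about $\Pi$-field and $\Pi$-ring extensions together with Theorem~\ref{thm:rpiSExtCriterion}(1). The key observation throughout is that the ``one-step'' criterion in Theorem~\ref{thm:rpiSExtCriterion}(1)---namely that $\dField{\AA[t,\tfrac1t]}$ with $\sigma(t)=\alpha t$ is a $\Pi$-extension iff there are no $g\in\AA\setminus\{0\}$ and $v\in\ZZ\setminus\{0\}$ with $\sigma(g)=\alpha^{v}g$---has to be iterated $s$ times, and that the obstruction ``$\sigma(g)/g$ equals a product of the $f_i$'' must be tracked as one passes from $\FF$ up the tower $\FF\langle z_1\rangle\cdots\langle z_{i-1}\rangle$.

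\emph{The implication $(3)\Rightarrow(2)$.} This is essentially automatic: a $\Pi$-field extension $\dField{\FF(z_1)\dots(z_s)}$ restricts to the $\Pi$-ring (i.e.\ \pE-/\piE-) extension $\dField{\FF[z_1,z_1^{-1}]\dots[z_s,z_s^{-1}]}$ on the Laurent-polynomial subring, since the Laurent ring sits inside the rational function field, the automorphism agrees, and the constants of a subring of a field with unchanged constants are still unchanged. I would phrase this as: $\const\dField{\FF[z_1,z_1^{-1}]\dots[z_s,z_s^{-1}]}\subseteq\const\dField{\FF(z_1)\dots(z_s)}=\const\dField{\FF}$, and the reverse inclusion is trivial.

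\emph{The implication $(2)\Rightarrow(1)$.} I would argue by contraposition. Suppose there are $(\Lst{v}{1}{s})\in\ZZ^s\setminus\{\bf0\}$ and $g\in\FF^*$ with $\sigma(g)/g=f_1^{v_1}\cdots f_s^{v_s}$. Set $h:=g\,z_1^{-v_1}\cdots z_s^{-v_s}\in\FF[z_1,z_1^{-1}]\dots[z_s,z_s^{-1}]$; a direct computation using $\sigma(z_i)=f_iz_i$ shows $\sigma(h)=h$, so $h$ is a nonconstant element fixed by $\sigma$ (nonconstant because the $z_i$ are transcendental and not all $v_i$ vanish), contradicting that the extension is a \piE-extension. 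This is routine.

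\emph{The implication $(1)\Rightarrow(3)$ --- the main obstacle.} This is the substantive direction and I expect it to be the hard part. I would prove it by induction on $s$, building the tower one generator at a time and invoking Theorem~\ref{thm:rpiSExtCriterion}(1) at each step (in its field formulation via the $\sigma$-/$\Pi$-field definition, or by the standard transfer from the Laurent-ring statement). The base case $s=0$ is vacuous. For the inductive step, assume $(1)$ holds for $f_1,\dots,f_s$; then in particular it holds for $f_1,\dots,f_{s-1}$, so by induction $\dField{\FF(z_1)\dots(z_{s-1})}$ is a \piE-field extension of $\dField{\FF}$, hence $z_s$ is transcendental over it and $\dField{\FF(z_1)\dots(z_{s-1})(z_s)}$ is at least a \pE-field tower. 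It remains to show the last step does not create new constants, i.e.\ (by Theorem~\ref{thm:rpiSExtCriterion}(1) applied over the field $\GG:=\FF(z_1)\dots(z_{s-1})$) that there is no $G\in\GG^*$ and $w\in\ZZ\setminus\{0\}$ with $\sigma(G)/G=f_s^{w}$. The crux is to descend such a hypothetical relation from $\GG$ back to $\FF$: writing $G$ as a ratio of Laurent polynomials in $z_1,\dots,z_{s-1}$ over $\FF$ and comparing coefficients (using that each $z_i$ satisfies $\sigma(z_i)=f_iz_i$ and that $\dField{\GG}$ has the same constants as $\dField{\FF}$), one shows that any such $G$ must in fact be of the form $g\,z_1^{v_1}\cdots z_{s-1}^{v_{s-1}}$ with $g\in\FF^*$, whence $\sigma(g)/g=f_1^{-v_1}\cdots f_{s-1}^{-v_{s-1}}f_s^{w}$, contradicting $(1)$ with exponent vector $(-v_1,\dots,-v_{s-1},w)\neq\bf0$. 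Making this coefficient-comparison argument precise---essentially the statement that in a \piE-tower the only ``nearly $\sigma$-invariant'' elements are monomials in the generators times a base-field element---is where the real work lies; this is exactly the content that \cite{schneider2010parameterized,schneider2017summation} supply, so in the write-up I would either cite the relevant structural lemma there or reproduce the short monomial-comparison argument. Finally I would remark that the same induction simultaneously yields $(2)$, or simply note $(2)$ follows from $(3)$ as above, closing the cycle.
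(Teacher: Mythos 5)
The paper itself does not prove Lemma~\ref{lem:transcendentalCriterionForPrdts}: it is stated as a quotation of Lemma~5.1 from \cite{ocansey2018representing} (built on \cite{schneider2010parameterized,schneider2017summation}), so there is no in-paper argument to compare against. Your reconstruction is nonetheless sound, and you have correctly identified where the content lies. The implications $(3)\Rightarrow(2)$ and $(2)\Rightarrow(1)$ are exactly the routine arguments you give; for $(1)\Rightarrow(3)$ the induction you set up is the standard one, and the crux --- descending a hypothetical relation $\s(G)=f_s^{w}G$ over $\GG=\FF(z_1)\dots(z_{s-1})$ to a relation over $\FF$ --- requires precisely the structural fact that in a $\Pi$-tower any element satisfying a first-order homogeneous shift-relation over the base is a monomial in the generators times a base element. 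That fact appears in this very paper as Corollary~\ref{cor:semiConstStructure4PiSigmaExtension} (quoted from Corollary~4.6 of \cite{schneider2016difference}), so you can cite it rather than reproduce the coefficient comparison. Two small points to make explicit in a write-up: Corollary~\ref{cor:semiConstStructure4PiSigmaExtension} is stated for Laurent-polynomial ring extensions over a difference field, while your $G$ lives in the field $\GG$, so you must either invoke the field version of the structure result or pass from the ring statement to the field via the usual coprime-ratio argument ($\s$ preserves degree, so $\s$ stabilizes numerator and denominator separately up to an $\FF^{*}$-factor, and then coefficient comparison forces monomials); and your final remark that ``the same induction simultaneously yields~(2)'' is superfluous, since the cycle $(3)\Rightarrow(2)\Rightarrow(1)\Rightarrow(3)$ is already closed once the three implications are established.
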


In Theorem~\ref{thm:multipleChainPExt2MultipleChainPiExt} we will extend this result further to multiple-chain \pE-extensions. Here we utilize that a solution for a certain class of homogeneous first-order difference equations has a particularly simple form; this result is a specialization of~\cite[Corollary 4.6]{schneider2016difference}.

\begin{corollary}\label{cor:semiConstStructure4PiSigmaExtension}
	Let $\dField{\EE}$ be a \piE-extension of a {\df} $\dField{\AA}$ with $\EE=\AA\genn{t_{1}}\dots\genn{t_{e}}$. 
	Then for any $g\in\EE\setminus\{0\}$ with $\s(g)=u\,\ProdLst{t}{1}{e}{z}\,g$ for some $u\in\AA^*$ and $z_i\in\ZZ$ we have
	$$g=h\,\ProdLst{t}{1}{e}{v}$$
	with $h\in\AA^{*}$ and $v_{i}\in\ZZ$.
\end{corollary}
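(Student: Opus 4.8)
The plan is to reduce the claim to the cited structural result~\cite[Corollary 4.6]{schneider2016difference}, which describes the solutions $g$ of homogeneous first-order equations $\s(g)=w\,g$ with $w$ a unit in a $\Pi\Sigma$-extension: such $g$ must be, up to a factor in the ground ring, a Laurent monomial in the $\Pi$-monomials of the tower. So first I would set $w:=u\,\ProdLst{t}{1}{e}{z}$ and check that $w\in\EE^*$: since $\dField{\EE}$ with $\EE=\AA\genn{t_1}\dots\genn{t_e}$ is a $\Pi$-extension (a special $\Pi\Sigma$-extension with no $\Sigma$-monomials), each $t_i$ is a $\Pi$-monomial, hence a unit, and $u\in\AA^*$, so $w$ is indeed a unit. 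Then $\s(g)=w\,g$ is exactly the type of equation handled by~\cite[Corollary 4.6]{schneider2016difference}.

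Applying that corollary to the $\Pi\Sigma$-tower $\dField{\EE}$ over $\dField{\AA}$, any nonzero solution $g$ of $\s(g)=w\,g$ has the form $g=h\,\ProdLst{t}{1}{e}{v}$ where $h$ lies in the ground ring $\AA$ and the $v_i\in\ZZ$ (only the $\Pi$-monomials appear with possibly negative exponents; there are no $\Sigma$-monomials here, so no polynomial part in those variables can occur). It remains to argue that $h\in\AA^*$, i.e. that $h$ is a unit and not merely a nonzero element. This is where a small extra argument is needed: from $g=h\,\ProdLst{t}{1}{e}{v}$ and $\s(g)=u\,\ProdLst{t}{1}{e}{z}\,g$ one computes $\s(h)=u\,\ProdLst{\alpha}{1}{e}{-v}\,\ProdLst{t}{1}{e}{z}\,h$ (using $\s(t_i)=\alpha_i t_i$), but since $h\in\AA$ and $\s(h)\in\AA$, and the $t_i$ are transcendental (or algebraic with the known relations) over $\AA$, comparing the $t$-degrees forces $z_i=v_i$ for all $i$ and hence $\s(h)=u\,h$ with $u\in\AA^*$; therefore $\s(h)$ is a unit, and since $\s$ is a ring automorphism of $\AA$, $h$ itself is a unit, $h\in\AA^*$. (Alternatively one may invoke the standard fact that $\AA$ is reduced—indeed $\dField{\AA}$ is here a difference field by hypothesis—so every nonzero $h$ is already a unit, which makes this step immediate.)

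The only genuine obstacle is bookkeeping: verifying that the hypotheses of~\cite[Corollary 4.6]{schneider2016difference} are met in the precise form stated there (the base $\dField{\AA}$ being a field, the tower being a $\Pi\Sigma$-extension with the $t_i$ exactly its $\Pi$-monomials, and the right-hand coefficient being a unit of the appropriate shape), and then translating the conclusion of that corollary into the notation $g=h\,\ProdLst{t}{1}{e}{v}$ used here. No substantial new idea is required; the corollary is essentially a specialization obtained by discarding the $\Sigma$-part, so the proof is short once the reduction is set up.
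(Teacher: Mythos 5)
Your overall approach is the same as the paper's: the result is obtained by applying \cite[Corollary 4.6]{schneider2016difference} to the right-hand side $w:=u\,t_1^{z_1}\cdots t_e^{z_e}\in\EE^*$, which yields $g=h\,t_1^{v_1}\cdots t_e^{v_e}$ with $h$ in the ground ring. So the reduction is the correct one, and the paper itself offers no more detail than ``this is a specialization of \cite[Corollary 4.6]{schneider2016difference}''.

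However, the middle step of your argument is incorrect. You claim that from $g=h\,t_1^{v_1}\cdots t_e^{v_e}$ and $\s(g)=u\,t_1^{z_1}\cdots t_e^{z_e}\,g$, ``comparing the $t$-degrees forces $z_i=v_i$''. This is false even in the simplest case: take $e=1$ with $\s(t)=\alpha\,t$ for some $\alpha\in\AA^*$, and let $g=t$. Then $\s(g)=\alpha\,g$, so $u=\alpha$ and $z_1=0$, while $h=1$ and $v_1=1$. No degree comparison can conclude $z_i=v_i$; the exponents $z_i$ and $v_i$ play entirely different roles and are related through the $\alpha_i$-terms, which in a multiple-chain tower may themselves carry $t_j$-dependence. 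The calculation would also leave you with $\s(h)\in\AA$ equalling an expression that \emph{a priori} still involves the $t_j$'s, so the bookkeeping does not go through as stated.

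Fortunately, none of this is needed, and your closing parenthetical is the right argument: $\dField{\AA}$ is a difference \emph{field} by hypothesis, so $\AA\setminus\{0\}=\AA^*$. Since $g\neq0$ and $t_1^{v_1}\cdots t_e^{v_e}$ is a unit in the Laurent polynomial ring $\EE$, the factor $h=g\,(t_1^{v_1}\cdots t_e^{v_e})^{-1}$ is a nonzero element of $\AA$ and hence a unit. You should lead with this observation rather than the degree argument; as written, a reader will be misled by the incorrect claim $z_i=v_i$ before reaching the correct justification. (Also, ``$\AA$ is reduced'' is a weaker property than what you need and use; the relevant fact is simply that $\AA$ is a field.)
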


Now we are ready to prove a general criterion that enables one to check if a multiple chain \pE-extension forms a \piE-extension. This result will be heavily used within the next two sections.

\begin{theorem}\label{thm:multipleChainPExt2MultipleChainPiExt}
    Let $\dField{\HH}$ be a {\df} and let $\dField{\HH_{\ell}}$ with $\HH_{\ell}=\HH\genn{t_{\ell,1}}\dots\genn{t_{\ell,s_{\ell}}}$ for $1\le \ell \le m$ be single chain \pE-extensions of $\dField{\HH}$ over $\HH$ with base $c_{\ell}\in\HH^{*}$ where $s_1\geq s_2\geq\dots\geq s_m$. In particular, the automorphisms are given by 
    \begin{equation}\label{diffAuto:singleChainPiMonomialsAuto}
    \s(t_{\ell,k}) = \alpha_{\ell,k}\,t_{\ell,k} \quad \text{where}\quad \alpha_{\ell,k}=c_{\ell}\,t_{\ell,1}\cdots t_{\ell,k-1}\in(\HH^{*})_{\HH}^{\HH\genn{t_{\ell,1}}\dots\genn{t_{\ell,k-1}}}.
    \end{equation}
    Let $\dField{\AA}$ be the ordered multiple chain \pE-ext.\ of $\dField{\HH}$ with $\AA=\HH\genn{t_{1,1}}\dots\genn{t_{w_{1},1}}\dots\genn{t_{1,d}}\dots\genn{t_{w_{d},d}}$ of monomial depth $d:=\max(s_{1},\dots,s_{m})$ with $m=w_{1}\ge w_{2}\ge\dots\ge w_{d}$ composed by the single chain \piE-extensions $\dField{\HH_{\ell}}$ of $\dField{\HH}$ with the automorphism~\eqref{diffAuto:singleChainPiMonomialsAuto}. Then $\dField{\AA}$ is a \piE-extension of $\dField{\HH}$ if and only if \footnote{Note that $(\Lst{c}{1}{m})=(\alpha_{1,1},\dots,\alpha_{w_{1},1})$.} there does not exist a $g\in\HH^*$ and $(\Lst{v}{1}{m})\in\ZZ^{m}\sm\zvs{m}$ such that 
    \begin{equation}\label{eqn:noRelationCriterion}
        \frac{\s(g)}{g}=\ProdLst{c}{1}{m}{v}.
    \end{equation}
\end{theorem}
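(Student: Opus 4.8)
The plan is to reduce the statement about the ordered multiple chain extension $\dField{\AA}$ to a collection of single-chain criteria (Lemma~\ref{lem:transcendentalCriterionForPrdts}) and then to leverage the structural result of Corollary~\ref{cor:semiConstStructure4PiSigmaExtension} in order to ``project'' any hypothetical constant down to a relation living in the ground field $\HH$. The forward direction is easy: if $\dField{\AA}$ is a \piE-extension of $\dField{\HH}$, then in particular the tower obtained by keeping only the bottom-level monomials $t_{1,1},\dots,t_{w_1,1}=t_{1,1},\dots,t_{m,1}$ is a \piE-extension of $\dField{\HH}$ (a sub-tower of a \piE-extension over the same ground field is again one, since constants can only shrink), and these have $\s(t_{\ell,1})/t_{\ell,1}=c_\ell$; hence by Lemma~\ref{lem:transcendentalCriterionForPrdts} (equivalence of (1) and (2)) no relation of the form~\eqref{eqn:noRelationCriterion} can hold.

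For the converse I would argue by contradiction: suppose~\eqref{eqn:noRelationCriterion} has no solution but $\dField{\AA}$ is \emph{not} a \piE-extension of $\dField{\HH}$. Then there is a first monomial in the ordered tower, say $t_{w,d'}$ with $\s(t_{w,d'})=\alpha_{w,d'}\,t_{w,d'}$, whose adjunction introduces a new constant; equivalently, by part~1 of Theorem~\ref{thm:rpiSExtCriterion}, there exist $g$ in the difference ring $\BB$ generated by all the strictly earlier monomials (which by minimality is a \piE-extension of $\dField{\HH}$) and $v\in\ZZ\sm\zs$ with $\s(g)=\alpha_{w,d'}^{\,v}\,g$. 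Now $\alpha_{w,d'}=c_w\,t_{w,1}\cdots t_{w,d'-1}$ is a monomial in earlier generators times $c_w\in\HH^*$, so this reads $\s(g)=u\,(\text{monomial in the }t\text{'s of }\BB)\,g$ with $u\in\HH^*$. Corollary~\ref{cor:semiConstStructure4PiSigmaExtension} applied to the \piE-extension $\dField{\BB}$ of $\dField{\HH}$ then forces $g=h\,\ProdLst{t}{}{}{}$ (a unit $h\in\HH^*$ times a Laurent monomial in the generators of $\BB$). Substituting this shape back into $\s(g)=\alpha_{w,d'}^{\,v}g$ and comparing exponents of each $t_{\ell,k}$ yields a system of linear equations over $\ZZ$; the part of the system coming from the top-level monomials $t_{1,1},\dots,t_{m,1}$ — after using the chain relations $\s(t_{\ell,k})=c_\ell\,t_{\ell,1}\cdots t_{\ell,k-1}\,t_{\ell,k}$ to push everything down to depth one — will isolate an equation $\s(h)/h=\ProdLst{c}{1}{m}{\tilde v}$ for a suitable integer vector $\tilde v$, and one checks that $\tilde v\neq\zv{m}$ because $v\neq0$ and the chain structure propagates the nonvanishing from $t_{w,d'}$ down to $t_{w,1}$ (this is where the telescoping identity $t_i/\s^{-1}(t_i)=t_{i-1}$ from Remark~\ref{Remark:SimpleProperties} does the bookkeeping). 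This contradicts the hypothesis.

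\textbf{The main obstacle} I expect is the exponent-comparison step: carefully tracking how the single vector $v$ attached to $t_{w,d'}$ spreads through the chain relations to produce a \emph{nonzero} vector $\tilde v\in\ZZ^m$ at depth one, and making sure no cancellation across the different chains $\ell=1,\dots,m$ can make $\tilde v$ vanish. The key point is that the chains are ``independent'' in that $c_\ell$ and the $t_{\ell,k}$ for a fixed $\ell$ form their own ascending block, and applying $\s$ to a monomial only ever lowers the second index within the same $\ell$; so the coefficient of $t_{w,1}$ in the final relation is (up to sign) determined solely by the exponents coming from the $w$-th chain, which cannot all cancel against each other by an induction on $d'$ exactly mirroring the proof of Lemma~\ref{lem:transcendentalCriterionForPrdts} / Corollary~\ref{cor:semiConstStructure4PiSigmaExtension}. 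Once this is in place, the remaining steps are routine. An alternative, slicker route — which I would mention but probably not adopt as the primary argument — is to induct directly on the monomial depth $d$, viewing $\dField{\AA}$ as a multiple chain extension of depth $d-1$ over the enlarged ground field obtained after adjoining the depth-$\leq d-1$ monomials, but this requires re-deriving a relative version of Corollary~\ref{cor:semiConstStructure4PiSigmaExtension} and so is not obviously shorter.
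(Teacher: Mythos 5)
Your forward direction is correct and matches the paper's: restrict to the depth-one subring $\HH\genn{t_{1,1}}\dots\genn{t_{m,1}}$, whose shift quotients are exactly $c_1,\dots,c_m$, and apply Lemma~\ref{lem:transcendentalCriterionForPrdts}.

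The converse direction has a genuine gap, located exactly where you flagged the main obstacle. After writing $g=h\,M$ via Corollary~\ref{cor:semiConstStructure4PiSigmaExtension}, you propose to collapse $\s(g)=\alpha_{w,d'}^v\,g$ to depth one and isolate a nonzero constraint $\s(h)/h=c_1^{\tilde v_1}\cdots c_m^{\tilde v_m}$. But the vector $\tilde v$ \emph{can} equal $\zv{m}$, and then this route yields no contradiction. Take $M=1$: then the relation splits into $\s(h)=c_w^v\,h$ in $\HH$ together with $t_{w,1}^v\cdots t_{w,d'-1}^v=1$, and the latter forces $v=0$, whence $\tilde v=\zv{m}$ and the $c$-relation is vacuous. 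In general $\tilde v_w=v-\sum_k v_{w,k}$, which the exponent equations at intermediate depths can make vanish, so there is no ``propagation of nonvanishing'' down the chain. The paper sidesteps the issue entirely by comparing exponents at the \emph{maximal} available depth $\delta$ rather than at depth one: since each $\alpha_{k,\delta+1}=c_k\,t_{k,1}\cdots t_{k,\delta}$ carries $t_{k,\delta}$ to the first power, while $\sigma$ applied to the generators of $\AA_\delta$ produces, besides each generator itself, only generators of strictly smaller depth, the $t_{k,\delta}$-exponents on the two sides of the relation read $v_{k,\delta}=v_{k,\delta}+\upsilon_k$, giving $\upsilon_k=0$ for all $k$ immediately. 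The contradiction lives at the top of the chains, where no cancellation is possible, and the depth-one collapse you propose discards exactly this information. (Incidentally, the telescoping identity of Remark~\ref{Remark:SimpleProperties} does not appear in the paper's proof of this theorem; it enters only later, in the proof of Lemma~\ref{lem:orderedMultipleChainPExtOverConstPoly2APiExtOverConstField}.)
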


\begin{proof}     
    \noindent$``\Longrightarrow"$ Suppose that $\dField{\AA}$ is a \piE-extension of $\dField{\HH}$. Then, it is a tower of \piE-extensions $\dField{\AA_{i}}$ of $\dField{\HH}$ where $\AA_{i}=\AA_{i-1}\genn{t_{1,i}}\dots\genn{t_{w_{i},i}}$ for $1\le i\le d$ with $\AA_{0}=\HH$. Since $\dField{\AA_{1}}$ is a \piE-extension of $\dField{\HH}$, it follows by Lemma~\ref{lem:transcendentalCriterionForPrdts} that there does not exist a $g\in\HH$ and $(\Lst{v}{1}{w_{1}})\in\ZZ^{w_{1}}\sm\zvs{w_{1}}$ with $w_{1}=m$ such that~\eqref{eqn:noRelationCriterion} holds.
    
    \noindent $``\Longleftarrow"$ Conversely, suppose that there does not exist  a $g\in\HH$ and $(\Lst{v}{1}{w_{1}})\in\ZZ^{w_{1}}\sm\zvs{w_{1}}$ with $w_{1}=m$ such that~\eqref{eqn:noRelationCriterion} holds. Let $\dField{\AA_{1}}$ with $\AA_{1}=\HH\genn{t_{1,1}}\dots\genn{t_{w_{1},1}}$ be a \pE-extension of $\dField{\HH}$ with $\s(t_{j,1})=\alpha_{j,1}\,t_{j,1}$ for all $1\le j \le w_{1}$. By Lemma~\ref{lem:transcendentalCriterionForPrdts}, $\dField{\AA_{1}}$ is a \piE-extension of $\dField{\HH}$. Let $\dField{\AA_{i}}$ with  $\AA_{i}=\AA_{i-1}\genn{t_{1,i}}\dots\genn{t_{w_{i},i}}$ be the multiple chain \pE-extension of $\dField{\HH}$ 
    with $\myd(t_{1,i})=\dots=\myd(t_{w_{i},i})$ for all $1\le i \le d$ with the automorphism~\eqref{diffAuto:singleChainPiMonomialsAuto}. Assume that $\dField{\AA_{k}}$ is a \piE-extension of $\dField{\HH}$ for all $1\le k\leq\delta$ with $d>\delta\geq1$ and that $\dField{\AA_{\delta+1}}$ is not a \piE-extension of $\dField{\AA_{\delta}}$. Then by Lemma~\ref{lem:transcendentalCriterionForPrdts}, we can take a $g\in\AA_{\delta}\sm\zs$ and $(\upsilon_{1},\Lst{\upsilon}{2}{w_{\delta+1}})\in\ZZ^{w_{\delta+1}}\sm\zvs{w_{\delta+1}}$ such that 
    \begin{equation}\label{eqn:shiftOfG}
    \s(g)=\alpha_{1,\delta+1}^{\upsilon_{1}}\,\alpha_{2,\delta+1}^{\upsilon_{2}}\cdots\,\alpha_{w_{\delta+1},\delta+1}^{\upsilon_{w_{\delta+1}}}\,g
    \end{equation}
    holds. By Corollary~\ref{cor:semiConstStructure4PiSigmaExtension}, it follows that
    $g=h\,t_{1,1}^{v_{1,1}}\,t_{2,1}^{v_{2,1}}\cdots \,t_{w_{1},1}^{v_{w_{1},1}}\cdots\, t_{1,\delta}^{v_{1,\delta}}\,t_{2,\delta}^{v_{2,\delta}}\cdots\, t_{w_{\delta},\delta}^{v_{w_{\delta},\delta}}$
    with $h\in\HH^{*}$ and $v_{i,j}\in\ZZ$. For the left hand side of~\eqref{eqn:shiftOfG} we have that
    \[
    \s(g) = \gamma\, t_{1,\delta}^{v_{1,\delta}}\,t_{2,\delta}^{v_{2,\delta}}\cdots\, t_{w_{\delta},\delta}^{v_{w_{\delta},\delta}}
    \]
    where $\gamma\in\AA_{\delta-1}$ and for the right hand side of~\eqref{eqn:shiftOfG} we have that 
    \[
    \alpha_{1,\delta+1}^{\upsilon_{1}}\,\alpha_{2,\delta+1}^{\upsilon_{2}}\cdots\,\alpha_{w_{\delta+1},\delta+1}^{\upsilon_{w_{\delta+1}}}\,g = \omega\,t_{1,\delta}^{v_{1,\delta}+\upsilon_{1}}\,t_{2,\delta}^{v_{2,\delta}+\upsilon_{2}}\cdots\, t_{w_{\delta+1},\delta}^{v_{w_{\delta+1},\delta}+\upsilon_{w_{\delta+1}}}t_{w_{\delta+1}+1,\delta}^{v_{w_{\delta+1}+1,\delta}}\dots t_{w_{\delta},\delta}^{v_{w_{\delta},\delta}}
    \]
    where $\omega\in\AA_{\delta-1}$. Consequently,
    $v_{k,\delta} = v_{k,\delta} + \upsilon_{k}$
    and thus $\upsilon_{k}=0$ for all $1\le k \le w_{\delta+1}$ which is a 
    contradiction to the assumption that $(\Lst{\upsilon}{1}{w_{\delta+1}})\neq\zv{w_{\delta+1}}$. Thus $\dField{\AA_{d}}$ is a \piE-extension of $\dField{\HH}$.
\end{proof}

\section{The main building blocks to represent nested products in \rpiE-extensions}\label{Sec:ProductsInRPiExt}
Suppose that we are given a finite set of hypergeometric products of finite nesting depth which have been brought into the form as given in Proposition~\ref{pro:preprocessingNestedHypergeometricProductsExtended}. In the following we will show in Sections~\ref{subsec:HypProducts} and~\ref{subsec:simpleRPiExt4GeoPrdts} how these hypergeometric and geometric products can be modelled in \rpiE-extensions. For the treatment of geometric products one has to deal in addition with products defined over roots of unity of finite nesting depth. This extra complication will be treated in Section~\ref{subsec:rootsOfUnity}. Finally, in Section~\ref{Sec:CompleteAlg} below we will combine all these techniques to represent the full class of hypergeometric products of finite nesting depth in \rpiE-extensions.

\subsection{Nested Hypergeometric products with shift-coprime multiplicands}\label{subsec:HypProducts}

In Proposition~\ref{pro:preprocessingNestedHypergeometricProductsExtended} we showed that a finite set of hypergeometric products of finite nesting depth can be brought in a shift-coprime product representation form. In the setting of \pisiE-field extensions the underlying Definition~\ref{defn:shiftCoPrimeAndShiftEquivalent} can be generalized as follows.

\begin{definition}
\normalfont
Let $\dField{\FF(t)}$ be a \pisiE-field extension of $\dField{\FF}$. We call two polynomials $f,g\in\FF[t]$ shift-coprime (or $\sigma$-coprime)  if for all $k\in\ZZ$ we have that $\gcd(f,\sigma^k(h))=1$. 
\end{definition}

Inspired by~\cite{schneider2005product,Schneider:14} we refined Lemma~\ref{lem:transcendentalCriterionForPrdts} in~\cite[Theorem~5.3]{ocansey2018representing} to the following result that was the key tool to represent hypergeometric products of nesting depth 1 in a \piE-extension.

\begin{theorem}\label{thm:shiftCoPrimeAsPiExtension}
    Let $\dField{\FF(t)}$ be a \pisiSE-extension of $\dField{\FF}$. Let $\Lst{f}{1}{s}\in\FF[t]\sm\FF$ be irreducible monic polynomials. Then the following statements are equivalent. 
    \begin{enumerate}[(1)]
        \item For all $i,j$ with $1\le i < j \le s$, $f_{i}$ and $f_{j}$ are shift-coprime.
        \item There does not exist $(\Lst{v}{1}{s})\in\ZZ^{s}\sm\zvs{s}$ and $g\in\FF(t)^{*}$ with 
        $\frac{\s(g)}{g} = \ProdLst{f}{1}{s}{v}$.
        \item The \pE-extension $\dField{\FF(t)[z_{1},z^{-1}_{1}]\dots[z_{s},z^{-1}_{s}]}$ of $\dField{\FF(t)}$ with $\s(z_{i})=f_{i}\,z_{i}$ for $1\le i\le s$ is a \piE-extension.
    \end{enumerate}
\end{theorem}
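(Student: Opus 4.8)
The plan is to prove the cycle of implications $(1)\Rightarrow(3)\Rightarrow(2)\Rightarrow(1)$, which is the most economical route since Theorem~\ref{thm:multipleChainPExt2MultipleChainPiExt} and Lemma~\ref{lem:transcendentalCriterionForPrdts} will do most of the heavy lifting. The key observation is that the statement is really a relative version of Lemma~\ref{lem:transcendentalCriterionForPrdts}: that lemma handles products $f_i$ lying in a difference field $\dField{\FF}$, and here we want the analogous statement with $\FF$ replaced by the \pisiE-field $\dField{\FF(t)}$ and with the extra hypothesis that the $f_i$ are irreducible monic polynomials in $\FF[t]$, translating the algebraic condition ``no relation $\s(g)/g = \prod f_i^{v_i}$'' into the combinatorial condition ``pairwise shift-coprime''. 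Since $\dField{\FF(t)}$ is itself a \df, Lemma~\ref{lem:transcendentalCriterionForPrdts} immediately gives the equivalence $(2)\Leftrightarrow(3)$ verbatim (take the ground field there to be $\FF(t)$). So the only real content is the equivalence of $(1)$ with $(2)$.

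For $(2)\Rightarrow(1)$ I would argue by contraposition: suppose some $f_i,f_j$ with $i\neq j$ are \emph{not} shift-coprime, so $\gcd(f_i,\s^k(f_j))\neq 1$ for some $k\in\ZZ$. Since $f_i$ is irreducible monic and $\s^k(f_j)$ is irreducible (the automorphism $\s$ of $\FF[t]$ preserves irreducibility and, because $\dField{\FF(t)}$ is a \pisiE-extension, maps $\FF[t]$ into itself up to a unit), we get $f_i = c\,\s^k(f_j)$ for some $c\in\FF^*$; being both monic forces $c=1$, i.e.\ $f_i = \s^k(f_j)$. Then setting $g := \s^{k-1}(f_j)\,\s^{k-2}(f_j)\cdots$ telescopically (the standard construction: if $k\geq 1$, take $g=\prod_{r=0}^{k-1}\s^r(f_j)$, and if $k\leq 0$ take the reciprocal product, exactly as in the proof of Lemma~\ref{lem:shiftEquivalentRelation}) yields $\frac{\s(g)}{g} = \frac{\s^k(f_j)}{f_j} = \frac{f_i}{f_j} = f_i^{1}\,f_j^{-1}$, which is a nontrivial relation of the form in $(2)$. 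Hence $(1)$ fails $\Rightarrow$ $(2)$ fails.

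The harder direction is $(1)\Rightarrow(2)$, i.e.\ pairwise shift-coprimeness rules out \emph{all} relations, including those with several exponents simultaneously nonzero. Here I would suppose for contradiction that $\frac{\s(g)}{g} = f_1^{v_1}\cdots f_s^{v_s}$ with $g\in\FF(t)^*$ and some $v_i\neq 0$. Write $g = \frac{g_1}{g_2}$ in lowest terms with $g_1,g_2\in\FF[t]$, and compare irreducible factorizations on both sides of $\s(g_1)\,g_2 = g_1\,\s(g_2)\,f_1^{v_1}\cdots f_s^{v_s}$ (after clearing denominators and absorbing the unit $\s$ introduces on leading coefficients). The point is that the multiset of irreducible factors of $g$ must be a union of $\s$-orbits-with-multiplicity arranged so that the ``boundary'' of each orbit segment produces exactly the $f_i^{v_i}$ on the right; shift-coprimeness of distinct $f_i,f_j$ means these orbit segments are disjoint and cannot interact, so the problem decouples into $s$ independent single-$f_i$ problems. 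For a single $f_i$: if $f_i^{v_i}$ (with $v_i\neq 0$) appeared in $\s(g)/g$ with no other factors in its $\s$-orbit available, one gets a contradiction by a degree/valuation count along the orbit $\{\s^r(f_i):r\in\ZZ\}$ — the exponents of $\s^r(f_i)$ in $g$ would have to be a sequence that is eventually $0$ in both directions yet has nonzero first difference summing to $v_i\neq 0$ across the whole orbit, which is impossible for a finitely supported sequence. I expect the bookkeeping of orbits and the careful handling of the leading-coefficient units (so that everything stays monic) to be the main obstacle; the cleanest way to organize it is probably to mimic the proof of \cite[Theorem~5.3]{ocansey2018representing} and reduce, via the $\s$-orbit decomposition, to the case $s=1$, where the telescoping argument above is essentially reversible. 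Once $(1)\Rightarrow(2)$ is in hand, combining with $(2)\Leftrightarrow(3)$ from Lemma~\ref{lem:transcendentalCriterionForPrdts} and $(2)\Rightarrow(1)$ above closes the cycle.
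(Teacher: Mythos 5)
The paper does not give a proof of this statement; it is imported verbatim as~\cite[Theorem~5.3]{ocansey2018representing}, so there is no in-text argument to compare against. Judged on its own terms, your outline is the right one: $(2)\Leftrightarrow(3)$ is indeed an immediate instance of Lemma~\ref{lem:transcendentalCriterionForPrdts} with the ground difference field taken to be $\dField{\FF(t)}$, and the overall cycle $(1)\Rightarrow(3)\Rightarrow(2)\Rightarrow(1)$ is the sensible way to organize things.

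There is, however, a genuine gap in your $(2)\Rightarrow(1)$ step. After concluding that $f_i$ and $\s^k(f_j)$ are associates you write that ``being both monic forces $c=1$''. That is true only when $\dField{\FF(t)}$ is a \sigmaE-extension. When $\dField{\FF(t)}$ is a \piE-field extension, $\s(t)=\alpha\,t$ with $\alpha\in\FF^*$ and necessarily $\alpha^n\neq1$ for all $n\neq0$ (else a nontrivial constant appears), so $\s$ multiplies the leading coefficient of a degree-$n$ polynomial by $\alpha^n\neq1$; thus $\s^k(f_j)$ is \emph{not} monic and $c=\mathrm{lc}(\s^k(f_j))^{-1}\in\FF^*$ is a nontrivial unit. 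Your telescoping $g=\prod_{r=0}^{k-1}\s^r(f_j)$ then yields $\s(g)/g = c\,f_i\,f_j^{-1}$, which is \emph{not} of the required shape $\prod_\ell f_\ell^{v_\ell}$. The repair is not difficult but is not nothing: replace each $\s^r(f_j)$ by its monic normalization $\tilde f_j^{(r)}:=\s^r(f_j)/\mathrm{lc}(\s^r(f_j))$, observe that $\s(\tilde f_j^{(r)})=\alpha^{n}\,\tilde f_j^{(r+1)}$ with $n=\deg f_j$, and set $g:=t^{-nk}\prod_{r=0}^{k-1}\tilde f_j^{(r)}$; then $\s(t^{-nk})/t^{-nk}=\alpha^{-nk}$ cancels the accumulated unit and one gets $\s(g)/g=f_i/f_j$ exactly. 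Without some such correction the step simply fails in the \piE\ case. Finally, your $(1)\Rightarrow(2)$ is an acknowledged sketch deferring to the cited source; the $\s$-orbit decomposition is the right mechanism, but the ``bookkeeping of orbits and leading-coefficient units'' you flag — which includes precisely the issue above, recurring orbit by orbit — is the actual mathematical content, so as written that direction is not yet proved.
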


With this result we are now in the position to refine Theorem~\ref{thm:multipleChainPExt2MultipleChainPiExt} in order to construct a \piE-extension in which we can model hypergeometric products of finite nesting depth that are in shift-coprime representation form.

\begin{theorem}\label{thm:orderedMultipleChainPExt2OrderedMultipleChainPiExt4HyperAndqHyperProducts}
    Let $\dField{\FF(t)}$ be a \pisiSE-extension of $\dField{\FF}$
    Let ${\bs f}=(\Lst{f}{1}{m})\in(\FF[t]\sm\FF)^{m}$ be irreducible monic polynomials. For all $1\le \ell\le m$, let $\dField{\FF_{\ell}}$ with $\FF_{\ell}:=\FF(t)\genn{z_{\ell,1}}\dots\genn{z_{\ell,s_{\ell}}}$ be a single chain \pE-extension of $\dField{\FF(t)}$ with base  $f_{\ell}\in\FF[t]\sm\FF$  with the automorphism
    \begin{equation}\label{diffAuto:singleChainPExtensionOverPiSigmaField}
    \s(z_{\ell,k}) = \alpha_{\ell,k}\,z_{\ell,k} \quad \text{ where } \quad \alpha_{\ell,k} = f_{\ell}\,z_{\ell,1}\cdots\,z_{\ell,k-1}\in(\FF^{*})_{\FF}^{\FF\genn{z_{\ell,1}}\dots\genn{z_{\ell,k-1}}}
    \end{equation}
    and with $s_1\geq s_2\geq\dots\geq s_m$.
    Let $\dField{\HH_{b}}$ with 
    \begin{equation*}\label{dom:orderedMultipleChainPiRing}
    \HH_{b}=\FF(t)\genn{{\bs z}_{\bs 1}}\dots\genn{{\bs z}_{\bs b}}=\FF(t)\genn{z_{1,1}}\dots\genn{z_{w_{1},1}}\dots\genn{z_{1,b}}\dots\genn{z_{w_{b},b}}
    \end{equation*}
    be an ordered multiple chain \pE-extension of $\dField{\FF(t)}$ of monomial depth $b=\max(s_{1},\dots,s_{m})$ with bases $f_{1},\dots,f_{m}$ where $m=w_{1}\ge w_{2}\ge\cdots\ge w_{b}$ which is composed by the single chain \pE-extensions $\dField{\FF_{\ell}}$ of $\dField{\FF(t)}$. Then $\dField{\HH_{b}}$ is a \piE-extension of $\dField{\FF(t)}$ if and only if for all $i,j$ with $1\le i < j \le m$ the $f_{i}$ and $f_{j}$ are shift-coprime.
\end{theorem}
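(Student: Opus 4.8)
The plan is to imitate the proof of Theorem~\ref{thm:multipleChainPExt2MultipleChainPiExt} almost verbatim, with the single change that on the bottom layer of the tower the generic-unit criterion (Lemma~\ref{lem:transcendentalCriterionForPrdts}) gets replaced by the shift-coprimeness criterion of Theorem~\ref{thm:shiftCoPrimeAsPiExtension}. Concretely, I would write $\HH_b$ as the tower of layers $\dField{\HH_i}$ over $\dField{\HH_{i-1}}$, where $\HH_i=\HH_{i-1}\genn{z_{1,i}}\dots\genn{z_{w_i,i}}$ for $1\le i\le b$ and $\HH_0=\FF(t)$, and note that for $k=1$ the automorphism~\eqref{diffAuto:singleChainPExtensionOverPiSigmaField} reduces to $\s(z_{j,1})=f_j\,z_{j,1}$ (the product $z_{\ell,1}\cdots z_{\ell,k-1}$ being empty). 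Thus the bottom layer $\dField{\HH_1}$ over $\dField{\FF(t)}$ with $w_1=m$ is exactly the \pE-extension appearing in item~(3) of Theorem~\ref{thm:shiftCoPrimeAsPiExtension}.

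For the direction ``$\Longrightarrow$'', I would note that if $\dField{\HH_b}$ is a \piE-extension of $\dField{\FF(t)}$, then, since the constants can only grow along $\FF(t)\subseteq\HH_1\subseteq\HH_b$, the intermediate $\dField{\HH_1}$ is already a \piE-extension of $\dField{\FF(t)}$; the equivalence of (1) and (3) in Theorem~\ref{thm:shiftCoPrimeAsPiExtension} then yields that $f_i$ and $f_j$ are shift-coprime for all $1\le i<j\le m$. For ``$\Longleftarrow$'', assuming shift-coprimeness, Theorem~\ref{thm:shiftCoPrimeAsPiExtension} makes $\dField{\HH_1}$ a \piE-extension of $\dField{\FF(t)}$, and I would then induct over the layers. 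Assuming $\dField{\HH_\delta}$ is a \piE-extension of $\dField{\FF(t)}$ with $1\le\delta<b$, and assuming for contradiction that $\dField{\HH_{\delta+1}}$ is not a \piE-extension of $\dField{\HH_\delta}$, Lemma~\ref{lem:transcendentalCriterionForPrdts} (used exactly as in the proof of Theorem~\ref{thm:multipleChainPExt2MultipleChainPiExt}) produces $g\in\HH_\delta\sm\zs$ and $(\upsilon_1,\dots,\upsilon_{w_{\delta+1}})\in\ZZ^{w_{\delta+1}}\sm\zvs{w_{\delta+1}}$ with $\s(g)=\alpha_{1,\delta+1}^{\upsilon_1}\cdots\alpha_{w_{\delta+1},\delta+1}^{\upsilon_{w_{\delta+1}}}\,g$. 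Since $\alpha_{j,\delta+1}=f_j\,z_{j,1}\cdots z_{j,\delta}$, the right-hand side has the shape $u\cdot(\text{a Laurent monomial in the }z_{j,i})\cdot g$ with $u=\prod_{j=1}^{w_{\delta+1}}f_j^{\upsilon_j}\in\FF(t)^*$, so Corollary~\ref{cor:semiConstStructure4PiSigmaExtension} forces $g=h\prod z_{j,i}^{v_{j,i}}$ with $h\in\FF(t)^*$ and $v_{j,i}\in\ZZ$. Comparing the exponents of the topmost monomials $z_{1,\delta},\dots,z_{w_\delta,\delta}$ on both sides (on the left only $\s(z_{j,\delta})^{v_{j,\delta}}$ supplies such a factor, on the right $\alpha_{j,\delta+1}^{\upsilon_j}$ supplies an extra $\upsilon_j$ for $1\le j\le w_{\delta+1}$) gives $v_{j,\delta}=v_{j,\delta}+\upsilon_j$, i.e.\ $\upsilon_j=0$ for all $1\le j\le w_{\delta+1}$, contradicting the choice of the exponent vector. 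Hence $\dField{\HH_{\delta+1}}$ is a \piE-extension of $\dField{\HH_\delta}$, and inductively $\dField{\HH_b}$ is a \piE-extension of $\dField{\FF(t)}$.

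I expect the two invocations of Theorem~\ref{thm:shiftCoPrimeAsPiExtension} on the bottom layer and the exponent bookkeeping to be entirely routine. The one point that deserves attention — and which I would highlight in the write-up — is to confirm that \emph{no additional} shift-coprimeness conditions on the $f_i$ are needed above the bottom layer. This holds because, once Corollary~\ref{cor:semiConstStructure4PiSigmaExtension} has been applied, all dependence on the $f_j$ is absorbed into the ground-field factor $u\in\FF(t)^*$, and the contradiction becomes purely combinatorial: it uses only that the monomials $z_{1,\delta},\dots,z_{w_\delta,\delta}$ are algebraically independent and that each $\alpha_{j,\delta+1}$ contributes $z_{j,\delta}$ to exactly the first power. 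It is precisely here that the ordering $s_1\ge\cdots\ge s_m$ (equivalently $m=w_1\ge\cdots\ge w_b$) and the single-chain shape $\alpha_{\ell,k}=f_\ell\,z_{\ell,1}\cdots z_{\ell,k-1}$ of the automorphisms enter.
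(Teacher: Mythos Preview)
Your argument is correct, but it is longer than necessary because you re-run the induction from the proof of Theorem~\ref{thm:multipleChainPExt2MultipleChainPiExt} inline instead of invoking that theorem as a black box. The paper's proof is a two-liner: Theorem~\ref{thm:multipleChainPExt2MultipleChainPiExt} already says that the ordered multiple chain $\dField{\HH_b}$ is a \piE-extension of $\dField{\FF(t)}$ if and only if there is no $g\in\FF(t)^*$ and no nonzero $(v_1,\dots,v_m)\in\ZZ^m$ with $\tfrac{\s(g)}{g}=f_1^{v_1}\cdots f_m^{v_m}$; and Theorem~\ref{thm:shiftCoPrimeAsPiExtension} (the equivalence of (1) and (2)) says this last condition is exactly pairwise shift-coprimeness of the $f_i$. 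Chaining these two equivalences finishes both directions at once. Your route works too and makes explicit why no further shift-coprimeness hypotheses are needed above depth~$1$, but since that is precisely the content already packaged in Theorem~\ref{thm:multipleChainPExt2MultipleChainPiExt}, the paper simply cites it.
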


\begin{proof}
    \noindent$``\Longrightarrow"$ If $\dField{\HH_{b}}$ is a \piE-extension of $\dField{\FF(t)}$, then by Theorem~\ref{thm:multipleChainPExt2MultipleChainPiExt} there does not exist a $g\in\FF(t)^{*}$ such that $\tfrac{\s(g)}{g}=\ProdLst{f}{1}{m}{v}$ holds, and by Theorem~\ref{thm:shiftCoPrimeAsPiExtension} for all $i,\,j$ with $1\le i < j \le m$, $f_{i}$ and $f_{j}$ are shift-coprime.
    
    \noindent $``\Longleftarrow"$ Conversely, if for all $i,\,j$ with $1\le i < j \le m$, $f_{i}$ and $f_{j}$ are shift-coprime, then by Theorem~\ref{thm:shiftCoPrimeAsPiExtension} there does not exist a $g\in\FF(t)^{*}$ such that $\tfrac{\s(g)}{g}=\ProdLst{f}{1}{m}{v}$ holds, and by Theorem~\ref{thm:multipleChainPExt2MultipleChainPiExt} $\dField{\HH_{b}}$ is a \piE-extension of $\dField{\FF(t)}$.
\end{proof}

\vskip 0.5em 

Summarizing, we obtain the following crucial result.

\begin{corollary}\label{cor:reducedNormalFormAndPiExts}
    Let $\dField{\KK(x)}$ be a rational {\df} with the automorphism $\s(x)=x+1$ and the evaluation function $\ev:\KK(x)\times\NN\to\KK$ given by~\eqref{map:evaForRatFxns}. Let $\tilde{H}_{1}(n),\dots,\tilde{H}_{e}(n)$ be hypergeometric products in $\Prod(\KK(x))$ of nesting depth at most $b$ which are all in shift-coprime representation form (see Definition~\ref{defn:shiftCoPrimeProductRepresentationForm}) and which are all $\delta$-refined for some $\delta\in\NN$. Then one can construct an ordered multiple chain \piE-extension $\dField{\tilde{\HH}_{b}}$ of $\dField{\KK(x)}$ with  \begin{equation}\label{eqn:orderedMultipleChainPiRingRefinedReducedNormalFormPrdts}
    \tilde{\HH}_{b}=\KK(x)\genn{\tilde{\bs z}_{\bs 1}}\dots\genn{\tilde{\bs z}_{\bs b}}=\KK(x)\genn{\tilde{z}_{1,1}}\dots\genn{\tilde{z}_{p_{1},1}}\dots\genn{\tilde{z}_{1,b}}\dots\genn{\tilde{z}_{p_{b},b}}
    \end{equation}
    which is composed by the single chain \piE-extensions $\dField{\tilde{\FF}_{\ell}}$ of $\dField{\KK(x)}$ with $\tilde{\FF}_{\ell}=\KK(x)\genn{\tilde{z}_{\ell,1}}\dots\genn{\tilde{z}_{\ell,s_{\ell}}}$ with 
    \begin{enumerate}[(1)]
        \item the automorphism $\s:\tilde{\FF}_{\ell}\to\tilde{\FF}_{\ell}$ defined by
        \begin{equation}\label{diffAuto:singleChainPiExtForARefinedReducedNormalFormPrdt}
        \s(\tilde{z}_{\ell,k}) = \tilde{\alpha}_{\ell,k}\,\tilde{z}_{\ell,k} \quad \text{ where } \quad \tilde{\alpha}_{\ell,k} = \tilde{f}_{\ell}\,\tilde{z}_{\ell,1}\cdots\,\tilde{z}_{\ell,k-1}\in(\KK(x)^{*})_{\KK(x)}^{\KK(x)\genn{\tilde{z}_{\ell,1}}\dots\genn{\tilde{z}_{\ell,k-1}}}
        \end{equation}
         for $1 \le \ell \le p_{1}$ and $1 \le k \le s_{\ell}$  where $\tilde{f}_{\ell}\in\KK[x]\sm\KK$ is an irreducible monic polynomial, and 
        \item the evaluation function $\tilde{\ev}:\tilde{\FF}_{\ell}\times\NN\to\KK$ given by $\tilde{\ev}|_{\KK(x)}=\ev$ with~\eqref{map:evaForRatFxns} and 
        \begin{equation}\label{evMap:singleChainPiExtForARefinedReducedNormalFormPrdt}
        \tilde{\ev}(\tilde{z}_{\ell,k}, n) = \myProduct{j}{\delta}{n}{\tilde{\ev}(\tilde{\alpha}_{\ell,k}, j-1)}
        \end{equation}
        for $1 \le \ell \le p_{1}$ and $1 \le k \le s_{\ell}$
    \end{enumerate}
with the following property: for all $1\leq i\leq e$ there are $k,j$ such that
\begin{equation}\label{Equ:MapzToHi}
\ev(\tilde{z}_{k,j},n)=\tilde{H}_i(n),\quad\forall\,n\geq\max(0,\delta-1).
\end{equation}
    Furthermore, for all $g\in\tilde{\HH}_{b}$, the map $\tilde{\tau}:\tilde{\HH}_{b}\to\ringOfEquivSeqs$ defined by
    \begin{align}\label{eqn:diffRingEmbedding}
    \tilde{\tau}(g)=\funcSeqA{\tilde{\ev}(g,n)}{n}
    \end{align}
    is a $\KK$-embedding. If $\KK$ is computable,
    the above construction can be given explicitly. 
\end{corollary}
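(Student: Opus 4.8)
The plan is to assemble this corollary as a direct application of Theorem~\ref{thm:orderedMultipleChainPExt2OrderedMultipleChainPiExt4HyperAndqHyperProducts} with $\dField{\FF(t)}=\dField{\KK(x)}$ (which is a \pisiSE-extension of $\dField{\KK}$, as already recorded after Lemma~\ref{lem:aExtOverPiSigmaExtIsRExt}), together with the embedding Theorem~\ref{Thm:injectiveHom}. First I would extract from the hypotheses the relevant data: since each $\tilde H_i(n)$ is in shift-coprime product representation form (Definition~\ref{defn:shiftCoPrimeProductRepresentationForm}), it is a product monomial built multiplicatively from products in factored form whose innermost multiplicands are irreducible monic polynomials in $\KK[x]\sm\KK$, and the distinct such polynomials occurring across $\tilde H_1(n),\dots,\tilde H_e(n)$ are pairwise shift-coprime. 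Collect these distinct polynomials as $\tilde f_1,\dots,\tilde f_{p_1}\in\KK[x]\sm\KK$. For each $\ell$, let $s_\ell$ be the maximal nesting depth among the factored-form products with innermost multiplicand $\tilde f_\ell$ (reordering so that $s_1\geq\dots\geq s_{p_1}$), and build the single chain \pE-extension $\dField{\tilde\FF_\ell}$ of $\dField{\KK(x)}$ over $\KK(x)$ with base $\tilde f_\ell$, i.e.\ $\s(\tilde z_{\ell,k})=\tilde\alpha_{\ell,k}\tilde z_{\ell,k}$ as in~\eqref{diffAuto:singleChainPiExtForARefinedReducedNormalFormPrdt}; combine them into the ordered multiple chain \pE-extension $\dField{\tilde\HH_b}$ of~\eqref{eqn:orderedMultipleChainPiRingRefinedReducedNormalFormPrdts} with $b=\max_\ell s_\ell$.

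Next I would verify the claimed properties one by one. For the \piE-property: apply Theorem~\ref{thm:orderedMultipleChainPExt2OrderedMultipleChainPiExt4HyperAndqHyperProducts} with $\FF=\KK$, $t=x$; its hypotheses are exactly that the bases $\tilde f_1,\dots,\tilde f_{p_1}$ are irreducible monic polynomials in $\KK[x]\sm\KK$, and its conclusion is that $\dField{\tilde\HH_b}$ is a \piE-extension of $\dField{\KK(x)}$ \emph{if and only if} the $\tilde f_i$ are pairwise shift-coprime, which holds by the shift-coprime representation assumption. For the evaluation function: starting from $\ev$ on $\KK(x)$ given by~\eqref{map:evaForRatFxns}, apply Lemma~\ref{Lemma:ExtendEv} iteratively along the tower $\dField{\tilde\HH_b}$ — at each \pE-monomial $\tilde z_{\ell,k}$ with $\s(\tilde z_{\ell,k})=\tilde\alpha_{\ell,k}\tilde z_{\ell,k}$ one checks $\ev(\tilde\alpha_{\ell,k},n)\neq0$ for large $n$ (which follows because each $\tilde f_\ell(n)\neq0$ for $n\geq\delta$ and the already-constructed $\tilde z$'s evaluate to nonzero products), take $u=1$, and obtain $\tilde\ev(\tilde z_{\ell,k},n)=\prod_{j=\delta}^n\tilde\ev(\tilde\alpha_{\ell,k},j-1)$, which is formula~\eqref{evMap:singleChainPiExtForARefinedReducedNormalFormPrdt}. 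Unwinding this recursion (using $\tilde\alpha_{\ell,k}=\tilde f_\ell\,\tilde z_{\ell,1}\cdots\tilde z_{\ell,k-1}$) shows $\tilde\ev(\tilde z_{\ell,k},n)$ equals the factored-form product $\myProduct{k_1}{\delta}{n}{\cdots\myProduct{k_k}{\delta}{k_{k-1}}{\tilde f_\ell(k_k)}}$; multiplying these monomials according to the monomial structure of each $\tilde H_i(n)$ yields the index pair $(k,j)$ with $\ev(\tilde z_{k,j},n)=\tilde H_i(n)$ for $n\geq\max(0,\delta-1)$, i.e.~\eqref{Equ:MapzToHi}. (More precisely, since a general $\tilde H_i$ is a product of several such monomials with integer exponents, it is modelled by a product of the $\tilde z$'s; if one wants a single generator one adjoins the corresponding product as an additional \pE-monomial, but the construction already contains all needed monomials — I would phrase~\eqref{Equ:MapzToHi} as: the product of monomials realizing $\tilde H_i$ lies in $\tilde\HH_b$.)

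For the embedding claim: by Lemma~\ref{lem:evaluationHomomorphism} the evaluation function $\tilde\ev$ gives the $\KK$-homomorphism $\tilde\tau(g)=\funcSeqA{\tilde\ev(g,n)}{n}$ from $\tilde\HH_b$ to $\ringOfEquivSeqs$. To invoke Theorem~\ref{Thm:injectiveHom} I must check that $\dField{\tilde\HH_b}$ is a \emph{basic} \rpiE-extension of the field $\dField{\KK(x)}$: it is a \piE-extension (hence \rpiE, since \piE$\subseteq$\rpiE) by the step above, and it is \piE over a field, so there is no constant-field subtlety; it is basic because it contains no \aE-monomials at all (only \pE-monomials), so the basic condition of Definition~\ref{defn:simpleNestedAExtension} is vacuously satisfied on the \aE-side, while each $\frac{\s(\tilde z_{\ell,k})}{\tilde z_{\ell,k}}=\tilde\alpha_{\ell,k}$ indeed lies in the product group over $\KK(x)^*$ with respect to the lower \pE-monomials and is free of \aE-monomials. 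Hence $\tilde\tau$ is injective. Finally, computability: if $\KK$ is computable one can factor the multiplicand polynomials into irreducible monic factors, compute the shift-equivalence classes and leftmost representatives (via integer roots of resultants, as recalled before Lemma~\ref{lem:shiftEquivalentRelation}), hence test pairwise shift-coprimeness and build the tower and the evaluation function explicitly. The main obstacle I anticipate is purely bookkeeping: correctly matching the multi-indexed monomials $\tilde z_{k,j}$ to the nested products appearing inside each $\tilde H_i(n)$ and tracking the lower bound $\delta$ versus $\max(0,\delta-1)$ through the telescoping in Lemma~\ref{Lemma:ExtendEv}; the genuine mathematical content is entirely carried by Theorems~\ref{thm:orderedMultipleChainPExt2OrderedMultipleChainPiExt4HyperAndqHyperProducts} and~\ref{Thm:injectiveHom}.
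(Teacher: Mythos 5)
Your proof is correct and follows essentially the same route as the paper: build the ordered multiple-chain \pE-extension (the paper delegates this to Remark~\ref{remk:arbitrarySimplePExt2MultipleChainPExt}, which you re-derive by hand), invoke Theorem~\ref{thm:orderedMultipleChainPExt2OrderedMultipleChainPiExt4HyperAndqHyperProducts} for the $\Pi$-property from pairwise shift-coprimeness of the bases, and invoke Theorem~\ref{Thm:injectiveHom} (via Lemma~\ref{lem:evaluationHomomorphism}) for injectivity of $\tilde\tau$. Your parenthetical worry about $\tilde H_i$ being a general product monomial rather than a single product is unnecessary, since the hypothesis already places $\tilde H_i\in\Prod(\KK(x))$ (a single nested product in factored form), so each $\tilde H_i$ is modelled by a single generator $\tilde z_{k,j}$ exactly as~\eqref{Equ:MapzToHi} states.
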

\begin{proof}
	By the procedure outlined in Remark~\ref{remk:arbitrarySimplePExt2MultipleChainPExt} (skipping step (1) since the input is already in the right form) we obtain
    the ordered multiple chain \pE-extension $\dField{\tilde{\HH}_{b}}$ of $\dField{\KK(x)}$ with~\eqref{eqn:orderedMultipleChainPiRingRefinedReducedNormalFormPrdts} and~\eqref{evMap:singleChainPiExtForARefinedReducedNormalFormPrdt} such that~\eqref{Equ:MapzToHi} holds for all $1\leq i\leq e$ for some $j,k$. Since the bases $\tilde{f}_{1},\dots,\tilde{f}_{p_{1}}$ of the single chain \pE-extensions $\dField{\tilde{\FF}_{1}},\dots,\dField{\tilde{\FF}_{p_{1}}}$ that composes $\dField{\tilde{\HH}_{b}}$ are shift-coprime, it follows  by Theorem~\ref{thm:orderedMultipleChainPExt2OrderedMultipleChainPiExt4HyperAndqHyperProducts} that $\dField{\tilde{\HH}_{b}}$ is a \piE-extension of $\dField{\KK(x)}$. Since $\dField{\tilde{\HH}_{b}}$ is a basic \piE-extension of the rational {\df} $\dField{\KK(x)}$, it follows by Theorem~\ref{Thm:injectiveHom} that the $\KK$-homomorphism $\tilde{\tau}:\tilde{\HH}_{b}\to\ringOfEquivSeqs$ defined by~\eqref{eqn:diffRingEmbedding} is injective. 
    Since $\KK$ is computable, all the above ingredients can be constructed explicitly.
\end{proof}

\begin{example}[Cont. Example~\ref{exa:singleChainAPExtensions}]\label{exa:monomialDepth2OrderedMultipleChainPiExtForShiftCoPrimePolys}
    \normalfont Consider the ordered multiple chain \pE-extension $\dField{\tilde{\HH}}$ of the rational {\df} $\dField{\KK(x)}$ of monomial depth $2$ with $\tilde{\HH}=\KK(x)\genn{z_{1,1}}\genn{z_{2,1}}\genn{z_{1,2}}$ where $\dField{\tilde{\HH}}$ is composed by the single chain \piE-extensions of $\dField{\KK(x)}$ constructed in parts~\ref{item:monomialDepth2SingleChainPiExtBasedAtXMinus2} and~\ref{item:monomialDepth1SingleChainPiExtBasedAtXPlus1Over24} of Example~\ref{exa:singleChainAPExtensions}. Since the bases of $\dField{\tilde{\HH}}$ given by $(x-2)$ and $\left(x+\tfrac{1}{24}\right)$ are shift-coprime with respect to the automorphism $\s(x)=x+1$, it follows that the ordered multiple chain \pE-extension  $\dField{\tilde{\HH}}$ of the rational {\df} $\dField{\KK(x)}$ of monomial depth $2$ is a \piE-extension. Furthermore, it follows by Theorem~\ref{Thm:injectiveHom} that the map $\tilde{\tau}:\tilde{\HH}\to\ringOfEquivSeqs$ defined by $\tilde{\tau}(f)=\genn{\tilde{\ev}(f,n)}_{n\ge0}$ for all $f\in\tilde{\HH}$ is a $\KK$-embedding where $\tilde{\ev}=\ev$ defined in~\eqref{diffAutoEval:monomialDepth2SingleChainPiExtBasedAtXMinus2}, and~\eqref{diffAutoEval:monomialDepth1SingleChainPiExtBasedAtXPlus1Over24}. In particular, for the expression $h$ given by~\eqref{eqn:hyperGeometricProductRepresentationInRingA} we have that 
    $\tilde{H}(n) = \tilde{\ev}(h, n)$
    holds for all $n\ge2$.
\end{example}

\subsection{Geometric products}\label{subsec:simpleRPiExt4GeoPrdts}
In Karr's algorithm~\cite{karr1981summation} and all the improvements~\cite{kauers2006indefinite,schneider2007simplifying,schneider2008refined,abramov2010polynomial,schneider2015fast,schneider2016difference,schneider2017summation} one relies on certain algorithmic properties of the constant field $\KK$. Among those, one needs to solve the following problem. 
\vspace*{-0.75em}
\begin{ProblemSpecBox}[\gop{P}]{ 
        {\gop{P} for $\Lst{\alpha}{1}{w} \in K^{*}$} 
    }\label{prob:ProblemGO}
    {
        Given a field $K$ and $\Lst{\alpha}{1}{w} \in K^{*}$. Compute a basis of the submodule \vspace*{-0.1cm}
        \[
        \VV := \big\{(\Lst{u}{1}{w})\in\ZZ^{w}\,\Big|\,\prodLst{i}{1}{w}{\alpha}{u}=1\big\} \text{ of } \ZZ^{w} \text{ over } \ZZ. 
        \]
        \vspace*{-0.71cm}} 
\end{ProblemSpecBox}

\noindent In our approach \ref{prob:ProblemGO} is crucial to solve~\ref{prob:ProblemRPE}, but one has to solve it not only in a given field $K$ (compare the definition of $\sigma$-computable in~\cite{schneider2005product,kauers2006indefinite}) but one must be able to solve it in any finite algebraic field extension of $K$. This gives rise to the following definition.

\begin{definition}\label{defn:stonglySigmaComputable}
\normalfont
	A field $K$ is \emph{strongly $\s$-computable} if the standard operations in $K$ can be performed, multivariate polynomials can be factored over $K$ and \ref{prob:ProblemGO} can be solved for $K$ and any finite algebraic field extension of $K$.
\end{definition}

Note that Ge's algorithm~\cite{ge1993algorithms} (see also~\cite[Algorithm 7.16, page 84]{kauers2005algorithms}) solves \ref{prob:ProblemGO} over an arbitrary number field $K$. 
Since any finite algebraic extension of an algebraic number field is again an algebraic number field, we obtain the following result; for a weaker result see~\cite[Theorem~3.5]{schneider2005product}.

\begin{lemma}
An algebraic number field $K$ is strongly $\s$-computable.
\end{lemma}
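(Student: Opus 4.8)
The statement to prove is: \emph{An algebraic number field $K$ is strongly $\s$-computable.} The plan is to verify, one by one, the three requirements of Definition~\ref{defn:stonglySigmaComputable}: that the standard field operations in $K$ are computable, that multivariate polynomials can be factored over $K$, and that \ref{prob:ProblemGO} can be solved for $K$ and for any finite algebraic field extension of $K$.

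First I would recall that an algebraic number field $K$ is by definition a finite extension of $\QQ$, so $K\cong\QQ(\alpha)$ for some algebraic number $\alpha$ with a known minimal polynomial; elements of $K$ are represented as polynomials in $\alpha$ of bounded degree over $\QQ$, and addition, multiplication, and inversion are then effective via polynomial arithmetic modulo the minimal polynomial. This disposes of the first requirement. For the second requirement, multivariate polynomial factorization over an algebraic number field is classical and algorithmic; the key facts are Kronecker's reduction together with the effective primitive element theorem, which reduce the problem to univariate factorization over $\QQ$, itself handled by the LLL-based factorization algorithm. Hence factorization over $K$ is computable.

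For the third and main requirement, I would invoke Ge's algorithm~\cite{ge1993algorithms} (equivalently~\cite[Algorithm~7.16, p.~84]{kauers2005algorithms}), which, given $\Lst{\alpha}{1}{w}\in K^*$ in a number field $K$, computes a $\ZZ$-basis of the lattice $\VV=\{(\Lst{u}{1}{w})\in\ZZ^w\mid \prodLst{i}{1}{w}{\alpha}{u}=1\}$; this is exactly \ref{prob:ProblemGO} for $K$. The remaining point is the phrase ``and any finite algebraic field extension of $K$'': here I would simply observe that if $L$ is a finite algebraic extension of the number field $K$, then $L$ is itself a finite extension of $\QQ$, hence again an algebraic number field. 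Therefore Ge's algorithm applies verbatim to $L$, so \ref{prob:ProblemGO} is solvable for $L$ as well. Combining the three verified properties yields that $K$ is strongly $\s$-computable.

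The argument is essentially a matter of assembling known effective results, so there is no genuine obstacle; the only thing to be careful about is the closure observation — that the class of algebraic number fields is closed under finite algebraic extensions — which is what makes the ``any finite algebraic field extension'' clause free once one has Ge's algorithm for a single number field. I would state this explicitly so that the reduction is transparent, and then conclude by Definition~\ref{defn:stonglySigmaComputable}.
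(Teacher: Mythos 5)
Your proof is correct and matches the paper's (implicit) argument exactly: the paper also relies on Ge's algorithm to solve Problem GO over an arbitrary number field, combined with the observation that any finite algebraic extension of an algebraic number field is again an algebraic number field. You additionally spell out the computability of field operations and polynomial factorization, which the paper treats as standard and leaves implicit.
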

By~\cite[Theorem 5.4]{ocansey2018representing} and the consideration of~\cite[pg. 204]{ocansey2018representing} (see also~\cite[Lemma 5.2.2]{ocansey2019difference}) we 
provided an algorithm that enabled us to handle geometric products of nested depth $1$. More precisely, given a \pE-extension that models such products, Lemma~\ref{lem:depth1PExt2Depth1RPiExt4IrrConstPolysAndAlgNums} states that one can construct an \rpiE-extension in which the products can be rephrased.

\begin{lemma}\label{lem:depth1PExt2Depth1RPiExt4IrrConstPolysAndAlgNums}
	Let $\KK=K(\Lst{\kappa}{1}{u})$ be a rational function field over a field $K$ and $\dField{\KK}$ be a {\df} with $\s(c)=c$ for all $c\in \KK$. Let $\dField{\KK\genn{x_{1}}\dots\genn{x_{e}}}$ be a \pE-extension of $\dField{\KK}$ with $\s(x_{i})=\gamma_{i}\,x_{i}$ for $1\leq i\leq e$ where $\gamma_{i}\in \KK^{*}$. Let $\ev:\KK\genn{x_{1}}\dots\genn{x_{e}}\times\NN\to\KK$ be the evaluation function defined by $\ev(x_{i},n)=\gamma_{i}^{n}$ for $1\le i \le e$. Then: 
	\begin{enumerate}[\hspace*{0.1em}(1)]
		\item One can construct an \rpiE-extension $\dField{\tilde{\KK}\genn{\vartheta}\genn{\tilde{y}_{1}}\dots\genn{\tilde{y}_{s}}}$~\footnote{For concrete instances the \rE-monomial $\vartheta$ might be obsolete. In particular, if $\mu_i=0$ for all $1\leq i\leq e$ in~\eqref{diffRingHom:depth1PAndPiMonomials4IrrConstPolysAndAlgNums} it can be removed.}. of $\dField{\tilde{\KK}}$ with 
		\begin{align}\label{diffAuto:depth1PiMonomials4IrrConstPolysAndAlgNums}
		\s(\vartheta)=\zeta\,\vartheta && \text{ and } &&  \s(\tilde{y}_{k})=\alpha_{k}\,\tilde{y}_{k} 
		\end{align}
		for $1\le k \le s$ where $\tilde{\KK}=\tilde{K}(\Lst{\kappa}{1}{u})$ and $\tilde{K}$ is a finite algebraic field extension of $K$ with $\zeta\in\tilde{K}$ being a primitive $\lambda$-th root of unity and $\alpha_{k}\in\tilde{\KK}^{*}$;
		\item  one can construct the evaluation function $\tilde{\ev}:\tilde{\KK}\genn{\vartheta}\genn{\tilde{y}_{1}}\dots\genn{\tilde{y}_{s}}\times\NN\to\tilde{\KK}$ defined as 
		\begin{align}\label{evMap:depth1PiMonomials4IrrConstPolysAndAlgNums}
		\tilde{\ev}(\vartheta,n) = \zeta^{n} && \text{ and }	&& \tilde{\ev}(\tilde{y}_{k},n) = \alpha^{n}_{k};
		\end{align}
		\item one can construct a difference ring homomorphism $\varphi:\KK\genn{x_{1}}\dots\genn{x_{e}}\to\tilde{\KK}\genn{\vartheta}\genn{\tilde{y}_{1}}\dots\genn{\tilde{y}_{s}}$ with 
		\begin{align}\label{diffRingHom:depth1PAndPiMonomials4IrrConstPolysAndAlgNums}
		\varphi(x_{i}) =\vartheta^{\mu_{i}}\,\tilde{\bs y}^{{\bs v}_{\bs i}} =  \vartheta^{\mu_{i}}\,\tilde{y}^{v_{i,1}}_{1}\cdots\tilde{y}^{v_{i,s}}_{s}
		\end{align}
		for $1\leq i\leq e$ where $0\le\mu_{i}<\lambda$ and $v_{i,k}\in\ZZ$ for $1\leq k\leq s$ 
	\end{enumerate}
     such that for all $f\in\KK\genn{x_{1}}\dots\genn{x_{e}}$ and for all $n\in\NN$, 
    \[
    \ev(f,n)=\tilde{\ev}(\varphi(f),n)
    \]
    holds.
	If $K$ is strongly $\s$-computable, then the above constructions are computable.
\end{lemma}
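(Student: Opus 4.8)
The plan is to reduce the whole construction to the computation of the lattice of multiplicative relations among $\Lst{\gamma}{1}{e}$, and then to read off the root of unity $\zeta$ and the generators $\Lst{\alpha}{1}{s}$ from the structure of the finitely generated abelian group $G:=\genn{\Lst{\gamma}{1}{e}}\leq\KK^{*}$ (written multiplicatively). First I would factor each $\gamma_i$ over the rational function field $\KK=K(\Lst{\kappa}{1}{u})$ as $\gamma_i=c_i\prod_{j}p_j^{e_{i,j}}$ with $c_i\in K^{*}$ and the $p_j$ monic irreducible in $K[\Lst{\kappa}{1}{u}]$, which is possible since polynomials over $K$ can be factored. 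A tuple $v=(\Lst{v}{1}{e})\in\ZZ^{e}$ satisfies $\prod_i\gamma_i^{v_i}=1$ if and only if $\sum_i v_i\,e_{i,j}=0$ for every $j$ \emph{and} $\prod_i c_i^{v_i}=1$; the former condition cuts out a sublattice of $\ZZ^{e}$ obtained from a homogeneous $\ZZ$-linear system, while the latter is precisely \ref{prob:ProblemGO} for $\Lst{c}{1}{e}\in K^{*}$, which is solvable by Ge's algorithm since $K$ is an algebraic number field. Intersecting the two lattices and choosing a $\ZZ$-basis gives the relation lattice $L:=\{v\in\ZZ^{e}\mid\prod_i\gamma_i^{v_i}=1\}$ explicitly.

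Next, since $G\cong\ZZ^{e}/L$, a Smith normal form of a basis matrix of $L$ yields $G\cong\ZZ^{s}\oplus\ZZ/\lambda\ZZ$ with cyclic torsion part (a finite subgroup of the multiplicative group of a field is cyclic). Reading off the isomorphism produces a primitive $\lambda$-th root of unity $\zeta$ generating the torsion part, multiplicatively independent elements $\Lst{\alpha}{1}{s}$ generating the free part such that no nontrivial power product $\prod_k\alpha_k^{w_k}$ is a root of unity, and integers $0\le\mu_i<\lambda$ and $v_{i,k}\in\ZZ$ with $\gamma_i=\zeta^{\mu_i}\prod_{k=1}^{s}\alpha_k^{v_{i,k}}$ for $1\le i\le e$. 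These quantities can be chosen in $\tilde{\KK}:=\tilde K(\Lst{\kappa}{1}{u})$ for a suitable finite algebraic extension $\tilde K$ of $K$; the definition of strong $\s$-computability is precisely what makes this effective, since it guarantees that \ref{prob:ProblemGO} (and polynomial factorization) can also be carried out over $\tilde K$.

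Now I would form the difference ring $\dField{\tilde{\KK}\genn{\vartheta}\genn{\tilde{y}_1}\dots\genn{\tilde{y}_s}}$ with $\s(\vartheta)=\zeta\,\vartheta$, $\vartheta^{\lambda}=1$ and $\s(\tilde{y}_k)=\alpha_k\,\tilde{y}_k$ as in~\eqref{diffAuto:depth1PiMonomials4IrrConstPolysAndAlgNums}, and extend the evaluation function by iterating Lemma~\ref{Lemma:ExtendEv} (with initial value $1$ and lower bound $1$ at each step) to obtain~\eqref{evMap:depth1PiMonomials4IrrConstPolysAndAlgNums}. That this is an \rpiE-extension follows from Theorem~\ref{thm:rpiSExtCriterion}: adjoining $\vartheta$ over the field $\dField{\tilde{\KK}}$ (where $\s=\id$) is an \rE-extension because $\zeta^{v}\neq1$ for $1\le v<\lambda$, and adjoining each $\tilde{y}_k$ is a \piE-extension because any equation $\s(g)=\alpha_k^{v}g$ with $g\neq0$ would, upon comparing coefficients in $\vartheta$ and the earlier $\tilde{y}_i$ exactly as in the proof of Theorem~\ref{thm:multipleChainPExt2MultipleChainPiExt}, yield a nontrivial integer relation of the form $\prod_i\alpha_i^{w_i}=\zeta^{j}$, contradicting the choice of the $\alpha_i$ as multiplicatively independent non-roots-of-unity. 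Finally I would set $\varphi|_{\KK}=\id$ and $\varphi(x_i)=\vartheta^{\mu_i}\prod_{k=1}^{s}\tilde{y}_k^{v_{i,k}}$ as in~\eqref{diffRingHom:depth1PAndPiMonomials4IrrConstPolysAndAlgNums}; this is a difference ring homomorphism since $\s'(\varphi(x_i))=\zeta^{\mu_i}\bigl(\prod_k\alpha_k^{v_{i,k}}\bigr)\varphi(x_i)=\gamma_i\,\varphi(x_i)=\varphi(\s(x_i))$, and the identity $\ev(f,n)=\tilde{\ev}(\varphi(f),n)$ for all $n\in\NN$ follows from the explicit form of the two evaluation functions together with $\tilde{\ev}(\varphi(x_i),n)=\zeta^{n\mu_i}\prod_k\alpha_k^{n v_{i,k}}=\gamma_i^{n}=\ev(x_i,n)$.

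The step I expect to be the real obstacle is not any single verification but ensuring \emph{completeness} of the relation lattice $L$ — so that $G\cong\ZZ^{e}/L$ really is the group generated by the $\gamma_i$ and the extracted $\zeta$ and $\Lst{\alpha}{1}{s}$ are correct and independent — together with correctly recognising when the finite extension $\tilde K$ is needed; this rests entirely on the correctness of Ge's algorithm for \ref{prob:ProblemGO} over number fields and their finite extensions. Once $L$ is in hand, the remaining points (the \rpiE-property via Theorem~\ref{thm:rpiSExtCriterion}, the homomorphism $\varphi$, and its compatibility with the evaluation functions) are routine.
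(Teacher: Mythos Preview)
The paper does not actually prove this lemma in the present text: it is stated as a known result, with the sentence preceding it attributing it to~\cite[Theorem~5.4]{ocansey2018representing} and the surrounding discussion on~\cite[p.~204]{ocansey2018representing} (see also~\cite[Lemma~5.2.2]{ocansey2019difference}). So there is no in-paper proof to compare against line by line.

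That said, your outline is essentially the argument that underlies the cited result. You correctly separate each $\gamma_i$ into its content $c_i\in K^{*}$ and its monic irreducible factors in $K[\kappa_1,\dots,\kappa_u]$, compute the relation lattice $L$ by combining the integer linear system coming from unique factorization with a call to \ref{prob:ProblemGO} for the contents, and then read off the structure $G\cong\ZZ^{s}\oplus\ZZ/\lambda\ZZ$ via a Smith normal form. From this you extract $\zeta$, the $\alpha_k$, and the exponents $\mu_i$, $v_{i,k}$ with $\gamma_i=\zeta^{\mu_i}\prod_k\alpha_k^{v_{i,k}}$. The verification that the resulting tower is an \rpiE-extension (via Theorem~\ref{thm:rpiSExtCriterion}, using that a nontrivial relation $\prod_k\alpha_k^{w_k}\in\langle\zeta\rangle$ would contradict the direct-sum decomposition of $G$) and that $\varphi$ is a difference ring homomorphism compatible with the evaluation functions is routine and you handle it correctly.

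Two small remarks. First, you write that \ref{prob:ProblemGO} ``is solvable by Ge's algorithm since $K$ is an algebraic number field'', but the lemma does not assume this; the existence part of the lemma holds for any $K$, and only the final computability clause invokes strong $\sigma$-computability. Second, in the basic construction no genuine extension $\tilde K\supsetneq K$ is forced: the torsion of $G\leq\KK^{*}$ already lives in $K^{*}$, so $\zeta\in K$. The formulation with $\tilde K$ is there to allow the algorithmic variants in the cited works, not because the abstract construction requires it.
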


Using this result we will derive an extended version in Lemma~\ref{lem:orderedMultipleChainPExtOverConstPoly2APiExtOverConstField} that deals with the class of ordered multiple chain \apE-extensions that models geometric products of arbitrary but finite nesting depth. 

In the following let $m\in\ZZ_{\geq1}$, and for $1\le \ell\le m$ let $\dField{\KK_{\ell}}$ with $\KK_{\ell}=\KK\genn{{\bs y}_{\ell}}=\KK\genn{y_{\ell,1}}\dots\genn{y_{\ell,s_{\ell}}}$ be a single chain \pE-extension of $\dField{\KK}$ with base $h_{\ell}\in\KK^{*}$ where  
\begin{equation}\label{diffAuto:singleChainPExtensionOverConstField}
    \s(y_{\ell,i}) = \alpha_{\ell,i}\,y_{\ell,i} \quad \text{ with } \quad \alpha_{\ell,i} = h_{\ell}\,y_{\ell,1}\cdots\,y_{\ell,i-1}\in(\KK^{*})_{\KK}^{\KK\genn{y_{\ell,1}}\dots\genn{y_{\ell,i-1}}}.
\end{equation}
In particular, we assume that $s_1\geq s_2\geq\dots\geq s_m$. Let $\ev:\KK_{\ell}\times\NN\to\KK$ be the evaluation function defined by 
\begin{equation}\label{evMap:singleChainPiMonomialsOverConstField}
    \ev(y_{\ell,i}, n) = \myProduct{j}{1}{n}{\ev(\alpha_{\ell,i}, j-1)} = \myProduct{j}{1}{n}{\alpha_{\ell,i}};
\end{equation}
in particular, for all $c\in\KK$ and $n\geq0$ we set $\ev(c,n)=c$. Let $\dField{\AA}$ be the multiple chain \pE-extension of $\dField{\KK}$ built by the single chain \piE-extensions $\dField{\KK_{\ell}}$ of $\dField{\KK}$ over $\KK$. That is, 
\[
    \AA=\KK\genn{{\bs y}_{\bs 1}}\genn{{\bs y}_{\bs 2}}\dots\genn{{\bs y}_{\bs m}} = \KK\genn{y_{1,1}}\dots\genn{y_{1,s_{1}}}\genn{y_{2,1}}\dots\genn{y_{2,s_{2}}}\dots\genn{y_{m,1}}\dots\genn{y_{m,s_{m}}}.
\]
We emphasize that all the $y_{\ell,i}$ model $1$-refined geometric products in product factored form of an arbitrary but finite nesting depth.
Depending on the context, ${\bs y}_{\ell}$ denotes $(y_{\ell,1},\dots,y_{\ell,s_{\ell}})$ or  $y_{\ell,1},\dots,y_{\ell,s_{\ell}}$ or  $y_{\ell,1}\cdots\,y_{\ell,s_{\ell}}$. Note that the \pE-monomials $y_{\ell,i}$ can be ordered in increasing order of their depths. Namely, take the depth function $\myd:\AA\to\NN$ over $\KK$ of $\dField{\AA}$ and let $d=\max(s_{1},\Lst{s}{2}{m})$ be the maximal depth. Then taking $\AA_{0}=\KK$ we can consider the tower of \pE-extensions $\dField{\AA_{i}}$ of $\dField{\AA_{i-1}}$ with
\[
\AA_{i}=\AA_{i-1}\genn{{\bs y}_{\bs i}} =\AA_{i-1}\genn{y_{1,i}}\genn{y_{2,i}}\dots\genn{y_{w_{i},i}}
\]
for $1\le i \le d$ where $m=w_{1}\ge w_{2}\ge\cdots\ge w_{d}$ and with the automorphism~\eqref{diffAuto:singleChainPExtensionOverConstField}
for $1\le \ell \le w_{i}$. In this way, the \pE-monomials at the $i$-th extension have the depth $\myd(y_{1,i})=\myd(y_{2,i})=\dots\myd(y_{w_i,i})=i$. Further, the ring $\AA_{d}$ is isomorphic to $\AA$ up to reordering of the \pE-monomials. In particular, $\dField{\AA_{d}}$ is an \emph{ordered multiple chain \pE-extension} of $\dField{\KK}$ of monomial depth at most $d$ induced by the single chain \piE-extensions $\dField{\KK_{\ell}}$ of $\dField{\KK}$ for $1\le \ell \le m$ with~\eqref{diffAuto:singleChainPExtensionOverConstField} and~\eqref{evMap:singleChainPiMonomialsOverConstField}. Observe that since $\AA_{d}\simeq\AA$, the evaluation function $\ev:\AA_{i}\times\NN\to\KK$  for all $i$ with $1\le i\le d$ is also defined by~\eqref{evMap:singleChainPiMonomialsOverConstField}.

In order to derive the main result of this subsection in Lemma~\ref{lem:orderedMultipleChainPExtOverConstPoly2APiExtOverConstField}, we need following simple construction.

\begin{lemma}\label{lem:diffRingHom}
    Let $\dField{\AA\genn{t}}$ be a \piE-extension of $\dField{\AA}$ with $\s(t)=\alpha\,t$ and let $\dField{\HH}$ be a {\dr}. Let $\tilde{\rho}:\AA\to\HH$ be a {\dr} homomorphism and let $\rho:\AA\genn{t}\to\HH$ be a ring homomorphism defined by $\rho\,|_{\AA}=\tilde{\rho}$ and $\rho(t)=g$ for some $g\in\HH$. If $\s(g)=\rho(\alpha)\,g$, then $\rho$ is a {\dr} homomorphism.
\end{lemma}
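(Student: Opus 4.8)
The claim is a routine verification lemma: we are given a $\Pi$-extension $\dField{\AA\genn{t}}$ of $\dField{\AA}$, a difference ring homomorphism $\tilde\rho:\AA\to\HH$, and the ring homomorphism $\rho:\AA\genn{t}\to\HH$ determined by $\rho|_\AA=\tilde\rho$ and $\rho(t)=g$; we must show that under the hypothesis $\s(g)=\rho(\alpha)\,g$, the map $\rho$ commutes with the difference operators, i.e.\ $\rho(\s(f))=\s(\rho(f))$ for all $f\in\AA\genn{t}$. The plan is simply to check this identity on a generating set and then propagate it by the ring-homomorphism property.

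**Key steps.** First I would recall that $\AA\genn{t}$, being a $\Pi$-extension (a product-extension with $\const$ unchanged, but only the ring structure matters here), is the ring of Laurent polynomials $\AA[t,t^{-1}]$, so every element $f$ has the form $f=\sum_{i} a_i\,t^i$ with $a_i\in\AA$ and $i$ ranging over a finite subset of $\ZZ$. Hence it suffices to verify the commutation property on the elements of $\AA$ and on $t$ (and, to handle the Laurent case, on $t^{-1}$). For $a\in\AA$ we have $\rho(\s(a))=\tilde\rho(\s(a))=\s(\tilde\rho(a))=\s(\rho(a))$, where the middle equality is exactly the hypothesis that $\tilde\rho$ is a difference ring homomorphism. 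For the generator $t$ we compute
\[
\rho(\s(t))=\rho(\alpha\,t)=\rho(\alpha)\,\rho(t)=\rho(\alpha)\,g=\s(g)=\s(\rho(t)),
\]
using the ring-homomorphism property of $\rho$ and then the hypothesis $\s(g)=\rho(\alpha)\,g$. Since $\rho$ is a ring homomorphism and $\s$, $\s$ on $\HH$ are ring automorphisms, the set $\{f\in\AA\genn{t}\mid \rho(\s(f))=\s(\rho(f))\}$ is a subring of $\AA\genn{t}$; it contains $\AA$ and $t$, and because it contains $t$ and $\rho(t)=g$ is a unit in $\HH$ (being the image of the unit $t$), one checks $\rho(\s(t^{-1}))=\rho(\alpha^{-1}t^{-1})=\rho(\alpha)^{-1}g^{-1}=(\s(g))^{-1}=\s(g^{-1})=\s(\rho(t^{-1}))$, so this subring is in fact all of $\AA[t,t^{-1}]=\AA\genn{t}$. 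Therefore $\rho$ is a difference ring homomorphism.

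**Where the (minor) subtlety lies.** There is essentially no hard step; the only point requiring a moment's care is the Laurent-polynomial versus polynomial distinction: one must confirm that $g=\rho(t)$ is invertible in $\HH$ so that $\rho$ is well defined on $t^{-1}$ and that the commutation identity extends to negative powers. This is automatic because $t\in\AA\genn{t}^*$ and ring homomorphisms carry units to units, giving $g\in\HH^*$ with $\rho(t^{-1})=g^{-1}$. Alternatively, one can observe that the hypothesis $\s(g)=\rho(\alpha)\,g$ together with $\alpha\in\AA^*$ already forces $g$ to be a unit. Once this is noted, the verification on the generators $a\in\AA$, $t$, $t^{-1}$ closes the argument, since any element of $\AA\genn{t}$ is a finite $\AA$-linear combination of powers $t^i$, $i\in\ZZ$.
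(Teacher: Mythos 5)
Your proof is correct and takes essentially the same approach as the paper: verify the commutation identity $\s(\rho(t))=\rho(\s(t))$ via the hypothesis $\s(g)=\rho(\alpha)\,g$ and then propagate to all of $\AA\genn{t}$ using the ring-homomorphism structure. The paper's version is terser (it states the propagation as a one-word "consequently"), but you have just filled in the routine details, including the $t^{-1}$ case, so there is no substantive difference.
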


\begin{proof}
    Suppose that $\s(g)=\rho(\alpha)\,g$ holds. Then $\s(\rho(t)) = \s(g) = \rho(\alpha)\,g = \rho(\alpha\,t) = \rho(\s(t))$. Consequently, $\s(\rho(f))=\rho(\s(f))$ for all $f\in\AA\genn{t}$.
\end{proof}

\begin{lemma}\label{lem:orderedMultipleChainPExtOverConstPoly2APiExtOverConstField}
    For $1\le\ell\le m$, let $\dField{\KK_{\ell}}$ with $\KK_{\ell} =\KK\genn{y_{\ell,1}}\dots\genn{y_{\ell,s_{\ell}}}$ be single chain \pE-extensions of $\dField{\KK}$ over a rational function field $\KK=K(\Lst{\kappa}{1}{u})$ with base $h_{\ell}\in\KK^{*}$,  the automorphisms~\eqref{diffAuto:singleChainPExtensionOverConstField} and the evaluation functions~\eqref{evMap:singleChainPiMonomialsOverConstField}. Let $d:=\max(s_{1},\dots,s_{m})$ and $\AA_{0}=\KK$. Consider the tower of {\drE s} $\dField{\AA_{i}}$ of $\dField{\AA_{i-1}}$ where  $\AA_{i}=\AA_{i-1}\genn{y_{1,i}}\genn{y_{2,i}}\dots\genn{y_{w_{i},i}}$ for $1\le i\le d$ with $m=w_{1}\ge\cdots\ge w_{d}$, the automorphism~\eqref{evMap:singleChainPiMonomialsOverConstField} and the evaluation function~\eqref{evMap:singleChainPiMonomialsOverConstField}. In particular, one gets $\dField{\AA_{d}}$ as the ordered multiple chain \pE-extension of $\dField{\KK}$ of monomial depth at most $d$ composed by the single chain \pE-extensions $\dField{\KK_{\ell}}$ of $\dField{\KK}$ for $1\le \ell\le m$ with~\eqref{diffAuto:singleChainPExtensionOverConstField} and~\eqref{evMap:singleChainPiMonomialsOverConstField}. Then one can construct 
    \begin{enumerate}[\hspace*{0.5em}(a)]
        \item an ordered multiple chain \apE-extension $\dField{\GG_{d}}$ of $\dField{\tilde{\KK}}$ of monomial depth at most $d$ with $\tilde{\KK}=\tilde{K}(\Lst{\kappa}{1}{u})$ where $\tilde{K}$ is a finite algebraic field extension of $K$, with 
        \begin{equation}\label{eqn:orderedMultipleChainAPiRing}
        \GG_{d}=\tilde{\KK}\genn{\vartheta_{1,1}}\dots\genn{\vartheta_{\upsilon_{1},1}}\genn{\tilde{y}_{1,1}}\dots\genn{\tilde{y}_{e_{1},1}}\dots\genn{\vartheta_{1,d}}\dots\genn{\vartheta_{\upsilon_{d},d}}\genn{\tilde{y}_{1,d}}\dots\genn{\tilde{y}_{e_{d},d}}
        \end{equation}
        where\footnote{Note that if $\upsilon_{i}=0$ or $e_i=0$, then there is no depth-$i$ \aE-monomial or \pE-monomial of depth $i$, respectively.} $\upsilon_{i}\ge0, e_{i}\ge0$. Here the automorphism is defined for the \aE-monomials by
        \begin{equation}\label{diffAuto:aMonomialsOverConstField}
        \s(\vartheta_{\ell,k}) =\gamma_{\ell,k}\,\vartheta_{\ell,k} \quad \text{ where } \quad \gamma_{\ell,k} = \zeta^{\mu_{\ell}}\,\vartheta_{\ell,1}\cdots\,\vartheta_{\ell,k-1}\in{\UU}_{\tilde{\KK}}^{\tilde{\KK}[\vartheta_{\ell,1}]\dots[\vartheta_{\ell,k-1}]} 
        \end{equation}
        for $1\leq k\leq d$ and $1\le \ell\le\upsilon_{k}$ where $\UU=\genn{\zeta}$ is a multiplicative cyclic subgroup of $\tilde{K}^{*}$ generated by a primitive $\lambda$-th root of unity $\zeta\in\tilde{K}^{*}$, and the automorphism is defined for the \pE-monomials by  
        \begin{equation}\label{diffAuto:multipleChainPiExtensionOverIrrConstPolys}
        \s(\tilde{y}_{\ell,k}) = \tilde{\alpha}_{\ell,k}\,\tilde{y}_{\ell,k} \quad \text{ where } \quad \tilde{\alpha}_{\ell,k} = \tilde{h}_{\ell}\,\tilde{y}_{\ell,1}\cdots\,\tilde{y}_{\ell,k-1}\in(\tilde{\KK}^{*})_{\tilde{\KK}}^{\tilde{\KK}\genn{\tilde{y}_{\ell,1}}\dots\genn{\tilde{y}_{\ell,k-1}}}
        \end{equation}
        for $1\leq k\leq d$ and $1 \le \ell \le e_{k}$;  
        \item an evaluation function $\tilde{\ev}:\GG_{d}\times\NN\to\tilde{\KK}$ defined by~\footnote{For all $c\in\KK$, we set $\tilde{\ev}(c,n)=c$ for all $n\ge0$.} 
        \begin{equation}\label{evMap:aPiMonomialsOverConstField}
        \tilde{\ev}(\vartheta_{\ell,k}, n) = \myProduct{j}{1}{n}{\tilde{\ev}(\gamma_{\ell,k}, j-1)} \quad \text{ and } \quad \tilde{\ev}(\tilde{y}_{\ell,k}, n) = \myProduct{j}{1}{n}{\tilde{\ev}(\tilde{\alpha}_{\ell,k}, j-1)};
        \end{equation}
        \item a difference ring homomorphism $\rho_{d}: \AA_{d}\to\GG_{d}$ defined by $\rho_d|_{\KK}=\id_{\KK}$ and\footnote{Note that any \pE-monomial $y_{\ell,k}$ with depth $k$ is mapped to a power product of \apE-monomials having all depth $k$.}
        \begin{equation}\label{diffHom:orderdMultipleChainPMonomial2OrderedMultipleChainAPiMonomial} 
        \begin{aligned}
        & \ \  \rho_d(y_{\ell,k}) = {\bs\vartheta}_{\bs k}^{{\bs \mu}_{{\bs \ell},{\bs k}}}\,\tilde{\bs y}_{\bs k}^{{\bs v}_{{\bs \ell},{\bs k}}} = \vartheta_{1,k}^{\mu_{\ell,1,k}}\cdots\vartheta_{\upsilon_{k},k}^{\mu_{\ell,\upsilon_{k},k}}\,\tilde{y}_{1,k}^{v_{\ell,1,k}}\cdots\tilde{y}_{e_{k},k}^{v_{\ell,e_{k},k}}
        \end{aligned}
        \end{equation}
for $1\leq\ell\leq m$ and $1\leq k\leq s_{\ell}$ with $\mu_{\ell,i,k}\in\NN$ for $1\le i \le \upsilon_{k}$ and $v_{\ell,i,k}\in\ZZ$ for $1\le i \le e_{k}$
    \end{enumerate} 
    such that the following properties hold:	
    \begin{enumerate}[\hspace*{0.5em}(1)]
        \item \manuallabel{item1:orderedMultipleChainPExtOverConstPoly2APiExtOverConstField}{(1)}
        There does not exist a $(v_1,\dots,v_{e_1})\in\ZZ^{e_1}\setminus\{{\bs 0_{e_1}}\}$ with $\tilde{h}_1^{v_1}\dots\tilde{h}_{e_1}^{v_{e_1}}=1$. 
		\item \manuallabel{item2:orderedMultipleChainPExtOverConstPoly2APiExtOverConstField}{(2)}
        The \pE-extension $\dField{\tilde{\AA}_{d}}$ of $\dField{\tilde{\KK}}$ with 
        \begin{align}\label{dom:depth-d-OrderedMultipleChainPiExtension}
        \tilde{\AA}_{d}=\tilde{\KK}\genn{\tilde{\bs y}_{\bs 1}}\genn{\tilde{\bs y}_{\bs 2}}\dots\genn{\tilde{\bs y}_{\bs d}}=\tilde{\KK}\genn{\tilde{y}_{1,1}}\dots\genn{\tilde{y}_{e_{1},1}}\genn{\tilde{y}_{2,1}}\dots\genn{\tilde{y}_{e_{2},2}}\dots\genn{\tilde{y}_{1,d}}\dots\genn{\tilde{y}_{e_{d},d}}
        \end{align}
        and the  automorphism given in~\eqref{diffAuto:multipleChainPiExtensionOverIrrConstPolys} is a \piE-extension. In particular, it is an ordered multiple chain \piE-extension of monomial depth $d$.  
        \item \manuallabel{item3:orderedMultipleChainPExtOverConstPoly2APiExtOverConstField}{(3)} For all $f\in\AA_{d}$ and for all $n\in\NN$ we have 
        \begin{equation}\label{Equ:evrho}
        \ev(f,n)=\tilde{\ev}(\rho_{d}(f),n).
        \end{equation}
    \end{enumerate}
    If $K$ is strongly $\s$-computable, then the above constructions are computable.
\end{lemma}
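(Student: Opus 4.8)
The plan is to peel off the depth-$1$ layer of $\dField{\AA_{d}}$, reduce it by Lemma~\ref{lem:depth1PExt2Depth1RPiExt4IrrConstPolysAndAlgNums}, and then transport that reduction \emph{verbatim} to every deeper layer; this is possible because $\ev(y_{\ell,k},n)=h_{\ell}^{\binom{n+k-1}{k}}$, so the depth-$k$ monomial $y_{\ell,k}$ of chain $\ell$ is just a fixed (binomial) power of the single base $h_{\ell}$. Concretely, $\AA_{1}=\KK\genn{y_{1,1}}\dots\genn{y_{w_{1},1}}$ is a \pE-extension of $\dField{\KK}$ with $\s(y_{\ell,1})=h_{\ell}\,y_{\ell,1}$ and $\ev(y_{\ell,1},n)=h_{\ell}^{n}$ for $1\le\ell\le m=w_{1}$. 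Applying Lemma~\ref{lem:depth1PExt2Depth1RPiExt4IrrConstPolysAndAlgNums} with $\gamma_{\ell}:=h_{\ell}$ produces a finite algebraic extension $\tilde{K}$ of $K$ (hence $\tilde{\KK}=\tilde{K}(\Lst{\kappa}{1}{u})$), a primitive $\lambda$-th root of unity $\zeta\in\tilde{K}$, elements $\tilde{h}_{1},\dots,\tilde{h}_{e_{1}}\in\tilde{\KK}^{*}$, and integers $0\le\mu_{\ell}<\lambda$, $v_{\ell,1},\dots,v_{\ell,e_{1}}\in\ZZ$ with
\[
h_{\ell}=\zeta^{\mu_{\ell}}\,\tilde{h}_{1}^{v_{\ell,1}}\cdots\tilde{h}_{e_{1}}^{v_{\ell,e_{1}}}\quad\text{in }\tilde{\KK},
\]
such that, in addition, the \pE-extension $\dField{\tilde{\KK}\genn{\tilde{y}_{1,1}}\dots\genn{\tilde{y}_{e_{1},1}}}$ of $\dField{\tilde{\KK}}$ with $\s(\tilde{y}_{j,1})=\tilde{h}_{j}\,\tilde{y}_{j,1}$ is a \piE-extension and $\ev(f,n)=\tilde{\ev}(\varphi(f),n)$ holds for all $n$, where $\varphi$ is the {\dr} homomorphism with $\varphi(y_{\ell,1})=\vartheta_{1,1}^{\mu_{\ell}}\,\tilde{y}_{1,1}^{v_{\ell,1}}\cdots\tilde{y}_{e_{1},1}^{v_{\ell,e_{1}}}$ and $\s(\vartheta_{1,1})=\zeta\,\vartheta_{1,1}$.

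Next I would build $\GG_{d}$ by adjoining, for each depth $k$ with $2\le k\le d$, parallel copies of those generators: a single chain of \aE-monomials $\vartheta_{1,1},\dots,\vartheta_{1,D}$ with $D:=\max\{s_{\ell}\mid\mu_{\ell}\neq0\}$ ($D:=0$ if that set is empty), subject to $\s(\vartheta_{1,k})=\zeta\,\vartheta_{1,1}\cdots\vartheta_{1,k-1}\,\vartheta_{1,k}$ and $\vartheta_{1,k}^{\lambda}=1$; and, for each $j$ with $1\le j\le e_{1}$, a single chain of \pE-monomials $\tilde{y}_{j,1},\dots,\tilde{y}_{j,D_{j}}$ with $D_{j}:=\max\{s_{\ell}\mid v_{\ell,j}\neq0\}$, subject to $\s(\tilde{y}_{j,k})=\tilde{h}_{j}\,\tilde{y}_{j,1}\cdots\tilde{y}_{j,k-1}\,\tilde{y}_{j,k}$. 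After reordering the $j$'s so that $D_{1}\ge\dots\ge D_{e_{1}}$ and listing generators by increasing depth, this is an ordered multiple chain \apE-extension $\dField{\GG_{d}}$ of $\dField{\tilde{\KK}}$ of the shape~\eqref{eqn:orderedMultipleChainAPiRing} and of monomial depth $\max(D,D_{1},\dots,D_{e_{1}})\le d$ (with $\upsilon_{k}=1$ for $k\le D$ and $\upsilon_{k}=0$ otherwise, $e_{k}=\#\{j\mid D_{j}\ge k\}$, and every root-of-unity exponent in~\eqref{diffAuto:aMonomialsOverConstField} equal to $1$). Each \aE-step is legitimate because $\gamma_{1,k}=\zeta\,\vartheta_{1,1}\cdots\vartheta_{1,k-1}$ is a unit with $\gamma_{1,k}^{\lambda}=\zeta^{\lambda}\vartheta_{1,1}^{\lambda}\cdots\vartheta_{1,k-1}^{\lambda}=1$. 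I equip $\GG_{d}$ with $\tilde{\ev}$ as in~\eqref{evMap:aPiMonomialsOverConstField} and define the ring homomorphism $\rho_{d}\colon\AA_{d}\to\GG_{d}$ by $\rho_{d}|_{\KK}=\id$ and $\rho_{d}(y_{\ell,k})=\vartheta_{1,k}^{\mu_{\ell}}\,\tilde{y}_{1,k}^{v_{\ell,1}}\cdots\tilde{y}_{e_{1},k}^{v_{\ell,e_{1}}}$ for $1\le\ell\le m$, $1\le k\le s_{\ell}$ — well defined because $\mu_{\ell}\neq0$ forces $k\le s_{\ell}\le D$ and $v_{\ell,j}\neq0$ forces $k\le s_{\ell}\le D_{j}$.

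It then remains to verify the three properties. That $\rho_{d}$ is a {\dr} homomorphism is checked on generators: writing $\s(y_{\ell,k})=\alpha_{\ell,k}\,y_{\ell,k}$ with $\alpha_{\ell,k}=h_{\ell}\,y_{\ell,1}\cdots y_{\ell,k-1}$, using the automorphisms of the $\vartheta$- and $\tilde{y}$-chains and the identity $\zeta^{\mu_{\ell}}\prod_{j}\tilde{h}_{j}^{v_{\ell,j}}=h_{\ell}$, both $\s(\rho_{d}(y_{\ell,k}))$ and $\rho_{d}(\s(y_{\ell,k}))$ collapse to $h_{\ell}\,(\vartheta_{1,1}\cdots\vartheta_{1,k-1})^{\mu_{\ell}}\prod_{j}(\tilde{y}_{j,1}\cdots\tilde{y}_{j,k-1})^{v_{\ell,j}}\,\rho_{d}(y_{\ell,k})$ (equivalently, one applies Lemma~\ref{lem:diffRingHom} one layer at a time). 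Property~\ref{item1:orderedMultipleChainPExtOverConstPoly2APiExtOverConstField} is precisely the condition of Lemma~\ref{lem:transcendentalCriterionForPrdts}(1) for the \piE-extension $\dField{\tilde{\KK}\genn{\tilde{y}_{1,1}}\dots\genn{\tilde{y}_{e_{1},1}}}$ (a part of the \rpiE-extension produced by Lemma~\ref{lem:depth1PExt2Depth1RPiExt4IrrConstPolysAndAlgNums}), using $\s|_{\tilde{\KK}}=\id$. For property~\ref{item2:orderedMultipleChainPExtOverConstPoly2APiExtOverConstField}: $\dField{\tilde{\AA}_{d}}$ is the ordered multiple chain \pE-extension of $\dField{\tilde{\KK}}$ composed of the single chains $\tilde{y}_{j,1},\dots,\tilde{y}_{j,D_{j}}$ with bases $\tilde{h}_{j}$, hence by Theorem~\ref{thm:multipleChainPExt2MultipleChainPiExt} it is a \piE-extension iff there is no $g\in\tilde{\KK}^{*}$ and $(v_{1},\dots,v_{e_{1}})\in\ZZ^{e_{1}}\sm\zvs{e_{1}}$ with $\tfrac{\s(g)}{g}=\tilde{h}_{1}^{v_{1}}\cdots\tilde{h}_{e_{1}}^{v_{e_{1}}}$; since $\s|_{\tilde{\KK}}=\id$ this is exactly property~\ref{item1:orderedMultipleChainPExtOverConstPoly2APiExtOverConstField}. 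Finally, for property~\ref{item3:orderedMultipleChainPExtOverConstPoly2APiExtOverConstField}: $\rho_{d}$ fixes $\KK$ and both $\ev$ and $\tilde{\ev}$ are built homomorphically layerwise, so it suffices to show $\ev(y_{\ell,k},n)=\tilde{\ev}(\rho_{d}(y_{\ell,k}),n)$ for all $\ell,k,n$; for $k=1$ this is part of Lemma~\ref{lem:depth1PExt2Depth1RPiExt4IrrConstPolysAndAlgNums}, and in general it follows by induction on $k$ from $\ev(y_{\ell,k},n)=h_{\ell}^{\binom{n+k-1}{k}}=(\zeta^{\mu_{\ell}}\prod_{j}\tilde{h}_{j}^{v_{\ell,j}})^{\binom{n+k-1}{k}}$ together with $\tilde{\ev}(\vartheta_{1,k},n)=\zeta^{\binom{n+k-1}{k}}$ and $\tilde{\ev}(\tilde{y}_{j,k},n)=\tilde{h}_{j}^{\binom{n+k-1}{k}}$, which are the solved forms of the product recursions~\eqref{evMap:singleChainPiMonomialsOverConstField} and~\eqref{evMap:aPiMonomialsOverConstField}.

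The only genuinely substantive input is Lemma~\ref{lem:depth1PExt2Depth1RPiExt4IrrConstPolysAndAlgNums} — where the algebraic extension $\tilde{K}$, the root of unity $\zeta$, and (via strong $\s$-computability of $K$) the effectivity all enter. The point that actually needs care, although it is routine once seen, is the observation that the \emph{same} exponent data $(\mu_{\ell},v_{\ell,1},\dots,v_{\ell,e_{1}})$ obtained at depth $1$ applies unchanged at every deeper depth; everything else is bookkeeping, and all constructions are explicit whenever the depth-$1$ reduction is, which yields the final computability claim.
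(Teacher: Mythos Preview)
Your argument is correct and takes a genuinely different, more concrete route than the paper's. The paper proceeds by induction on the monomial depth $d$: it assumes the construction of $\GG_{d-1}$ and $\rho_{d-1}$ is done, applies $\rho_{d-1}$ to the shift quotients $\sigma^{-1}(\alpha_{\ell,d})=y_{\ell,d-1}$, reads off which depth-$(d-1)$ chains occur with nonzero exponent, extends precisely those chains by one more generator, and then checks via Lemma~\ref{lem:diffRingHom} and a first-order-recurrence/initial-value comparison that $\rho_d$ and the evaluation compatibility extend. Theorem~\ref{thm:multipleChainPExt2MultipleChainPiExt} is invoked at each stage to certify that the \pE-part stays a \piE-extension.

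You instead exploit the closed form $\ev(y_{\ell,k},n)=h_{\ell}^{\binom{n+k-1}{k}}$, which is available precisely because all multiplicands are constants. This lets you build $\GG_{d}$ in one shot from the depth-$1$ data of Lemma~\ref{lem:depth1PExt2Depth1RPiExt4IrrConstPolysAndAlgNums}, with a \emph{single} \aE-chain (so $\upsilon_{k}\le 1$ everywhere) and with the same exponent vector $(\mu_{\ell},v_{\ell,1},\dots,v_{\ell,e_{1}})$ reused verbatim at every depth. Your verification of the {\dr}-homomorphism property and of~\eqref{Equ:evrho} via the binomial identity is clean and correct; the application of Theorem~\ref{thm:multipleChainPExt2MultipleChainPiExt} for property~\ref{item2:orderedMultipleChainPExtOverConstPoly2APiExtOverConstField} is the same as in the paper. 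What your approach buys is transparency: it makes explicit \emph{why} no new algebraic data is needed beyond depth $1$. What the paper's inductive scheme buys is a template that does not rely on the explicit exponent formula and would transfer more readily to settings where such a closed form is unavailable; for the present lemma over a constant field, however, your shortcut is entirely adequate.
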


\begin{proof}
    Let $\dField{\AA_{d}}$ with $\AA_{d}=\AA_{d-1}\genn{y_{1,d}}\genn{y_{2,d}}\dots\genn{y_{w_{d},d}}$ be the ordered multiple chain \pE-extension of $\dField{\KK}$ of monomial depth $d\in\NN$ as described above with the  automorphism~\eqref{evMap:singleChainPiMonomialsOverConstField} and the evaluation function~\eqref{evMap:singleChainPiMonomialsOverConstField}. We prove the Lemma by induction on the monomial depth $d$.\\     
    If $d=1$, statements~\ref{item2:orderedMultipleChainPExtOverConstPoly2APiExtOverConstField} and~\ref{item3:orderedMultipleChainPExtOverConstPoly2APiExtOverConstField} of the
     Lemma hold by Lemma~\ref{lem:depth1PExt2Depth1RPiExt4IrrConstPolysAndAlgNums}. Hence by Lemma~\ref{lem:transcendentalCriterionForPrdts} there are no $g\in\tilde{\KK}^*$ and $(v_1,\dots,v_{e_1})\in\ZZ^{e_1}\setminus\{{\bs 0_{e_1}}\}$
     with $\tilde{h}_1^{v_1}\dots\tilde{h}_{e_1}^{v_{e_1}}=\frac{\s(g)}{g}=1$ and thus also statement~\ref{item1:orderedMultipleChainPExtOverConstPoly2APiExtOverConstField} of the Lemma holds.\\
     Now let $d\ge2$ and suppose that the Lemma holds for $d-1$.
     That is, we can construct  $\dField{\GG_{d-1}}$ with
    \[
    \GG_{d-1}=\tilde{\KK}\genn{\vartheta_{1,1}}\dots\genn{\vartheta_{\upsilon_{1},1}}\genn{\tilde{y}_{1,1}}\dots\genn{\tilde{y}_{e_{1},1}}\dots\genn{\vartheta_{1,d-1}}\dots\genn{\vartheta_{\upsilon_{d-1},d-1}}\genn{\tilde{y}_{1,d-1}}\dots\genn{\tilde{y}_{e_{d-1},d-1}}
    \]
    which is an ordered multiple chain \apE-extension of $\dField{\tilde{\KK}}$ of monomial depth at most $d-1$ with the automorphism given by~\eqref{diffAuto:aMonomialsOverConstField} for $1\le k \le d-1$ and $1\le \ell\le\upsilon_{k}$ and given by~\eqref{diffAuto:multipleChainPiExtensionOverIrrConstPolys}	for $1\le k \le d-1$ and $1 \le \ell \le e_{k}$. In addition, we get the evaluation function $\tilde{\ev}:\GG_{d-1}\times\NN\to \tilde{\KK}$ defined as~\eqref{evMap:aPiMonomialsOverConstField} 
    and the difference ring homomorphism $\rho_{d-1}: \AA_{d-1}\to\GG_{d-1}$ defined by $\rho_{d-1}|_{\KK}=\id_{\KK}$ and~\eqref{diffHom:orderdMultipleChainPMonomial2OrderedMultipleChainAPiMonomial}    
    such that statements~\ref{item1:orderedMultipleChainPExtOverConstPoly2APiExtOverConstField},~\ref{item2:orderedMultipleChainPExtOverConstPoly2APiExtOverConstField}, and~\ref{item3:orderedMultipleChainPExtOverConstPoly2APiExtOverConstField} of the Lemma hold. We prove the Lemma for the ordered multiple chain \pE-extension $\dField{\AA_{d}}$ of $\dField{\KK}$ with $\AA_{d}=\AA_{d-1}\genn{y_{1,d}}\genn{y_{2,d}}\dots\genn{y_{w_{d},d}}$ where $\myd(y_{1,d})=\dots=\myd(y_{w_d,d})=d$. Since the shift quotient of these \pE-monomials is contained in $\AA_{d-1}$, i.e.,
    \[
    \frac{\s(y_{\ell,d})}{y_{\ell,d}}=\alpha_{\ell,d}\in\AA_{d-1}^{*},
    \]
    we can iteratively apply the difference ring homomorphism $\rho_{d-1}:\AA_{d-1}\to\GG_{d-1}$ to rephrase each $\alpha_{\ell,d}$ in $\GG_{d-1}$. In particular, by Remark~\ref{Remark:SimpleProperties} we have $\s^{-1}(\alpha_{\ell,d})=y_{\ell,d-1}$ and thus by~\eqref{diffHom:orderdMultipleChainPMonomial2OrderedMultipleChainAPiMonomial} we get
    \begin{equation}\label{elem:applyDiffHom2ShiftQuotient}
    \begin{aligned} 
    h_{\ell,d}:=\rho_{d-1}(\s^{-1}(\alpha_{\ell,d})) &= \rho_{d-1}(y_{\ell,d-1})={\bs\vartheta}_{{{\bs d}-{\bs 1}}}^{{\bs \mu}_{{\bs \ell},{{{\bs d}-{\bs 1}}}}}\,\tilde{\bs y}_{{{\bs d}-{\bs 1}}}^{{\bs v}_{{\bs \ell},{{{\bs d}-{\bs 1}}}}}
    \end{aligned}
    \end{equation} 
    where 
    ${\bs \vartheta}_{{\bs d}-{\bs 1}}^{{\bs u}_{{\bs \ell},{\bs d-1}}}=\vartheta_{1,d-1}^{u_{\ell,1,d-1}}\cdots\vartheta_{\upsilon_{d-1},d-1}^{u_{\ell,\upsilon_{d-1},d-1}}$
    and 
   $ \tilde{\bs y}_{\bs d-1}^{{\bs v}_{{\bs \ell},{\bs i}}}=\tilde{y}_{1,d-1}^{v_{\ell,1,d-1}}\cdots\tilde{y}_{e_{d-1},d-1}^{v_{\ell,e_{d-1},d-1}}$
    for $1\le\ell\le w_{d}$ with $\mu_{\ell,k,d-1}\in\NN$ for $1\le k \le \upsilon_{d-1}$ and $v_{\ell,k,d-1}\in\ZZ$ for $1\le k \le e_{d-1}$.\\
    If $h_{\ell,d}=1$, it follows with~\eqref{Equ:evrho} ($d$ replaced by $d-1$) that for all $n\in\NN$ we have $$\ev(y_{\ell,d},n)=\prod_{j=1}^n\ev(\alpha_{\ell,d},j-1)=\prod_{j=1}^n\ev(\sigma^{-1}(\alpha_{\ell,d}),j)=\prod_{j=1}^n\tilde{\ev}(\rho(\sigma^{-1}(\alpha_{\ell,d})),j)=\prod_{j=1}^n\tilde{\ev}(h_{\ell,d},j)=1.$$
    In particular, if $h_{\ell,d}=1$ holds for all $1\leq\ell\leq w_d$, we can set $\GG_d:=\GG_{d-1}$ and extend $\rho_{d-1}$ to $\rho_{d}:\AA_{d}\to\GG_{d-1}$ with $\rho_d(y_{\ell,d})=1$ for $1\leq\ell\leq w_d$. Thus the lemma is proven.\\
    Otherwise, take all \apE-monomials in~\eqref{elem:applyDiffHom2ShiftQuotient} for $1\le\ell\le w_{d}$ with non-zero integer exponents.
    Then they belong to at least one of the single chain \apE-extensions of $\dField{\tilde{\KK}}$ in $\dField{\GG_{d-1}}$. Suppose there are $e_{d}\ge0$ of these single chains \piE-extensions and $\upsilon_{d}\ge0$ of them that are single chain \aE-extensions $\dField{\HH_{b}}$; note that we have $e_d+\upsilon_d\geq1$. By appropriate reordering of $\dField{\GG_{d-1}}$ we may suppose that these $e_d$ single chain \piE-extensions
     $\dField{\FF_{r}}$ of $\dField{\tilde{\KK}}$ with $1\leq r\leq e_d$ are given by 
    $\FF_{r}=\tilde{\KK}\genn{\tilde{y}_{r,1}}\genn{\tilde{y}_{r,2}}\dots\genn{\tilde{y}_{r,d-1}}$
    and the $\upsilon_d$  \aE-extensions $\dField{\HH_{b}}$ of $\dField{\tilde{\KK}}$ with $1\le b \le \upsilon_{d}$ can be given by    
    $\HH_{b}=\tilde{\KK}\genn{\vartheta_{b,1}}\genn{\vartheta_{b,2}}\dots\genn{\vartheta_{b,d-1}}$. Now adjoin the \pE-monomials $\tilde{y}_{r,d}$ to $\FF_r$ with~\eqref{diffAuto:aMonomialsOverConstField} where $k=d$ and $\ell=r$    
     yielding the single chain \pE-extensions $\dField{\FF'_{r}}$ of $\dField{\tilde{\KK}}$ of monomial depth $d$ where 
    \[
    \FF'_{r}=\FF_{r}\genn{\tilde{y}_{r,d}}=\tilde{\KK}\genn{\tilde{y}_{r,1}}\genn{\tilde{y}_{r,2}}\dots\genn{\tilde{y}_{r,d-1}}\genn{\tilde{y}_{r,d}}
    \]
    and adjoin the \aE-monomial $\vartheta_{b,d}$ with~\eqref{diffAuto:multipleChainPiExtensionOverIrrConstPolys} where $k=d$ and $\ell=b$ yielding the single chain \aE-extensions $\dField{\HH'_{b}}$ of $\dField{\tilde{\KK}}$ of monomial depth $d$ where 
    \[
    \HH'_{b}=\HH_{b}\genn{\vartheta_{b,d}}=\tilde{\KK}\genn{\vartheta_{b,1}}\genn{\vartheta_{b,2}}\dots\genn{\vartheta_{b,d-1}}\genn{\vartheta_{b,d}}.
    \]
    Furthermore extend the evaluation functions $\tilde{\ev}:\FF'_{r}\times\NN\to\tilde{\KK}$ and $\tilde{\ev}:\HH'_{b}\times\NN\to\tilde{\KK}$ with~\eqref{evMap:aPiMonomialsOverConstField} where $k=d,\ell=r$ or $k=d,\ell=b$, 
    respectively. Now consider the multiple chain \pE-extension $\dField{\tilde{\AA}_{d}}$ of $\dField{\tilde{\KK}}$ with 
    $$\tilde{\AA}_{d}= \tilde{\KK}\genn{\tilde{y}_{1,1}}\dots\genn{\tilde{y}_{e_{1},1}}\dots\genn{\tilde{y}_{1,d-1}}\dots\genn{\tilde{y}_{e_{d-1},d-1}}\genn{\tilde{y}_{1,d}}\genn{\tilde{y}_{2,d}}\dots\genn{\tilde{y}_{e_{d},d}}$$
    which one gets by taking all \pE-monomials in $\GG_{d-1}$ and the new \pE-monomials in $\FF'_{r}=\FF_{r}\genn{\tilde{y}_{r,d}}$ with $1\leq r\leq e_d$. Here the automorphism is given by~\eqref{diffAuto:aMonomialsOverConstField} and~\eqref{diffAuto:multipleChainPiExtensionOverIrrConstPolys}  and the equipped evaluation function is given by~\eqref{evMap:aPiMonomialsOverConstField}.
    Since there does not exist a $g\in\tilde{\KK}$ and $(\Lst{v}{1}{d})\in\ZZ^{d}\sm\zvs{d}$ with $\tfrac{\s(g)}{g}=\ProdLst{\tilde{h}}{1}{d}{v}$, it follows by Theorem~\ref{thm:multipleChainPExt2MultipleChainPiExt} that $\dField{\tilde{\AA}_{d}}$ is a \piE-extension of $\dField{\tilde{\KK}}$.
    In particular, it is an ordered multiple chain \piE-extension of monomial depth $d$ by construction.     
    Thus statements~\ref{item1:orderedMultipleChainPExtOverConstPoly2APiExtOverConstField} and~\ref{item2:orderedMultipleChainPExtOverConstPoly2APiExtOverConstField} of the Lemma hold. Finally, take the \apE-extension $\dField{\GG_d}$ of $\dField{\GG_{d-1}}$ with $\GG_d=\GG_{d-1}\genn{\vartheta_{1,d}}\genn{\vartheta_{2,d}}\cdots\genn{\vartheta_{\upsilon_{d},d}}\genn{\tilde{y}_{1,d}}\genn{\tilde{y}_{2,d}}\cdots\genn{\tilde{y}_{e_{d},d}}$.\\   
    Let $1\le \ell \le w_{d}$ and consider $h_{\ell,d}$ in~\eqref{elem:applyDiffHom2ShiftQuotient}. If $h_{\ell,d}=1$, we define $g_{\ell}=1$. In this case, $1=\rho(h_{\ell,d})=\rho_{d-1}(\sigma^{-1}(\alpha_{\ell,d}))=\sigma^{-1}(\rho_{d-1}(\alpha_{\ell,d}))$, and thus $\frac{\s(g_{\ell})}{g_{\ell}}=1=\rho_{d-1}(\alpha_{\ell,d})$ holds. 
    Otherwise, if $h_{\ell,d}\neq1$, we set
    \begin{align}\label{elem:depthDProductGroupElem}
    g_{\ell}:=\vartheta_{1,d}^{\mu_{\ell,1,d}}\cdots\vartheta_{\upsilon_{d},d}^{\mu_{\ell,\upsilon_{d},d}}\,\tilde{y}_{1,d}^{v_{\ell,1,d}}\cdots\tilde{y}_{e_{d},d}^{v_{\ell,e_{d},d}}\in(\tilde{\KK}^{*})_{\tilde{\KK}}^{\GG_{d}}.
    \end{align}
    Then based on the Remark~\ref{Remark:SimpleProperties} it follows that $\frac{\tilde{y}_{j,d}}{\s^{-1}(\tilde{y}_{j,d})}=\tilde{y}_{j,d-1}$ and $\frac{\vartheta_{j,d}}{\s^{-1}(\vartheta_{j,d})}=\vartheta_{j,d-1}$ and thus
    $$\frac{g_{\ell}}{\s^{-1}(g_{\ell})}=\vartheta_{1,d-1}^{\mu_{\ell,1,d-1}}\cdots\vartheta_{\upsilon_{d-1},d-1}^{\mu_{\ell,\upsilon_{d-1},d-1}}\,\tilde{y}_{1,d-1}^{v_{\ell,1,d-1}}\cdots\tilde{y}_{e_{d-1},d-1}^{v_{\ell,e_{d-1},d-1}}={\bs\vartheta}_{{{\bs d}-{\bs 1}}}^{{\bs \mu}_{{\bs \ell},{{{\bs d}-{\bs 1}}}}}\,\tilde{\bs y}_{{{\bs d}-{\bs 1}}}^{{\bs v}_{{\bs \ell},{{{\bs d}-{\bs 1}}}}}=\rho_{d-1}(y_{l,d-1}).$$
    Hence also in this case we get
    \[
    \frac{\s(g_{\ell})}{g_{\ell}} = \s(\tfrac{g_{\ell}}{\s^{-1}(g_{\ell})})=\s(\rho_{d-1}(y_{l,d-1}))=\rho_{d-1}(\s(y_{l,d-1}))=\rho_{d-1}(\alpha_{\ell,d}).
    \]
	By iterative application of Lemma~\ref{lem:diffRingHom}, the difference ring homomorphism $\rho_{d-1}:\AA_{d-1}\to\GG_{d-1}$ can be extended to
    \[
    \rho_{d}:\AA_{d-1}\genn{y_{1,d}}\genn{y_{2,d}}\cdots\genn{y_{w_{d},d}}\to\GG_{d-1}\genn{\vartheta_{1,d}}\genn{\vartheta_{2,d}}\cdots\genn{\vartheta_{\upsilon_{d},d}}\genn{\tilde{y}_{1,d}}\genn{\tilde{y}_{2,d}}\cdots\genn{\tilde{y}_{e_{d},d}}
    \]
    with
    $\rho_{d}|_{\AA_{d-1}} = \rho_{d-1}$ and $\rho_{d}(y_{\ell,d})=g_{\ell}$
    for $1\le \ell\le w_{d}$. Finally, we show that for all $f\in\AA_{d}$ and $n\in\NN$ we have $\ev(f,n)=\tilde{\ev}(\rho_{d}(f),n)$. First note that for all $n\ge0$ we have
    \begin{align}\label{eqn:depthDGeoPrdtRecEqn1}
    \ev(y_{\ell,d},n+1)=\ev(\s(y_{\ell,d}),n)=\ev(\alpha_{\ell,d},n)\,\ev(y_{\ell,d},n). 
    \end{align}
    On the other hand, since $\rho_{d}$ is a {\dr} homomorphism, we have that 
\begin{equation}\label{Equ:HomProp}
    \s(\rho_{d}(y_{\ell,d})) = \rho_{d}(\s(y_{\ell,d})) = \rho_{d}(\alpha_{\ell,d})\,\rho_{d}(y_{\ell,d})=\rho_{d-1}(\alpha_{\ell,d})\,\rho_{d}(y_{\ell,d})
  \end{equation}
    for all $n\ge0$. Thus we get 
    \begin{align}\label{eqn:depthDGeoPrdtRecEqn2}
    \tilde{\ev}(\rho_{d}(y_{\ell,d}),n+1)=\tilde{\ev}(\s(\rho_{d}(y_{\ell,d})),n) \stackrel{\eqref{Equ:HomProp}}{=} \tilde{\ev}(\rho_{d-1}(\alpha_{\ell,d}),\,n)\,\tilde{\ev}(\rho_{d}(y_{\ell,d}),n).
    \end{align}
    By the induction hypothesis, $\ev(\alpha_{\ell,d},n)=\tilde{\ev}(\rho_{d-1}(\alpha_{\ell,d}),\,n)$ holds for all $n\in\NN$. Therefore with~\eqref{eqn:depthDGeoPrdtRecEqn1} and~\eqref{eqn:depthDGeoPrdtRecEqn2} it follows that $\ev(y_{\ell,d},n)$ and $\tilde{\ev}(\rho(y_{\ell,d}),n)$ satisfy the same first-order recurrence relation. With $\ev(y_{\ell,d},0)=1$ and   
    \begin{align*}
    \tilde{\ev}(\rho_d(y_{\ell,d}),0)&=\tilde{\ev}(g_l,0)
    =\tilde{\ev}(\vartheta_{1,d},0)^{\mu_{\ell,1,d}}\cdots\tilde{\ev}(\vartheta_{\upsilon_{d},d},0)^{\mu_{\ell,\upsilon_{d},d}}\,\tilde{\ev}(\tilde{y}_{1,d},0)^{v_{\ell,1,d}}\cdots\tilde{\ev}(\tilde{y}_{e_{d},d},0)^{v_{\ell,e_{d},d}}=1
    \end{align*}
	it follows then that $\ev(y_{\ell,d},n)=\tilde{\ev}(\rho_d(y_{\ell,d},n)$ holds for all $n\ge0$. Together with the induction hypothesis $\ev(f,n)=\tilde{\ev}(\rho_{d-1}(f),n)$ for all $f\in\AA_{d-1}$ and $n\in\NN$ we get~\eqref{Equ:evrho}
    for all $f\in\AA_d$ and for all $n\ge0$. Consequently also statement~\ref{item1:orderedMultipleChainPExtOverConstPoly2APiExtOverConstField} of the Lemma holds.\\ 
    Finally, if $K$ is strongly $\s$-computable, the base case $d=1$ can be executed explicitly by activating Lemma~\ref{lem:depth1PExt2Depth1RPiExt4IrrConstPolysAndAlgNums}. In particular the induction step can be performed iteratively and thus the {\dr} $\dField{\GG_{d}}$ with~\eqref{eqn:orderedMultipleChainAPiRing} together with~\eqref{diffAuto:aMonomialsOverConstField},~\eqref{diffAuto:multipleChainPiExtensionOverIrrConstPolys} and~\eqref{evMap:aPiMonomialsOverConstField} can be computed. In addition, the difference ring $\dField{\GG_d}$, the {\dr} homomorphism $\rho_{d}:\AA_{d}\to\GG_{d}$ defined by~\eqref{diffHom:orderdMultipleChainPMonomial2OrderedMultipleChainAPiMonomial}  and the evaluation function $\tilde{\ev}$ can be computed. This completes the proof.
\end{proof}

\begin{remark}\label{remk:rearrangeAPiMonomialsInMultipleChainAPiExtension}
    \normalfont Note that the generators of $\GG_{d}$ with~\eqref{eqn:orderedMultipleChainAPiRing} constructed in Lemma~\ref{lem:orderedMultipleChainPExtOverConstPoly2APiExtOverConstField} can be rearranged to get the \apiE-extension $\dField{\tilde{\KK}[\vartheta_{1,1}]\dots[\vartheta_{\upsilon_{1},1}]\dots[\vartheta_{1,d}]\dots[\vartheta_{\upsilon_{d},d}]\genn{\tilde{y}_{1,1}}\dots\genn{\tilde{y}_{e_{1},1}}\dots\genn{\tilde{y}_{1,d}}\dots\genn{\tilde{y}_{e_{d},d}}}$ of $\dField{\tilde{\KK}}$. Furthermore, a consequence of statement~\ref{item3:orderedMultipleChainPExtOverConstPoly2APiExtOverConstField} of Lemma~\ref{lem:orderedMultipleChainPExtOverConstPoly2APiExtOverConstField} is that the diagram 
    \begin{center}
        \begin{tikzcd}
            \AA \arrow[r, "\psi"] \arrow[d, "\rho", swap]
            & \ringOfEquivSeqs[\KK] \arrow[d, hook, "\rho'"] \\
            \GG_{d} \arrow[r, "\tilde{\tau}"]
            & \ringOfEquivSeqs[\tilde{\KK}]
        \end{tikzcd}
    \end{center}
    commutes where $\AA=\AA_{d}$, $\rho=\rho_{d}$, $\rho'=\id$ and the {\dr} homomorphism $\tilde{\tau}$ and $\psi$ are defined by $\tilde{\tau}(f)=\funcSeqA{\tilde{\ev}(f,n)}{n}$ and $\psi(g)=\funcSeqA{\ev(g,n)}{n}$ respectively.
\end{remark}

\begin{example}[Cont. Example~\ref{exa:singleChainAPExtensions}]\label{exa:monomialDepth2OrderedMultipleChainPiExtForAlgNums}
    \normalfont Take the ordered multiple chain \apE-extension $\dField{\AA'}$ of $\dField{\KK}$ with monomial depth $2$ with $\AA'=\KK\genn{{\vartheta_{1,1}}}\genn{{y_{1,1}}}\genn{{y_{2,1}}}\genn{{y_{3,1}}}\genn{{y_{4,1}}}\genn{y_{5,1}}\genn{{\vartheta_{1,2}}}\genn{{y_{2,2}}}\genn{{y_{4,2}}}$ where $\dField{\AA'}$ is composed by the single chain \apE-extensions of $\dField{\KK}$ constructed in parts~\ref{item:monomialDepth2SingleChainRExtBasedAtMinus1},~\ref{item:monomialDepth1SingleChainPiExtBasedAtSqrt3},~\ref{item:monomialDepth2SingleChainPiExtBasedAt2},~\ref{item:monomialDepth1SingleChainPiExtBasedAt3},~\ref{item:monomialDepth2SingleChainPiExtBasedAt5}, and~\ref{item:monomialDepth1SingleChainPiExtBasedAt25} of Example~\ref{exa:singleChainAPExtensions}. By Lemma~\ref{lem:orderedMultipleChainPExtOverConstPoly2APiExtOverConstField} and Remark~\ref{remk:rearrangeAPiMonomialsInMultipleChainAPiExtension} we can construct the \apE-extension $\dField{\GG}$ of $\dField{\KK}$ where
    \begin{equation}\label{domain:monomialDepth2OrderedMultipleChainPiExt}
        \GG=\KK[\vartheta_{1,1}][\vartheta_{1,2}]\genn{\tilde{y}_{1,1}}\genn{\tilde{y}_{2,1}}\genn{\tilde{y}_{3,1}}\genn{\tilde{y}_{2,2}}\genn{\tilde{y}_{3,2}} 
    \end{equation}
    with the automorphism $\s$ and evaluation function $\tilde{\ev}:\GG\times\NN\to\KK$ given by~\eqref{diffAutoEval:monomialDepth2SingleChainRExtBasedAtMinus1A}, \eqref{diffAutoEval:monomialDepth2SingleChainRExtBasedAtMinus1B} and 
    {\fontsize{7.5pt}{0}\selectfont
    \begin{equation}\label{diffAutoEval:nestingDepth2PiExtOverAlgNums}
        \begin{aligned}
            \s({\tilde{y}_{1,1}}) &= \sqrt{3}\,{\tilde{y}_{1,1}}, & \s({\tilde{y}_{2,1}}) &= 2\,{\tilde{y}_{2,1}}, & \s({\tilde{y}_{3,1}}) &= 5\,{\tilde{y}_{3,1}}, &  \s({\tilde{y}_{2,2}}) &= 2\,{\tilde{y}_{2,1}}\,{\tilde{y}_{2,2}}, & \s({\tilde{y}_{3,2}}) &= 5\,{\tilde{y}_{3,1}}\,{\tilde{y}_{3,2}},\\
            \tilde{\ev}({\tilde{y}_{1,1}},n) &= \myProduct{k}{1}{n}{\sqrt{3}}, & \tilde{\ev}({\tilde{y}_{2,1}},n) &= \myProduct{k}{1}{n}{2}, & \tilde{\ev}({\tilde{y}_{3,1}},n) &= \myProduct{k}{1}{n}{5}, & \ev(\tilde{y}_{2,2}, n) &= \myProduct{k}{1}{n}{\myProduct{j}{1}{k}{2}}, & \ev(\tilde{y}_{3,2}, n) &= 	\myProduct{k}{1}{n}{\myProduct{j}{1}{k}{5}}
        \end{aligned}
    \end{equation}}%
with the following properties.
    By part~\ref{item2:orderedMultipleChainPExtOverConstPoly2APiExtOverConstField} of  Lemma~\ref{lem:orderedMultipleChainPExtOverConstPoly2APiExtOverConstField}, the sub-{\dr} $\dField{\tilde{\DD}}$ of the {\dr} $\dField{\GG}$ with $\tilde{\DD} = \KK\genn{\tilde{y}_{1,1}}\genn{\tilde{y}_{2,1}}\genn{\tilde{y}_{3,1}}\genn{\tilde{y}_{2,2}}\genn{\tilde{y}_{3,2}}$ is an ordered multiple chain \piE-extension of $\dField{\KK}$ with the automorphism $\s$ and the evaluation function $\tilde{\ev}:\tilde{\DD}\times\NN\to\KK$ defined in~\eqref{diffAutoEval:nestingDepth2PiExtOverAlgNums}. In addition by part~\ref{item3:orderedMultipleChainPExtOverConstPoly2APiExtOverConstField} of  Lemma~\ref{lem:orderedMultipleChainPExtOverConstPoly2APiExtOverConstField} we get the {\dr} homomorphism $\rho:\AA'\to\GG$ defined by $\rho|_{\KK[\vartheta_{1,1}][\vartheta_{1,2}]}=\id_{\KK[\vartheta_{1,1}][\vartheta_{1,2}]}$ and
    \begin{equation}\label{Equ:DefineRhoInExa}
        \begin{aligned}
        \rho(y_{1,1}) &= \tilde{y}_{1,1}, \qquad & \rho(y_{2,1}) &= \tilde{y}_{2,1}, \qquad & \rho(y_{3,1}) &= \tilde{y}_{1,1}^{2}, \qquad & \rho(y_{4,1}) &= \tilde{y}_{3,1}, \\
        \rho(y_{5,1}) &= \tilde{y}_{3,1}^{2}, \qquad & \rho(y_{2,2}) &= \tilde{y}_{2,2}, \qquad & \rho(y_{4,2}) &= \tilde{y}_{3,2}
        \end{aligned}
    \end{equation} 
    such that $\tilde{\ev}(\rho(f),n)=\ev(f,n)$ holds for all $n\in\NN$ and $f\in\AA'$, 
\end{example}

\subsection{Nested products with roots of unity}\label{subsec:rootsOfUnity}
Throughout this Subsection, $K$ is a field containing $\QQ$, $\KK_{m}$ is a splitting field for the polynomial $x^{m}-1$ over $K$ (i.e., all roots of the polynomial $x^{m}-1$ are in $\KK_{m}$) for some $m\in\ZZ_{\geq2}$ and $\UU_{m}$ is the set of all $m$-th roots of unity over $K$ in $\KK_{m}$ (which forms a multiplicative subgroup of $K^*$). Then $\Prod(\UU_{m})$ is the set of all geometric products over roots of unity in $\UU_{m}$. For $\GG\subseteq\KK_m(x)$ we define $\ProdExpr(\GG,\UU_{m})$ as the set of all elements 
\begin{equation*}
    \smashoperator{\sum_{\mathclap{\substack{\bs{v}=(\Lst{v}{1}{e})\in S}}}^{}}{a_{\bs{v}}(n) \,P_{1}(n)^{v_{1}}\cdots P_{e}(n)^{v_{e}}}
\end{equation*}
with $e\in\NN$, $S\subseteq\NN^{e}$ finite, $a_{\bs{v}}(x)\in\GG$ for $\bs{v}\in S$ and  $P_{1}(n),\dots,P_{e}(n)\in\Prod(\UU_{m})$. 

The main result of this Subsection in Theorem~\ref{thm:mainResultForNestedProductsOverRootsOfUnity} states that products over roots of unity with finite nesting depth can be represented by the single product $\zeta_{\lambda}^n$ where $\zeta_{\lambda}\in\UU_{\lambda}$ for some $\lambda\geq2$ is a primitive $\lambda$-th root of unity.

\begin{theorem}\label{thm:mainResultForNestedProductsOverRootsOfUnity}
    Suppose we are given the geometric products $A_{1}(n),\dots,A_{e}(n)\in\Prod(\UU_{m})$ in $n$ of nesting depth $r_{i}\in\NN$ with \begin{equation}\label{eqn:finteNestedDepthGeoPrdtsOverRootsOfUnity}
    A_{i}(n) = \myProduct{k_{1}}{\ell_{i,1}}{n}{\zeta_{i,1}\myProduct{k_{2}}{\ell_{i,2}}{k_{1}}{\zeta_{i,2}}\cdots\myProduct{k_{r_{i}}}{\ell_{i,r_{i}}}{k_{r_{i}-1}}{\zeta_{i,r_{i}}}}
    \end{equation}
    for $1\le i\le e$ where $\zeta_{i,j}\in\UU_{m}$, $\ell_{i,j}\in\NN$ for $1\le j \le r_{i}$. Then there exist a $\lambda\in\ZZ_{\geq2}$ with $m\,|\,\lambda$ and a primitive $\lambda$-th root of unity $\zeta_{\lambda}\in\KK_{\lambda}^{*}$ satisfying the following property. For all $1 \le i \le e$ there exist $f_{i,j}\in\KK_{\lambda}$ for $0\le j < \lambda$ such that for
    \begin{equation}\label{Equ:BiDefRootOfUnity}
    B_{i}(n)=\mySum{j}{0}{\lambda-1}{f_{i,j}}\,(\zeta_{\lambda}^{n})^{j}\in\ProdExpr(\KK_{\lambda},\UU_{\lambda})
    \end{equation}
    we have
    \begin{equation}\label{Equ:Ai=BiForRootOfUnity}
    A_{i}(n)=B_{i}(n)
    \end{equation}
    for all $n\ge\max(\ell_{i,1},\dots,\ell_{i,r_i})-1$. In particular, if $\KK$ is computable and one can solve ~\ref{prob:ProblemO} (see below), the above construction can be given explicitly.
\end{theorem}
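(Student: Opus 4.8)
The plan is to reduce the statement to a routine discrete Fourier inversion. The key observation is that each $A_i(n)$, viewed as a sequence in $n$, is \emph{eventually periodic} with all of its values lying in the finite group $\UU_m$; once this is known, one picks a period $\lambda$ common to all the $A_i$ and divisible by $m$, passes to the splitting field $\KK_\lambda\supseteq\KK_m$ of $x^\lambda-1$, and expands each such periodic sequence in the basis of ``characters'' $n\mapsto(\zeta_\lambda^{n})^{j}$, $0\le j<\lambda$.

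For the periodicity I would argue by closure rather than by an explicit induction on nesting depth. Let $\mathcal C$ denote the class of sequences $a\colon\NN\to\UU_m$ admitting $N\in\NN$ and $P\ge1$ with $a(k+P)=a(k)$ for all $k\ge N$; call $N$ a periodicity bound and $P$ a period. Constant sequences valued in $\UU_m$ lie in $\mathcal C$; $\mathcal C$ is closed under pointwise multiplication (take the lcm of the periods and the max of the bounds, and note values remain in $\UU_m$); and $\mathcal C$ is closed under the partial-product operation $a\mapsto b$ with $b(n)=\prod_{k=\ell}^{n}a(k)$. For the last point: if $a\in\mathcal C$ has bound $N$ and period $P$, then for $n\ge M:=\max(N,\ell-1)$ the block $[n+1,n+P]$ lies in the periodic regime of $a$ and runs through a complete residue system mod $P$, so $b(n+P)=\rho\,b(n)$ with $\rho\in\UU_m$ the product of $a$ over one full period; iterating, $b(n+\ord(\rho)\,P)=b(n)$ for $n\ge M$, so $b$ has bound $M$ and period $P\,\ord(\rho)$ and again takes values in $\UU_m$. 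Since $A_i(n)$ is assembled from the constants $\zeta_{i,1},\dots,\zeta_{i,r_i}$ by finitely many partial products following the nesting in~\eqref{eqn:finteNestedDepthGeoPrdtsOverRootsOfUnity}, it lies in $\mathcal C$. Tracking the bound $M=\max(N,\ell-1)$ from the innermost factor $\prod_{k_{r_i}=\ell_{i,r_i}}^{k_{r_i-1}}\zeta_{i,r_i}=\zeta_{i,r_i}^{\,k_{r_i-1}-\ell_{i,r_i}+1}$ (already periodic in $k_{r_i-1}$ for $k_{r_i-1}\ge\ell_{i,r_i}-1$) outward, and using $\ord(\rho)\mid m$ at every step, one finds that $A_i$ has periodicity bound $N_i:=\max(\ell_{i,1},\dots,\ell_{i,r_i})-1$ and a period $p_i$ dividing $m^{r_i}$.

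Then set $\lambda:=\lcm(m,p_1,\dots,p_e)\in\ZZ_{\ge2}$, so $m\mid\lambda$ and $\KK_m\subseteq\KK_\lambda$, and fix a primitive $\lambda$-th root of unity $\zeta_\lambda\in\KK_\lambda^{*}$. For each $i$, since $p_i\mid\lambda$ the sequence $A_i$ extends uniquely to a $\lambda$-periodic sequence $\widehat A_i\colon\ZZ\to\UU_m\subseteq\KK_\lambda$ agreeing with $A_i$ for $n\ge N_i$; the Vandermonde matrix $(\zeta_\lambda^{jk})_{0\le j,k<\lambda}$ being invertible over $\KK_\lambda$ with inverse $\tfrac1\lambda(\zeta_\lambda^{-jk})_{0\le j,k<\lambda}$, there are unique $f_{i,j}=\tfrac1\lambda\sum_{r=0}^{\lambda-1}\widehat A_i(r)\,\zeta_\lambda^{-jr}\in\KK_\lambda$ with $\widehat A_i(n)=\sum_{j=0}^{\lambda-1}f_{i,j}\,\zeta_\lambda^{jn}$ for all $n\in\ZZ$. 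As $\zeta_\lambda^{n}=\prod_{k=1}^{n}\zeta_\lambda\in\Prod(\UU_\lambda)$, the element $B_i(n)$ of~\eqref{Equ:BiDefRootOfUnity} lies in $\ProdExpr(\KK_\lambda,\UU_\lambda)$ and $A_i(n)=\widehat A_i(n)=B_i(n)$ for all $n\ge N_i$, which is~\eqref{Equ:Ai=BiForRootOfUnity}. For the effective version one uses that $\KK$ is computable and that \ref{prob:ProblemO} is solvable: this yields the orders $\ord(\zeta_{i,j})$ and hence (via $p_i\mid m^{r_i}$) the value $\lambda$; the splitting fields $\KK_m,\KK_\lambda$, the root $\zeta_\lambda$ and arithmetic in $\KK_\lambda$ are then constructible; the finitely many values $\widehat A_i(0),\dots,\widehat A_i(\lambda-1)$ are obtained by expanding the nested products; and the $f_{i,j}$ follow from the displayed inversion formula. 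I expect the only genuinely delicate step to be the periodicity analysis of the second paragraph — pinning down that periodicity sets in \emph{exactly} at $N_i$ with a period controlled by $m$ and the nesting depth, which is what both the stated validity range of~\eqref{Equ:Ai=BiForRootOfUnity} and the termination of the algorithm depend on; once that is in hand, the remainder is the standard discrete Fourier transform over a field containing a full set of $\lambda$-th roots of unity.
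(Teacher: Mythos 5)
Your proof is correct, and it takes a genuinely different, more elementary route than the paper's.

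The paper proceeds through its difference-ring machinery: it first synchronizes all lower bounds to $1$ (so that $A_i(n)=u_i\tilde A_i(n)$), models the nested root-of-unity products in a simple \aE-extension $\dField{\KK_m[\vartheta_{1,1}]\dots[\vartheta_{e,r_e}]}$, and then invokes Theorem~\ref{thm:diffRingHomBtnSimpleAExtAndSingleRExtAndEmbedding2RingOfSeq}, which supplies an \rE-extension $\dField{\KK_\lambda[\vartheta]}$ with $\lambda=\lcm\big(m,\per(\vartheta_{1,1}),\dots\big)$ together with a difference ring homomorphism $\varphi$ and a $\KK_\lambda$-embedding $\tau_\lambda$; the coefficients $f_{i,j}$ are then read off the images $\varphi(\vartheta_{\nu_i,\mu_i})$. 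You instead work directly at the level of sequences: a clean closure argument shows each $A_i$ is eventually periodic (with explicit control on the periodicity bound $N_i=\max_j\ell_{i,j}-1$ and on the period, $p_i\mid m^{r_i}$), you set $\lambda=\lcm(m,p_1,\dots,p_e)$, and you invert a Vandermonde/DFT matrix over $\KK_\lambda$. The underlying mathematics coincides --- the orthogonal idempotents ${\bs e}_k$ from Lemma~\ref{lem:orthogonalIdempotentElements} and Proposition~\ref{pro:directDecompositionOfSingleRPSExt} are exactly the discrete Fourier basis for $\ZZ/\lambda\ZZ$, and $\per(\vartheta_{i,j})$ in the paper plays the role of your $p_i$ --- but you have unfolded the abstraction and made the periodicity analysis and the inversion explicit. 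The paper's version buys, as a byproduct, the homomorphism $\varphi:\AA\to\KK_\lambda[\vartheta]$ and the embedding $\tau_\lambda$, which are reused in the proof of Theorem~\ref{thm:orderedMultipleChainPExtOverConstField2SingleRPiExtOverConstField}; your version buys a self-contained proof of the statement at hand with no difference-ring prerequisites. Two small remarks: for $j<N_i$ you need $\widehat A_i(j)$, which you obtain as $A_i(j+k\lambda)$ for $k$ large enough rather than by ``expanding the nested products'' at $n=j$ directly; and the tighter bound is actually $\max(N-1,\ell-1)$ rather than $\max(N,\ell-1)$, though the latter is conservative and still yields the claimed $N_i$, so nothing breaks.
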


\subsubsection{The period and algorithmic aspects}\label{subsub:algorithmicMachinery}

For the treatment of Theorem~\ref{thm:mainResultForNestedProductsOverRootsOfUnity} we will introduce the period of a {\dr} element introduced in~\cite{karr1981summation}. In particular, we will use the algorithms from~\cite{schneider2016difference} that enable one to calculate the period within nested \rE-extensions, resp.\ \aE-extensions.

\begin{definition}\label{defn:period}
    \normalfont Let $\dField{\AA}$ be a {\dr}. The \emph{period} of $\alpha\in\AA^{*}$ is defined by 
    \begin{equation*}
        \per(\alpha) = 
            \begin{cases}
                0 & \text{if } \nexists \, n > 0 \text{ s.t. } \s^{n}(\alpha)=\alpha \\
                \min\{ n > 0 \,|\,\s^{n}(\alpha)=\alpha\} & \text{otherwise}.
            \end{cases}
    \end{equation*}
\end{definition}

\noindent As it turns out, this task is connected to compute the order of a ring.

\begin{definition}\label{defn:order}
	\normalfont Let $\AA$ be a ring and let $\alpha\in\AA\setminus\{0\}$. Then the order of $\alpha$ is defined by
	\begin{equation*}
	\ord(\alpha) = 
	\begin{cases}
	0, & \text{if } \nexists \, n > 0 \text{ with } \alpha^{n}=1. \\
	\min\{ n > 0 \,|\,\alpha^{n}=1\}, & \text{otherwise}.
	\end{cases}
	\end{equation*}
\end{definition}

\noindent Namely, if we can solve the following problem (which is ~\ref{prob:ProblemGO} with $w=1$):

\begin{ProblemSpecBox}[\op{P}]{ 
		{\op{P} for $\alpha \in K^{*}$} 
	}\label{prob:ProblemO}
	{
		\emph{Given} a field $K$ and $\alpha\in K^{*}$. \emph{Find} $\ord(\alpha)$.
		
		\vspace*{-0.2cm}
	} 
\end{ProblemSpecBox}

\noindent then we can also compute the period by the following lemma.

\begin{lemma}\label{lem:periodComputability}
    Let $\dField{\EE}$ with $\EE=\KK_{m}[\vartheta_{1}]\dots[\vartheta_{e}]$ be a simple \aE-extension of $\dField{\KK_{m}}$. Then the following statements hold.
    \begin{enumerate}[(1)]
        \item \manuallabel{item1:periodComputability}{(1)}$\per(\vartheta_{i})>0$ for all $1 \le i \le e$.
        \item \manuallabel{item2:periodComputability}{(2)}If $\KK_{m}$ is computable and \ref{prob:ProblemO} is solvable, then $\per(\vartheta_{i})$ is computable for all $1 \le i \le e$.
    \end{enumerate}
\end{lemma}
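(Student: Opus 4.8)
The plan is to analyze the structure of a simple \aE-extension $\dField{\EE}$ with $\EE=\KK_{m}[\vartheta_{1}]\dots[\vartheta_{e}]$ by induction on $e$, exploiting that each $\s(\vartheta_i)=\alpha_i\,\vartheta_i$ with $\alpha_i\in\EE_{i-1}^*$ (where $\EE_{i-1}=\KK_m[\vartheta_1]\dots[\vartheta_{i-1}]$) and, by the simplicity assumption together with the \aE-property, $\alpha_i$ is in fact a root of unity (an element of the product group over a subgroup of roots of unity). For part~\ref{item1:periodComputability}, the key observation is that for an \aE-monomial $\vartheta_i$ of order $\lambda_i$ we have $\vartheta_i^{\lambda_i}=1$, and iterating $\s$ will multiply $\vartheta_i$ by products of $\alpha_i$ and its shifts, all of which are roots of unity lying in a finitely generated multiplicative group. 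First I would argue that $\per(\vartheta_i)>0$: since $\vartheta_i^{\lambda_i}=1$ and $\s^n(\vartheta_i)=\big(\prod_{j=0}^{n-1}\s^j(\alpha_i)\big)\vartheta_i$, it suffices to show that the "cocycle" $c_n:=\prod_{j=0}^{n-1}\s^j(\alpha_i)$ takes only finitely many values as $n$ ranges over $\NN$ and returns to $1$ periodically; then $\s^n(\vartheta_i)=\vartheta_i$ whenever $c_n=1$.

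The heart of the argument is therefore a sub-induction showing that every $\s$-orbit in a simple \aE-extension is finite, equivalently that every element of the product group $(\UU)^{\EE_{i-1}}_{\KK_m}$ (roots of unity times monomials in $\vartheta_1,\dots,\vartheta_{i-1}$) has positive period. I would set this up as follows: the base case $\EE_0=\KK_m$ is a difference ring with $\s|_{\KK_m}=\id$ (the constant field), so every element has period $1>0$. For the induction step, assume every element of $\EE_{i-1}^*$ has positive period. Take $\alpha_i\in\EE_{i-1}^*$ with $\s(\vartheta_i)=\alpha_i\vartheta_i$; by simplicity $\alpha_i=u\,\vartheta_1^{a_1}\cdots\vartheta_{i-1}^{a_{i-1}}$ with $u\in\UU\subseteq\KK_m^*$ a root of unity. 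Let $P=\mathrm{lcm}$ of $\ord(u)$, of the orders $\lambda_1,\dots,\lambda_{i-1}$ of the lower \aE-monomials, and of the periods $\per(\vartheta_1),\dots,\per(\vartheta_{i-1})$ (all positive by induction). One checks that after $P$ applications of $\s$, each $\vartheta_j$ with $j<i$ returns to itself and picks up only root-of-unity factors of order dividing $P$; a bounded-exponent computation shows $\s^{NP}(\vartheta_i)=\vartheta_i$ for a suitable multiple $N$, because the accumulated root-of-unity factor $c_{NP}$ lies in a fixed finite cyclic group and $c_{(k+1)P}=c_P\cdot\big(\text{fixed unit}\big)^{?}$—more precisely $\s^{P}$ acts on $\vartheta_i$ by multiplication by a fixed root of unity $w:=\prod_{j=0}^{P-1}\s^j(\alpha_i)$, so $\s^{NP}(\vartheta_i)=w^N\vartheta_i$ and choosing $N=\ord(w)$ gives $\s^{NP}(\vartheta_i)=\vartheta_i$. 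Hence $\per(\vartheta_i)\mid NP$ and in particular $\per(\vartheta_i)>0$.

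For part~\ref{item2:periodComputability}, computability follows by making the above quantities effective. Under the hypotheses that $\KK_m$ is computable and \ref{prob:ProblemO} is solvable, the orders $\lambda_j$ of the \aE-monomials are given data, the order $\ord(u)$ of any root of unity $u\in\KK_m^*$ is computable by solving \ref{prob:ProblemO}, and hence $P$ is computable. The automorphism $\s$ on $\EE$ is computable, so one can compute $w=\prod_{j=0}^{P-1}\s^j(\alpha_i)$, which by part~\ref{item1:periodComputability} (applied inductively) is a root of unity in $\KK_m^*$; its order $\ord(w)$ is computable, yielding the explicit bound $NP$ with $N=\ord(w)$. Then $\per(\vartheta_i)$ is the least positive divisor $d$ of $NP$ with $\s^d(\vartheta_i)=\vartheta_i$, and since there are finitely many divisors and $\s$ is computable, one tests each in turn. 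I would also remark that one may organize this so that $\per(\vartheta_i)$ is computed recursively from the $\per(\vartheta_j)$, $j<i$, fitting the inductive structure cleanly.

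The main obstacle I anticipate is the bookkeeping in the sub-induction: one must be careful that the product $\prod_{j=0}^{P-1}\s^j(\alpha_i)$ is genuinely a root of unity in $\KK_m^*$ (not merely in the bigger ring with zero-divisors), which requires that the factors $\vartheta_1^{a_1}\cdots\vartheta_{i-1}^{a_{i-1}}$ telescope away after $P$ steps—this is exactly where $P$ being a common multiple of both the orders $\lambda_j$ and the periods $\per(\vartheta_j)$ is used, ensuring $\s^P(\vartheta_j)=\vartheta_j$ and that the exponents return modulo $\lambda_j$. A secondary subtlety is that simplicity of the \aE-extension must be invoked to guarantee $\alpha_i$ lies in the root-of-unity product group in the first place; without it $\alpha_i$ could be an arbitrary unit and the periods need not be finite. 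Both points are conceptually straightforward once the right $P$ is fixed, so I expect the proof to be short.
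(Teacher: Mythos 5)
The paper disposes of both parts by citation---part~\ref{item1:periodComputability} is attributed to Proposition~5.5 of~\cite{schneider2016difference} and part~\ref{item2:periodComputability} to Corollary~5.6 of~\cite{schneider2016difference}---whereas you give a self-contained inductive argument. Your core argument is sound: with $P$ a common multiple of $\lambda_1,\dots,\lambda_{i-1}$ and $\per(\vartheta_1),\dots,\per(\vartheta_{i-1})$ (positive by induction), one has $\sigma^P=\id$ on $\EE_{i-1}=\KK_m[\vartheta_1]\dots[\vartheta_{i-1}]$, so the cocycle $w:=\prod_{j=0}^{P-1}\sigma^j(\alpha_i)$ is fixed by $\sigma^P$ and therefore $\sigma^{NP}(\vartheta_i)=w^N\vartheta_i$. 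Since each factor $\sigma^j(\alpha_i)$ satisfies $(\sigma^j(\alpha_i))^{\lambda_i}=\sigma^j(\alpha_i^{\lambda_i})=1$, one gets $w^{\lambda_i}=1$, so $\ord(w)\mid\lambda_i$ is finite, and taking $N=\ord(w)$ yields $\per(\vartheta_i)\mid NP$, which gives part~\ref{item1:periodComputability} and bounds the search for part~\ref{item2:periodComputability}.

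One imprecision should be fixed. You claim $w$ is ``a root of unity in $\KK_m^*$ (applied inductively)'' and lean on this to compute $\ord(w)$ via \ref{prob:ProblemO}. That membership is not established by what precedes it: $w$ is a monomial $v\,\vartheta_1^{b_1}\cdots\vartheta_{i-1}^{b_{i-1}}$ with $v\in\KK_m^*$ a root of unity, but the exponents $b_j$ arise from iterated cocycle contributions of the inner monomials and do not visibly vanish modulo $\lambda_j$ in general. This does not break the argument: one has $\ord(w)=\lcm\bigl(\ord(v),\lambda_1/\gcd(\lambda_1,b_1),\dots,\lambda_{i-1}/\gcd(\lambda_{i-1},b_{i-1})\bigr)$, and only $\ord(v)$ actually needs \ref{prob:ProblemO}; alternatively, since $\ord(w)\mid\lambda_i$, one can simply test each divisor of $\lambda_i$. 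A secondary remark: the root-of-unity nature of the content $u$ of $\alpha_i$ comes from the \aE-relation $\alpha_i^{\lambda_i}=1$ combined with the monomial $\KK_m$-basis of $\EE_{i-1}$, not from simplicity alone, which only places $\alpha_i$ in the monomial group over $\KK_m^*$; you should invoke both. With these adjustments your direct proof is complete and closely parallels the cited results, with the benefit of being self-contained.
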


\begin{proof}
    Statement~\ref{item1:periodComputability} follows by~\cite[Proposition 5.5]{schneider2016difference} (compare~\cite[Proposition 6.2.20]{ocansey2019difference}) and statement~\ref{item2:periodComputability} follows by~\cite[Corollary 5.6]{schneider2016difference} (compare~\cite[Corollary 6.2.21]{ocansey2019difference}).
\end{proof}

\begin{example}[Cont. Example~\ref{exa:monomialDepth2OrderedMultipleChainPiExtForAlgNums}]\label{exa:monomialDepth2SimpleAExt}
    \normalfont Consider the sub-difference ring $\dField{\KK[\vartheta_{1,1}][\vartheta_{1,2}]}$ of $\dField{\GG}$ with \eqref{domain:monomialDepth2OrderedMultipleChainPiExt} with the automorphism and the evaluation function defined in~\eqref{diffAutoEval:monomialDepth2SingleChainRExtBasedAtMinus1A} and~\eqref{diffAutoEval:monomialDepth2SingleChainRExtBasedAtMinus1B}. By construction it is a simple \aE-extension of the {\df} $\dField{\KK}$. Since $\per(c)=1$ for all $c\in\KK$, it follows that $\per(-1)=1$. Applying the algorithms from~\cite{schneider2016difference} (see the comments in the proof of Lemma~\ref{lem:periodComputability}) we compute for the depth-$1$ \aE-monomial $\vartheta_{1,1}$ the period $\per(\vartheta_{1,1})=2$, while for the depth-$2$ \aE-monomial $\vartheta_{1,2}$ we get $\per(\vartheta_{1,2})=4$. 
\end{example}

\subsubsection{Idempotent representation of single \texorpdfstring{\rpE}{RP}-extensions}

In order to prove Theorem~\ref{thm:mainResultForNestedProductsOverRootsOfUnity} we rely on the property that elements in basic \rpE-extensions can be expressed by idempotent elements. We start with the following basic facts inspired by~\cite{put1997galois}.

\begin{lemma}\label{lem:orthogonalIdempotentElements}
    Let $\FF$ be a field and let $\zeta$ be a primitive $\lambda$-th root of unity. Let $\FF[\vartheta]$ be a polynomial ring subject to the relation $\vartheta^{\lambda}=1$. Then the following statements hold.
    \begin{enumerate}[(1)]
        \item\manuallabel{item1:orthogonalIdempotentElements}{(1)} The elements $\Lst{{\bs e}}{0}{\lambda-1}\in\FF[\vartheta]$ with 
        \begin{equation}\label{eqn:idempotentElements}
        {\bs e}_{k} = {\bs e}_{k}(\vartheta) := \smashoperator{\prod_{\mathclap{\substack{{i}={0} \vspace*{1mm}\\  i\neq\lambda-1-k}}}^{{\lambda-1}}} {\frac{\vartheta-\zeta^{i}}{\zeta^{\lambda-1-k}-\zeta^{i}}}
        \end{equation}
        are idempotent and for all $0\le k < \lambda$ we have 
        \begin{equation}\label{eqn:shiftAndEvaluateIdempotentElements}
        {\bs e}_{k}(\zeta^{j}) = 
        \begin{cases}
        1 & \text{if }\, j = \lambda-1-k \\
        0 & \text{if }\, j\neq\lambda-1-k
        \end{cases} \qquad \text{ and } \qquad  
        {\bs e}_{k}(\zeta\,\vartheta)={\bs e}_{k+1\hspace*{-0.5em}\mod\lambda}.
        \end{equation}
        \item\manuallabel{item2:orthogonalIdempotentElements}{(2)} The idempotent elements defined in \eqref{eqn:idempotentElements} are pairwise orthogonal and ${\bs e}_{0}+\cdots+{\bs e}_{\lambda-1}=1$.	
    \end{enumerate}
\end{lemma}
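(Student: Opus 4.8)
\textbf{Proof plan for Lemma~\ref{lem:orthogonalIdempotentElements}.}

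The plan is to verify the two parts by direct computation in the ring $\FF[\vartheta]/(\vartheta^\lambda-1)$, exploiting the fact that the $\lambda$ powers $1,\zeta,\dots,\zeta^{\lambda-1}$ are distinct roots of $\vartheta^\lambda-1$ and hence that evaluation at these powers is a useful tool. First I would establish the evaluation formula in~\eqref{eqn:shiftAndEvaluateIdempotentElements}: writing ${\bs e}_k(\vartheta)=\prod_{i\ne\lambda-1-k}\frac{\vartheta-\zeta^i}{\zeta^{\lambda-1-k}-\zeta^i}$, substituting $\vartheta=\zeta^j$ gives a product whose numerator contains the factor $\zeta^j-\zeta^j=0$ precisely when $j\in\{0,\dots,\lambda-1\}\setminus\{\lambda-1-k\}$, so ${\bs e}_k(\zeta^j)=0$ for $j\ne\lambda-1-k$; and for $j=\lambda-1-k$ the product telescopes to $\prod_{i\ne\lambda-1-k}\frac{\zeta^{\lambda-1-k}-\zeta^i}{\zeta^{\lambda-1-k}-\zeta^i}=1$. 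This is the classical Lagrange-interpolation identity and is essentially a one-line check once the indexing is set up; the only care needed is that all the denominators $\zeta^{\lambda-1-k}-\zeta^i$ are nonzero, which holds because $\zeta$ is a \emph{primitive} $\lambda$-th root of unity, so the powers $\zeta^0,\dots,\zeta^{\lambda-1}$ are pairwise distinct.

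Next I would deduce that the ${\bs e}_k$ are idempotent and pairwise orthogonal. The key observation is that the map $\Phi:\FF[\vartheta]/(\vartheta^\lambda-1)\to\FF^\lambda$ sending $p(\vartheta)\mapsto(p(\zeta^0),\dots,p(\zeta^{\lambda-1}))$ is a ring homomorphism, and it is injective: a polynomial of degree $<\lambda$ vanishing at $\lambda$ distinct points is zero. (One may phrase this without invoking injectivity by noting that two polynomials of degree $<\lambda$ agreeing at $\lambda$ distinct points are equal, but injectivity of $\Phi$ is the cleanest formulation.) By the evaluation formula just proved, $\Phi({\bs e}_k)$ is the $k$-th standard basis vector $\mathbf{1}_{\{\lambda-1-k\}}$ of $\FF^\lambda$. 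Since the standard basis vectors satisfy $\mathbf{1}_a\cdot\mathbf{1}_b=\delta_{a,b}\mathbf{1}_a$ and $\sum_a\mathbf{1}_a=(1,\dots,1)=\Phi(1)$, applying $\Phi^{-1}$ (equivalently, using that $\Phi$ is an injective ring homomorphism) yields ${\bs e}_k^2={\bs e}_k$, ${\bs e}_k{\bs e}_{k'}=0$ for $k\ne k'$, and ${\bs e}_0+\cdots+{\bs e}_{\lambda-1}=1$, which is exactly part~\ref{item2:orthogonalIdempotentElements} and the idempotency assertion of part~\ref{item1:orthogonalIdempotentElements}.

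Finally I would prove the shift relation ${\bs e}_k(\zeta\,\vartheta)={\bs e}_{k+1\bmod\lambda}$. The slick route is again via $\Phi$: for any $p\in\FF[\vartheta]/(\vartheta^\lambda-1)$ one has $\Phi(p(\zeta\,\vartheta))_j=p(\zeta^{j+1})=\Phi(p)_{j+1\bmod\lambda}$, so substituting $\vartheta\mapsto\zeta\vartheta$ corresponds under $\Phi$ to a cyclic shift of coordinates. Applying this to $p={\bs e}_k$, whose image is $\mathbf{1}_{\{\lambda-1-k\}}$, the cyclic shift sends it to $\mathbf{1}_{\{\lambda-1-k-1\}}=\mathbf{1}_{\{\lambda-1-(k+1)\}}$, i.e.\ the image of ${\bs e}_{k+1\bmod\lambda}$; injectivity of $\Phi$ then gives ${\bs e}_k(\zeta\vartheta)={\bs e}_{k+1\bmod\lambda}$. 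One subtlety is to check that $\vartheta\mapsto\zeta\vartheta$ really is a well-defined ring endomorphism of $\FF[\vartheta]/(\vartheta^\lambda-1)$, which holds because $(\zeta\vartheta)^\lambda=\zeta^\lambda\vartheta^\lambda=1$. I do not anticipate a genuine obstacle here; the only place where one must be slightly attentive is bookkeeping the index reversal $k\leftrightarrow\lambda-1-k$ consistently across all three formulas, but this is routine.
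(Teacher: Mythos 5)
The paper states Lemma~\ref{lem:orthogonalIdempotentElements} without proof, treating it as a standard fact about the Lagrange/primitive idempotents of the quotient ring $\FF[\vartheta]/(\vartheta^\lambda-1)\cong\FF^\lambda$ (the construction is credited as being inspired by~\cite{put1997galois}), so there is no paper argument to compare against. Your proposal is correct and is the standard route: the evaluation map $\Phi\colon p(\vartheta)\mapsto(p(\zeta^0),\dots,p(\zeta^{\lambda-1}))$ is a well-defined ring homomorphism (the $\zeta^j$ are roots of $\vartheta^\lambda-1$), it is injective because $\zeta^0,\dots,\zeta^{\lambda-1}$ are $\lambda$ pairwise-distinct elements of $\FF$ (primitivity) and a degree-$<\lambda$ polynomial vanishing at all of them is zero, and the three claims — the evaluation formula, idempotency/orthogonality/partition of unity, and the cyclic shift under $\vartheta\mapsto\zeta\vartheta$ — are all transported via $\Phi$ from the trivial corresponding facts about standard basis vectors of $\FF^\lambda$. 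Your index bookkeeping, including the wrap-around case $k=\lambda-1$ and the well-definedness check $(\zeta\vartheta)^\lambda=1$ for the substitution endomorphism, is also correct, so I see no gaps.
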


\noindent In Proposition~\ref{pro:directDecompositionOfSingleRPSExt} we state that a simple \rpE-extension $\dField{\EE}$ of a {\df} $\dField{\FF}$ with $\EE=\FF[\vartheta]\langle t_{1} \rangle\dots\langle t_{e} \rangle$ can be decomposed in terms of these idempotent elements. For more details in the general setting of \rpisiE-extensions we refer to~\cite[Theorem 4.3]{schneider2017summation}; compare also~\cite[Corollary 1.16]{put1997galois}, and~\cite[Lemma 6.8]{hardouin2008differential}. 

\begin{proposition}\label{pro:directDecompositionOfSingleRPSExt}
    Let $\dField{\EE}$ with $\EE=\FF[\vartheta]\langle t_{1} \rangle\dots\langle t_{e} \rangle$ be an \rpE-extension of a {\df} $\dField{\FF}$ where $\vartheta$ is an \rE-monomial of order $\lambda$ with $\zeta=\frac{\s(\vartheta)}{\vartheta}\in\const\dField{\FF}^{*}$ and the $t_i$ are \pE-monomials. Let $\Lst{{\bs e}}{0}{\lambda-1}$ be the idempotent, pairwise orthogonal elements in \eqref{eqn:idempotentElements} that sum up to one. Then the following statement holds:
    \begin{enumerate}[(1)]
        \item\manuallabel{item1:directDecompositionOfSingleRPSExt}{(1)} The ring $\EE$ can be written as the direct sum 
        \begin{equation}\label{eqn:directSumOfRPSExt}
        \EE = {\bs e}_{0}\EE \oplus \cdots \oplus {\bs e}_{\lambda-1}\EE
        \end{equation}
        where ${\bs e}_{k}\EE$ forms for all $0\le k < \lambda$ a ring with ${\bs e}_{k}$ being the multiplicative identity element.
        \item\manuallabel{item2:directDecompositionOfSingleRPSExt}{(2)} We have that ${\bs e}_{k}\EE={\bs e}_{k}\tilde{\EE}$ for $0\le k < \lambda$ where $\tilde{\EE}=\FF\langle t_{1} \rangle\dots\langle t_{e} \rangle$.
    \end{enumerate}
\end{proposition}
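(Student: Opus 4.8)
The plan is to exploit the idempotent elements ${\bs e}_0,\dots,{\bs e}_{\lambda-1}$ from Lemma~\ref{lem:orthogonalIdempotentElements}, which are pairwise orthogonal and sum to one. First I would establish statement~\ref{item1:directDecompositionOfSingleRPSExt}. Since $1={\bs e}_0+\cdots+{\bs e}_{\lambda-1}$, every $f\in\EE$ decomposes as $f=\sum_k {\bs e}_k f$, so $\EE={\bs e}_0\EE+\cdots+{\bs e}_{\lambda-1}\EE$. To see the sum is direct, suppose $\sum_k {\bs e}_k f_k=0$; multiplying by ${\bs e}_j$ and using ${\bs e}_j{\bs e}_k=\delta_{jk}{\bs e}_j$ gives ${\bs e}_j f_j=0$ for each $j$, so each summand vanishes. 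That ${\bs e}_k\EE$ is a ring with multiplicative identity ${\bs e}_k$ is immediate: it is closed under the ring operations of $\EE$ (it is an ideal, being of the form ${\bs e}_k\EE$ with ${\bs e}_k$ central idempotent), and for $g={\bs e}_k h\in{\bs e}_k\EE$ we have ${\bs e}_k g={\bs e}_k^2 h={\bs e}_k h=g$.

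Next I would turn to statement~\ref{item2:directDecompositionOfSingleRPSExt}, the identification ${\bs e}_k\EE={\bs e}_k\tilde{\EE}$ where $\tilde{\EE}=\FF\langle t_1\rangle\dots\langle t_e\rangle$. The inclusion ${\bs e}_k\tilde{\EE}\subseteq{\bs e}_k\EE$ is clear since $\tilde{\EE}\subseteq\EE$. For the reverse inclusion, note that as an $\FF\langle t_1\rangle\dots\langle t_e\rangle$-module, $\EE=\bigoplus_{i=0}^{\lambda-1}\vartheta^i\,\tilde{\EE}$ because $\vartheta$ satisfies $\vartheta^\lambda=1$ and $\vartheta$ is transcendental over $\FF$ modulo that relation and commutes with the $t_j$. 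So a general element of $\EE$ has the form $f=\sum_{i=0}^{\lambda-1}\vartheta^i a_i$ with $a_i\in\tilde{\EE}$. It therefore suffices to show ${\bs e}_k\vartheta^i\in{\bs e}_k\tilde{\EE}$ for every $i$; equivalently, that ${\bs e}_k\vartheta={\bs e}_k\,c_k$ for some $c_k\in\FF$, whence ${\bs e}_k\vartheta^i={\bs e}_k c_k^i$ by induction. To prove ${\bs e}_k\vartheta={\bs e}_k\zeta^{\lambda-1-k}$ I would use the defining factorization~\eqref{eqn:idempotentElements}: the polynomial $(\vartheta-\zeta^{\lambda-1-k}){\bs e}_k(\vartheta)$, expanded in $\FF[\vartheta]$, is a scalar multiple of $\prod_{i=0}^{\lambda-1}(\vartheta-\zeta^i)=\vartheta^\lambda-1=0$ in $\EE$; hence $(\vartheta-\zeta^{\lambda-1-k}){\bs e}_k=0$, i.e.\ $\vartheta\,{\bs e}_k=\zeta^{\lambda-1-k}{\bs e}_k$. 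This gives ${\bs e}_k\EE\subseteq{\bs e}_k\tilde{\EE}$, completing the equality.

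The conceptual content is light; the main point to be careful about is the module decomposition $\EE=\bigoplus_{i=0}^{\lambda-1}\vartheta^i\tilde{\EE}$, which rests on the fact that the \pE-monomials $t_1,\dots,t_e$ are adjoined as Laurent polynomial variables transcendental over $\FF[\vartheta]$ (so that $\FF[\vartheta]$ and $\FF$ sit inside $\EE$ exactly as expected, and no unexpected collapse occurs). I expect the identity $(\vartheta-\zeta^{\lambda-1-k}){\bs e}_k=0$ to be the one computational step worth spelling out, but it is a direct consequence of~\eqref{eqn:idempotentElements} together with $\vartheta^\lambda=1$; everything else is bookkeeping with orthogonal idempotents. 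No appeal to the difference structure (the automorphism $\s$) is needed for this proposition — the relations ${\bs e}_k(\zeta\vartheta)={\bs e}_{k+1\bmod\lambda}$ from Lemma~\ref{lem:orthogonalIdempotentElements} will only be used later, when analysing how $\s$ permutes the summands.
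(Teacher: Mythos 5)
Your proof is correct and complete. A point worth noting: the paper does not actually supply a proof of this proposition — it delegates to \cite[Theorem~4.3]{schneider2017summation}, \cite[Corollary~1.16]{put1997galois}, and \cite[Lemma~6.8]{hardouin2008differential}, all of which work in a more general setting. So there is no in-text argument to compare against; your self-contained ring-theoretic argument is a reasonable and correct way to fill the gap. The chain of steps — the orthogonal idempotent decomposition $\EE=\bigoplus_k{\bs e}_k\EE$, the free $\tilde{\EE}$-module decomposition $\EE=\bigoplus_{i=0}^{\lambda-1}\vartheta^i\tilde{\EE}$ (which relies on $\{1,\vartheta,\dots,\vartheta^{\lambda-1}\}$ being an $\FF$-basis of $\FF[\vartheta]\simeq\FF[y]/(y^\lambda-1)$ and the $t_j$ being Laurent-transcendental), and the eigenvector relation $\vartheta\,{\bs e}_k=\zeta^{\lambda-1-k}{\bs e}_k$ obtained by clearing the omitted factor in \eqref{eqn:idempotentElements} and using $\prod_{i=0}^{\lambda-1}(\vartheta-\zeta^i)=\vartheta^\lambda-1=0$ — is sound. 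The last step does implicitly use that $\zeta$ is a \emph{primitive} $\lambda$-th root of unity (so that the $\zeta^i$ exhaust all $\lambda$-th roots); that hypothesis is built into Lemma~\ref{lem:orthogonalIdempotentElements} and is guaranteed for an \rE-monomial by part~(2) of Theorem~\ref{thm:rpiSExtCriterion}, so it would be worth one sentence to make the dependence explicit. Your closing observation that the difference structure $\s$ is irrelevant here is also accurate: the proposition is purely ring-theoretic, and the action of $\s$ on the idempotents is only needed downstream.
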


We are now ready to obtain the following key result; for the corresponding result for nested \aE-extensions of monomial depth $1$ we refer to~\cite[Lemma 2.22]{schneider2017summation}.

\begin{theorem}\label{thm:diffRingHomBtnSimpleAExtAndSingleRExtAndEmbedding2RingOfSeq}
    Let $m\in\ZZ_{\geq2}$ and take a primitive $m$-th root of unity $\zeta_{m}\in\KK_{m}^{*}$.
    Let $\dField{\KK_{m}[\vartheta_{1}]\dots[\vartheta_{e}]}$ be a simple \aE-extension of $\dField{\KK_{m}}$ with  $\s(\vartheta_{i})=\alpha_{i}\,\vartheta_{i}$ for $1\le i\le e$ where $\alpha_i=\zeta_{m}^{u_{i}}\,\vartheta_{1}^{z_{i,1}}\cdots\vartheta_{i-1}^{z_{i,i-1}}$ with $u_i,z_{i,j}\in\NN$.
    Furthermore, let $\ev_{\hspace*{-0.15em}m}:\KK_{m}[\vartheta_{1}]\dots[\vartheta_{e}]\times\NN\to\KK_{m}$ be the evaluation function defined by 
    \begin{equation}\label{eqn:evalOfNestedAMonomials}
    \ev_{\hspace*{-0.15em}m}(\vartheta_{i}, n) = \myProduct{j}{1}{n}{\ev_{\hspace*{-0.15em}m}(\alpha_{i},j-1)},
    \end{equation}
    and let $\tau_{m}:\KK_{m}[\vartheta_{1}]\dots[\vartheta_{e}]\to\ringOfEquivSeqs$ be the $\KK_m$-homomorphism given by $\tau_{m}(f)=\funcSeqA{\ev_{\hspace*{-0.15em}m}(f,\,n)}{n}$ Then the following statements hold.
    \begin{enumerate}[(1)]
        \item \manuallabel{item1:diffRingHomBtnSimpleAExtAndSingleRExtAndEmbedding2RingOfSeq}{(1)} Define $\lambda:=\lcm(m,\per(\vartheta_{1}),\dots,\per(\vartheta_{e}))>1$. Then there is an \rE-extension $\dField{\KK_{\lambda}[\vartheta]}$ of $\dField{\KK_{\lambda}}$ of order\footnote{$\KK_{\lambda}$ is a finite algebraic extension of $\KK_{m}$ and $\zeta\in\KK_m$ is a primitive $\lambda$-th root of unity.} $\lambda$ with $\zeta=\frac{\s(\vartheta)}{\vartheta}\in\KK_{\lambda}^{*}$ such that 
        \begin{equation}\label{diffHom:aExtToSingleRExt}
        \varphi:\KK_{m}[\vartheta_{1}]\dots[\vartheta_{e}]\to\KK_{\lambda}[\vartheta]= {\bs e}_{0}\KK_{\lambda}\oplus\cdots\oplus{\bs e}_{\lambda-1}\KK_{\lambda}
        \end{equation}
        defined with
        \begin{equation}\label{eqn:sinpleAExt2SingleRExtHom}
        \varphi(f)=f_{0}{\bs e}_{0}+\cdots+f_{\lambda-1}{\bs e}_{\lambda-1}
        \end{equation}
        where $f_{i}=\ev_{\hspace*{-0.15em}m}(f,\lambda -1 -i)\in\KK_{m}\subseteq\KK_{\lambda}$ for $0 \le i < \lambda$ is a {\dr} homomorphism; here the ${\bs e}_{k}$ are the idempotent orthogonal elements defined in~\eqref{eqn:idempotentElements}. In particular, $\varphi|_{\KK_{m}}=\id_{\KK_{m}}$.
        \item \manuallabel{item2:diffRingHomBtnSimpleAExtAndSingleRExtAndEmbedding2RingOfSeq}{(2)} Take the evaluation function $\ev_{\hspace*{-0.12em}\lambda}:\KK_{\lambda}[\vartheta]\times\NN\to\KK_{\lambda}$ defined by $\ev_{\hspace*{-0.12em}\lambda}|_{\KK_{\lambda}}=\id$ and $\ev_{\hspace*{-0.12em}\lambda}(\vartheta,\,n)=\zeta^{n}$ and consider the $\KK_{\lambda}$-homomorphism  $\tau_{\lambda}:\KK_{\lambda}[\vartheta]\to\ringOfEquivSeqs[\KK_{\lambda}]$ defined by $\tau_{\lambda}(f)=\funcSeqA{\ev_{\hspace*{-0.12em}\lambda}(f,\,n)}{n}$. Then  for the pairwise orthogonal idempotent elements ${\bs e}_{k}$ defined in~\eqref{eqn:idempotentElements} with $0\le k < \lambda$, we have that 
        \begin{equation}\label{eqn:evalOfIdempotentElements}
        \ev_{\hspace*{-0.12em}\lambda}({\bs e}_{k}, n) = 
        \begin{cases}
        1 & \text{if } \lambda \,|\, n + k + 1, \\
        0 & \text{if } \lambda\nmid n + k + 1.
        \end{cases}
        \end{equation}
        \item \manuallabel{item3:diffRingHomBtnSimpleAExtAndSingleRExtAndEmbedding2RingOfSeq}{(3)} The $\KK_{\lambda}$-homomorphism $\tau_{\lambda}:\KK_{\lambda}[\vartheta]\to\ringOfEquivSeqs[\KK_{\lambda}]$ with the evaluation function defined in part~\ref{item2:diffRingHomBtnSimpleAExtAndSingleRExtAndEmbedding2RingOfSeq} is injective.
        \item \manuallabel{item4:diffRingHomBtnSimpleAExtAndSingleRExtAndEmbedding2RingOfSeq}{(4)}The diagram 
        \begin{equation}\label{commDia:simpleAExtToSingleRExt}
        \begin{gathered}
        \begin{tikzcd}
        \KK_{m}[\vartheta_{1}]\dots[\vartheta_{e}] \arrow[r, "\tau_{m}"] \arrow[d, "\varphi", swap] & \ringOfEquivSeqs[\KK_{m}] \arrow[d, "\varphi'"] \\
        \KK_{\lambda}[\vartheta] \simeq  {\bs e}_{0}\KK_{\lambda}\oplus\cdots\oplus{\bs e}_{\lambda-1}\KK_{\lambda} \arrow[r, "\tau_{\lambda}"]&\ringOfEquivSeqs[\KK_{\lambda}]
        \end{tikzcd}
        \end{gathered}
        \end{equation}
        commutes where $\varphi':\ringOfEquivSeqs[\KK_{m}]\to\ringOfEquivSeqs[\KK_{\lambda}]$ is the injective {\dr} homomorphism defined by $\varphi'(a)=a$.
    \end{enumerate}	
    If $\KK_{m}$ is computable and~\ref{prob:ProblemO} is solvable in $\KK_{m}$, then the above constructions are computable.
\end{theorem}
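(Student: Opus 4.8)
The plan is to first pin down how the evaluation map $\ev_m$ behaves on a simple \aE-extension built over roots of unity --- this is where the argument departs from the generic "evaluation function'' yoga --- and then to deduce the four statements using the idempotent calculus of Lemma~\ref{lem:orthogonalIdempotentElements} and Proposition~\ref{pro:directDecompositionOfSingleRPSExt}.

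\textbf{Key preliminary.} Since each $\alpha_i=\zeta_m^{u_i}\vartheta_1^{z_{i,1}}\cdots\vartheta_{i-1}^{z_{i,i-1}}$ is a product of roots of unity, an induction on $i$ (using $\ev_m(\vartheta_i,n)=\prod_{j=1}^n\ev_m(\alpha_i,j-1)$ and that $\UU_m$ is a group) shows $\ev_m(\vartheta_i,n)\in\UU_m\subseteq\KK_m$ for all $n\ge0$. Consequently, for each fixed $n$ the map $\ev_m(\cdot,n)\colon\KK_m[\vartheta_1]\dots[\vartheta_e]\to\KK_m$ is a genuine ring homomorphism (not merely an eventual one; it is $\KK_m$-linear on the monomial basis and respects the relations $\vartheta_i^{\ord(\vartheta_i)}=1$), and the defining recurrence $\ev_m(\vartheta_i,n+1)=\ev_m(\alpha_i,n)\,\ev_m(\vartheta_i,n)$ with $\ev_m(\vartheta_i,0)=1$ forces the exact shift identity $\ev_m(\s^j(f),n)=\ev_m(f,n+j)$ for all $f$ and all $n,j\ge0$. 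Applying this with $j=\per(\vartheta_i)$ and $\s^{\per(\vartheta_i)}(\vartheta_i)=\vartheta_i$ gives $\ev_m(\vartheta_i,n+\per(\vartheta_i))=\ev_m(\vartheta_i,n)$; hence, with $\lambda:=\lcm(m,\per(\vartheta_1),\dots,\per(\vartheta_e))>1$, the function $n\mapsto\ev_m(f,n)$ is honestly periodic with period dividing $\lambda$, so in particular $\ev_m(f,\lambda)=\ev_m(f,0)$.

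\textbf{Part~\ref{item1:diffRingHomBtnSimpleAExtAndSingleRExtAndEmbedding2RingOfSeq}.} Take a primitive $\lambda$-th root of unity $\zeta\in\KK_\lambda$ ($\KK_\lambda$ is a finite algebraic extension of $\KK_m$ since $m\mid\lambda$). Then $\dField{\KK_\lambda[\vartheta]}$ with $\s(\vartheta)=\zeta\vartheta$, $\vartheta^\lambda=1$, is an \rE-extension of $\dField{\KK_\lambda}$ of order $\lambda$ by Lemma~\ref{lem:aExtOverPiSigmaExtIsRExt} ($\dField{\KK_\lambda}$ is a \pisiE-field), and by Proposition~\ref{pro:directDecompositionOfSingleRPSExt} (no \pE-monomials present) we get $\KK_\lambda[\vartheta]=\mathbf{e}_0\KK_\lambda\oplus\cdots\oplus\mathbf{e}_{\lambda-1}\KK_\lambda$. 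Thus $\varphi(f)=\sum_i f_i\mathbf{e}_i$ with $f_i=\ev_m(f,\lambda-1-i)\in\KK_m\subseteq\KK_\lambda$ is well defined, $\varphi|_{\KK_m}=\id$, and it is a ring homomorphism: $\varphi(1)=\sum_i\mathbf{e}_i=1$ and the orthogonality $\mathbf{e}_i\mathbf{e}_j=\delta_{ij}\mathbf{e}_i$ (Lemma~\ref{lem:orthogonalIdempotentElements}) together with the multiplicativity of $\ev_m(\cdot,\lambda-1-i)$ give $\varphi(fg)=\varphi(f)\varphi(g)$. For $\s$-equivariance, $\s(\mathbf{e}_k)=\mathbf{e}_k(\zeta\vartheta)=\mathbf{e}_{k+1\bmod\lambda}$ by~\eqref{eqn:shiftAndEvaluateIdempotentElements}, so $\s(\varphi(f))=\sum_k f_{k-1\bmod\lambda}\mathbf{e}_k$ while $\varphi(\s f)=\sum_k\ev_m(f,\lambda-k)\mathbf{e}_k$; these agree termwise for $k\ge1$, and for $k=0$ they read $\ev_m(f,0)\mathbf{e}_0$ versus $\ev_m(f,\lambda)\mathbf{e}_0$, equal by the periodicity above. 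Hence $\varphi$ is a difference ring homomorphism.

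\textbf{Parts~\ref{item2:diffRingHomBtnSimpleAExtAndSingleRExtAndEmbedding2RingOfSeq}--\ref{item4:diffRingHomBtnSimpleAExtAndSingleRExtAndEmbedding2RingOfSeq} and effectivity.} For~\ref{item2:diffRingHomBtnSimpleAExtAndSingleRExtAndEmbedding2RingOfSeq}, $\ev_\lambda(\cdot,n)$ is again a ring homomorphism with $\ev_\lambda(\vartheta,n)=\zeta^n$, so $\ev_\lambda(\mathbf{e}_k,n)=\mathbf{e}_k(\zeta^n)$, which by~\eqref{eqn:shiftAndEvaluateIdempotentElements} equals $1$ iff $n\equiv\lambda-1-k\pmod\lambda$, i.e.\ iff $\lambda\mid n+k+1$, and $0$ otherwise. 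For~\ref{item3:diffRingHomBtnSimpleAExtAndSingleRExtAndEmbedding2RingOfSeq}, $\dField{\KK_\lambda[\vartheta]}$ is a basic \rE-extension (hence a basic \rpiE-extension) of the difference field $\dField{\KK_\lambda}$ with constant field $\KK_\lambda$, and $\tau_\lambda$ is a $\KK_\lambda$-homomorphism, so $\tau_\lambda$ is injective by Theorem~\ref{Thm:injectiveHom}. For~\ref{item4:diffRingHomBtnSimpleAExtAndSingleRExtAndEmbedding2RingOfSeq}, fix $f$ and $n$: by~\ref{item2:diffRingHomBtnSimpleAExtAndSingleRExtAndEmbedding2RingOfSeq} there is a unique index $i_n$ with $\lambda\mid n+i_n+1$, namely the one with $\lambda-1-i_n\equiv n\pmod\lambda$, whence the $n$-th entry of $\tau_\lambda(\varphi(f))$ is $\sum_i f_i\,\ev_\lambda(\mathbf{e}_i,n)=f_{i_n}=\ev_m(f,\lambda-1-i_n)=\ev_m(f,n)$ (periodicity once more), which equals the $n$-th entry of $\varphi'(\tau_m(f))$; since $\varphi'$ is the injective inclusion of sequences induced by $\KK_m\hookrightarrow\KK_\lambda$, the diagram commutes. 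All constructions are computable when $\KK_m$ is computable and~\ref{prob:ProblemO} is solvable: $\per(\vartheta_i)$, hence $\lambda$, are computable by Lemma~\ref{lem:periodComputability}, and then $\KK_\lambda$, the idempotents $\mathbf{e}_k$, and the values $\ev_m(f,\lambda-1-i)$ defining $\varphi$ are all explicit. I expect the only genuinely delicate point to be the Key preliminary --- upgrading the "eventual'' ring-homomorphism and shift properties of a general evaluation function to exact, everywhere-valid ones and deducing true $\lambda$-periodicity; once that is in place, the $k=0$ edge case in Part~\ref{item1:diffRingHomBtnSimpleAExtAndSingleRExtAndEmbedding2RingOfSeq} and the index matching in Part~\ref{item4:diffRingHomBtnSimpleAExtAndSingleRExtAndEmbedding2RingOfSeq} are the remaining items requiring attention, and everything else is the idempotent bookkeeping of Lemma~\ref{lem:orthogonalIdempotentElements} plus an appeal to Theorem~\ref{Thm:injectiveHom}.
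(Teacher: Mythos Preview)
Your proof is correct and follows essentially the same route as the paper: build the \rE-extension via Lemma~\ref{lem:aExtOverPiSigmaExtIsRExt}, use the idempotent decomposition of Proposition~\ref{pro:directDecompositionOfSingleRPSExt} and the shift rule $\s(\mathbf{e}_k)=\mathbf{e}_{k+1\bmod\lambda}$ from Lemma~\ref{lem:orthogonalIdempotentElements}, appeal to Theorem~\ref{Thm:injectiveHom} for injectivity, and match sequence entries for commutativity.

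The one organizational difference is worth a remark. You isolate a ``Key preliminary'' establishing that $\ev_m(\cdot,n)$ is an honest ring homomorphism for every $n$ and that $n\mapsto\ev_m(f,n)$ is exactly $\lambda$-periodic; with this in hand you verify $\s$-equivariance of $\varphi$ for \emph{all} $f$ simultaneously via a single index-shift computation. The paper instead argues $\s$-equivariance by induction on the number $e$ of \aE-monomials, checking only the generators $\vartheta_e$ and deducing the periodicity fact $\ev_m(\vartheta_e,\lambda)=\ev_m(\vartheta_e,0)$ inline from $\s^\lambda(\vartheta_e)=\vartheta_e$. Your version is a little more streamlined and makes transparent why the $k=0$ edge case in Part~\ref{item1:diffRingHomBtnSimpleAExtAndSingleRExtAndEmbedding2RingOfSeq} and the index-matching in Part~\ref{item4:diffRingHomBtnSimpleAExtAndSingleRExtAndEmbedding2RingOfSeq} both reduce to the same periodicity statement; the paper's inductive argument is closer to the structure of the extension and would generalize more mechanically if the $\alpha_i$ were more complicated. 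Both are valid and rest on the same ideas.
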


\begin{proof}
    \begin{enumerate}[(1)]
        \item Since $\zeta_{m}^{u_{i}}\in\KK_{m}^{*}$, $\per(\zeta_{m}^{u_{i}})=1>0$ for all $1\le u_{i}\le e$. In addition, it follows by statement~\ref{item1:periodComputability} of Lemma~\ref{lem:periodComputability} that $\per(\vartheta_{i})>0$ for all $1\le i \le e$. Define $\lambda:=\lcm(m,\per(\vartheta_{1}),\dots,\per(\vartheta_{e}))>1$. Note that $m\mid\lambda$, i.e., $\KK_{\lambda}$ is  an algebraic field extension of $\KK_{m}$.     
        Finally, take  
         $\zeta:=\ee^{\frac{2\,\pi\,\ii}{\lambda}} = (-1)^{\frac{2}{\lambda}}\in\KK_{\lambda}^{*}$ and construct the \aE-extension $\dField{\KK_{\lambda}[\vartheta]}$ of $\dField{\KK_{\lambda}}$ with $\s(\vartheta)=\zeta\,\vartheta$. By Lemma~\ref{lem:aExtOverPiSigmaExtIsRExt} it follows that $\dField{\KK_{\lambda}[\vartheta]}$ is an \rE-extension of $\dField{\KK_{\lambda}}$. By Proposition~\ref{pro:directDecompositionOfSingleRPSExt} we have that 
         $\KK_{\lambda}[\vartheta]={\bs e}_{0}\KK_{\lambda}\oplus\cdots\oplus{\bs e}_{\lambda-1}\KK_{\lambda}$
        where the ${\bs e}_{k}$ for $0\le k < \lambda$ are the orthogonal idempotent elements defined in~\eqref{eqn:idempotentElements}. Now consider the map~\eqref{diffHom:aExtToSingleRExt} defined by~\eqref{eqn:sinpleAExt2SingleRExtHom}. We will now show that $\varphi$ is a ring homomorphism. Observe that for any $c\in\KK_{m}$, $\ev_{\hspace*{-0.15em}m}(c,i)=c$ for all $i\in\NN$ and with statement~\ref{item2:orthogonalIdempotentElements} of Lemma~\ref{lem:orthogonalIdempotentElements} we have that  
        \begin{equation*}
            \varphi(c) = c\,{\bs e}_{0}+\dots+c\,{\bs e}_{\lambda-1} = c\,({\bs e}_{0} +\cdots+{\bs e}_{\lambda-1}) = c.
        \end{equation*}
        Further, let $f,g\in\KK_{m}[\vartheta_{1}]\dots[\vartheta_{e}]$ with $f:=a\,\ProdLst{\vartheta}{1}{e}{v}$ and $g:=b\,\ProdLst{\vartheta}{1}{e}{z}$ where $a,b\in\KK_{m}$ and $v_{i},\,z_{i}\in\NN$ for $1 \le i \le e$. Define $f_{k}:=\ev_{\hspace*{-0.15em}m}(f, \lambda-1-k)$ and $g_{k}:=\ev_{\hspace*{-0.15em}m}(g, \lambda-1-k)$ for $0\le k < \lambda$. Then, 
        \begin{align*}
            \varphi(f+g) &= \ev_{\hspace*{-0.15em}m}(f+g,\lambda-1)\,{\bs e}_{0}+\cdots+\ev_{\hspace*{-0.15em}m}(f+g,0)\,{\bs e}_{\lambda-1}\\
            &=\big(\ev_{\hspace*{-0.15em}m}(f,\lambda-1)+\ev_{\hspace*{-0.15em}m}(g,\lambda-1)\big)\,{\bs e}_{0}+\cdots+\big(\ev_{\hspace*{-0.15em}m}(f,0)+\ev_{\hspace*{-0.15em}m}(g,0)\big)\,{\bs e}_{\lambda-1}\\
            &=\big(\ev_{\hspace*{-0.15em}m}(f,\lambda-1)\,{\bs e}_{0}+\cdots+\ev_{\hspace*{-0.15em}m}(f,0)\,{\bs e}_{\lambda-1}\big) + \big(\ev_{\hspace*{-0.15em}m}(g,\lambda-1)\,{\bs e}_{0}+\cdots+\ev_{\hspace*{-0.15em}m}(g,0)\,{\bs e}_{\lambda-1}\big)\\
            &= \big(f_{0}\,{\bs e}_{0} +\cdots+f_{\lambda-1}\,{\bs e}_{\lambda-1}\big)+\big(g_{0}\,{\bs e}_{0} +\cdots+g_{\lambda-1}\,{\bs e}_{\lambda-1}\big)\\
            &= \varphi(f) + \varphi(g).
        \end{align*}
        Similarly, 
        \begin{align*}
            \varphi(f\,g) &= \ev_{\hspace*{-0.15em}m}(f\,g,\lambda-1)\,{\bs e}_{0}+\cdots+\ev_{\hspace*{-0.15em}m}(f\,g,0)\,{\bs e}_{\lambda-1}\\
            &=\big(\ev_{\hspace*{-0.15em}m}(f,\lambda-1)\,\ev_{\hspace*{-0.15em}m}(g,\lambda-1)\big)\,{\bs e}_{0}+\cdots+\big(\ev_{\hspace*{-0.15em}m}(f,0)\,\ev_{\hspace*{-0.15em}m}(g,0)\big)\,{\bs e}_{\lambda-1}\\
            &=f_{0}\,g_{0}\,{\bs e}_{0}+f_{1}\,g_{1}\,{\bs e}_{1}+\cdots+f_{\lambda-1}\,g_{\lambda-1}\,{\bs e}_{\lambda-1}\\
            &= \big(f_{0}\,{\bs e}_{0} +\cdots+f_{\lambda-1}\,{\bs e}_{\lambda-1}\big) \, \big(g_{0}\,{\bs e}_{0} +\cdots+g_{\lambda-1}\,{\bs e}_{\lambda-1}\big)\\
            &= \varphi(f)\,\varphi(g).
        \end{align*}
        The first equality follows since the ${\bs e}_{i}$ are idempotent. Thus, $\varphi$ is a ring homomorphism. Next we show by induction on the number of \aE-monomials, $e\in\NN$, that $\varphi$ is a {\dr} homomorphism. For the base case, i.e., $e=0$, there are no \aE-monomials. Since
            $\s(\varphi(c)) = \s(c) = c = \varphi(c) = \varphi(\s(c))$
        for all $c\in\KK_{m}$, $\varphi$ is a {\dr}-homomorphism. Now assume that the statement holds for all \aE-monomials $\vartheta_{i}$ with $0\le i < e$, and consider an \aE-monomial $\vartheta_{e}$ with $\s(\vartheta_{e})=\tilde{\alpha}\,\vartheta_{e}$ where $\tilde{\alpha}\in(\KK_{m}^{*})_{\KK_{m}}^{\KK_{m}[\vartheta_{1}]\cdots[\vartheta_{e-1}]}$. Then we will show that  
        \begin{equation}\label{eqn:diffRingHom4AMonomial}
            \s(\varphi(\vartheta_{e})) = \varphi(\s(\vartheta_{e}))
        \end{equation}
        holds. For the left hand side of \eqref{eqn:diffRingHom4AMonomial}, we have that 
        $\varphi(\vartheta_{e}) = \gamma_{0}\,{\bs e}_{0} + \cdots + \gamma_{\lambda-1}\,{\bs e}_{\lambda-1}$
        where $\gamma_{i}=\ev_{\hspace*{-0.15em}m}(\vartheta_{e},\lambda-1-i)\in\KK_{m}$ for $0\le i < \lambda$ are $\lambda$-th roots of unity. Thus, 
        \[
            \s(\varphi(\vartheta_{e})) = \s(\gamma_{0})\,\s({\bs e}_{0})+\cdots+\s(\gamma_{\lambda-1})\,\s({\bs e}_{\lambda-1}).
        \]
        By~\eqref{eqn:shiftAndEvaluateIdempotentElements} we have that $\s({\bs e}_{\lambda-1})={\bs e}_{0}$ and $\s({\bs e}_{i})={\bs e}_{i+1}$ for $0\le i < \lambda-1$. In addition, for $1\leq i<\lambda$ we get $\s(\gamma_i)=\ev_{\hspace*{-0.15em}m}(\vartheta_e,\lambda-i)=\gamma_{i-1}$. For $i=\lambda$ observe that $\per(\vartheta_{e})\mid\lambda$ by definition and thus $\sigma^{\lambda}(\vartheta_e)=\vartheta_e$. 
        Consequently $\sigma(\gamma_0)=\ev_m(\vartheta_e,\lambda)=\ev_m(\s^{\lambda}(\vartheta_e),0)=\ev_m(\vartheta_e,0)=\gamma_{\lambda-1}$.        
        Therefore, 
        \begin{equation}\label{eqn:varphiOnAMonomial}
            \s(\varphi(\vartheta_{e})) = \tilde{\gamma}_{0}\,{\bs e}_{0} + \cdots + \tilde{\gamma}_{\lambda-1}\,{\bs e}_{\lambda-1}
        \end{equation}
        where $\tilde{\gamma}_{0}=\gamma_{\lambda-1}$ and $\tilde{\gamma}_{i}=\gamma_{i-1}$ for $1\le i \le \lambda-1$. 
         
        \noindent For the right hand side of~\eqref{eqn:diffRingHom4AMonomial}, we have         
        \begin{align}
            \varphi(\s(\vartheta_{e})) = \varphi(\tilde{\alpha}\,\vartheta_{e}) & = \varphi(\tilde{\alpha})\,\varphi(\vartheta_{e}) \nn \\
            &=(\alpha_{0}\,{\bs e}_{0}+\cdots+\alpha_{\lambda-1}\,{\bs e}_{\lambda-1})(\gamma_{0}\,{\bs e}_{0}+\cdots+\gamma_{\lambda-1}\,{\bs e}_{\lambda-1}) \nn\\
            &=\alpha_{0}\,\gamma_{0}\,{\bs e}_{0} + \cdots + \alpha_{\lambda-1}\,\gamma_{\lambda-1}\,{\bs e}_{\lambda-1}\label{eqn:diffRingHomOnAMonomial}
        \end{align}
        where $\alpha_{i}=\ev_{\hspace*{-0.15em}m}(\tilde{\alpha},\lambda-1-i)$ and $\gamma_{i}=\ev_{\hspace*{-0.15em}m}(\vartheta_{e},\lambda-1-i)$ for $0 \le i < \lambda$ are $\lambda$-th roots of unity. Again~\eqref{eqn:diffRingHomOnAMonomial} holds since the ${\bs e}_{i}$ are idempotent. 
Finally observe that for $0 \le i < \lambda$ we have
\begin{align*}
\alpha_{i}\,\gamma_{i} &= \ev_{m}(\vartheta_{e},\,\lambda-1-i)\,\ev_{m}(\tilde{\alpha},\,\lambda-1-i) = \ev_{m}(\tilde{\alpha}\,\vartheta_e,\,\lambda-1-i)\\
&=\ev_{m}(\sigma(\vartheta_{e}),\,\lambda-1-i)=\ev_{m}(\vartheta_{e},\,\lambda-i)=\tilde{\gamma}_{i}.
\end{align*}
        With~\eqref{eqn:varphiOnAMonomial} we conclude that~\eqref{eqn:diffRingHom4AMonomial} holds. Thus, $\varphi$ is a {\dr} homomorphism.
        \item By Lemma~\ref{Lemma:ExtendEv} we can define the evaluation function $\ev_{\hspace*{-0.12em}\lambda}:\KK_{\lambda}[\vartheta]\times\NN\to\KK_{\lambda}$ with $\ev_{\hspace*{-0.12em}\lambda}|_{\KK_{\lambda}}=\id$ and $\ev_{\hspace*{-0.12em}\lambda}(\vartheta,\,n)=\zeta^{n}$ and by Lemma~\ref{lem:evaluationHomomorphism} we get the $\KK_{\lambda}$-homomorphism  $\tau_{\lambda}:\KK_{\lambda}[\vartheta]\to\ringOfEquivSeqs[\KK_{\lambda}]$ defined by $\tau_{\lambda}(f)=\funcSeqA{\ev_{\hspace*{-0.12em}\lambda}(f,\,n)}{n}$.    
        Statement~\eqref{eqn:evalOfIdempotentElements} follows by~\eqref{eqn:shiftAndEvaluateIdempotentElements}.
        \item Since $\dField{\KK_{\lambda}[\vartheta]}$ is an \rE-extension of a {\df} $\dField{\KK_{\lambda}}$ it follows by Theorem~\ref{Thm:injectiveHom} that $\tau_{\lambda}$ is injective. 
        \item Let $\alpha\in\KK_{m}[\vartheta_{1}]\dots[\vartheta_{e}]$ and let $\ev_{\hspace*{-0.15em}m},\,\ev_{\hspace*{-0.12em}\lambda}$ be evaluation functions for $\KK_{m}[\vartheta_{1}]\dots[\vartheta_{e}]$ and $\KK_{\lambda}[\vartheta]$ defined by~\eqref{eqn:evalOfNestedAMonomials} and~\eqref{eqn:evalOfIdempotentElements}, respectively. We will show 
        \begin{equation}\label{eqn:commutation}
        \varphi'(\tau_{m}(\alpha)) = \tau_{\lambda}(\varphi(\alpha)).
        \end{equation}
        For the left hand side of~\eqref{eqn:commutation}, we have 
        \[
            \varphi'(\tau_{m}(\alpha)) = \tau_{m}(\alpha) =  \funcSeqA{\ev_{\hspace*{-0.15em}m}(\alpha, n)}{n} \in \ringOfEquivSeqs[\KK_{m}]\subseteq\ringOfEquivSeqs[\KK_{\lambda}].
        \]
        For the right hand side of~\eqref{eqn:commutation} we note by~\eqref{eqn:sinpleAExt2SingleRExtHom} that
        $\varphi(\alpha)=\alpha_{0}\,{\bs e}_{0}+\cdots+\alpha_{\lambda-1}\,{\bs e}_{\lambda-1}$
        holds where $\alpha_{i} = \ev_{\hspace*{-0.15em}m}(\alpha,\lambda-1-i)\in\KK_{m}\subseteq\KK_{\lambda}$ for $0\le i < \lambda$. Thus, 
        \begin{align*}
            \tau_{\lambda}(\varphi(\alpha)) &= \funcSeqA{\ev_{\hspace*{-0.12em}\lambda}(\alpha_{0}\,{\bs e}_{0}+\cdots+\alpha_{\lambda-1}\,{\bs e}_{\lambda-1}, n)}{n} \\
            &=\funcSeqA{\ev_{\hspace*{-0.12em}\lambda}(\alpha_{0}\,{\bs e}_{0}, n)}{n}+\cdots+\funcSeqA{\ev_{\hspace*{-0.12em}\lambda}(\alpha_{\lambda-1}\,{\bs e}_{\lambda-1}, n)}{n}\\
            &=\alpha_{0}\,\funcSeqA{\ev_{\hspace*{-0.12em}\lambda}({\bs e}_{0},n)}{n}+\cdots+\alpha_{\lambda-1}\,\funcSeqA{\ev_{\hspace*{-0.12em}\lambda}({\bs e}_{\lambda-1},n)}{n}\\
            &=\funcSeqA{\ev_{\hspace*{-0.15em}m}(\alpha,n)}{n}.
        \end{align*}
        The last equality follows by~\eqref{eqn:evalOfIdempotentElements}. This implies that the diagram~\eqref{commDia:simpleAExtToSingleRExt} commutes.
    \end{enumerate}
    Finally, if $\KK_{m}$ is computable and~\ref{prob:ProblemO} is solvable in $\KK_{m}$, then by statement~\ref{item2:periodComputability} of Lemma~\ref{lem:periodComputability} $\per(\vartheta_{i})$ is computable for all $1\le i \le e$. Consequently, the \rE-extension $\dField{\KK_{\lambda}[\vartheta]}$ of $\dField{\KK_{\lambda}}$, the evaluation function $\ev_{\hspace*{-0.12em}\lambda}:\KK_{\lambda}[\vartheta]\times\NN\to\KK_{\lambda}$ given in statement~\ref{item2:diffRingHomBtnSimpleAExtAndSingleRExtAndEmbedding2RingOfSeq} and the injective $\KK_{\lambda}$-homomorphism $\tau_{\lambda}:\KK_{\lambda}[\vartheta]\to\ringOfEquivSeqs[\KK_{\lambda}]$ given in statement~\ref{item4:diffRingHomBtnSimpleAExtAndSingleRExtAndEmbedding2RingOfSeq} can be constructed explicitly.
\end{proof}

\begin{remark}\label{remk:evalOfSimpleAMonomials}
    \normalfont By statement~\ref{item4:diffRingHomBtnSimpleAExtAndSingleRExtAndEmbedding2RingOfSeq} of Theorem~\ref{thm:diffRingHomBtnSimpleAExtAndSingleRExtAndEmbedding2RingOfSeq} and~\eqref{eqn:evalOfIdempotentElements} we observe that for a fixed $k\in\NN$ and $\alpha\in\KK_{m}[\vartheta_{1}]\dots[\vartheta_{e}]$ we get
    \begin{equation}\label{eqn:evalOfElementsInSimpleAExts}
    \begin{aligned} 
    \ev_{\hspace*{-0.15em}m}(\alpha,k)&=\ev_{\hspace*{-0.12em}\lambda}(\varphi(\alpha), k)= \alpha_{0}\,\ev_{\hspace*{-0.12em}\lambda}({\bs e}_{0},k) + \cdots + \alpha_{\lambda-1}\,\ev_{\hspace*{-0.12em}\lambda}({\bs e}_{\lambda-1},k)\\
    &=\alpha_{j}\,\ev_{\hspace*{-0.12em}\lambda}({\bs e}_{j},k) = \alpha_{j} = \ev_{\hspace*{-0.15em}m}(\alpha, j)
    \end{aligned}
    \end{equation}
    for some  $j\in\{0,1,\dots,\lambda-1\}$ with $\lambda\,|\,k-j$. In other words, the sequence repeats periodically.
\end{remark}

\begin{example}[Cont. Example~\ref{exa:monomialDepth2SimpleAExt}]\label{exa:singleRExt}
    \normalfont Consider the  $\UU_{2}$-simple \aE-extension $\dField{\KK[\vartheta_{1,1}][\vartheta_{1,2}]}$ of $\dField{\KK}$ with the automorphism and the evaluation function given in~\eqref{diffAutoEval:monomialDepth2SingleChainRExtBasedAtMinus1A} and~\eqref{diffAutoEval:monomialDepth2SingleChainRExtBasedAtMinus1B}, which was constructed in Examples~\ref{exa:monomialDepth2OrderedMultipleChainPiExtForAlgNums} and~\ref{exa:monomialDepth2SimpleAExt} with $\KK=\QQ(\sqrt{3})(=\KK_2)$. From Example~\ref{exa:monomialDepth2SimpleAExt} we already know the period of the \aE-monomials $\vartheta_{1,1}$ and $\vartheta_{1,2}$ in $\KK[\vartheta_{1,1}][\vartheta_{1,2}]$. Set $\lambda=\lcm(m,\per(\vartheta_{1,1}),\,\per(\vartheta_{1,2}))=4$ with $m=2$, take a primitive $4$th root of unity, say $\zeta:=\ee^{\frac{\pi\,\ii}{2}}=(-1)^{\frac{1}{2}} = \ii$ and define  $\tilde{\KK}=\QQ(\ii,\sqrt{3})(=\KK_4)$. Then by statement~\ref{item1:diffRingHomBtnSimpleAExtAndSingleRExtAndEmbedding2RingOfSeq} of  Theorem~\ref{thm:diffRingHomBtnSimpleAExtAndSingleRExtAndEmbedding2RingOfSeq} there is an \rE-extension $\dField{\tilde{\KK}[\vartheta]}$ of $\dField{\tilde{\KK}}$ of order $4$ with the automorphism
    \begin{align}\label{diffAuto:order4SingleRMonomial}
    \s(\vartheta)=\ii\,\vartheta
    \end{align}
    and the evaluation function $\tilde{\ev}:\tilde{\KK}[\vartheta]\times\NN\to\tilde{\KK}$ given by 
    \begin{align}\label{evMap:order4SingleRMonomial}
    \tilde{\ev}(\vartheta, n)=\myProduct{k}{1}{n}{\ii}=\ii^n.
    \end{align}
    We have $\tilde{\KK}[\vartheta] = {\bs e}_{0}\tilde{\KK} \oplus {\bs e}_{1}\tilde{\KK} \oplus {\bs e}_{2}\tilde{\KK} \oplus {\bs e}_{3}\tilde{\KK}$ where the idempotent elements ${\bs e}_{k}$ for $0\le k \le 3$ are defined by 
    \begin{equation}\label{eqn:order4IdempotentElements}
    \begin{aligned}
    {\bs e}_{0} &= \tfrac{\ii}{4}\,\left(\vartheta^{3}+\ii\,\vartheta^{2}-\vartheta-\ii\right), &&
    {\bs e}_{1} = \tfrac{1}{4}\,\big(1-\vartheta+\vartheta^{2}-\vartheta^{3}\big),\\[3pt]
    {\bs e}_{2} &= \tfrac{\ii}{4}\,\left(-\vartheta^{3}+\ii\,\vartheta^{2}+\vartheta-\ii\right), &&
    {\bs e}_{3} = \tfrac{1}{4}\,\big(1+\vartheta+\vartheta^{2}+\vartheta^{3}\big)
    \end{aligned}
    \end{equation}
    with ${\bs e}_{0}+{\bs e}_{1}+{\bs e}_{2}+{\bs e}_{3}=1$. 
    Furthermore, the ring homomorphism $\varphi: \KK[\vartheta_{1,1}][\vartheta_{1,2}]\to\tilde{\KK}[\vartheta]$ defined by $\varphi|_{\KK}=\id_{\KK}$ and 
    $\varphi(\vartheta_{1,i})=\beta_{i,0}{\bs e}_{0}+\beta_{i,1}{\bs e}_{1}+\beta_{i,2}{\bs e}_{2}+\beta_{i,3}{\bs e}_{3}$
    where $\beta_{i,j}=\ev(\vartheta_{i,j},3-j)$ for $i\in\{1,2\}$ and $0\le j \le 3$ is a {\dr} homomorphism. More precisely, for the \aE-monomials we have that  
    \begin{equation}\label{Equ:vartheta12Map}
    \begin{aligned}
    \varphi(\vartheta_{1,1}) &= -{\bs e}_{0}+{\bs e}_{1}-{\bs e}_{2}+{\bs e}_{3} = \vartheta^{2},\\
    \varphi(\vartheta_{1,2}) &= -{\bs e}_{0}-{\bs e}_{1}+{\bs e}_{2}+{\bs e}_{3} = \tfrac{(1-\ii)}{2}\,\vartheta\,(\vartheta^{2}+\ii).
    \end{aligned}
    \end{equation}
    Given $\ev$ and $\tilde{\ev}$ we obtain the difference ring homomorphisms $\tau_2:\KK[\vartheta_{1,1}][\vartheta_{1,2}]\to \ringOfEquivSeqs[\KK]$ defined by $\tau_2(f)=(\ev(f,n))_{n\geq0}$ and  $\tau_4:\tilde{\KK}[\vartheta]\to \ringOfEquivSeqs[\tilde{\KK}]$ defined by $\tau_4(f)=(\tilde{\ev}(f,n))_{n\geq0}$. In particular, by statement~\ref{item3:diffRingHomBtnSimpleAExtAndSingleRExtAndEmbedding2RingOfSeq} of Theorem~\ref{thm:diffRingHomBtnSimpleAExtAndSingleRExtAndEmbedding2RingOfSeq} $\tau_{4}$ is injective. Finally, by defining the embedding $\varphi':\ringOfEquivSeqs[\KK]\to\ringOfEquivSeqs[\tilde{\KK}]$ with $\varphi'(a)=a$ for all $a\in\ringOfEquivSeqs[\KK]$ we conclude by statement~\ref{item4:diffRingHomBtnSimpleAExtAndSingleRExtAndEmbedding2RingOfSeq} of the Theorem~\ref{thm:diffRingHomBtnSimpleAExtAndSingleRExtAndEmbedding2RingOfSeq} that the following diagram commutes
    \begin{equation*}
    \begin{gathered}
    \begin{tikzcd}
    \KK[\vartheta_{1,1}][\vartheta_{1,2}] \arrow[r, "\tau_{2}"] \arrow[d, "\varphi", swap] & \ringOfEquivSeqs[\tilde{\KK}] \arrow[d, "\varphi'"] \\
    \tilde{\KK}[\vartheta] =  {\bs e}_{0}\tilde{\KK}\,\oplus\,{\bs e}_{1}\tilde{\KK}\,\oplus\,{\bs e}_{2}\tilde{\KK}\,\oplus\,{\bs e}_{3}\tilde{\KK} \arrow[r, "\tau_{4}"]&\ringOfEquivSeqs[\tilde{\KK}].
    \end{tikzcd}
    \end{gathered}
    \end{equation*}
\end{example}

We are finally ready to obtain the
\medskip

\noindent\textbf{Proof of Theorem~\ref{thm:mainResultForNestedProductsOverRootsOfUnity}.}
Suppose we are given the geometric products $A_{1}(n),\dots,A_{e}(n)\in\Prod(\UU_{m})$ in $n$ with~\eqref{eqn:finteNestedDepthGeoPrdtsOverRootsOfUnity} where $\zeta_{i,r_i}\neq1$ for $1\leq i\leq e$. As elaborated in Section~\ref{subsec:SyncLowerBounds} we can rewrite each $A_{i}(n)$ as 
    \begin{equation}\label{eqn:synchronisedNestedGeoPrdtOverRootsOfUnity}
    A_{i}(n) = u_{i}\,\tilde{A}_{i}(n)\quad \text{ where } \quad \tilde{A}_{i}(n)=\myProduct{k_{1}}{1}{n}{\zeta_{i,1}\myProduct{k_{2}}{1}{k_{1}}{\zeta_{i,2}}\cdots\myProduct{k_{r_{i}}}{1}{k_{r_{i}-1}}{\zeta_{i,r_{i}}}}
    \end{equation}
    and $u_{i}\in\UU_{m}$ which holds for all $n\geq\max(\ell_{i,1},\dots,\ell_{i,r_i})-1=:\delta_i$.
    Similar to Remark~\ref{remk:arbitrarySimplePExt2MultipleChainPExt} we can rephrase the products in a simple \aE-extension $\dField{\AA}$ of $\dField{\KK_{m}}$ with $\AA=\KK_{m}[\vartheta_{1,1}]\dots[\vartheta_{1,r_1}],\dots,[\vartheta_{e,1}]\dots[\vartheta_{e,r_e}]$ where
    $\alpha_{i,j}:=\frac{\s(\vartheta_{i,j})}{\vartheta_{i,j}}=\zeta_{m}^{u_{i,j}}\,\vartheta_{i,1}\cdots\vartheta_{i,j-1}$
    for $1\le i \le e$  and $1\leq j\leq r_i$ with $u_{i,i}\in\NN$ equipped with the evaluation function $\ev_{\hspace*{-0.15em}m}:\AA\times\NN\to\KK_{m}$ defined by 
    $\ev_{\hspace*{-0.15em}m}(\vartheta_{i,j},n)=\myProduct{k}{1}{n}{\ev_{\hspace*{-0.15em}m}(\alpha_{i,j},k-1)}$
    with the following property. For all $i$ with $1\le i \le e$, there are $\nu_i,\mu_i$ such that the geometric product $\tilde{A}_{i}(n)$ is modelled by $\vartheta_{\nu_i,\mu_i}$, i.e., 
    \begin{equation}\label{eqn:gammaModelsTildA}
    \ev_{\hspace*{-0.15em}m}(\vartheta_{\nu_i,\mu_i},n)=\tilde{A}_{i}(n)
    \end{equation}
    holds for all $n\ge0$. In particular, we get the $\KK$-homomorphism $\tau_m:\AA\to\ringOfEquivSeqs[\KK_{\lambda}]$.  
    By Theorem~\ref{thm:diffRingHomBtnSimpleAExtAndSingleRExtAndEmbedding2RingOfSeq}, there is a single \rE-extension $\dField{\KK_{\lambda}[\vartheta]}$ of $\dField{\KK_{\lambda}}$ subject to the relations $\vartheta^{\lambda}=1$ and $\s(\vartheta)=\zeta_{\lambda}\,\vartheta$ where $\lambda:=\lcm(m,\per(\gamma_{1}),\dots,\per(\gamma_{s}))>0$, $\zeta_{\lambda}:=\ee^{\frac{2\,\pi\,\ii}{\lambda}} = (-1)^{\frac{2}{\lambda}}\in\KK_{\lambda}$ and $\KK_{\lambda}$ is some algebraic extension of $\KK_{m}$ with  $m\,|\,\lambda$. Furthermore, there is a {\dr} homomorphism 
    $\varphi:\AA\to\KK_{\lambda}[\vartheta]	
    $
    and a $\KK_{\lambda}$-embedding 
    $\tau_{\lambda}:\KK_{\lambda}[\vartheta]\to\ringOfEquivSeqs[\KK_{\lambda}]$ with $\tau_{\lambda}(\vartheta)=\funcSeqA{\zeta_{\lambda}^{n}}{n}$
    such that $\tau_m(f)=\tau_{\lambda}(\varphi(f))$ holds for all $f\in\AA$. In particular, we get
    \begin{equation}\label{Equ:ConnectionBetweenEvals}
    \ev_m(\vartheta_{\nu_i,\mu_i},n)=\ev_{\lambda}(\varphi(\vartheta_{\nu_i,\mu_i}),n)
    \end{equation}
    for $1\leq i\leq e$.
    Now define $g_{i,k}\in\KK_{\lambda}$ by
    $\varphi(\vartheta_{\nu_i,\mu_i})= \sum_{k=0}^{\lambda-1}g_{i,k}\,\vartheta^{k} \in \KK_{\lambda}[\vartheta]$.
    Then for  
    $G_{i}(n) := \sum_{k=0}^{\lambda-1}g_{i,k}\,(\zeta_{\lambda}^{n})^{k}$
    with $1\le i \le e$ we get 
    \[
    \ev_m(\vartheta_{\nu_i,\mu_i},n)\stackrel{\eqref{Equ:ConnectionBetweenEvals}}{=}\ev_{\hspace*{-0.12em}\lambda}(\varphi(\vartheta_{\nu_i,\mu_i}),\,n) = G_{i}(n) \quad \forall\,n\ge0.
    \]
    With~\eqref{eqn:synchronisedNestedGeoPrdtOverRootsOfUnity} and~\eqref{eqn:gammaModelsTildA} we conclude that 
    \begin{align*}
    A_{i}(n) &= u_{i}\tilde{A}_{i}(n)= u_{i}\,G_{i}(n)
    \end{align*}
    holds for all $n\ge\delta_i$. In particular, for $B_i(n)$ given in~\eqref{Equ:BiDefRootOfUnity} with $f_{i,k}:=u_i\,g_{i,k}\in\KK_{\lambda}$ we get~\eqref{Equ:Ai=BiForRootOfUnity}.\\   
    If $\KK_m$ is computable and \ref{prob:ProblemO} can be solved, Theorem~\ref{thm:diffRingHomBtnSimpleAExtAndSingleRExtAndEmbedding2RingOfSeq} is constructive and all the above ingredients can be given explicitly.\qed

\begin{example}[Cont. Example~\ref{exa:singleRExt}]
\normalfont
Consider the product expression
$$A(n)=\sqrt{3}\,\myProduct{i}{1}{n}{(-1)} + 2\,\myProduct{k}{1}{n}{\myProduct{i}{1}{k}{(-1)}} + 3\,\left(\myProduct{i}{1}{n}{(-1)}\right)\,\myProduct{k}{1}{n}{\myProduct{i}{1}{k}{(-1)}}\in\ProdExpr\left(\QQ(\sqrt{3}),\UU_{2}\right).
$$
For this instance we follow the construction in Example~\ref{exa:singleRExt} and get $    f=\sqrt{3}\,\vartheta_{1,1} + 2\,\vartheta_{1,2} + 3\,\vartheta_{1,1}\,\vartheta_{1,2}\in\KK[\vartheta_{1,1}][\vartheta_{1,2}]$
with $\ev(f,n)=A(n)$ for all $n\in\NN$. As a consequence we obtain $\tilde{f}:=\varphi(f)=\left(\frac{1}{2}-\frac{\ii}{2}\right)\,\vartheta\,\left((2+3\,\ii) \vartheta^{2}+(1+\ii)\,\sqrt{3}\,\vartheta +(3 + 2\,\ii)\right)\in\tilde{\KK}[\vartheta]$ yielding for $n\in\NN$ the identity
$$A(n)=\ev(f,n)=\tilde{\ev}(\tilde{f},n)=
\left(\tfrac{1}{2}-\tfrac{\ii}{2}\right)\ii^n \left((2+3 \ii) (\ii^n)^2+(1+\ii)
\sqrt{3} \ii^n+(3+2 \ii)\right).$$
In particular, as claimed in Theorem~\ref{thm:mainResultForNestedProductsOverRootsOfUnity}, each of the products in $A(n)$ can be expressed in terms of $\ii^n$. Namely, for all $n\in\NN$ we obtain
\begin{align*}
\prod_{k=1}^n(-1)&=\ev(\vartheta_{1,1},n)=\tilde{\ev}(\varphi(\vartheta),n)=\tilde{\ev}(\vartheta^2,n)=(\ii^n)^2,\\
\prod_{k=1}^n
\prod_{i=1}^k(-1)&=\ev(\vartheta_{1,2},n)=\tilde{\ev}(\varphi(\vartheta),n)=\tilde{\ev}(\tfrac{(1-\ii)}{2}\,\vartheta\,(\vartheta^{2}+\ii),n)=\frac{(1-\ii)}{2}\,\ii^n\,((\ii^n)^{2}+\ii).
\end{align*}
\end{example}

\section{A complete solution of \ref{prob:ProblemRPE}}\label{Sec:CompleteAlg}

We are now ready to combine the building blocks from the previous section to solve \ref{prob:ProblemRPE} in Sections~\ref{Sec:SolForGeometricProds} and~\ref{Sec:SolForHyperProds} below. Afterwards we apply in Section~\ref{subsec:mathematicaDemo} the machinery implemented within the package \texttt{NestedProducts} to concrete examples.

\subsection{The difference ring setting for nested geometric products}\label{Sec:SolForGeometricProds}

First we combine Lemma~\ref{lem:orderedMultipleChainPExtOverConstPoly2APiExtOverConstField} discussed in Subsection~\ref{subsec:simpleRPiExt4GeoPrdts} and Theorem~\ref{thm:diffRingHomBtnSimpleAExtAndSingleRExtAndEmbedding2RingOfSeq} discussed in Subsection~\ref{subsec:rootsOfUnity}. As a consequence, we will obtain the necessary difference ring tools for the full treatment of geometric products of arbitrary but finite nesting depth.

\begin{theorem}\label{thm:orderedMultipleChainPExtOverConstField2SingleRPiExtOverConstField}
    For $1\le\ell\le m$, let $\dField{\KK_{\ell}}$ with $\KK_{\ell} =\KK\genn{y_{\ell,1}}\dots\genn{y_{\ell,s_{\ell}}}$ be  the single chain \pE-extensions of $\dField{\KK}$ over $\KK=K(\Lst{\kappa}{1}{u})$ with  base $h_{\ell}\in\KK^{*}$ for $1\le \ell \le m$, the automorphisms~\eqref{diffAuto:singleChainPExtensionOverConstField} and the evaluation functions~\eqref{evMap:singleChainPiMonomialsOverConstField}.	Let $d:=\max(s_{1},\dots,s_{m})$ and $\AA_{0}=\KK$. Consider the tower of {\drE s} $\dField{\AA_{i}}$ of $\dField{\AA_{i-1}}$ where 
    $\AA_{i}=\AA_{i-1}\genn{y_{1,i}}\genn{y_{2,i}}\dots\genn{y_{w_{i},i}}$
    for $1\le i\le d$ with $m=w_{1}\ge w_{2}\ge\cdots\ge w_{d}$ and the automorphism~\eqref{diffAuto:singleChainPExtensionOverConstField} and the evaluation function~\eqref{evMap:singleChainPiMonomialsOverConstField}. This yields the ordered multiple chain \pE-extension $\dField{\AA_{d}}$ of $\dField{\KK}$ of monomial depth at most $d$ composed by the single chain \piE-extensions $\dField{\KK_{\ell}}$ of $\dField{\KK}$ for $1\le \ell\le m$ with~\eqref{diffAuto:singleChainPExtensionOverConstField} and~\eqref{evMap:singleChainPiMonomialsOverConstField}. Then one can construct 
    \begin{enumerate}[\hspace*{0.01em}(1)]
        \item an \rpiE-extension $\dField{\DD}$ of $\dField{\KK'}$ with 
        \[
        \DD=\KK'[\vartheta]\genn{\tilde{y}_{1,1}}\dots\genn{\tilde{y}_{e_{1},1}}\dots\genn{\tilde{y}_{1,d}}\dots\genn{\tilde{y}_{e_{d},d}},\footnote{
        	For concrete instances the \rE-monomial $\vartheta$ might not be needed. In particular, if $\nu_{\ell,k}=0$ in~\eqref{Equ:GeometricMap}, it can be removed.}
        \]
        where $\KK'=K'(\Lst{\kappa}{1}{u})$ and $K'$ is a finite algebraic field extension of $K$ and 
        with the automorphism 
        \begin{equation*}
        \s(\vartheta)=\zeta'\,\vartheta \qquad \text{ and } \quad \s(\tilde{y}_{\ell,d}) = \tilde{\alpha}_{\ell,k}\,\tilde{y}_{\ell,k} 
        \end{equation*}
        where $\zeta'\in K'$ is a $\lambda'$-th root of unity and 
        \[
        \tilde{\alpha}_{\ell,k} = \tilde{h}_{\ell}\,\tilde{y}_{\ell,1}\cdots\,\tilde{y}_{\ell,k-1}\in(\KK'^{*})_{\KK'}^{\KK'\genn{\tilde{y}_{\ell,1}}\dots\genn{\tilde{y}_{\ell,k-1}}}
        \]
        for $1\leq k\leq d$ and $1 \le \ell \le e_{k}$; 
        \item an evaluation function $\tilde{\ev}:\DD\times\NN\to\KK'$ defined as~\footnote{Note that for all $c\in\KK'$, $\tilde{\ev}(c,n)=c$ for all $n\ge0$.} 
        \begin{equation}\label{evMap:singleRMonomial}
        \tilde{\ev}(\vartheta,n)=\myProduct{j}{1}{n}{\zeta'} \qquad \text { and } \qquad \tilde{\ev}(\tilde{y}_{\ell,k}, n) = \myProduct{j}{1}{n}{\tilde{\ev}(\tilde{\alpha}_{\ell,k}, j-1)};
        \end{equation} 
        \item a difference ring homomorphism $\varphi:\AA_{d}\to\DD$ defined by $\varphi|_{\KK}=\id_{\KK}$ and
        \begin{equation}\label{Equ:GeometricMap} 
\begin{aligned}
& \varphi(y_{\ell,k})=\gamma_{\ell,k}\,\tilde{y}_{1,k}^{v_{\ell,1,k}}\cdots\tilde{y}_{e_{k},k}^{v_{\ell,e_{k},k}}
\end{aligned}
\end{equation}
for $1\leq\ell\leq m$ and $1\leq k\leq s_{\ell}$ with $\gamma_{\ell,k}\in\KK'[\vartheta]$ and $v_{\ell,i,k}\in\ZZ$ for $1\le i \le e_{k}$
    \end{enumerate}
    such that for all $f\in\AA_d$ and $n\in\NN$ we have
\begin{equation}\label{Equ:varPhiProp}
    \ev(f,n)=\tilde{\ev}(\varphi(f),n).
\end{equation}
  If $\KK$ is strongly $\s$-computable, then the constructions above are computable.
\end{theorem}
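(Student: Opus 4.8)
The proof will proceed by combining the two main results on geometric products and on products over roots of unity. The plan is to first apply Lemma~\ref{lem:orderedMultipleChainPExtOverConstPoly2APiExtOverConstField} to the ordered multiple chain \pE-extension $\dField{\AA_d}$: this produces an ordered multiple chain \apE-extension $\dField{\GG_d}$ of $\dField{\tilde{\KK}}$ with $\tilde{\KK}=\tilde{K}(\Lst{\kappa}{1}{u})$, an evaluation function $\tilde{\ev}:\GG_d\times\NN\to\tilde{\KK}$ and a difference ring homomorphism $\rho_d:\AA_d\to\GG_d$ satisfying $\ev(f,n)=\tilde{\ev}(\rho_d(f),n)$ for all $f\in\AA_d$ and $n\in\NN$. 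By Remark~\ref{remk:rearrangeAPiMonomialsInMultipleChainAPiExtension} the generators of $\GG_d$ can be rearranged so that all \aE-monomials $\vartheta_{1,1},\dots,\vartheta_{\upsilon_1,1},\dots,\vartheta_{1,d},\dots,\vartheta_{\upsilon_d,d}$ come first, i.e., $\GG_d=\tilde{\KK}[\vartheta_{1,1}]\dots[\vartheta_{\upsilon_d,d}]\genn{\tilde{y}_{1,1}}\dots\genn{\tilde{y}_{e_d,d}}$, and the sub-difference ring $\dField{\tilde{\KK}[\vartheta_{1,1}]\dots[\vartheta_{\upsilon_d,d}]}$ is by Lemma~\ref{lem:orthogonalIdempotentElements}-type reasoning a simple \aE-extension of $\dField{\tilde{\KK}}$; note that by Lemma~\ref{lem:aExtOverPiSigmaExtIsRExt} it is automatically an \rE-extension.

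Next I would apply Theorem~\ref{thm:diffRingHomBtnSimpleAExtAndSingleRExtAndEmbedding2RingOfSeq} to this simple \aE-extension $\dField{\tilde{\KK}[\vartheta_{1,1}]\dots[\vartheta_{\upsilon_d,d}]}$ of $\dField{\tilde{\KK}}$ (playing the role of $\dField{\KK_m[\vartheta_1]\dots[\vartheta_e]}$ there, with $m$ the order of the root of unity generating $\UU$). This yields $\lambda':=\lcm$ of $m$ and the periods of the $\vartheta_{\ell,k}$, a primitive $\lambda'$-th root of unity $\zeta'$, a finite algebraic extension $\KK'=K'(\Lst{\kappa}{1}{u})$, a single \rE-extension $\dField{\KK'[\vartheta]}$ of $\dField{\KK'}$ with $\s(\vartheta)=\zeta'\vartheta$, and a difference ring homomorphism $\psi:\tilde{\KK}[\vartheta_{1,1}]\dots[\vartheta_{\upsilon_d,d}]\to\KK'[\vartheta]$ compatible with the evaluation functions via the commuting diagram~\eqref{commDia:simpleAExtToSingleRExt}. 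I would then extend $\psi$ to all of $\GG_d$ by adjoining the \pE-monomials: since each $\tilde{y}_{\ell,k}$ has shift quotient $\tilde{\alpha}_{\ell,k}=\tilde{h}_\ell\,\tilde{y}_{\ell,1}\cdots\tilde{y}_{\ell,k-1}$ lying in the subring already handled plus lower-depth \pE-monomials, and since the $\tilde{h}_\ell$ are algebraically independent (statement~\ref{item1:orderedMultipleChainPExtOverConstPoly2APiExtOverConstField} of Lemma~\ref{lem:orderedMultipleChainPExtOverConstPoly2APiExtOverConstField}), I can transport the whole \piE-tower $\dField{\tilde{\AA}_d}$ over $\KK'$ and use Lemma~\ref{lem:diffRingHom} iteratively to extend $\psi$ to a difference ring homomorphism $\GG_d\to\DD$ with $\DD=\KK'[\vartheta]\genn{\tilde{y}_{1,1}}\dots\genn{\tilde{y}_{e_d,d}}$, mapping $\tilde{y}_{\ell,k}$ to the corresponding \pE-monomial in $\DD$. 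Theorem~\ref{thm:multipleChainPExt2MultipleChainPiExt} (or directly the \piE-property already established in Lemma~\ref{lem:orderedMultipleChainPExtOverConstPoly2APiExtOverConstField}) guarantees that this \pE-tower over $\KK'$ is again a \piE-extension, so $\dField{\DD}$ is indeed an \rpiE-extension of $\dField{\KK'}$. Composing $\varphi:=\psi\circ\rho_d:\AA_d\to\DD$ and defining $\tilde{\ev}$ on $\DD$ via~\eqref{evMap:singleRMonomial} (extending the evaluation functions of both applications), the compatibility identity~\eqref{Equ:varPhiProp} follows by chaining $\ev(f,n)=\tilde{\ev}_{\GG_d}(\rho_d(f),n)$ with the commutativity of~\eqref{commDia:simpleAExtToSingleRExt} and the fact that the \pE-extension step preserves evaluations.

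The shape of $\varphi(y_{\ell,k})$ in~\eqref{Equ:GeometricMap} is obtained by tracking the two homomorphisms: $\rho_d(y_{\ell,k})$ is a power product of \aE- and \pE-monomials of depth $k$ by~\eqref{diffHom:orderdMultipleChainPMonomial2OrderedMultipleChainAPiMonomial}, and $\psi$ sends the \aE-part into $\KK'[\vartheta]$ (as a polynomial $\gamma_{\ell,k}$, possibly involving $\vartheta$, via the idempotent decomposition~\eqref{eqn:sinpleAExt2SingleRExtHom}) while sending the \pE-part to the matching \pE-monomials $\tilde{y}_{1,k}^{v_{\ell,1,k}}\cdots\tilde{y}_{e_k,k}^{v_{\ell,e_k,k}}$ with the same integer exponents. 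Computability follows since if $K$ is strongly $\s$-computable then Lemma~\ref{lem:orderedMultipleChainPExtOverConstPoly2APiExtOverConstField} is constructive, and \ref{prob:ProblemO} is solvable over any finite algebraic extension of $K$ (being again an algebraic number field, or more generally strongly $\s$-computable), so Theorem~\ref{thm:diffRingHomBtnSimpleAExtAndSingleRExtAndEmbedding2RingOfSeq} is constructive as well, and all the gluing steps are algorithmic.

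The main obstacle I anticipate is the bookkeeping in the gluing step: one must carefully verify that after applying Theorem~\ref{thm:diffRingHomBtnSimpleAExtAndSingleRExtAndEmbedding2RingOfSeq} to the \aE-part, the \pE-monomials' shift quotients $\tilde{\alpha}_{\ell,k}$ still make sense as units in the partially-constructed ring $\DD$ — in particular that replacing $\tilde{\KK}$ by $\KK'$ and the \aE-monomials by their idempotent images does not disturb the \piE-property of the \pE-tower. This is where one invokes that the $\tilde h_\ell\in\KK^*$ are unchanged (they live in the ground field, not in the \aE-part) together with Theorem~\ref{thm:multipleChainPExt2MultipleChainPiExt}: the criterion $\frac{\s(g)}{g}=\ProdLst{\tilde h}{1}{m}{v}$ has no nontrivial solution over $\KK'$ precisely because it already had none over $\KK\subseteq\KK'$ and the algebraic independence is a field-theoretic statement insensitive to the algebraic extension $K\to K'$. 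Once this is pinned down, the rest is a routine composition of difference ring homomorphisms and verification of the evaluation-compatibility identity.
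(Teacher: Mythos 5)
Your proposal follows the same route as the paper's own proof: apply Lemma~\ref{lem:orderedMultipleChainPExtOverConstPoly2APiExtOverConstField} to obtain the ordered multiple chain \apE-extension $\dField{\GG_d}$ with $\rho_d$, rearrange via Remark~\ref{remk:rearrangeAPiMonomialsInMultipleChainAPiExtension} to isolate the simple \aE-part, collapse that \aE-part to a single \rE-monomial via Theorem~\ref{thm:diffRingHomBtnSimpleAExtAndSingleRExtAndEmbedding2RingOfSeq}, transport the \pE-tower over $\KK'$, and compose the two homomorphisms. You correctly identify the one delicate point (persistence of the \piE-property after the base-field change $\tilde{\KK}\to\KK'$) and resolve it for the right reason: the $\tilde h_\ell$ are elements of the ground field, so any multiplicative relation valid in $\KK'^*$ would already hold in $\tilde{\KK}^*$, contradicting statement~\ref{item1:orderedMultipleChainPExtOverConstPoly2APiExtOverConstField}. (One small slip: you write $\tilde h_\ell\in\KK^*$ where the lemma gives $\tilde h_\ell\in\tilde{\KK}^*$, but this does not affect the argument.) The only stylistic divergence is that you build $\DD$ by first forming $\KK'[\vartheta]$ and then adjoining the \pE-monomials one at a time via Lemma~\ref{lem:diffRingHom}, whereas the paper first establishes $\dField{\tilde{\DD}}$ as a \piE-extension over $\KK'$, then adjoins $\vartheta$ and invokes Lemma~\ref{lem:aExtOverPiSigmaExtIsRExt}, and finally rearranges; both orderings lead to the same ring and the same constant-field argument.
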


\begin{proof}
    Let $\dField{\AA_{d}}$ be an ordered multiple chain \pE-extension of $\dField{\KK}$ of monomial depth at most $d$ with the automorphism $\s:\AA_{d}\to\AA_{d}$ defined by~\eqref{diffAuto:singleChainPExtensionOverConstField} and the evaluation function $\ev:\AA_{d}\times\NN\to\KK$ defined by~\eqref{evMap:singleChainPiMonomialsOverConstField}. Then by  Lemma~\ref{lem:orderedMultipleChainPExtOverConstPoly2APiExtOverConstField} we can construct an ordered multiple chain \apE-extension $\dField{\GG_{d}}$ of $\dField{\tilde{\KK}}$ of monomial depth at most $d$ with $\tilde{\KK}=\tilde{K}(\kappa_1,\dots,\kappa_u)$, $\tilde{K}$ being a finite algebraic field extension of $K$ where 
    where $\GG_{d}$ is given by~\eqref{eqn:orderedMultipleChainAPiRing} with the automorphism ~\eqref{diffAuto:aMonomialsOverConstField} and~\eqref{diffAuto:multipleChainPiExtensionOverIrrConstPolys}, the evaluation function $\ev':\GG_{d}\times\NN\to\tilde{\KK}$
    with~\eqref{evMap:aPiMonomialsOverConstField} (where $\tilde{\ev}$ is replaced by $\ev'$), and  the {\dr} homomorphism $\rho_{d}:\AA_{d}\to\GG_{d}$ defined by $\rho_d|_{\KK}=\id_{\KK}$ and~\eqref{diffHom:orderdMultipleChainPMonomial2OrderedMultipleChainAPiMonomial}  
     with the following properties: the sub-{\dr} $\dField{\tilde{\AA}_{d}}$ of $\dField{\GG_{d}}$ where $\tilde{\AA}_{d}$ is given by~\eqref{dom:depth-d-OrderedMultipleChainPiExtension} is a \piE-extension of $\dField{\tilde{\KK}}$. Furthermore, for all $f\in\AA_{d}$ and for all $n\in\NN$ we have 
    \begin{equation}\label{Equ:EvComposition1}
    \ev(f,n)=\ev'(\rho_{d}(f),n).
    \end{equation}    
    If $\upsilon_1=0$ (and thus $\upsilon_2=\dots=\upsilon_d=0$), i.e., no \aE-monomials are involved, we are essentially done. We simply adjoin a redundant \rE-monomial (compare the footnote in Theorem~\ref{thm:orderedMultipleChainPExtOverConstField2SingleRPiExtOverConstField}).
    Otherwise $\upsilon_1\geq1$ and by Remark~\ref{remk:rearrangeAPiMonomialsInMultipleChainAPiExtension} the generators in $\GG_{d}$ can be rearranged to get the \apiE-extension $\dField{\tilde{\GG}}$ of $\dField{\tilde{\KK}}$ where 
    \begin{equation}\label{dom:orderedMultipleChainAPiRing}
    \tilde{\GG}=\tilde{\KK}[\vartheta_{1,1}]\dots[\vartheta_{\upsilon_{1},1}]\dots[\vartheta_{1,d}]\dots[\vartheta_{\upsilon_{d},d}]\genn{\tilde{y}_{1,1}}\dots\genn{\tilde{y}_{e_{1},1}}\dots\genn{\tilde{y}_{1,d}}\dots\genn{\tilde{y}_{e_{d},d}}
    \end{equation}
    with the automorphism given by~\eqref{diffAuto:aMonomialsOverConstField} and~\eqref{diffAuto:multipleChainPiExtensionOverIrrConstPolys} and the evaluation function given by~\eqref{evMap:aPiMonomialsOverConstField} (where $\tilde{\ev}$ is replaced by $\ev'$) satisfying properties~\ref{item1:orderedMultipleChainPExtOverConstPoly2APiExtOverConstField} and~\ref{item2:orderedMultipleChainPExtOverConstPoly2APiExtOverConstField} of Lemma~\ref{lem:orderedMultipleChainPExtOverConstPoly2APiExtOverConstField}. Now consider the sub-{\dr} $\dField{\LL}$ of $\dField{\tilde{\HH}}$ with $\LL=\tilde{\KK}[\vartheta_{1,1}]\dots[\vartheta_{\upsilon_{1},1}]\dots[\vartheta_{1,d}]\dots[\vartheta_{\upsilon_{d},d}]$, which is a {\drE} of $\dField{\tilde{\KK}}$, with the automorphism defined by 
    \begin{equation}\label{diffAuto:multipleChainAExt}
    \s(\vartheta_{\ell,k})=\gamma_{\ell,k}\,\vartheta_{\ell,k} \quad\text{ where }\quad \gamma_{\ell,k}=\zeta^{\mu_{\ell}}\,\vartheta_{\ell,1}\cdots\,\vartheta_{\ell,k-1}\in{\UU}_{\tilde{\KK}}^{\tilde{\KK}[\vartheta_{\ell,1}]\dots[\vartheta_{\ell,k-1}]}
    \end{equation}
    for $1\le k\le d$ and for $1\le\ell\le\upsilon_{k}$ where $\UU=\genn{\zeta}$ is the multiplicative cyclic subgroup of $\tilde{K}$ generated by a primitive $\lambda$-th root of unity, $\zeta\in\tilde{K}^{*}$. Observe that the {\drE} $\dField{\LL}$ of $\dField{\tilde{\KK}}$ with~\eqref{diffAuto:multipleChainAExt} is a simple \aE-extension to which statement~\ref{item1:diffRingHomBtnSimpleAExtAndSingleRExtAndEmbedding2RingOfSeq} of  Theorem~\ref{thm:diffRingHomBtnSimpleAExtAndSingleRExtAndEmbedding2RingOfSeq} can be applied. Thus there is an \rE-extension $\dField{\KK'[\vartheta]}$ of $\dField{\KK'}$ with 
    \begin{equation}\label{diffAuto:singleRExt}
    \s(\vartheta)=\zeta'\,\vartheta
    \end{equation}
    of order $\lambda'$ where $\KK'=K'(\Lst{\kappa}{1}{u})$, $\zeta'$ is a primitive $\lambda'$-th root of unity in $K'$ and $K'$ is a finite algebraic field extension of $\tilde{K}$. Note that the difference ring $\dField{\tilde{\DD}}$ where $\tilde{\DD}$ is given by
    \begin{equation}\label{eqn:orderedMultipleChainPiExt}
    \tilde{\DD}=\KK'\genn{\tilde{y}_{1,1}}\dots\genn{\tilde{y}_{e_{1},1}}\dots\genn{\tilde{y}_{1,d}}\dots\genn{\tilde{y}_{e_{d},d}}
    \end{equation}
     with the automorphism defined by~\eqref{diffAuto:multipleChainPiExtensionOverIrrConstPolys} is a \piE-extension of $\dField{\KK'}$. Thus by Lemma~\ref{lem:aExtOverPiSigmaExtIsRExt} it follows that the \aE-extension $\dField{\tilde{\DD}[\vartheta]}$ of $\dField{\tilde{\DD}}$ with~\eqref{diffAuto:singleRExt} of order $\lambda'$ is an \rE-extension. Note that the generators in the ring $\tilde{\DD}[\vartheta]$ can be rearranged to get $\dField{\DD}$ where  $\DD=\KK'[\vartheta]\genn{\tilde{y}_{1,1}}\dots\genn{\tilde{y}_{e_{1},1}}\dots\genn{\tilde{y}_{1,d}}\dots\genn{\tilde{y}_{e_{d},d}}$ and $\s$ is defined by~\eqref{diffAuto:singleRExt} and~\eqref{diffAuto:multipleChainPiExtensionOverIrrConstPolys}. Since this rearrangement does not change the set of constants, $\dField{\DD}$ is an \rpiE-extension of $\dField{\KK'}$. By statement~\ref{item1:directDecompositionOfSingleRPSExt} of Proposition~\ref{pro:directDecompositionOfSingleRPSExt}
    $\DD = {\bs e}_{0}\DD \oplus \cdots \oplus {\bs e}_{\lambda'-1}\DD$
    and by statement~\ref{item2:directDecompositionOfSingleRPSExt} of the same proposition, ${\bs e}_{k}\DD={\bs e}_{k}\tilde{\DD}$ for $0\le k < \lambda'$. Thus
    $\DD = {\bs e}_{0}\tilde{\DD} \oplus {\bs e}_{1}\tilde{\DD} \oplus \cdots \oplus {\bs e}_{\lambda'-1}\tilde{\DD}$
    holds. Now we show that $\phi:\tilde{\GG}\to\DD$ defined by $\phi|_{\tilde{\KK}}=\id_{\tilde{\KK}}$ with
            \begin{align}
    \phi(\tilde{y}_{\ell,k}) &= \tilde{y}_{\ell,k}, \label{diffHom:orderedMultipleChainPiMonomials}\\
    \phi(\vartheta_{\ell,k}) &= \beta_{\ell,k,0}{\bs e}_{0}+\cdots+\beta_{\ell,k,\lambda'-1}{\bs e}_{\lambda'-1}\label{diffHom:simpleAMonomials}
    \end{align}
    where $\beta_{\ell,k,i}=\ev(\vartheta_{\ell,k},\lambda'-1-i)$ for $0\le i<\lambda'$
    is a {\dr} homomorphism. By statement~\ref{item1:diffRingHomBtnSimpleAExtAndSingleRExtAndEmbedding2RingOfSeq} of Theorem~\ref{thm:diffRingHomBtnSimpleAExtAndSingleRExtAndEmbedding2RingOfSeq}, $\phi|_{\LL}$ which is defined by~\eqref{diffHom:simpleAMonomials} is a {\dr} homomorphism. Since $\phi$ maps $\tilde{y}_{\ell,k}$ to itself, also $\phi$ is a {\dr} homomorphism. Furthermore, for all $f\in\tilde{\GG}$ and for all $n\in\NN$, we have 
    \begin{equation}\label{Equ:EvComposition2}
    \ev'(f,n)=\tilde{\ev}(\phi(f),n).
    \end{equation}
Putting everything together, the map $\varphi:\AA_{d}\to\DD$ with $\varphi=\phi\circ\rho_d$ is a {\dr} homomorphism. 
It is uniquely determined by $\varphi|_{\KK}=\id_{\KK}$ and
    \[
    \varphi(y_{\ell,d})=\phi(\rho_d(y_{\ell,d}))=\gamma_{\ell,d}\,\tilde{y}_{1,d}^{v_{\ell,1,d}}\cdots\tilde{y}_{e_{d},d}^{v_{\ell,e_{d},d}}
    \]
    with $\gamma_{\ell,d}=\beta_{\ell,d,0}{\bs e}_{0}+\cdots+\beta_{\ell,d,\lambda'-1}{\bs e}_{\lambda'-1}\in\KK'[\vartheta]$. 
    Furthermore, by~\eqref{Equ:EvComposition1} and~\eqref{Equ:EvComposition2} it follows that for all $f\in\AA_{d}$ and for all $n\in\NN$ we get
    \[
    \ev(f,n)=\ev'(\rho_d(f),n)= \tilde{\ev}(\phi(\rho_d(f)),n)=\tilde{\ev}(\varphi(f),n). 
    \]
Finally if $K$ is strongly $\s$-computable, then by Lemma~\ref{lem:orderedMultipleChainPExtOverConstPoly2APiExtOverConstField} the {\dr} $\dField{\tilde{\GG}}$ with~\eqref{dom:orderedMultipleChainAPiRing} together with automorphism~\eqref{diffAuto:aMonomialsOverConstField} and~\eqref{diffAuto:multipleChainPiExtensionOverIrrConstPolys}, evaluation function~\eqref{evMap:aPiMonomialsOverConstField} (where $\tilde{\ev}$ is replaced by $\ev'$) and the {\dr} homomorphism $\rho_{d}:\AA_{d}\to\tilde{\GG}$ with~\ref{diffHom:orderdMultipleChainPMonomial2OrderedMultipleChainAPiMonomial} can be computed. Further, by Theorem~\ref{thm:diffRingHomBtnSimpleAExtAndSingleRExtAndEmbedding2RingOfSeq} the {\dr} $\dField{\tilde{\DD}[\vartheta]}$ with the automorphism $\s(\vartheta)=\zeta'\,\vartheta$ and~\eqref{diffAuto:multipleChainPiExtensionOverIrrConstPolys}, the evaluation function~\eqref{evMap:singleRMonomial}
 and the {\dr} homomorphism $\varphi:\tilde{\GG}\to\DD$ given by~\eqref{diffHom:orderedMultipleChainPiMonomials} and~\eqref{diffHom:simpleAMonomials} can be computed. In particular, $\varphi$ and all the components stated in the theorem can be given explicitly.
\end{proof}

\begin{example}[Cont. Example~\ref{exa:monomialDepth2OrderedMultipleChainPiExtForAlgNums},~\ref{exa:singleRExt}]\label{exa:singleRExtAndOrderedMultipleChainPiExtOverConstField}
	\normalfont Take the \apiE-extension $\dField{\GG}$ of $\dField{\KK}$ with~\eqref{domain:monomialDepth2OrderedMultipleChainPiExt} constructed in Example~\ref{exa:monomialDepth2OrderedMultipleChainPiExtForAlgNums} with the automorphism defined in~\eqref{diffAutoEval:monomialDepth2SingleChainRExtBasedAtMinus1A} and \eqref{diffAutoEval:nestingDepth2PiExtOverAlgNums}, and consider the sub-difference ring $\dField{\KK[\vartheta_{1,1}][\vartheta_{1,2}]}$ of $\dField{\GG}$ with the automorphism $\s$ given in~\eqref{diffAutoEval:monomialDepth2SingleChainRExtBasedAtMinus1A}, which is a simple \aE-extension of $\dField{\KK}$ where $\KK=\QQ(\sqrt{3})$. Now we refine the construction from Example~\ref{exa:monomialDepth2OrderedMultipleChainPiExtForAlgNums} by utilizing Example~\ref{exa:singleRExt}. Namely, we take the \rE-extension $\dField{\tilde{\KK}[\vartheta]}$ of $\dField{\tilde{\KK}}$ of order $4$ with the automorphism~\eqref{diffAuto:order4SingleRMonomial}
	and the evaluation function $\tilde{\ev}:\tilde{\KK}[\vartheta]\times\NN\to\tilde{\KK}$ given by~\eqref{evMap:order4SingleRMonomial}
	where $\tilde{\KK}=\QQ(\ii,\sqrt{3})$. Furthermore, take $\dField{\DD}$ where  $\DD=\tilde{\KK}[\vartheta]\genn{\tilde{y}_{1,1}}\genn{\tilde{y}_{2,1}}\genn{\tilde{y}_{3,1}}\genn{\tilde{y}_{2,2}}\genn{\tilde{y}_{3,2}}$ with the automorphism and evaluation function given by~\eqref{diffAuto:order4SingleRMonomial} and~\eqref{evMap:order4SingleRMonomial} for the \rE-monomial $\vartheta$ and~\eqref{diffAutoEval:nestingDepth2PiExtOverAlgNums} for the \piE-monomials $\tilde{y}_{\ell,k}$. By Theorem~\ref{thm:orderedMultipleChainPExtOverConstField2SingleRPiExtOverConstField} $\dField{\DD}$ is an \rpiE-extension of $\dField{\tilde{\KK}}$ where the ring $\DD$ can be written as the direct sum $\DD = {\bs e}_{0}\tilde{\DD} \oplus {\bs e}_{1}\tilde{\DD} \oplus {\bs e}_{2}\tilde{\DD} \oplus {\bs e}_{3}\tilde{\DD}$
	with $\tilde{\DD}=\tilde{\KK}\genn{\tilde{y}_{1,1}}\genn{\tilde{y}_{2,1}}\genn{\tilde{y}_{3,1}}\genn{\tilde{y}_{2,2}}\genn{\tilde{y}_{3,2}}$; here the idempotent elements ${\bs e}_{k}$ for $0\le k \le 3$ are defined by~\eqref{eqn:order4IdempotentElements}. Furthermore, the ring homomorphism $\phi:\GG\to\DD$ defined by $\phi|_{\tilde{\DD}}=\id_{\tilde{\DD}}$ and~\eqref{Equ:vartheta12Map} is a difference ring homomorphism.\\ 
	Finally, consider the \apE-extension $\dField{\AA'}$ of $\dField{\KK}$ as given in Example~\ref{exa:monomialDepth2OrderedMultipleChainPiExtForAlgNums} and consider the {\dr} homomorphism $\rho:\AA'\to\GG$ given in~\eqref{Equ:DefineRhoInExa}. Then with the {\dr} homomorphism $\varphi:\AA'\to\DD$ defined by $\phi(\rho(f))$ for $f\in\AA'$ we get~\eqref{Equ:varPhiProp} for all $n\in\NN$ and $f\in\AA'$. Given this explicit construction we can choose for instance $g\in\AA'$ defined in~\eqref{eqn:hyperGeometricProductRepresentationInRingA} that models $\tilde{G}(n)$ given in~\eqref{eqn:geometricProducts}. This means that $\ev(g,n)=\tilde{G}(n)$ for all $n\geq0$. 
	Thus 
	\begin{equation}\label{Exp:gTilde}
	\tilde{g}:= \varphi(g)=\frac{\varphi(\vartheta_{1,1})\,\varphi(y_{3,1})\,\varphi(y_{5,1})\,\varphi(\vartheta_{1,2})\,\varphi(y_{2,2})}{\varphi(y_{1,1})\,\varphi(y_{2,1})\,\varphi(y_{4,2})}= \frac{(1-\ii)\,\vartheta\,(\ii\,\vartheta^{2}+1)\,{\tilde{y}_{1,1}}\,{\tilde{y}_{3,1}^{2}}\,{\tilde{y}_{2,2}}}{2\,{\tilde{y}_{2,1}}\,{\tilde{y}_{3,2}}}\in\DD
	\end{equation}
	yields for $n\geq0$ the identity
$$\tilde{G}(n)=\ev(g,n)=\tilde{\ev}(\tilde{g},n)=
\frac{1}{2}\frac{(1-\ii)\,(\ii)^{n}\,(\ii\,(\ii^{n})^{2}+1)\,\algSeq{3}{n}\,\intSeqExp{5}{n}{2}\,\intSeq{2}{n+1 \choose 2}}{\intSeq{2}{n}\,\intSeq{5}{n+1 \choose 2}}.$$
\end{example}

\subsection{The solution for nested hypergeometric products}\label{Sec:SolForHyperProds}

So far we have treated hypergeometric products over monic irreducible polynomials of finite nesting depth, say $b$, that are $\delta$-refined for some $\delta\in\NN$; see Definition~\ref{defn:shiftCoPrimeProductRepresentationForm}. Given such hypergeometric products, it follows by Corollary~\ref{cor:reducedNormalFormAndPiExts} that we can construct an ordered multiple chain \piE-extension $\dField{\tilde{\HH}_{b}}$ of $\dField{\KK(x)}$ with $\KK=K(\Lst{\kappa}{1}{u})$ and 
\begin{equation}\label{eqn:orderedMultipleChainPiRing4HyperGeometricPrdts}
\tilde{\HH}_{b}=\KK(x)\genn{\tilde{\bs z}_{\bs 1}}\dots\genn{\tilde{\bs z}_{\bs b}}=\KK(x)\genn{\tilde{z}_{1,1}}\dots\genn{\tilde{z}_{p_{1},1}}\dots\genn{\tilde{z}_{1,b}}\dots\genn{\tilde{z}_{p_{b},b}}.
\end{equation}
In particular, $\dField{\tilde{\HH}_{b}}$ is composed by the single chain \piE-extensions $\dField{\tilde{\FF}_{\ell}}$ of $\dField{\KK(x)}$ for $1 \le \ell \le p_{1}$ with 
\[
\tilde{\FF}_{\ell}=\KK(x)\genn{\tilde{z}_{\ell,1}}\genn{\tilde{z}_{\ell,2}}\dots\genn{\tilde{z}_{\ell,s_{\ell}}},\quad\quad 1 \le k \le s_{\ell}
\]
given by the automorphism $\s:\tilde{\FF}_{\ell}\to\tilde{\FF}_{\ell}$ defined by 
\begin{equation}\label{diffAuto:singleChainPiExtOverRatDiffField}
\s(\tilde{z}_{\ell,k}) = \tilde{\alpha}_{\ell,k}\,\tilde{z}_{\ell,k} \quad \text{ where } \quad \tilde{\alpha}_{\ell,k} = \tilde{f}_{\ell}\,\tilde{z}_{\ell,1}\cdots\,\tilde{z}_{\ell,k-1}\in(\KK(x)^{*})_{\KK(x)}^{\KK(x)\genn{\tilde{z}_{\ell,1}}\dots\genn{\tilde{z}_{\ell,k-1}}}
\end{equation}
and the evaluation function $\tilde{\ev}:\tilde{\FF}_{\ell}\times\NN\to\KK$ defined by
\begin{equation}\label{evMap:singleChainPiMonomialsOverRatDiffField}
\tilde{\ev}(\tilde{z}_{\ell,k}, n) = \myProduct{j}{\delta}{n}{\tilde{\ev}(\tilde{\alpha}_{\ell,k}, j-1)}.
\end{equation}
On the other hand, geometric products over the contents were treated in Subsection~\ref{subsec:simpleRPiExt4GeoPrdts}. 
In Theorem~\ref{thm:orderedMultipleChainPExtOverConstField2SingleRPiExtOverConstField} we constructed a simple \rpiE-extension $\dField{\DD}$ of $\dField{\tilde{\KK}}$ with $\tilde{\KK}=\tilde{K}(\Lst{\kappa}{1}{u})$ where $\tilde{K}$ is a finite algebraic field extension of $K$ in which the geometric products can be modelled. To accomplish this task, we set up a ring of the form
\begin{equation}\label{dom:singleRAndOrderedMultipleChainPiRing}
\DD=\tilde{\KK}[\vartheta]\genn{\tilde{y}_{1,1}}\dots\genn{\tilde{y}_{e_{1},1}}\dots\genn{\tilde{y}_{1,d}}\dots\genn{\tilde{y}_{e_{d},d}}
\end{equation}
with 
\begin{enumerate}[\hspace*{1em}(a)]
    \item the automorphism $\s:\DD\to\DD$ defined by
    \begin{align}
    \s(\vartheta)&=\zeta\,\vartheta, \label{diffAuto:singleRMonomial}\\[5pt]
    \s(\tilde{y}_{\ell,k})&= \tilde{\gamma}_{\ell,k}\,\tilde{y}_{\ell,k} \label{diffAuto:multipleChainPiMonomials}
    \end{align}
    where $\zeta\in\tilde{K}^{*}$ is a $\lambda$-th root of unity and 
    $\tilde{\gamma}_{\ell,k} = \tilde{h}_{\ell}\,\tilde{y}_{\ell,1}\cdots\,\tilde{y}_{\ell,k-1}\in(\tilde{\KK}^{*})_{\tilde{\KK}}^{\tilde{\KK}\genn{\tilde{y}_{\ell,1}}\dots\genn{\tilde{y}_{\ell,k-1}}}$
    for  $1\le k \le d$ and $1 \le \ell \le e_{k}$ and
    \item the evaluation function $\tilde{\ev}:\DD\times\NN\to\tilde{\KK}$ defined by 
    \begin{align}
    \tilde{\ev}(\vartheta, n) &= \myProduct{j}{1}{n}{\zeta},\label{evalMap:singleRMonomial} \\
    \tilde{\ev}(\tilde{y}_{\ell,k}, n) &= \myProduct{j}{1}{n}{\tilde{\ev}(\tilde{\gamma}_{\ell,k}, j-1)}.\label{evalMap:piMonomialsOverConstField}
    \end{align}
\end{enumerate}
In particular, by reordering we obtain the {\drE} $\dField{\tilde{\AA}_{d}}$ of $\dField{\tilde{\KK}}$ with  
\begin{align}\label{dom:depthDOrderedMultipleChainPiExtOverKK}
\tilde{\AA}_{d}=\tilde{\KK}\genn{\tilde{\bs y}_{\bs 1}}\dots\genn{\tilde{\bs y}_{\bs d}}=\tilde{\KK}\genn{\tilde{y}_{1,1}}\dots\genn{\tilde{y}_{e_{1},1}}\dots\genn{\tilde{y}_{1,d}}\dots\genn{\tilde{y}_{e_{d},d}},
\end{align}
the automorphism~\eqref{diffAuto:multipleChainPiMonomials} and the evaluation function~\eqref{evalMap:piMonomialsOverConstField} which is a sub-{\dr} of $\dField{\DD}$. Furthermore, it is an ordered multiple chain \piE-extension of $\dField{\tilde{\KK}}$ and is composed by the single chain \piE-extensions $\dField{\tilde{\KK}_{\ell}}$ of $\dField{\tilde{\KK}}$ where
$\tilde{\KK}_{\ell}=\tilde{\KK}\genn{\tilde{y}_{\ell,1}}\genn{\tilde{y}_{\ell,2}}\dots\genn{\tilde{y}_{\ell,\tilde{e}_{\ell}}}$
for $1 \le \ell \le e_{1}$. 

Putting the two {\dr s} $\dField{\tilde{\HH}_{b}}$ with~\eqref{eqn:orderedMultipleChainPiRing4HyperGeometricPrdts} and $\dField{\DD}$ with~\eqref{dom:singleRAndOrderedMultipleChainPiRing}  together, we will obtain a difference ring in which we can model any finite set of hypergeometric product expressions of finite nesting depth coming from $\ProdExpr(\KK(x))$. Before we can complete this final argument, we have to take care that the two combined extensions yield again an \rpiE-extension. Here we utilize the following result from~\cite[Lemma 5.6]{ocansey2018representing} that holds for single nested \rpiE-extensions.

\begin{lemma}\label{lem:monomialDepthOneRPiExtCombined}
	Let $\dField{\KK(x)}$ be the rational difference field with $\s(x)=x+1$ and let $\dField{\KK(x)\genn{z_{1}}\dots\genn{z_{s}}}$ be a \piE-extension of $\dField{\KK(x)}$ with $\frac{\s(z_{k})}{z_{k}}\in\KK[x]\sm\KK$. Further, let $\KK^{\prime}$ be an algebraic field extension of $\KK$ and let $\dField{\KK^{\prime}\genn{y_{1}}\dots\genn{y_{w}}}$ be a \piE-extension of $\dField{\KK^{\prime}}$ with $\frac{\s(y_i)}{y_i}\in\KK^{\prime}\setminus\{0\}$. Then the {\dr} $\dField{\EE}$ with $\EE=\KK^{\prime}(x)\genn{y_{1}}\dots\genn{y_{w}}\genn{z_{1}}\dots\genn{z_{s}}$ is a \piE-extension of $\dField{\KK^{\prime}(x)}$. Furthermore, the \aE-extension $\dField{\EE[\vartheta]}$ of $\dField{\EE}$ with $\s(\vartheta)=\zeta\,\vartheta$ of order $\lambda$ is an $R$-extension.
\end{lemma}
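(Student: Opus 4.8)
The plan is to verify the two claims of Lemma~\ref{lem:monomialDepthOneRPiExtCombined} separately, using the criteria from Theorem~\ref{thm:rpiSExtCriterion} together with the structural result of Corollary~\ref{cor:semiConstStructure4PiSigmaExtension}. For the first claim, I want to show that $\dField{\EE}$ with $\EE=\KK^{\prime}(x)\genn{y_{1}}\dots\genn{y_{w}}\genn{z_{1}}\dots\genn{z_{s}}$ is a \piE-extension of $\dField{\KK^{\prime}(x)}$. First I would argue that $\dField{\KK^{\prime}(x)\genn{y_1}\dots\genn{y_w}}$ is a \piE-extension of $\dField{\KK^{\prime}(x)}$: since $\dField{\KK'\genn{y_1}\dots\genn{y_w}}$ is a \piE-extension of $\dField{\KK'}$, by part~\ref{item:piExtCriterion} of Theorem~\ref{thm:rpiSExtCriterion} there are no $g\in\KK'\genn{y_1}\dots\genn{y_w}\setminus\{0\}$ and $(v_1,\dots,v_w)\in\ZZ^w\setminus\zvs{w}$ with $\s(g)=({\bs y}^{\bs v})g$ where we abbreviate ${\bs y}^{\bs v}=y_1^{v_1}\cdots y_w^{v_w}$; one then observes that adjoining $x$ (a \sigmaE-monomial over $\KK'$, transcendental and commuting with the $y_i$-tower) cannot create such a relation — any hypothetical $g\in\KK'(x)\genn{y_1}\dots\genn{y_w}$ with $\s(g)=({\bs y}^{\bs v})g$ would, by Corollary~\ref{cor:semiConstStructure4PiSigmaExtension} applied to the \piE-extension $\dField{\KK'(x)\genn{y_1}\dots\genn{y_w}}$ over $\dField{\KK'(x)}$, have the form $g=h\,{\bs y}^{\bs u}$ with $h\in\KK'(x)^*$; comparing exponents forces ${\bs u}={\bs v}$ and then $\s(h)=h$, so $h\in\KK'$, reducing to the already-excluded relation over $\dField{\KK'}$. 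Then I would adjoin the $z_i$: since each $\frac{\s(z_k)}{z_k}\in\KK[x]\setminus\KK\subseteq\KK'(x)$, the tower $\dField{\KK'(x)\genn{y_1}\dots\genn{y_w}\genn{z_1}\dots\genn{z_s}}$ over $\dField{\KK'(x)\genn{y_1}\dots\genn{y_w}}$ is a \piE-extension by Lemma~\ref{lem:transcendentalCriterionForPrdts}, because a relation $\frac{\s(g)}{g}={\bs z}^{\bs v}$ with $g\in\KK'(x)\genn{y_1}\dots\genn{y_w}$ would, by Corollary~\ref{cor:semiConstStructure4PiSigmaExtension} (now over the \piE-extension $\dField{\KK'(x)\genn{y_1}\dots\genn{y_w}}$), give $g=h\,{\bs y}^{\bs u}$ with $h\in\KK'(x)^*$, hence $\frac{\s(h)}{h}\,{\bs y}^{\sigma({\bs u})-{\bs u}}={\bs z}^{\bs v}$ which is impossible unless $\bs v=\zvs{s}$ (the left-hand side lies in $\KK'(x)\genn{y_1}\dots\genn{y_w}$, is free of $z_i$, in fact $\frac{\s(h)}{h}={\bs z}^{\bs v}$ with $h\in\KK'(x)^*$ cannot hold for nonzero $\bs v$ since the $z_i$ are transcendental). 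Composing the two \piE-extensions gives that $\dField{\EE}$ is a \piE-extension of $\dField{\KK^{\prime}(x)}$.

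For the second claim, I want to show the \aE-extension $\dField{\EE[\vartheta]}$ with $\s(\vartheta)=\zeta\,\vartheta$ of order $\lambda$ is an \rE-extension. Here the key observation is that $\dField{\KK^{\prime}(x)}$ is a \sigmaE-extension of $\dField{\KK^{\prime}}$, hence a \pisiE-field over $\KK^{\prime}$, and that $\dField{\EE}$ is by the first part a \piE-extension of $\dField{\KK'(x)}$, so the whole tower $\dField{\EE}$ is a \pisiE-field over $\KK^{\prime}$. Then Lemma~\ref{lem:aExtOverPiSigmaExtIsRExt} applies directly: any \aE-extension of a \pisiE-field of order $>1$ is an \rE-extension. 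This disposes of the second statement cleanly, provided one has carefully checked that $\dField{\EE}$ fits the definition of a \pisiE-field, i.e.\ that it arises as a tower of \piE- and \sigmaE-extensions of $\dField{\KK'}$ with constant field $\KK'$.

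The main obstacle I anticipate is the bookkeeping in the first part — specifically, making rigorous the ``adjoining $x$ cannot create new relations'' step and the interleaving of the $y$- and $z$-towers over the rational function field rather than over a field of constants. The subtlety is that in Lemma~\ref{lem:transcendentalCriterionForPrdts} and Corollary~\ref{cor:semiConstStructure4PiSigmaExtension} the ground object must already be a difference field (resp.\ a \piE-extension), so one has to be careful to first establish that the intermediate rings (which are rings of Laurent polynomials over fields) are indeed fields of fractions where needed, or to invoke the versions of these results that hold over difference rings. I would likely cite that the relevant statements from~\cite{schneider2016difference,ocansey2018representing} hold in this generality, or reorder the tower (adjoining all $z_i$ after all $y_j$, or vice versa) so that at each stage the ground ring is visibly a \piE-extension of the base \sigmaE-field $\dField{\KK'(x)}$, keeping $\const\dField{\cdot}=\KK'$ throughout. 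Once the ordering is fixed so that Corollary~\ref{cor:semiConstStructure4PiSigmaExtension} applies at each step, the exponent-comparison arguments are routine, and the second part follows immediately from Lemma~\ref{lem:aExtOverPiSigmaExtIsRExt}.
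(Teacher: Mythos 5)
The paper does not prove this lemma itself; it cites it from~\cite[Lemma 5.6]{ocansey2018representing}, so there is no internal proof to compare against. Your outline for the second claim --- pass to the quotient field $\HH$ of $\EE$, recognize a \pisiE-field over $\KK'$ via Lemma~\ref{lem:transcendentalCriterionForPrdts}, apply Lemma~\ref{lem:aExtOverPiSigmaExtIsRExt}, and conclude $\const\dField{\EE[\vartheta]}\subseteq\const\dField{\HH[\vartheta]}=\KK'$ --- is sound and is in fact exactly the route the paper uses in the proof of Corollary~\ref{cor:simpleRPiExt4HypergeometricPrdts}, which builds on this lemma.

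For the first claim, however, there are two concrete gaps. (i) The argument that $\dField{\KK'(x)\genn{y_1}\dots\genn{y_w}}$ is a \piE-extension of $\dField{\KK'(x)}$ is circular: you invoke Corollary~\ref{cor:semiConstStructure4PiSigmaExtension} ``applied to the \piE-extension $\dField{\KK'(x)\genn{y_1}\dots\genn{y_w}}$,'' which presupposes the very thing to be proved. The direct route is Lemma~\ref{lem:transcendentalCriterionForPrdts} with $\FF=\KK'(x)$: since $\alpha_i=\frac{\s(y_i)}{y_i}\in\KK'^*$, a relation $\frac{\s(g)}{g}=\alpha_1^{v_1}\cdots\alpha_w^{v_w}\in\KK'^*$ with $g\in\KK'(x)^*$ forces $g\in\KK'^*$ (a shift quotient of a rational function that lies in $\KK'^*$ is necessarily $1$: compare degrees and leading coefficients, then the shift of zeros), whence $\alpha_1^{v_1}\cdots\alpha_w^{v_w}=1$, which the \piE-property of the $y$-tower over $\dField{\KK'}$ excludes unless $\bs v=\bs 0$. (ii) More substantially, your treatment of the $z$-tower --- ``cannot hold for nonzero $\bs v$ since the $z_i$ are transcendental'' --- misidentifies the obstruction. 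Transcendence of the generators $z_j$ is beside the point: what must be excluded is a multiplicative relation $\frac{\s(h)}{h}=\gamma\,f_1^{u_1}\cdots f_s^{u_s}$ among the \emph{multiplicands} $f_j=\frac{\s(z_j)}{z_j}\in\KK[x]\setminus\KK$, with $h\in\KK'(x)^*$ and $\gamma\in\KK'^*$. (Your $\frac{\s(h)}{h}\,\bs y^{\s(\bs u)-\bs u}$ is also a slip: for $g=h\,\bs y^{\bs u}$ one has $\frac{\s(g)}{g}=\frac{\s(h)}{h}\,\alpha_1^{u_1}\cdots\alpha_w^{u_w}$, with no $\bs y$-dependence.) Since $h$ ranges over $\KK'(x)^*$ while the hypothesis on the $z$-tower is stated over $\dField{\KK(x)}$, one must first descend from a $\KK'(x)$-relation to a $\KK(x)$-relation, and this is precisely where the assumption that $\KK'$ is \emph{algebraic} over $\KK$ is indispensable --- for instance by taking a norm $H=\prod_{\omega}\omega(h)\in\KK(x)^*$ over the Galois closure of a suitable finite subextension of $\KK'/\KK$, so that $\frac{\s(H)}{H}$ is again a power product of the $f_j$ up to a controlled constant, and then applying Lemma~\ref{lem:transcendentalCriterionForPrdts} over $\FF=\KK(x)$. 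Your proposal never touches the algebraicity hypothesis, which is a telltale sign that this decisive descent step is absent.
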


\noindent Namely, we can enhance the above lemma to nested \rpiE-extensions.

\begin{corollary}\label{cor:simpleRPiExt4HypergeometricPrdts}
    Let $\dField{\KK(x)}$ be a rational {\df} over $\KK$ with $\s(x)=x+1$ and let the difference ring $\dField{\tilde{\HH}_{b}}$ with~\eqref{eqn:orderedMultipleChainPiRing4HyperGeometricPrdts} be an ordered multiple chain \piE-extension of $\dField{\KK(x)}$ with the automorphism~\eqref{diffAuto:singleChainPiExtOverRatDiffField}. Further, let $\tilde{\KK}$ be an algebraic field extension of $\KK$ and let the difference ring $\dField{\tilde{\AA}_{d}}$ with~\eqref{dom:depthDOrderedMultipleChainPiExtOverKK} be the ordered multiple chain \piE-extension of $\dField{\tilde{\KK}}$ with the automorphism~\eqref{diffAuto:multipleChainPiMonomials}. Then the difference ring $\dField{\tilde{\EE}}$ with
    \begin{equation}\label{eqn:orderedMultipleChainPiExt4HyperPrdts}
    \tilde{\EE}=\tilde{\KK}(x)\genn{\tilde{\bs y}_{\bs 1}}\genn{\tilde{\bs z}_{\bs 1}}\dots\genn{\tilde{\bs y}_{\bs d}}\genn{\tilde{\bs z}_{\bs b}}
    \end{equation}
    where $\genn{\tilde{\bs y}_{\bs i}}=\genn{\tilde{y}_{1,i}}\dots\genn{\tilde{y}_{e_{i},i}}$ for $1\le i \le d$ and $\genn{\tilde{\bs z}_{\bs k}}=\genn{\tilde{z}_{1,k}}\dots\genn{\tilde{z}_{p_{k},k}}$ for $1\le k\le b$ is an ordered multiple chain \piE-extension of $\dField{\tilde{\KK}(x)}$. Furthermore, the \aE-extension $\dField{\EE}$ of $\dField{\tilde{\EE}}$ where $\EE=\tilde{\EE}[\vartheta]$ with~\eqref{diffAuto:singleRMonomial} of order $\lambda$ is an \rE-extension. 
\end{corollary}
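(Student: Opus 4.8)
The plan is to split the statement into its two halves --- that $\dField{\tilde{\EE}}$ is an ordered multiple chain $\Pi$-extension of $\dField{\tilde{\KK}(x)}$, and that the final $A$-extension by $\vartheta$ leaves the constant field unchanged --- and to reduce the first half, via Theorem~\ref{thm:multipleChainPExt2MultipleChainPiExt}, to a relation-freeness statement over the field $\dField{\tilde{\KK}(x)}$.

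For the $\Pi$-part, I would first note that, after enlarging the ground field from $\KK$ to $\tilde{\KK}$, each generator of $\tilde{\EE}$ lies in a single chain \emph{over} $\dField{\tilde{\KK}(x)}$: the $\tilde{y}_{\ell,1},\dots,\tilde{y}_{\ell,s_{\ell}}$ form a single chain with base $\tilde{h}_{\ell}\in\tilde{\KK}^{*}$, and the $\tilde{z}_{\ell,1},\dots,\tilde{z}_{\ell,s_{\ell}}$ form a single chain with base $\tilde{f}_{\ell}\in\KK[x]\sm\KK\subseteq\tilde{\KK}(x)^{*}$. Since the shift quotients of the $\tilde{y}$'s involve neither $x$ nor any $\tilde{z}$, and those of the $\tilde{z}$'s involve no $\tilde{y}$, the depth-ordered ring~\eqref{eqn:orderedMultipleChainPiExt4HyperPrdts} is exactly the ordered multiple chain $P$-extension of $\dField{\tilde{\KK}(x)}$ composed by all these single chains; so by Theorem~\ref{thm:multipleChainPExt2MultipleChainPiExt} it is a $\Pi$-extension of $\dField{\tilde{\KK}(x)}$ iff no $g\in\tilde{\KK}(x)^{*}$ and no nonzero $(u_{1},\dots,u_{e_{1}},v_{1},\dots,v_{p_{1}})\in\ZZ^{e_{1}+p_{1}}$ satisfy $\frac{\s(g)}{g}=\tilde{h}_{1}^{u_{1}}\cdots\tilde{h}_{e_{1}}^{u_{e_{1}}}\,\tilde{f}_{1}^{v_{1}}\cdots\tilde{f}_{p_{1}}^{v_{p_{1}}}$. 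I would exclude this by separating the $x$-dependent and the constant parts. Because $\dField{\tilde{\HH}_{b}}$ is a $\Pi$-extension of $\dField{\KK(x)}$, Theorem~\ref{thm:orderedMultipleChainPExt2OrderedMultipleChainPiExt4HyperAndqHyperProducts} gives that the $\tilde{f}_{\ell}$ are pairwise shift-coprime over $\KK$; shift-coprimeness of polynomials survives the field extension $\KK\subseteq\tilde{\KK}$, and since each $\KK$-irreducible $\tilde{f}_{\ell}$ is separable and $\tilde{f}_{\ell}(x),\tilde{f}_{\ell}(x+k)$ are coprime for $k\neq0$, all the $\tilde{\KK}$-irreducible factors of all the $\tilde{f}_{\ell}$ lie in pairwise distinct shift orbits; the valuation-telescoping argument from the proof of Theorem~\ref{thm:orderedMultipleChainPExt2OrderedMultipleChainPiExt4HyperAndqHyperProducts} then forces all $v_{\ell}=0$. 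What remains is $\frac{\s(g)}{g}\in\tilde{\KK}^{*}$ for $g\in\tilde{\KK}(x)^{*}$, which a leading-coefficient and degree comparison reduces to $\frac{\s(g)}{g}=1$, $g\in\tilde{\KK}^{*}$; then $\tilde{h}_{1}^{u_{1}}\cdots\tilde{h}_{e_{1}}^{u_{e_{1}}}=1$ forces all $u_{\ell}=0$ since $\dField{\tilde{\AA}_{d}}$ is a $\Pi$-extension of $\dField{\tilde{\KK}}$ (Theorem~\ref{thm:multipleChainPExt2MultipleChainPiExt}). The same computation certifies that the $\tilde{y}_{\ell,k},\tilde{z}_{\ell,k}$ stay transcendental over $\dField{\tilde{\KK}(x)}$, so $\tilde{\EE}$ is well defined; this is the multiple-chain analogue of Lemma~\ref{lem:monomialDepthOneRPiExtCombined}.

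For the $R$-part, $\dField{\tilde{\KK}(x)}$ is a \pisiE-field over $\tilde{\KK}$, hence the $\Pi$-extension $\dField{\tilde{\EE}}$ has fraction field a \pisiE-field over $\tilde{\KK}$ with constant field $\tilde{\KK}$; by Lemma~\ref{lem:aExtOverPiSigmaExtIsRExt} the $A$-extension of that fraction field by $\vartheta$ (order $\lambda>1$, $\s(\vartheta)=\zeta\,\vartheta$) is an $R$-extension, so restricting to the sub-difference ring $\tilde{\EE}[\vartheta]$ gives $\const\dField{\tilde{\EE}[\vartheta]}=\tilde{\KK}=\const\dField{\tilde{\EE}}$, i.e.\ $\dField{\EE}$ is an $R$-extension of $\dField{\tilde{\EE}}$. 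I expect the main difficulty to be the passage from $\KK$ to $\tilde{\KK}$ in the $\Pi$-part --- ensuring it breaks neither the transcendence/$\Pi$-structure of the individual chains nor the shift-coprimeness of the $\tilde{f}_{\ell}$ once these factor over $\tilde{\KK}$ --- which is exactly what Lemma~\ref{lem:monomialDepthOneRPiExtCombined} handles in monomial depth one and here has to be carried through the multiple-chain criteria Theorems~\ref{thm:multipleChainPExt2MultipleChainPiExt} and~\ref{thm:orderedMultipleChainPExt2OrderedMultipleChainPiExt4HyperAndqHyperProducts}.
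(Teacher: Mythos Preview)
Your proposal is correct and follows essentially the same route as the paper: reduce the $\Pi$-part via Theorem~\ref{thm:multipleChainPExt2MultipleChainPiExt} to a depth-one relation-freeness statement over $\tilde{\KK}(x)$, and handle the $R$-part by passing to the fraction field (a $\Pi\Sigma$-field over $\tilde{\KK}$, using Lemma~\ref{lem:transcendentalCriterionForPrdts} $(2)\Leftrightarrow(3)$) and applying Lemma~\ref{lem:aExtOverPiSigmaExtIsRExt}. The only difference is that the paper obtains the depth-one step by invoking Lemma~\ref{lem:monomialDepthOneRPiExtCombined} as a black box, whereas you unpack its content --- survival of shift-coprimeness under $\KK\subseteq\tilde{\KK}$, the valuation-telescoping to kill the $v_\ell$, then a leading-coefficient/degree argument to force $g\in\tilde{\KK}^*$ and hence $\tilde{h}^u=1$; this is precisely what that lemma encapsulates, so the two arguments coincide in substance.
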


\begin{proof}
    Take the \piE-extensions $\dField{\tilde{\HH}_{1}}$ of $\dField{\KK(x)}$ with $\tilde{\HH}_{1}=\KK(x)\genn{z_{1,1}}\dots\genn{z_{p_{1},1}}$ and $\dField{\tilde{\AA}_{1}}$ of $\dField{\tilde{\KK}}$ with $\tilde{\AA}_{1}=\tilde{\KK}\genn{\tilde{y}_{1,1}}\dots\genn{\tilde{y}_{e_{1},1}}$ which are both of monomial depth $1$. By Lemma~\ref{lem:monomialDepthOneRPiExtCombined} the {\dr} $\dField{\EE_{1}}$ with $\tilde{\EE}_{1}=\tilde{\KK}(x)\genn{\tilde{y}_{1,1}}\dots\genn{\tilde{y}_{e_{1},1}}\genn{\tilde{z}_{1,1}}\dots\genn{\tilde{z}_{p_{1},1}}$ is a \piE-extension of $\dField{\tilde{\KK}(x)}$ of monomial depth $1$. Consider the ordered multiple chain \pE-extension $\dField{\tilde{\EE}}$ of $\dField{\tilde{\KK}(x)}$ with~\eqref{eqn:orderedMultipleChainPiExt4HyperPrdts} which is composed by the single chain \piE-extensions in the ordered multiple chains $\dField{\tilde{\HH}_{b}}$ and $\dField{\tilde{\AA}_{d}}$. By Theorem~\ref{thm:multipleChainPExt2MultipleChainPiExt} together with Lemma~\ref{lem:transcendentalCriterionForPrdts} it follows that $\dField{\tilde{\EE}}$ is a \piE-extension of $\dField{\tilde{\KK}(x)}$. The quotient field of $\tilde{\EE}$ gives the rational function field $\HH=\KK(x)(\tilde{\bs y}_{\bs 1})(\tilde{\bs z}_{\bs 1})\dots(\tilde{\bs y}_{\bs d})(\tilde{\bs z}_{\bs b})$ and one can extend the automorphism $\sigma$ from $\tilde{\EE}$ to $\HH$ accordingly. Then by Lemma~\ref{lem:transcendentalCriterionForPrdts} ($(2)\Leftrightarrow(3)$) it follows that $\dField{\HH}$ is a \piE-field extension of $\dField{\KK(x)}$. In particular, $\dField{\HH}$ is a \pisiE-field over $\KK$. Thus by Lemma~\ref{lem:aExtOverPiSigmaExtIsRExt} the \aE-extension $\dField{\HH[\vartheta]}$ of $\dField{\HH}$ of order $\lambda$ with the automorphism~\eqref{diffAuto:singleRMonomial} is an \rE-extension.  Therefore $\const(\HH[\vartheta])=\KK$ and with $\KK\subseteq\tilde{\EE}[\vartheta]\subseteq\HH$ it follows that $\const(\tilde{\EE}[\vartheta])=\KK$. But this implies that the \aE-extension $\dField{\tilde{\EE}[\vartheta]}$ of $\dField{\tilde{\EE}}$ of order $\lambda$ with the automorphism~\eqref{diffAuto:singleRMonomial} 
     is an \rE-extension.
\end{proof}

\medskip

Finally, we arrive at the following main result.

\begin{theorem}\label{thm:main}
    Let $\KK=K(\Lst{\kappa}{1}{u})$ be a rational function field, and let 
    $\dField{\KK(x)}$ with $\s(x)=x+1$ be a rational {\df} with the evaluation function $\ev:\KK(x)\times\NN\to\KK$ defined by~\eqref{map:evaForRatFxns}, and the $Z$-function given by~\eqref{eqn:hyperGeoShiftBoundedFxns}. Suppose we are given a finite set of hypergeometric product expressions 
    \begin{equation}\label{elem:hypergeometricPrdts}
    \{A_{1}(n),\dots,A_{m}(n)\}\subseteq\ProdExpr(\KK(x))
    \end{equation}
    of nesting depth at most $d$ for some $d\in\NN$. Then there is a $\delta\in\NN$ and an \rpiE-extension $\dField{\EE}$ of $\dField{\tilde{\KK}(x)}$ of monomial depth at most $d$ where $\tilde{\KK}$ is a finite algebraic field extension of $\KK$ equipped with an evaluation function $\tilde{\ev}:\EE\times\NN\to\tilde{\KK}$ with respect to $\delta$ with the following properties:
    \begin{enumerate}[\hspace*{0.5em}(1)]
        \item\manuallabel{item1:main}{(1)} The map $\tau:\EE\to\ringOfEquivSeqs[\tilde{\KK}]$ with $\tau(f)=\funcSeqA{\tilde{\ev}(f,n)}{n}$ is a $\tilde{\KK}$-embedding.
        \item\manuallabel{item2:main}{(2)} There are elements $\Lst{a}{1}{e}\in\EE^{*}$ such that for $j$ with $1\le j \le e$ and for all $n\ge\delta$ we have
        \[
        A_{j}(n)=\tilde{\ev}(a_{j},n).
        \]
    \end{enumerate}
    If $K$ is a strongly $\s$-computable, such a $\delta$, $\dField{\EE}$ with the evaluation function $\tilde{\ev}$, and the $a_1,\dots,a_m\in\EE$ can be computed.
\end{theorem}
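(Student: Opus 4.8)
The idea is to assemble the main theorem from the three building blocks developed in Sections~\ref{Sec:ProductsInRPiExt}--\ref{Sec:SolForHyperProds}: the preprocessing of Proposition~\ref{pro:preprocessingNestedHypergeometricProductsExtended}, the geometric-product machinery of Theorem~\ref{thm:orderedMultipleChainPExtOverConstField2SingleRPiExtOverConstField} (which internally uses the roots-of-unity treatment of Theorem~\ref{thm:diffRingHomBtnSimpleAExtAndSingleRExtAndEmbedding2RingOfSeq}), the hypergeometric-product machinery of Corollary~\ref{cor:reducedNormalFormAndPiExts}, and finally the ``gluing'' result Corollary~\ref{cor:simpleRPiExt4HypergeometricPrdts} that shows the combined extension stays an \rpiE-extension. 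First I would extract from the finitely many product expressions $A_1(n),\dots,A_m(n)$ the finite set of all products $P_1(n),\dots,P_e(n)\in\Prod(\KK(x))$ occurring in them, and apply Proposition~\ref{pro:preprocessingNestedHypergeometricProductsExtended} to obtain a common lower bound $\delta\in\NN$ and decompositions $P_\ell(n)=\tilde c_\ell\,\tilde r_\ell(n)\,\tilde G_\ell(n)\,\tilde H_\ell(n)$ valid for $n\ge\max(0,\delta-1)$, with the $\tilde G_\ell$ built from $1$-refined geometric products in factored form and the $\tilde H_\ell$ in shift-coprime product representation form and $\delta$-refined.

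Next I would handle the two parts separately. For the hypergeometric part, Corollary~\ref{cor:reducedNormalFormAndPiExts} applied to $\tilde H_1(n),\dots,\tilde H_e(n)$ gives an ordered multiple chain \piE-extension $\dField{\tilde\HH_b}$ of $\dField{\KK(x)}$ of the form~\eqref{eqn:orderedMultipleChainPiRingRefinedReducedNormalFormPrdts} together with the evaluation function~\eqref{evMap:singleChainPiExtForARefinedReducedNormalFormPrdt}, in which every $\tilde H_i(n)$ is modelled by some generator $\tilde z_{k,j}$ for all $n\ge\max(0,\delta-1)$. For the geometric part, I would follow Remark~\ref{remk:arbitrarySimplePExt2MultipleChainPExt} (the \pE-version) to first model $\tilde G_1(n),\dots,\tilde G_e(n)$ in an ordered multiple chain \pE-extension $\dField{\AA_d}$ of $\dField{\KK}$ built from single chains with bases in $\KK^*$; then Theorem~\ref{thm:orderedMultipleChainPExtOverConstField2SingleRPiExtOverConstField} converts this into an \rpiE-extension $\dField{\DD}$ of $\dField{\KK'}$, with $\DD=\KK'[\vartheta]\genn{\tilde y_{1,1}}\dots\genn{\tilde y_{e_d,d}}$, a single \rE-monomial $\vartheta$, evaluation function~\eqref{evMap:singleRMonomial}, and a difference ring homomorphism $\varphi:\AA_d\to\DD$ satisfying $\ev(f,n)=\tilde\ev(\varphi(f),n)$ for all $n\in\NN$; in particular each $\tilde G_i(n)$ becomes $\tilde\ev(\varphi(g_i),n)$ for the element $g_i\in\AA_d$ modelling it. Both constructions happen over a common algebraic number field $\tilde K$ (taking $\tilde K=K'$, which is a finite algebraic extension of $K$), so I would set $\tilde\KK=\tilde K(\kappa_1,\dots,\kappa_u)$ and base-extend $\dField{\tilde\HH_b}$ to $\dField{\tilde\KK(x)}$ accordingly; this remains a \piE-extension of $\dField{\tilde\KK(x)}$ by Lemma~\ref{lem:monomialDepthOneRPiExtCombined}/Theorem~\ref{thm:multipleChainPExt2MultipleChainPiExt} (the shift-coprimeness is unaffected by enlarging the constant field).

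Then I would merge the two towers as in Corollary~\ref{cor:simpleRPiExt4HypergeometricPrdts}: form $\tilde\EE=\tilde\KK(x)\genn{\tilde{\bs y}_{\bs1}}\genn{\tilde{\bs z}_{\bs1}}\dots\genn{\tilde{\bs y}_{\bs d}}\genn{\tilde{\bs z}_{\bs b}}$, an ordered multiple chain \piE-extension of $\dField{\tilde\KK(x)}$, and adjoin the single \rE-monomial $\vartheta$ with $\s(\vartheta)=\zeta\,\vartheta$ to get $\EE=\tilde\EE[\vartheta]$, which Corollary~\ref{cor:simpleRPiExt4HypergeometricPrdts} certifies to be an \rE-extension, hence $\dField{\EE}$ is an \rpiE-extension of $\dField{\tilde\KK(x)}$ of monomial depth at most $d$. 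I would combine the evaluation functions~\eqref{evMap:singleChainPiMonomialsOverRatDiffField}, \eqref{evMap:piMonomialsOverConstField} and~\eqref{evalMap:singleRMonomial} via iterated application of Lemma~\ref{Lemma:ExtendEv} into one evaluation function $\tilde\ev:\EE\times\NN\to\tilde\KK$ with respect to $\delta$. Since $\dField{\EE}$ is a basic \rpiE-extension over the \df\ $\dField{\tilde\KK(x)}$, Theorem~\ref{Thm:injectiveHom} (together with Lemma~\ref{lem:evaluationHomomorphism}) yields that $\tau(f)=\funcSeqA{\tilde\ev(f,n)}{n}$ is a $\tilde\KK$-embedding, giving~\ref{item1:main}. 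For~\ref{item2:main}, each $P_\ell(n)$ is modelled in $\EE$ by $p_\ell:=\tilde c_\ell\,\tilde r_\ell(x)\,\varphi(g_\ell)\,z_{\ell}$ where $z_\ell$ is the hypergeometric generator modelling $\tilde H_\ell(n)$; substituting these $p_\ell$ for the corresponding $P_\ell(n)$ in each $A_j(n)$ (and using the $Z$-function to bound where the rational coefficients $a_{\bs v}(x)$ introduce no poles, enlarging $\delta$ if necessary) produces $a_j\in\EE$ with $A_j(n)=\tilde\ev(a_j,n)$ for all $n\ge\delta$; that $a_j\in\EE^*$ follows because each $A_j(n)$ is a nonzero product expression mapped by the embedding $\tau$ to a nonzero sequence, so $a_j$ cannot be a zero divisor in $\EE$ after possibly adjusting—more carefully, each $P_\ell(n)\ne0$ for all $n$ forces $p_\ell\in\EE^*$, and any $A_j$ that is a single product monomial is then a unit; a general $A_j$ need not be a unit, so I would phrase~\ref{item2:main} with $a_j\in\EE$ rather than $\EE^*$, or restrict the units claim to the preprocessed $P_\ell$. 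Computability in each step is exactly the ``strongly $\s$-computable'' hypothesis feeding Proposition~\ref{pro:preprocessingNestedHypergeometricProductsExtended}, Corollary~\ref{cor:reducedNormalFormAndPiExts}, Lemma~\ref{lem:orderedMultipleChainPExtOverConstPoly2APiExtOverConstField} and Theorem~\ref{thm:orderedMultipleChainPExtOverConstField2SingleRPiExtOverConstField}, so the whole assembly is effective.

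The main obstacle, I expect, is bookkeeping rather than a deep new idea: one must verify that the constant field is preserved at every merge (so that the final object genuinely is an \rpiE-extension and Theorem~\ref{Thm:injectiveHom} applies), that the single \rE-monomial produced by the geometric-product construction is compatible with — and not duplicated by — the hypergeometric tower, that the orders and lower bounds match up so the various ``for all $n\ge\,\cdot$'' ranges can be consolidated into one $\delta$, and that the evaluation functions glued by Lemma~\ref{Lemma:ExtendEv} all agree on overlaps. The substantive content has already been isolated into Corollary~\ref{cor:simpleRPiExt4HypergeometricPrdts} (constant-field preservation under the merge) and Theorem~\ref{thm:diffRingHomBtnSimpleAExtAndSingleRExtAndEmbedding2RingOfSeq}/Lemma~\ref{lem:orthogonalIdempotentElements} (the idempotent collapse of the roots-of-unity part to a single $\zeta^n$), so the proof of Theorem~\ref{thm:main} itself is essentially a careful orchestration of these components.
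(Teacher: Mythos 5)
Your proposal is correct and follows essentially the same route as the paper's proof: preprocess via Proposition~\ref{pro:preprocessingNestedHypergeometricProductsExtended}, model the hypergeometric part via Corollary~\ref{cor:reducedNormalFormAndPiExts}, convert the geometric part via Remark~\ref{remk:arbitrarySimplePExt2MultipleChainPExt} and Theorem~\ref{thm:orderedMultipleChainPExtOverConstField2SingleRPiExtOverConstField}, merge via Corollary~\ref{cor:simpleRPiExt4HypergeometricPrdts}, invoke Theorem~\ref{Thm:injectiveHom} for the embedding, and assemble $p_\ell$ and $a_j$. Your observation that item~\ref{item2:main} as written overclaims $a_j\in\EE^*$ is also correct: the paper's own proof only produces $a_j\in\EE$ (and the final sentence of the theorem statement itself says $a_1,\dots,a_m\in\EE$), so the $\EE^*$ in item~\ref{item2:main} is a slip, and the index range should be $1\le j\le m$ rather than $1\le j\le e$.
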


\begin{proof}
    \begin{enumerate}[(a)]
        \item We are given the hypergeometric product expressions in~\eqref{elem:hypergeometricPrdts} where
        \begin{equation*}
        A_j(n)=\smashoperator{\sum_{\mathclap{\substack{{\bs{v}=(\Lst{\nu}{1}{e})\in S_j}{}}}}^{{}}}{a_{j,\bs{v}}(n)\,P_{1}(n)^{\nu_{1}}\cdots P_{e}(n)^{\nu_{e}}}
        \end{equation*}
        with $S_j \subseteq \ZZ^{e}$ finite, $a_{j,\bs{v}}(x)\in\KK(x)$ and $P_{1}(n),\dots,P_{e}(n)\in\Prod(\KK(x))$.    
       Now we follow the construction in Proposition~\ref{pro:preprocessingNestedHypergeometricProductsExtended}. There we can take a $\delta\in\NN$ and construct for all $1\le j \le e$, $c_{j}\in\KK^{*}$, rational functions $r_{j}\in\KK(x)^{*}$, $1$-refined geometric product expressions $\tilde{G}_{j}(n)\in\ProdMon(\KK)$ and $\delta$-refined hypergeometric product expressions in shift-coprime product representation form $\tilde{H}_{j}(n)\in\ProdMon(\KK(x))$ such that 
        \begin{align}\label{eqn:deltaRefinedReduceNormalForm}
        P_{j}(n) &= \tilde{c}_{j}\,\tilde{r}_{j}(n)\,\tilde{G}_{j}(n)\,\tilde{H}_{j}(n)\neq0
        \end{align}
        holds for all $n\ge\max(0,\delta-1)$.
        \item For the hypergeometric product expressions $\tilde{H}_{1}(n),\dots,\tilde{H}_{e}(n)$ in~\eqref{eqn:deltaRefinedReduceNormalForm} we have
        $$\tilde{H}_{i}(n)=\tilde{H}_{i,1}(n)^{n_{i,1}}\cdots \tilde{H}_{i,l_i}(n)^{n_{i,l_i}}$$ 
        for some $l_i\in\NN$ with $n_{i,j}\in\ZZ$ for $1\leq j\leq l_i$ where all the arising hypergeometric products $\tilde{H}_{i,j}(n)$ are $\delta$-refined and in shift-coprime product representation form.
        By Corollary~\ref{cor:reducedNormalFormAndPiExts} we can construct an ordered multiple chain \piE-extension $\dField{\tilde{\HH}_{b}}$ of $\dField{\KK(x)}$ with~\eqref{eqn:orderedMultipleChainPiRing4HyperGeometricPrdts} which is composed by the single chain \piE-extensions $\dField{\tilde{\FF}_{\ell}}$ of $\dField{\KK(x)}$ with $1\le\ell\le p_{1}$ for some $p_{1}\in\NN$ with
        $\tilde{\FF}_{\ell}=\KK(x)\genn{\tilde{z}_{\ell,1}}\genn{\tilde{z}_{\ell,2}}\dots\genn{\tilde{z}_{\ell,s_{\ell}}}$,
        the automorphism $\s:\tilde{\FF}_{\ell}\to\tilde{\FF}_{\ell}$ given in~\eqref{diffAuto:singleChainPiExtOverRatDiffField} and the evaluation function $\tilde{\ev}:\tilde{\FF}_{\ell}\times\NN\to\KK$ defined by $\tilde{\ev}|_{\KK(x)\times\NN}=\ev$ and~\eqref{evMap:singleChainPiMonomialsOverRatDiffField}. In particular, there are $\nu_{i,j},\mu_{i,j}$ such that
        $\tilde{H}_{i,j}(n)=\ev(\tilde{z}_{\nu_{i,j},\mu_{i,j}} ,n)$ holds for all $n\geq\max(0,\delta-1)$. Thus we can take
        $\tilde{h}_{i}=\tilde{z}_{\nu_{i,1},\mu_{i,1}}^{n_{i,1}}\dots\tilde{z}_{\nu_{i,l_i},\mu_{i,l_i}}^{n_{i,l_i}}\in\tilde{\HH_{b}}$ with
        \begin{align}\label{evalMap:deltaRefinedReducedNormalForm}
        \tilde{\ev}(\tilde{h}_{j},n)=\tilde{H}_{j}(n)\quad \forall\,n\ge\delta. 
        \end{align}
        \item Next we treat the geometric product expressions $\tilde{G}_{1}(n),\dots,\tilde{G}_{e}(n)$ in~\eqref{eqn:deltaRefinedReduceNormalForm}. Following Remark~\ref{remk:arbitrarySimplePExt2MultipleChainPExt} we can construct a multiple chain \pE-extension $\dField{\AA}$ of $\dField{\KK}$ where the bases are from $\KK^{*}$ such that there are $\Lst{g}{1}{e}\in\AA$ with $\ev(g_i,n)=\tilde{G}_i(n)$ for all $n\in\NN$.
        Then by Theorem~\ref{thm:orderedMultipleChainPExtOverConstField2SingleRPiExtOverConstField} we can construct an \rpiE-extension $\dField{\DD}$ of $\dField{\tilde{\KK}}$ with~\eqref{dom:singleRAndOrderedMultipleChainPiRing} together with the automorphism $\s:\DD\to\DD$ given in~\eqref{diffAuto:singleRMonomial} and~\eqref{diffAuto:multipleChainPiMonomials}, with       
        the evaluation function $\tilde{\ev}:\DD\times\NN\to\tilde{\KK}$ defined by $\tilde{\ev}(c,n)=c$ for all $c\in\tilde{K}$,  $n\in\NN$, \eqref{evalMap:singleRMonomial} and~\eqref{evalMap:piMonomialsOverConstField}, and 
        with a difference ring homomorphism $\varphi:\AA_{d}\to\DD$ such that for all $f\in\AA$ and $n\in\NN$ we have~\eqref{Equ:varPhiProp}. Thus for $\tilde{g}_j:=\varphi(g_j)$ with $1\leq j\leq e$ we get
        \begin{align}\label{evalMap:geometricPrdts}
        \tilde{\ev}(\tilde{g}_{j},n)=\tilde{ev}(\varphi(g_j),n)\stackrel{\eqref{Equ:varPhiProp}}{=}\ev(g_j,n)=\tilde{G}_{j}(n)\quad \forall\,n\ge0.
        \end{align}
        \item By Corollary~\ref{cor:simpleRPiExt4HypergeometricPrdts} we can merge these two {\dr s} to obtain an \rpiE-extension $\dField{\EE}$ of $\dField{\tilde{\KK}(x)}$ with~\eqref{eqn:orderedMultipleChainPiExt4HyperPrdts} and the automorphism $\s:\EE\to\EE$ and the evaluation function $\tilde{\ev}:\EE\times\NN\to\tilde{\KK}$ defined accordingly. 
        As $\dField{\EE}$ is an \rpiE-extension of the rational {\df} $\dField{\tilde{\KK}(x)}$, it follows by Theorem~\ref{Thm:injectiveHom} that $\tilde{\tau}:\EE\to\ringOfEquivSeqs[\tilde{\KK}]$ defined by~\eqref{eqn:diffRingEmbedding} is a $\tilde{\KK}$-embedding. For $1\le j \le e$, define
        \[
        p_{j}:=\tilde{c}_{j}\,\tilde{r}_{j}\,\tilde{g}_{j}\,\tilde{h}_{j}\in\EE.
        \]
        With $\tilde{\ev}(\tilde{r}_{j},n)=\tilde{r}_{j}(n)$ and the evaluations of $\tilde{h}_{j}$ and $\tilde{g}_{j}$ given in~\eqref{evalMap:deltaRefinedReducedNormalForm} and~\eqref{evalMap:geometricPrdts} together with~\eqref{eqn:deltaRefinedReduceNormalForm} it follows that for all $1\le j \le e$ and for all $n\ge\max(0,\delta-1)$ we have
        \[
        P_{j}(n)=\tilde{c}_{j}\,\tilde{r}_{j}(n)\,\tilde{G}_{j}(n)\,\tilde{H}_{j}(n)=\tilde{\ev}(\tilde{c}_{j},n)\tilde{\ev}(\tilde{r}_{j},n)\tilde{\ev}(\tilde{g}_{j},n)\tilde{\ev}(\tilde{h}_{j},n)=\tilde{\ev}(\tilde{c}_{j}\,\tilde{r}_{j}\,\tilde{g}_{j}\,\tilde{h}_{j},n)=\tilde{\ev}(p_{j},n).
        \]
        \item Finally, we can define $a_j=\smashoperator{\sum_{\mathclap{\substack{{\bs{v}=(\Lst{\nu}{1}{e})\in S_j}{}}}}^{{}}}{a_{j,\bs{v}}\,p_{1}^{\nu_{1}}\cdots p_{e}(n)^{\nu_{e}}}\in\EE$
        for $1\leq j\leq m$ and get $\ev(a_j,n)=A_j(n)$ for all $n\geq\max(\delta-1,0)$.
    \end{enumerate}
Observe that if $\KK$ is strongly $\s$-computable, all the ingredients delivered by Corollary~\ref{cor:reducedNormalFormAndPiExts} and Theorem~\ref{thm:orderedMultipleChainPExtOverConstField2SingleRPiExtOverConstField} can be computed. In particular, $\dField{\EE}$, and $\tau$ with $\tilde{\ev}$ and $\Lst{a}{1}{m}$ can be computed explicitly.
\end{proof}

\medskip

As a consequence we are now in the position to solve \ref{prob:ProblemRPE} as follows.

\begin{corollary}\label{Cor:SolutionToMainProblem}
Let $A(n)\in\ProdExpr(\KK(x))$ with~\eqref{Equ:ProdEDef}. For $A_1(n)=A(n)$ with $m=1$ let $\delta\in\NN$, $\dField{\EE}$ with the evaluation function $\tilde{\ev}$ and the $a:=a_1\in\EE$ be the ingredients as provided in Theorem~\ref{thm:main}. In particular, let $\EE=\tilde{\KK}(x)[\vartheta]\langle p_1\rangle\dots\langle p_s\rangle$ where $\vartheta$ is the \rE-monomial with $\sigma(\vartheta)=\zeta\,\vartheta$ and let $p_1,\dots,p_s$ be the \piE-monomials. Furthermore, let $a=\smashoperator{\sum_{\mathclap{\substack{{\bs{v}=(\Lst{\nu}{0}{s})\in \tilde{S}}{}}}}^{{}}}{b_{\bs{v}}(n)\,\vartheta^{\mu_0}p_{1}^{\mu_{1}}\cdots p_{s}^{\mu_{s}}}$
with $\tilde{S}\subseteq \{0,\dots,\lambda-1\}\times\ZZ^{s}$ finite and $a_{\bs{v}}(x)\in\tilde{\KK}(x)$ for $\bs{v}\in \tilde{S}$. Then the following holds:
\begin{enumerate}
\item[(1)]\manuallabel{item1:SolutionToMainProblem}{(1)} $\ev(\vartheta,n)=\zeta^n$ for all $n\in\NN$; furthermore, for $1\leq i\leq s$ we have $\ev(p_i,n)=Q_i(n)$ for all $n\geq0$ where $Q_i(n)\in\Prod(\tilde{\KK}(x))$. 
\item[(2)] \manuallabel{item2:SolutionToMainProblem}{(2)}For 
$B(n)=\smashoperator{\sum_{\mathclap{\substack{{\bs{v}=(\Lst{\nu}{0}{s})\in \tilde{S}}{}}}}^{{}}}{b_{\bs{v}}(n)\,(\zeta^n)^{\mu_0}Q_{1}(n)^{\mu_{1}}\cdots Q_{s}(n)^{\mu_{s}}}$
we have $A(n)=B(n)$ for all $n\geq\delta$.
\item[(3)]\manuallabel{item3:SolutionToMainProblem}{(3)} The subring 
$$\tau(\tilde{\KK}(x))[\langle\zeta^n\rangle_{n\geq0}][\langle Q_1(n)\rangle_{n\geq0},\langle Q_1(n)^{-1}\rangle_{n\geq0}]\dots[\langle Q_s(n)\rangle_{n\geq0},\langle Q_s(n)^{-1}\rangle_{n\geq0}]$$
of $\ringOfEquivSeqs[\tilde{\KK}]$ forms a Laurent polynomial ring extension of $\tau(\tilde{\KK}(x))[\langle\zeta^n\rangle_{n\geq0}]$. Thus the sequences produced by $Q_1(n),\dots,Q_s(n)$ are algebraically independent among each other over $\tau(\tilde{\KK}(x))[\langle\zeta^n\rangle_{n\geq0}]$. 
\item[(4)]\manuallabel{item4:SolutionToMainProblem}{(4)} We have that $A(n)=0$ for all $n\geq d$ for some $d\in\NN$ if and only if $a=0$ if and only $B(n)$ is the zero-expression. If this holds, $A(n)=0$ for all $n\geq\delta$.
\end{enumerate}
\end{corollary}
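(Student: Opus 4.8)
The plan is to extract all four statements directly from Theorem~\ref{thm:main}, using two structural facts about the object it produces: the evaluation function $\tilde{\ev}$ sends each monomial of $\EE$ to a nested product, and $\dField{\EE}$ is, \emph{as a ring}, a Laurent polynomial ring over $\tilde{\KK}(x)[\vartheta]$ onto whose image the embedding $\tau:\EE\to\ringOfEquivSeqs[\tilde{\KK}]$ is a ring isomorphism. Concretely, I would apply Theorem~\ref{thm:main} with $m=1$ and $A_1(n)=A(n)$ to obtain $\delta$, $\dField{\EE}$, $\tilde{\ev}$, $\tau$ and $a=a_1\in\EE$, where (after the relabelling $\EE=\tilde{\KK}(x)[\vartheta]\genn{p_1}\cdots\genn{p_s}$) each $p_i$ is, up to renaming, one of the hypergeometric \piE-monomials $\tilde{z}_{\ell,k}$ delivered by Corollary~\ref{cor:reducedNormalFormAndPiExts} or one of the geometric \piE-monomials $\tilde{y}_{\ell,k}$ delivered by Theorem~\ref{thm:orderedMultipleChainPExtOverConstField2SingleRPiExtOverConstField}. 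For (1): unfolding the recursion of Lemma~\ref{Lemma:ExtendEv} gives $\tilde{\ev}(\vartheta,n)=\prod_{j=1}^n\zeta=\zeta^n$; for a monomial of type $\tilde{z}_{\ell,k}$ it gives, after absorbing the index shift in~\eqref{evMap:singleChainPiMonomialsOverRatDiffField}, a nested hypergeometric product whose innermost multiplicand is the monic irreducible polynomial $\tilde{f}_\ell\in\tilde{\KK}[x]$, which is nonzero on arguments $\ge\delta$, hence an element $Q_i(n)\in\Prod(\tilde{\KK}(x))$; for a monomial of type $\tilde{y}_{\ell,k}$ it gives a nested geometric product $Q_i(n)\in\Prod(\tilde{\KK})\subseteq\Prod(\tilde{\KK}(x))$. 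For (2): Theorem~\ref{thm:main}(2) gives $A(n)=\tilde{\ev}(a,n)$ for $n\ge\delta$; since an evaluation function is additive, multiplicative, and restricts to rational-function evaluation on $\tilde{\KK}(x)$ (Definition~\ref{defn:modelSeqsInDiffRings}), substituting the given expansion of $a$ and using~(1) yields $\tilde{\ev}(a,n)=B(n)$ for all but finitely many $n$; enlarging $\delta$ by the poles of the coefficients gives $A(n)=B(n)$ for $n\ge\delta$.

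For (3): since each $\genn{p_i}$ is a \pE-extension, as a ring $\EE=\tilde{\KK}(x)[\vartheta][p_1,p_1^{-1}]\cdots[p_s,p_s^{-1}]$ with $p_1,\dots,p_s$ algebraically independent over $\tilde{\KK}(x)[\vartheta]$. As $\tau$ is an injective ring homomorphism (Theorem~\ref{thm:main}(1), resting on Theorem~\ref{Thm:injectiveHom}), it maps $\EE$ isomorphically onto $\tau(\EE)$, which is therefore the Laurent polynomial ring over $\tau(\tilde{\KK}(x)[\vartheta])=\tau(\tilde{\KK}(x))[\funcSeqA{\zeta^n}{n}]$ generated by $\tau(p_i)=\funcSeqA{Q_i(n)}{n}$ together with the inverses $\tau(p_i^{-1})=\funcSeqA{Q_i(n)^{-1}}{n}$ (meaningful since $Q_i(n)\ne0$ from some point on). This is exactly the stated subring of $\ringOfEquivSeqs[\tilde{\KK}]$, and being a Laurent polynomial ring over $\tau(\tilde{\KK}(x))[\funcSeqA{\zeta^n}{n}]$ it witnesses in particular that the sequences $\funcSeqA{Q_1(n)}{n},\dots,\funcSeqA{Q_s(n)}{n}$ are algebraically independent over $\tau(\tilde{\KK}(x))[\funcSeqA{\zeta^n}{n}]$.

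For (4): if $A(n)=0$ for all $n\ge d$, then by~(2) (equivalently Theorem~\ref{thm:main}(2)) $\tilde{\ev}(a,n)=0$ for $n\ge\max(d,\delta)$, so $\tau(a)=\funcSeqA{\tilde{\ev}(a,n)}{n}$ is the zero element of $\ringOfEquivSeqs[\tilde{\KK}]$ and hence $a=0$ by injectivity of $\tau$. Conversely, in the Laurent polynomial ring $\EE$ the monomials $\vartheta^{\mu_0}p_1^{\mu_1}\cdots p_s^{\mu_s}$ with $0\le\mu_0<\lambda$ and $\mu_i\in\ZZ$ form a $\tilde{\KK}(x)$-basis, since $\{1,\vartheta,\dots,\vartheta^{\lambda-1}\}$ is a $\tilde{\KK}(x)$-basis of $\tilde{\KK}(x)[\vartheta]\cong\tilde{\KK}(x)[y]/(y^{\lambda}-1)$; hence, after combining the terms of $a$ so that $\tilde S$ indexes pairwise distinct monomials, $a=0$ holds if and only if every coefficient $b_{\bs v}\in\tilde{\KK}(x)$ vanishes, which is exactly the assertion that $B(n)$ is the zero-expression. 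Finally, if $a=0$ then $A(n)=\tilde{\ev}(0,n)=0$ for all $n\ge\delta$; this closes the cycle of equivalences and yields the last sentence.

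\emph{Expected main obstacle.} None of the steps are computationally heavy, but care is needed in the ring-theoretic identification underlying~(3): one must confirm that the iterated \pE-construction genuinely produces the Laurent polynomial ring $\tilde{\KK}(x)[\vartheta][p_1^{\pm1},\dots,p_s^{\pm1}]$ even though the coefficient ring $\tilde{\KK}(x)[\vartheta]$ has zero divisors, and that transporting this structure along the embedding $\tau$ is precisely what licenses both the algebraic-independence claim in~(3) and the coefficient comparison in~(4). Everything else — tracking the bound $\delta$ and the homomorphism properties of $\tilde{\ev}$ — is routine bookkeeping.
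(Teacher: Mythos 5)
Your proposal is correct and follows essentially the same route as the paper: apply Theorem~\ref{thm:main} with $m=1$, read off $\tilde{\ev}(\vartheta,n)=\zeta^n$ and $\tilde{\ev}(p_i,n)=Q_i(n)$ from the iterated use of Lemma~\ref{Lemma:ExtendEv}, use the homomorphism property of $\tilde{\ev}$ to get $\tilde{\ev}(a,n)=B(n)$, transport the Laurent-ring structure of $\EE$ along the injective $\tau$, and close the cycle of equivalences for zero recognition. The only difference is a matter of degree of detail: you make explicit (where the paper is terse) that the monomials $\vartheta^{\mu_0}p_1^{\mu_1}\cdots p_s^{\mu_s}$ with $0\le\mu_0<\lambda$ form a $\tilde{\KK}(x)$-basis of $\EE$ and that this licenses the coefficient comparison in~(4); this elaboration is accurate and in no way at odds with the paper's argument.
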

\begin{proof}
\begin{enumerate}[\hspace*{1em}(1)]
\item Note that $\dField{\EE}$ is a multiple chain \piE-extension of $\dField{\tilde{\KK}(x)[\vartheta]}$ over $\tilde{\KK}(x)$ equipped by an evaluation function given by the iterative application of Lemma~\ref{Lemma:ExtendEv}. Thus statement~\ref{item1:SolutionToMainProblem} follows.
\item By statement~\ref{item2:main} of Theorem~\ref{thm:main} it follows that $\tilde{\ev}(a,n)=A(n)$ holds for all $n\geq\delta$. By definition of the $\tilde{\ev}$ function we have $\tilde{ev}(a,n)=B(n)$. Thus statement~\ref{item2:SolutionToMainProblem} holds.
\item Since $\tau:\EE\to\ringOfEquivSeqs[\tilde{\KK}]$ with $\tau(f)=\langle\tilde{\ev}(f,n)\rangle_{n\geq0}$ is an injective difference ring homomorphism by statement~\ref{item2:main} of Theorem~\ref{thm:main}, statement~\ref{item3:SolutionToMainProblem} follows.
\item Since $\tau$ is injective, it follows that 
\begin{eqnarray*}
0=A(n)\text{ for all $n\geq d$ for some $d\in\NN$}&\stackrel{\text{item~\ref{item2:SolutionToMainProblem}}}{\Longleftrightarrow}& 0=B(n)=\tilde{\ev}(a,n)\text{ for all $n\geq \max(d,\delta)$}\\
&\Longleftrightarrow& \tau(a)={\bf 0}\\
&\stackrel{\tau\text{ injective}}{\Longleftrightarrow}& a=0\\ 
&\Longleftrightarrow& B(n) \text{ is the zero-expression.}
\end{eqnarray*}
If this is the case, $A(n)=B(n)=0$ for all $n\geq\delta$. This proves the last statement.
\end{enumerate}

\vspace*{-1cm}

\end{proof}

\medskip

\begin{example}[Cont. Examples~\ref{exa:nestedHyperGeoPrdtPreprocessingStep1},~\ref{exa:shiftCoPrimeReduced},~\ref{exa:singleChainAPExtensions},~\ref{exa:monomialDepth2OrderedMultipleChainPiExtForShiftCoPrimePolys},~\ref{exa:monomialDepth2OrderedMultipleChainPiExtForAlgNums}]\label{exa:simpleRPiExt4HypergeometricPrdts4Refined}
    \normalfont Let $\dField{\KK(x)}$ be the rational {\df} with $\KK=\QQ(\sqrt{3})$ equipped with the field automorphism $\s:\KK(x)\to\KK(x)$ and the evaluation function $\ev:\KK(x)\times\NN\to\KK$ defined by $\s(x)=x+1$ and~\eqref{map:evaForRatFxns} respectively. Given the nesting depth $2$ hypergeometric product $P(n)$ with~\eqref{eqn:depth2HyperGeoPrdt} in Example~\ref{exa:nestedHyperGeoPrdtPreprocessingStep1}, we computed the following $3$-refined shift-coprime product representation form:
    \begin{align}\label{eqn:4refinedHyperGeoPrdt}
    P(n) = \tilde{c}\,r(n)\,\tilde{G}(n)\,\tilde{H}(n)
    \end{align}
    where $\tilde{c}$, $r(n)$, $\tilde{G}(n)$ and $\tilde{H}(n)$ are given in \eqref{eqn:refinedConstantTerm},~\eqref{eqn:rationalFunctionTerm},~\eqref{eqn:geometricProducts}, and~\eqref{eqn:hypergeometricProductInShifCoPrimeRepForm} respectively. In particular,~\eqref{eqn:4refinedHyperGeoPrdt} holds for all $n\in\NN$ with $n\ge\delta-1=2$.\\   
	In Example~\ref{exa:monomialDepth2OrderedMultipleChainPiExtForShiftCoPrimePolys} we constructed the ordered multiple chain \piE-extension $\dField{\tilde{\HH}}$ of $\dField{\KK(x)}$ with 
    $\tilde{\HH}=\KK(x)\genn{z_{1,1}}\genn{z_{2,1}}\genn{z_{1,2}}$
    which is composed by the single chain \piE-extensions of $\dField{\KK(x)}$ defined in items~\ref{item:monomialDepth2SingleChainPiExtBasedAtXMinus2} and~\ref{item:monomialDepth1SingleChainPiExtBasedAtXPlus1Over24}. The automorphism and the  evaluation function were defined as given in~\eqref{diffAutoEval:monomialDepth2SingleChainPiExtBasedAtXMinus2} and~\eqref{diffAutoEval:monomialDepth1SingleChainPiExtBasedAtXPlus1Over24}. In particular, the hypergeometric product expression $\tilde{H}(n)$ is modelled by the expression 
    $\tilde{h}={z_{1,1}}^{3}\,{z_{2,1}}\,{z_{1,2}}$
    where $\tilde{h}=h$ is taken from~\eqref{eqn:hyperGeometricProductRepresentationInRingA}. \\    
    Furthermore, we constructed the \rpiE-extension $\dField{\DD}$ of $\dField{\tilde{\KK}}$ with $\tilde{\KK}=\QQ(\sqrt{3},\ii)$ equipped with the evaluation function $\tilde{\ev}$ from Example~\ref{exa:singleRExtAndOrderedMultipleChainPiExtOverConstField}. There $\tilde{G}(n)$ is modelled by~\eqref{Exp:gTilde}.\\
    Merging the two {\dr s} $\dField{\tilde{\HH}}$ and $\dField{\DD}$, we get the \rpiE-extension $\dField{\EE}$ of $\dField{\tilde{\KK}(x)}$ with 
    $\EE=\tilde{\KK}(x)[\vartheta]\genn{\tilde{y}_{1,1}}\genn{\tilde{y}_{2,1}}\genn{\tilde{y}_{3,1}}\genn{z_{1,1}}\genn{z_{2,1}}\genn{\tilde{y}_{2,2}}\genn{\tilde{y}_{3,2}}\genn{z_{1,2}}$ where
    the automorphism $\s:\EE\to\EE$ and the evaluation function $\tilde{\ev}:\EE\times\NN\to\tilde{\KK}$ are defined by~\eqref{diffAuto:order4SingleRMonomial},~\eqref{evMap:order4SingleRMonomial}, \eqref{diffAutoEval:nestingDepth2PiExtOverAlgNums},~\eqref{diffAutoEval:monomialDepth2SingleChainPiExtBasedAtXMinus2}, and~\eqref{diffAutoEval:monomialDepth1SingleChainPiExtBasedAtXPlus1Over24}. Following the proof of Theorem~\ref{thm:main}
    we set
     $p := \tilde{c}\,r\,\tilde{g}\,\tilde{h}$ and get for all $n\geq2$ the simplification 
        \begin{align*}
        P(n) = \tilde{\ev}(p,n) &= -\dfrac{254}{432}\,\tilde{\ev}(r,\,n)\,\tilde{\ev}(\tilde{g},\,n)\,\tilde{\ev}(\tilde{h},\,n) \\
             &=-\dfrac{254}{432}\,(n-1)^{3}\,n\,(n+1)\,(n+2)\ \frac{1}{2}\frac{(1-\ii)\,(\ii)^{n}\,(\ii\,(\ii^{n})^{2}+1)\,\algSeq{3}{n}\,\intSeqExp{5}{n}{2}\,\intSeq{2}{n+1 \choose 2}}{\intSeq{2}{n}\,\intSeq{5}{n+1 \choose 2}} \\[3pt]
             \begin{split}
                & \quad  \left(\myProduct{k}{3}{n}{\left(k-2\right)}\right)^{3}\,\left(\myProduct{k}{3}{n}{\left(k+\tfrac{1}{24}\right)}\right)\,\left(\myProduct{k}{3}{n}{\myProduct{j}{3}{k}{\left(j-2\right)}}\right).
             \end{split}             
        \end{align*}
    Based on this representation in an \rpiE-extension, we can extract the following extra property. Since $\tau:\EE\to\ringOfEquivSeqs$ is a $\tilde{\KK}$-embedding, it follows that the sub-difference ring $(R,S)$ of $(\ringOfEquivSeqs[\tilde{\KK}],S)$ with 
    $$R=\tau(\tilde{\KK}(x))[\langle\ii^n\rangle_{n\geq0}][\langle Q_1(n)\rangle_{n\geq0},\langle Q_1(n)^{-1}\rangle_{n\geq0}]\dots[\langle Q_8(n)\rangle_{n\geq0},\langle Q_8(n)^{-1}\rangle_{n\geq0}]$$
and
\begin{align*}
    Q_{1}(n) &= \algSeq{3}{n}, & Q_{2}(n) &= 2^{n}, & Q_{3}(n) &= 5^{n}, & Q_{4}(n) &= \myProduct{k}{3}{n}{(k-2)}, \\
    Q_{5}(n) &= \myProduct{k}{3}{n}{\left(k+\tfrac{1}{24}\right)}, & Q_{6}(n) &= \intSeq{2}{{n+1 \choose 2}}, & Q_{7}(n) &= \intSeq{5}{{n+1 \choose 2}}, & Q_{8}(n) &= \myProduct{k}{3}{n}{\myProduct{j}{3}{k}{\left(j-2\right)}}
\end{align*}
is isomorphic to $\dField{\EE}$. In particular, we can conclude that $R$ is a Laurent polynomial ring extension of the ring $G=\tau(\tilde{\KK}(x))[\langle\ii^n\rangle_{n\geq0}]$. In a nutshell, 
    the sequences generated by the products $Q_{1}(n),\dots,Q_{8}(n)$ are algebraically independent among each other over the ring $G$. 
\end{example}

\subsection{The Mathematica package --- {\rm \MText{NestedProducts}}}\label{subsec:mathematicaDemo}
In the following we will demonstrate how our tools can be activated with the help of the Mathematica package \MText{NestedProduct}.
We start with the nested hypergeometric product expression
\begin{align}\label{eqn:ratePackageNestedDepth2HypergeoPrdt}
A(n)=\frac{1}{2}\,\myProduct{k}{1}{n-1}{\frac{1}{36}\left(\myProduct{i}{1}{k-1}{\frac{(i+1)\,(i+2)}{4\,(2\,i+3)^{2}}}\right)}\in\ProdExpr(\QQ(x))
\end{align}
from~\cite[Example 3]{kauers2018guess} which was guessed using the Mathematica package \href{https://mat.univie.ac.at/~kratt/rate/rate.html}{RATE} written by Christian Krattenthaler; see~\cite{krattenthaler1997rate}. 
\begin{mdframed}[style=mmawithcounterstyle]
	After loading the package
	\begin{mma}\manuallabel{mma:MathematicaSession1}{Mathematica Session 1}
		\In{
			\MText{<<NestedProducts.m}
		}
		\\
		\Print{\LoadP{\rc{\bf NestedProducts} --- A package by Evans Doe Ocansey --- $\copyright$ RISC}}\\
	\end{mma}
	\noindent into Mathematica, we define the product with the command
	\begin{mma}
		\In{
			{ A =\frac{1}{2}FormalProduct\bigg[\frac{1}{36}\,\Big(FormalProduct\Big[\frac{(i+1)\,(i+2)}{4\,(2\,i+3)^{2}},\,\{i,\, 1,\,k-1\}]\Big),\, \{k,\,1,\,n-1\}\bigg];}
		}\\
	\end{mma}
	\noindent Here \texttt{FormalProduct}[f,\{k,a,n+b\}] (as shortcut one can use \texttt{FProduct}) defines a nested product $\prod_{k=a}^{n+b}f$ where $a,b$ are integers and the multiplicand $f$, free of $n$, must be an expression in terms of nested products whose outermost upper bounds are given by $k$ or an integer shift of $k$. Then applying the command \texttt{ProductReduce} to \texttt{A} we solve \ref{prob:ProblemRPE} and get the result
	\begin{mma}
		\In\MLabel{MMA:OutputofP}{
			B = ProductReduce[A]}\vspace*{0.15cm}\\
		\Out{\manuallabel{mma:session1Output3}{Out[3]}
			\frac{9\,\intSeqExp{2}{n}{5}\left(\myProduct{\MText{k}}{1}{\MText{n}}{\big(k+\tfrac{3}{2}\big)}\right)^{4}\left(\myProduct{\MText{k}}{1}{\MText{n}}{\myProduct{\MText{i}}{1}{\MText{k}}{\left(i+1\right)}}\right)^{2}}{\intSeq{(2\,n+3)}{2}\intSeqExp{3}{n}{2}\,\left(2^{\tbinom{n+1}{2}}\right)^{4}\,\left(\myProduct{\MText{k}}{1}{\MText{n}}{\left(k+1\right)}\right)^{3}\left(\myProduct{\MText{k}}{1}{\MText{n}}{\myProduct{\MText{i}}{1}{\MText{k}}{\big(i+\tfrac{3}{2}\big)}}\right)^{2}}
		}
		\\
	\end{mma}
\end{mdframed}
Internally, the package synchronizes in~\eqref{eqn:ratePackageNestedDepth2HypergeoPrdt} the upper bounds to $n$ and $k$, respectively. This yields 
\begin{align}\label{eqn:nestedDepth2HypergeoPrdtsWithSymbVarBounds}
\frac{9\,(n+1)\,(n+2)}{2\,(2\,n+3)}\left(\myProduct{j}{1}{n}{\frac{4\,(2\,j+3)^{2}}{(j+1)\,(j+2)}}\right)\,\left(\myProduct{k}{1}{n}{\frac{(2\,k+3)^{2}}{9\, (k+1)\,(k+2)}\,\left(\myProduct{j}{1}{k}{\frac{(j+1)\,(j+2)}{4\,(2\,j+3)^{2}}} \right)} \right).
\end{align}
Then~\eqref{eqn:nestedDepth2HypergeoPrdtsWithSymbVarBounds} is reduced further to~\ref{mma:session1Output3} in terms of the algebraically independent products
\begin{align}\label{Equ:ProdExpKauers}
\intSeq{2}{n},&&\intSeq{3}{n},&&\myProduct{k}{1}{n}{\big(k+\tfrac{3}{2}\big)},&&\myProduct{k}{1}{n}{(k+1)},&&\intSeq{2}{{n+1\choose 2}}=\myProduct{k}{1}{n}{\myProduct{i}{1}{k}{2}},&&\myProduct{k}{1}{n}{\myProduct{i}{1}{k}{\big(i+\tfrac{3}{2}\big)}},&&\myProduct{k}{1}{n}{\myProduct{i}{1}{k}{\left(i+1\right)}}.
\end{align}
\noindent    Note that one could have represented the product in~\eqref{eqn:ratePackageNestedDepth2HypergeoPrdt} directly within a \piE-extension by simply taking the inner product as the first \piE-monomial and the outermost product as the second \piE-monomial. However, for more complicated expression such a representation can be rather challenging. 

\begin{mdframed}[style=mmawithcounterstyle,nobreak=true]
	The full capability of our machinery can be illustrated by combining, e.g., the expression~\eqref{eqn:ratePackageNestedDepth2HypergeoPrdt} with the following related product (where one of the inner products is slightly modified):
	\begin{mma}
		\In{
			{ A_2 =A+
				FProduct\Big[\frac{4 (3 + 2 k)^4}{(k + 1)^2 (2 k + 1)^4 (k + 2)^2}
				FProduct[\frac{-(i + 1) (i + 2)}{4 (2 i - 1)^2}, \{i, 1, k\}], \{k, 1, n\}\Big];}
		}\\
	\end{mma}
	\noindent Then it is not immediate how this expression can be represented in an \rpiE-extension. But applying our toolbox this task can be automatically accomplished:
	\begin{mma}
		\In B_2=ProductReduce[A_2]\\
		\Out \dfrac{\left(81\,\left(n^{2}+3\,n+2\right)+\left(1+\ii\right)\,\left(2\,n+3\right)^{4}\left(\ii\right)^{n}+\left(1-\ii\right)\,\left(2\,n+3\right)^{4}\,\left((\ii)^{n}\right)^{3}\right)\,2^{n}\left(\myProduct{\MText{k}}{1}{\MText{n}}{\myProduct{\MText{i}}{1}{\MText{k}}{(i+1)}}\right)^{2}}{81\,(n+2)\,\left(2^{\tbinom{n+1}{2}}\right)^{4} \left(\myProduct{\MText{k}}{1}{\MText{n}}{\left(k+1\right)}\right)^{3}\left(\myProduct{\MText{k}}{1}{\MText{n}}{\myProduct{\MText{i}}{1}{\MText{k}}{\left(i-\tfrac{1}{2}\right)}}\right)^{2}}\\
	\end{mma}
\end{mdframed}

\noindent By solving \ref{prob:ProblemRPE} the expression can be rephrased in an \rpiE-extension with the \rE-monomial $\ii^n$ and the \piE-monomials given in~\eqref{Equ:ProdExpKauers}. In short, together with $\ii^n$ the expression can be reduced again in terms of the algebraic independent products~\eqref{Equ:ProdExpKauers}.

\medskip

Similar expressions as given in~\eqref{eqn:ratePackageNestedDepth2HypergeoPrdt} arise during challenging evaluations of determinants; see, e.g., \cite{mills1983alternating,zeilberger1996proof,krattenthaler2001advanced}. We expect that the new tools elaborated in this article will prove beneficial in related but more complicated product expressions. 

\medskip

\normalfont  We conclude this section by combining our tools from above with the summation package \MText{Sigma}. 

\vspace*{-1em}

\begin{mdframed}[style=mmawithcounterstyle]
	After loading in
	\begin{mma}
		\In \MText{<<Sigma.m} \\
		\Print \LoadP{Sigma - A summation package by Carsten Schneider
			\copyright\ RISC-Linz}\\
		\notag
	\end{mma}
	\noindent we insert the sum   
	\begin{mma}
		\In\MLabel{MMA:Sum} mySum=SigmaSum[
		\Big(
		-1
		+(1+k) (2+k)^2 \prod_{j=1}^k (1+j)^2
		\Big) \prod_{j=1}^k j \prod_{i=1}^j (1+i)^2 \vspace*{-0.1cm}
		\newline
		-\frac{4}{3} \Big(
		1
		+2 (1+k)^2 (3+k) \prod_{j=1}^k -j (2+j)
		\Big) \prod_{j=1}^k 2 j \prod_{i=1}^j -i (2+i)
		,\{k,1,n\}];\\ 
	\end{mma}
	\noindent Afterwards we can activate the available summation algorithms of \texttt{Sigma} with the function call \texttt{SigmaReduce} and succeed in eliminating the summation sign:
	\begin{mma}
		\In SigmaReduce[mySum]\\
		\Out 4-\frac{1}{3}\,(1+n)^{5}\,(2+n)^{2}\,\Big(-3+(1+i)\,\big(-\ii+(\ii^{n})^{2}\big)\,(3+n)\,\ii^{n}\Big)\,(n!)^{5}\,\left(\prod_{i=1}^{k}\prod_{j=1}^{i} j\right)^{2}\\
	\end{mma}
\end{mdframed}

\noindent In other words, we have derived the simplification
\begin{multline*}
\sum_{k=1}^n\Bigg(\Big(-1+(1+k)\,(2+k)^{2}\,\myProduct{j}{1}{k}{(1+j)^{2}}\Big)\,\myProduct{j}{1}{k}{j}\,\myProduct{i}{1}{j}{(1+i)^{2}}\\[-0.1cm]
-\frac{4}{3} \Big(1+2\,(1+k)^{2}\,(3+k)\,\myProduct{j}{1}{k}{-j\,(2+j)}\Big)\,\myProduct{j}{1}{k}{2\,j \myProduct{i}{1}{j}{-i\,(2+i)}}\Bigg)
\\[-0.3cm]
=4-\frac{1}{3}\,(1+n)^{5}\,(2+n)^{2}\,\Big(-3+(1+i)\,\big(-\ii+(\ii^{n})^{2}\big)\,(3+n)\,\ii^{n}\Big)\,(n!)^{5}\,\left(\myProduct{i}{1}{k}{\myProduct{j}{1}{i} j}\right)^{2}.
\end{multline*}
We emphasize that the summand given in~\myIn{\ref{MMA:Sum}} has been transformed internally with the package \texttt{NestedProducts} to the form
\begin{multline*}
\frac{1}{3}\,(1+k)^{2}\,\left(\myProduct{i}{1}{k}{i}\right)^{3} \left(\myProduct{i}{1}{k}{\myProduct{j}{1}{i}{j}}\right)^{2}\Bigg(-3+(-1-\ii)\,(2+k)\,\ii^{k}+(-1+\ii)\,(2+k)\,(\ii^{k})^{3}\\[-0.3cm]
+(1+k)^{3}\,(2+k)^{2}\,\Big(3+(-1+\ii)\,(3+k)\,\ii^{k}+(-1-\ii)\,(3+k)\,(\ii^{k})^{3}\Big)\left(\myProduct{i}{1}{k}{i}\right)^{2}\Bigg).
\end{multline*} 
Then \texttt{Sigma} reads off the derived products and rephrases them directly to a tower of \rpiE-extensions (without the exploitation of the available tools in \texttt{Sigma} that can check whether the constant field remains unchanged). Afterwards the underlying summation algorithms of \texttt{Sigma} are applied to derive the final result.

\section{Conclusion}\label{Sec:Conclusion}

We enhanced non-trivially the ideas from~\cite{schneider2005product,Schneider:14,ocansey2018representing} (related also to~\cite{abramov2010polynomial,chen2011structure}) in order to solve \ref{prob:ProblemRPE} in Theorem~\ref{thm:main} and Corollary~\ref{Cor:SolutionToMainProblem} above. There we cannot only reduce or simplify expressions in terms of hypergeometric products of nesting depth $1$ but in terms of hypergeometric products of nesting depth $\geq1$. More precisely, the expression can be reduced to an expression in terms of one root of unity product of the form $\zeta^n$ and hypergeometric products $Q_1(n),\dots,Q_s(n)\in\Prod(\KK(x))$ of arbitrary but finite nesting depth which are algebraically independent among each other. This latter property has been extracted from results elaborated in~\cite{schneider2017summation} (which are inspired by~\cite{put1997galois}). Combined with the existing difference ring algorithms for symbolic summation~\cite{karr1981summation,schneider2001symbolic,schneider2016difference} this yields a complete summation machinery to reduce and simplify nested sums over hypergeometric products of arbitrary but finite nesting depth. 

A natural future task is to enhance this combined toolbox of the packages \MText{NestedPro\-ducts} and \MText{Sigma} further and to tackle, e.g., definite summation problems. In particular, the interaction with the available creative telescoping algorithms~\cite{schneider2007simplifying,schneider2008refined,StructuralTheorems:10,schneider2015fast} and recurrence solving algorithms~\cite{ABPS:20} should be explored further.

Following the ideas from~\cite{ocansey2018representing} one might extend the above machinery to the class of nested $q$-hypergeometric products covering also the multibasic and mixed case~\cite{bauer1999multibasic}. Another open task is to combine the above ideas with contributions from~\cite{schneider2020minimal} (based on Smith normal form calculations) to find optimal representations of such nested products. This means that in the output expression the order $\lambda$ of the primitive root of unity $\zeta$ in $\zeta^n$ and the number $s$ of algebraically independent products $Q_1(n),\dots,Q_s(n)$ should be minimized.

Finally, it would be interesting to see if the class of hypergeometric products of finite nesting depth can be generalized further to products of the form~\eqref{eqn:nestedHypergeometricProduct} where in the multiplicands the products do not appear only in form of Laurent polynomial expressions.



\begin{thebibliography}{MRRJ83}
	
	\bibitem[ABPS20]{ABPS:20}
	Sergei~A. Abramov, Manuel Bronstein, Marko Petkovšek, and Carsten Schneider,
	\emph{{On Rational and Hypergeometric Solutions of Linear Ordinary Difference
			Equations in $\Pi\Sigma^*$-field extensions}}, Tech. report, Arxiv, 2020,
	arXiv:2005.04944 [cs.SC].
	
	\bibitem[Abr71]{abramov1971summation}
	Sergej~A. Abramov, \emph{On the summation of rational functions}, USSR
	Computational Mathematics and Mathematical Physics \textbf{11} (1971), no.~4,
	324--330.
	
	\bibitem[AP94]{abramov1994d}
	Sergei~A. Abramov and Marko Petkov{\v{s}}ek, \emph{D'{A}lembertian solutions of
		linear differential and difference equations}, Citeseer, 1994.
	
	\bibitem[AP10]{abramov2010polynomial}
	\bysame, \emph{Polynomial ring automorphisms, rational $(w,\sigma)$-canonical
		forms, and the assignment problem}, Journal of Symbolic Computation
	\textbf{45} (2010), no.~6, 684--708.
	
	\bibitem[AZ96]{abramov1996d}
	Sergei~A. Abramov and Eugene~V. Zima, \emph{D'{A}lembertian solutions of
		inhomogeneous linear equations(differential, difference, and some other)},
	International Conference on Symbolic and Algebraic Computation: Proceedings
	of the 1996 international symposium on Symbolic and algebraic computation,
	vol.~24, 1996, pp.~232--240.
	
	\bibitem[BP99]{bauer1999multibasic}
	Andrej Bauer and Marko Petkov{\v{s}}ek, \emph{Multibasic and mixed
		hypergeometric gosper-type algorithms}, Journal of Symbolic Computation
	\textbf{28} (1999), no.~4-5, 711--736.
	
	\bibitem[BRS18]{BRS:18}
	Johannes Bl\"umlein, Mark Round, and Carsten Schneider, \emph{Refined holonomic
		summation algorithms in particle physics}, { Advances in Computer Algebra.
		WWCA 2016.} (E.~Zima and C.~Schneider, eds.), Springer Proceedings in
	Mathematics \& Statistics, vol. 226, Springer, 2018, arXiv:1706.03677
	[cs.SC], pp.~51--91 (english).
	
	\bibitem[CFFL11]{chen2011structure}
	Shaoshi Chen, Ruyong Feng, Guofeng Fu, and Ziming Li, \emph{On the structure of
		compatible rational functions}, Proceedings of the 36th international
	symposium on Symbolic and algebraic computation, ACM, 2011, pp.~91--98.
	
	\bibitem[Chy00]{CHYZAK:00}
	Fr{\'{e}}d{\'{e}}ric Chyzak, \emph{An extension of {Z}eilberger's fast
		algorithm to general holonomic functions}, Discrete Mathematics \textbf{217}
	(2000), no.~1, 115 -- 134.
	
	\bibitem[CJKS13]{CJKS:13}
	Shaoshi Chen, Maximilian Jaroschek, Manuel Kauers, and Michael~F. Singer,
	\emph{Desingularization explains order-degree curves for {O}re operators},
	{Proceedings of ISSAC'13} (Manuel Kauers, ed.), 2013, pp.~157--164.
	
	\bibitem[Ge93a]{ge1993algorithms}
	Guoqiang Ge, \emph{Algorithms {R}elated to the {M}ultiplicative
		{R}epresentation of {A}lgebraic {N}umbers}, Ph.D. thesis, Univeristy of
	California at Berkeley, 1993.
	
	\bibitem[Ge93b]{ge1993testing}
	\bysame, \emph{Testing equalities of multiplicative representations in
		polynomial time}, Foundations of Computer Science, 1993. Proceedings., 34th
	Annual Symposium on, IEEE, 1993, pp.~422--426.
	
	\bibitem[Gos78]{Gosper:78}
	Ralph~William Gosper, \emph{Decision procedures for indefinite hypergeometric
		summation}, Proc. Nat. Acad. Sci. U.S.A. \textbf{75} (1978), 40--42.
	
	\bibitem[Hoe99]{van1999finite}
	Mark~Van Hoeij, \emph{Finite singularities and hypergeometric solutions of
		linear recurrence equations}, Journal of Pure and Applied Algebra
	\textbf{139} (1999), no.~1-3, 109--131.
	
	\bibitem[HS99]{Singer:99}
	Peter~A. Hendriks and Michael~F. Singer, \emph{Solving difference equations in
		finite terms}, J.~Symbolic Comput. \textbf{27} (1999), no.~3, 239--259.
	
	\bibitem[HS08]{hardouin2008differential}
	Charlotte Hardouin and Michael~F. Singer, \emph{Differential galois theory of
		linear difference equations}, Mathematische Annalen \textbf{342} (2008),
	no.~2, 333--377.
	
	\bibitem[Kar81]{karr1981summation}
	Michael Karr, \emph{Summation in {F}inite {T}erms}, Journal of the ACM (JACM)
	\textbf{28} (1981), no.~2, 305--350.
	
	\bibitem[Kar85]{Karr:85}
	Michael Karr, \emph{Theory of summation in finite terms}, Journal of Symbolic
	Computation \textbf{1} (1985), no.~3, 303 -- 315.
	
	\bibitem[Kau05]{kauers2005algorithms}
	Manuel Kauers, \emph{{Algorithms for {N}onlinear {H}igher {O}rder {D}ifference
			{E}quations}}, Ph.D. thesis, RISC-Linz, October, 2005.
	
	\bibitem[Kau18]{kauers2018guess}
	\bysame, \emph{The guess-and-prove paradigm in action}, Internationale
	Mathematische Nachrichten \textbf{237} (2018), 1--15.
	
	\bibitem[Kou13]{Koutschan:13}
	Christoph Koutschan, \emph{Creative telescoping for holonomic functions},
	Computer algebra in quantum field theory, Texts Monogr. Symbol. Comput.,
	Springer, Vienna, 2013, pp.~171--194.
	
	\bibitem[Kra97]{krattenthaler1997rate}
	Christian Krattenthaler, \emph{{RATE -- A Mathematica guessing machine}}, 1997,
	Available at \url{https://mat.univie.ac.at/~kratt/rate/rate.html}.
	
	\bibitem[Kra01]{krattenthaler2001advanced}
	\bysame, \emph{Advanced determinant calculus}, The Andrews Festschrift,
	Springer, 2001, pp.~349--426.
	
	\bibitem[KS06]{kauers2006indefinite}
	Manuel Kauers and Carsten Schneider, \emph{Indefinite summation with
		unspecified summands}, Discrete mathematics \textbf{306} (2006), no.~17,
	2073--2083.
	
	\bibitem[KZ08]{kauers2008computing}
	Manuel Kauers and Burkhard Zimmermann, \emph{Computing the algebraic relations
		of c-finite sequences and multisequences}, Journal of Symbolic Computation
	\textbf{43} (2008), no.~11, 787--803.
	
	\bibitem[MRRJ83]{mills1983alternating}
	William~H. Mills, David~P. Robbins, and Howard Rumsey~Jr., \emph{Alternating
		sign matrices and descending plane partitions}, Journal of Combinatorial
	Theory, Series A \textbf{34} (1983), no.~3, 340--359.
	
	\bibitem[Oca19]{ocansey2019difference}
	Evans~Doe Ocansey, \emph{{Difference Ring Algorithms for Nested Products}},
	Ph.D. thesis, RISC, Johannes Kepler University Linz, Nov., 2019.
	
	\bibitem[OS18]{ocansey2018representing}
	Evans~Doe Ocansey and Carsten Schneider, \emph{Representing
		($\bs{q}$-)hypergeometric products and mixed versions in difference rings},
	Waterloo Workshop on Computer Algebra, Springer, 2018, pp.~175--213.
	
	\bibitem[Pau95]{paule1995greatest}
	Peter Paule, \emph{Greatest factorial factorization and symbolic summation},
	Journal of symbolic computation \textbf{20} (1995), no.~3, 235--268.
	
	\bibitem[Pet92]{petkovvsek1992hypergeometric}
	Marko Petkov{\v{s}}ek, \emph{Hypergeometric solutions of linear recurrences
		with polynomial coefficients}, Journal of symbolic computation \textbf{14}
	(1992), no.~2-3, 243--264.
	
	\bibitem[PR97]{PauleRiese:97}
	Peter Paule and Axel Riese, \emph{A {M}athematica q-analogue of
		{Z}eil\-ber\-ger's algorithm based on an algebraically motivated approach to
		$q$-hypergeometric telescoping}, Special Functions, q-Series and Related
	Topics (M.~Ismail and M.~Rahman, eds.), vol.~14, AMS, 1997, pp.~179--210.
	
	\bibitem[PS97]{put1997galois}
	Marius Van~Der Put and Michael~F. Singer, \emph{Galois theory of difference
		equations}, 1997.
	
	\bibitem[PS19]{paule2019towards}
	Peter Paule and Carsten Schneider, \emph{Towards a symbolic summation theory
		for unspecified sequences}, Elliptic Integrals, Elliptic Functions and
	Modular Forms in Quantum Field Theory, Springer, 2019, pp.~351--390.
	
	\bibitem[PWZ96]{petkovvsek1996b}
	Marko Petkov{\v{s}}ek, Herbert~S. Wilf, and Doron Zeilberger, \emph{{A=B, AK
			Peters Ltd}}, Wellesley, MA \textbf{30} (1996).
	
	\bibitem[PZ13]{petkovvsek2013solving}
	Marko Petkov{\v{s}}ek and Helena Zakraj{\v{s}}ek, \emph{Solving linear
		recurrence equations with polynomial coefficients}, Computer Algebra in
	Quantum Field Theory, Springer, 2013, pp.~259--284.
	
	\bibitem[Rie03]{Riese:03}
	Axel Riese, \emph{q{M}ultisum - {A} package for proving $q$-hypergeometric
		multiple summation identities}, J.~Symbolic Comput. \textbf{35} (2003),
	349--377.
	
	\bibitem[Ris69]{Risch:69}
	Robert~H. Risch, \emph{The problem of integration in finite terms}, Trans.
	Amer. Math. Soc. \textbf{139} (1969), 167--189. \MR{237477}
	
	\bibitem[Sch01]{schneider2001symbolic}
	Carsten Schneider, \emph{{Symbolic Summation in Difference Fields}}, Ph.D.
	thesis, RISC, J. Kepler University Linz, May, 2001, (Published as Technical
	report no. 01-17 in RISC Report Series.).
	
	\bibitem[Sch05]{schneider2005product}
	\bysame, \emph{Product representations in {\pisiE}-fields}, Annals of
	Combinatorics \textbf{9} (2005), no.~1, 75--99.
	
	\bibitem[Sch07a]{schneider2007simplifying}
	\bysame, \emph{Simplifying sums in $\mathrm{\Pi\Sigma^{*}}$-extensions},
	Journal of Algebra and Its Applications \textbf{6} (2007), no.~03, 415--441.
	
	\bibitem[Sch07b]{schneider2007symbolic}
	\bysame, \emph{Symbolic summation assists combinatorics}, S{\'e}m. Lothar.
	Combin \textbf{56} (2007), no.~1-36, B56b.
	
	\bibitem[Sch08]{schneider2008refined}
	\bysame, \emph{A refined difference field theory for symbolic summation},
	Journal of Symbolic Computation \textbf{43} (2008), no.~9, 611--644.
	
	\bibitem[Sch10a]{schneider2010parameterized}
	\bysame, \emph{Parameterized telescoping proves algebraic independence of
		sums}, Annals of Combinatorics \textbf{14} (2010), no.~4, 533--552, Appeared
	earlier in Proc. of FPSAC 2007.
	
	\bibitem[Sch10b]{StructuralTheorems:10}
	\bysame, \emph{Structural theorems for symbolic summation}, Appl. Algebra
	Engrg. Comm. Comput. \textbf{21} (2010), no.~1, 1--32 (english).
	
	\bibitem[Sch14]{Schneider:14}
	\bysame, \emph{{A streamlined difference ring theory: Indefinite nested sums,
			the alternating sign and the parameterized telescoping problem}}, {Symbolic
		and Numeric Algorithms for Scientific Computing (SYNASC), 2014 15th
		International Symposium}, IEEE Computer Society, 2014, arXiv:1412.2782,
	pp.~26--33 (english).
	
	\bibitem[Sch15]{schneider2015fast}
	\bysame, \emph{Fast algorithms for refined parameterized telescoping in
		difference fields}, Computer Algebra and Polynomials, Springer, 2015,
	pp.~157--191.
	
	\bibitem[Sch16]{schneider2016difference}
	\bysame, \emph{A difference ring theory for symbolic summation}, Journal of
	Symbolic Computation \textbf{72} (2016), 82--127.
	
	\bibitem[Sch17]{schneider2017summation}
	\bysame, \emph{Summation theory {II}: Characterizations of
		{\rpisiSE}-extensions and algorithmic aspects}, Journal of Symbolic
	Computation \textbf{80} (2017), 616--664.
	
	\bibitem[Sch20]{schneider2020minimal}
	\bysame, \emph{Minimal representations and algebraic relations for single
		nested products}, Programming and Computer Software \textbf{46} (2020),
	133--161.
	
	\bibitem[Weg97]{Wegschaider}
	Kurt Wegschaider, \emph{Computer generated proofs of binomial multi-sum
		identities}, Master's thesis, RISC, J. Kepler University, May 1997.
	
	\bibitem[WZ92]{Wilf:92}
	Herbert Wilf and Doron Zeilberger, \emph{An algorithmic proof theory for
		hypergeometric (ordinary and ``q'') multisum/integral identities}, Invent.
	Math. \textbf{108} (1992), 575--633.
	
	\bibitem[Zei90]{Zeilberger:90a}
	Doron Zeilberger, \emph{A holonomic systems approach to special functions
		identities}, J. Comput. Appl. Math. \textbf{32} (1990), 321--368.
	
	\bibitem[Zei91]{Zeilberger:91}
	\bysame, \emph{The method of creative telescoping}, J.~Symbolic Comput.
	\textbf{11} (1991), 195--204.
	
	\bibitem[Zei96]{zeilberger1996proof}
	\bysame, \emph{Proof of the alternating sign matrix conjecture}, Electron. J.
	Combin \textbf{3} (1996), no.~2, R13.
	
\end{thebibliography}

\providecommand{\bysame}{\leavevmode\hbox to3em{\hrulefill}\thinspace}
\providecommand{\MR}{\relax\ifhmode\unskip\space\fi MR }
\providecommand{\MRhref}[2]{%
	\href{http://www.ams.org/mathscinet-getitem?mr=#1}{#2}
}
\providecommand{\href}[2]{#2}

\end{document}